\newcommand{\lyxmathsym}[1]{\ifmmode\begingroup\def\b@ld{bold}
  \text{\ifx\math@version\b@ld\bfseries\fi#1}\endgroup\else#1\fi}
\providecommand{\tabularnewline}{\\}
\providecommand{\algorithmname}{Algorithm}
\theoremstyle{plain}
	\newtheorem{thm}{\protect\theoremname}
	\newtheorem{thm}{\protect\theoremname}[chapter]
\theoremstyle{definition}
\newtheorem{defn}[thm]{\protect\definitionname}
\theoremstyle{plain}
\newtheorem{prop}[thm]{\protect\propositionname}
\newenvironment{proof}[1][\protect\proofname]{\par
	\normalfont\topsep6\p@\@plus6\p@\relax
	\trivlist
	\itemindent\parindent
	\item[\hskip\labelsep\scshape #1]\ignorespaces
}{%
	\endtrivlist\@endpefalse
}
\providecommand{\proofname}{Proof}
\theoremstyle{plain}
\newtheorem{lem}[thm]{\protect\lemmaname}
\theoremstyle{plain}
\newtheorem{cor}[thm]{\protect\corollaryname}
\institution{The University Of British Columbia}
\title{Reductions in finite-dimensional quantum mechanics:} 
\author{Oleg Kabernik}
\renewcommand{\thechapter}{\arabic{chapter}}
\providecommand{\corollaryname}{Corollary}
\providecommand{\definitionname}{Definition}
\providecommand{\lemmaname}{Lemma}
\providecommand{\propositionname}{Proposition}
\providecommand{\theoremname}{Theorem}
\begin{document}
\global\long\def\ket#1{\left|#1\right\rangle }%

\global\long\def\bra#1{\left\langle #1\right|}%

\global\long\def\braket#1#2{\left.\left\langle #1\right.\,\right|\left.#2\right\rangle }%

\global\long\def\ketbra#1#2{\left|#1\right\rangle \left\langle #2\right|}%

\global\long\def\tr{\mathbf{tr}}%

\global\long\def\spn{\mathbf{span}}%

\global\long\def\dim{\mathbf{dim}}%

\global\long\def\rnk{\mathbf{rank}}%

\global\long\def\min{\mathbf{min}}%

\maketitle

\frontmatter~

\noindent The following individuals certify that they have read, and
recommend to the Faculty of Graduate and Postdoctoral Studies for
acceptance, the dissertation entitled:~\\
\\
\uline{Reductions in finite-dimensional quantum mechanics: from
symmetries to operator algebras and beyond \hspace*{\fill}}

\noindent ~\\
submitted by\uline{ Oleg Kabernik }in partial fulfillment of the
requirements for \\
\\
the degree of \uline{Doctor of Philosophy\hspace*{\fill}}\\
\\
in \uline{Physics\hspace*{\fill}}\\
\\
\\
\textbf{Examining Committee:}\\
\textbf{}\\
\uline{Robert Raussendorf, Associate Professor, Department of Physics
and Astronomy, UBC\hspace*{\fill}~}\\
Supervisor\\
\textbf{}\\
\uline{Ian Affleck, Professor, Department of Physics and Astronomy,
UBC~\hspace*{\fill}~}\\
Supervisory Committee Member \\
\textbf{}\\
\uline{Gordon W. Semenoff, Professor, Department of Physics and
Astronomy, UBC\hspace*{\fill}~}\\
University Examiner\\
\textbf{}\\
\uline{Sven Bachmann, Associate Professor, Department of Mathematics,
UBC ~\hspace*{\fill}~}\\
University Examiner\\
\\
\\
\textbf{Additional Supervisory Committee Members:}\\
\\
\uline{Mark Van Raamsdonk, Professor, Department of Physics and
Astronomy, UBC\hspace*{\fill}~}\\
Supervisory Committee Member\\
\\
\uline{Joshua Folk, Associate Professor, Department of Physics
and Astronomy, UBC\hspace*{\fill}~}\\
Supervisory Committee Member
\begin{abstract}
The idea that symmetries simplify or reduce the complexity of a system
has been remarkably fruitful in physics, and especially in quantum
mechanics. On a mathematical level, symmetry groups single out a certain
structure in the Hilbert space that leads to a reduction. This structure
is given by the irreducible representations of the group, and in general
it can be identified with an operator algebra (a.k.a. $C^{*}$-algebra
or von Neumann algebra). The primary focus of this thesis is the extension
of the framework of reductions from symmetries to operator algebras,
and its applications in finite-dimensional quantum mechanics.

Finding the irreducible representations structure is the principal
problem when working with operator algebras. We will therefore review
the representation theory of finite-dimensional operator algebras
and elucidate this problem with the help of two novel concepts: minimal
isometries and bipartition tables. One of the main technical results
that we present is the Scattering Algorithm for analytical derivations
of the irreducible representations structure of operator algebras.

For applications, we will introduce a symmetry-agnostic approach to
the reduction of dynamics where we circumvent the non-trivial task
of identifying symmetries, and directly reduce the dynamics generated
by a Hamiltonian. We will also consider quantum state reductions that
arise from operational constraints, such as the partial trace or the
twirl map, and study how operational constraints lead to decoherence.
Apart from our primary focus we will extend the idea of reduction
beyond operator algebras to operator systems, and formulate a quantum
notion of coarse-graining that so far only existed in classical probability
theory. In addition, we will characterize how the uncertainty principle
transitions to the classical regime under coarse-grained measurements
and discuss the implications in a finite-dimensional setting.
\end{abstract}

\chapter{Lay Summary}

In modern physics, and especially in quantum mechanics, symmetry has
been recognized as a powerful concept that explains many aspects of
the physical world around us. On a mathematical level, symmetries
identify a certain structure that reduces the complexity of a physical
system. It turns out that such reductions are not primarily identified
by symmetries, but by a rather more general mathematical concept of
an operator algebra. The primary focus of this thesis is the extension
of the framework of reductions from symmetries to operator algebras,
and its applications in quantum mechanics. We will present an algorithm
for deriving the complexity reducing structures directly from operator
algebras and demonstrate its applications with problems from quantum
information and quantum computing.

\chapter{Preface}

All the work presented in this thesis was conducted by the author
as a member of the Quantum Information group lead by Robert Raussendorf
at the University of British Columbia, Point Grey campus.\\
\\
Some of the ideas presented in Chapters \ref{chap:Operator-algebras-and},
\ref{chap:Operational-reductions-of-dyn} and \ref{chap:Quantum-coarse-graining}
have been published {[}Kabernik O., \textquotedbl Quantum coarse
graining, symmetries, and reducibility of dynamics\textquotedbl ,
Phys. Rev. A 97 (2018){]}. These include the concepts of bipartition
tables, quantum coarse-graining and the reduction of Hamiltonians
with symmetries. I am the sole author of that work and I was responsible
for all aspects of its development.\\
\\
The Scattering Algorithm presented in Chapter \ref{chap:Identifying-the-irreps}
and some of its applications described in Chapter \ref{chap:Operational-reductions-of-st}
have been published {[}Kabernik O., Pollack J., and Singh A., \textquotedbl Quantum
state reduction: Generalized bipartitions from algebras of observables\textquotedbl ,
Phys. Rev. A 101 (2020){]}. I was the lead investigator responsible
for concept formation, analysis and manuscript composition. Pollack
J. and Singh A. were involved in the initial formulation of these
ideas and have contributed to the manuscript composition.\\
\\
Chapter \ref{chap:The-uncertainty-principle} is a modified version
of the preprint {[}Kabernik O., \textquotedbl Quantifying The Uncertainty
Principle and The Effects of Minimal Length From a Finite-Dimensional
Perspective\textquotedbl , arXiv:2002.01564{]}. I am the sole author
of that work and I was responsible for all aspects of its development.\\
\\
All other ideas presented in this thesis, in particular the contents
of Chapter \ref{chap:Operational-reductions-of-dyn}, were developed
solely by the author and were not previously published.

\tableofcontents{}

 \listoffigures

\chapter{Acknowledgments}

I acknowledge the material support of the government of Canada through
the Natural Sciences and Engineering Research Council (NSERC). \\
\\
I would like to express gratitude to all the people in my personal
and professional lives whose support and encouragement enabled this
journey. I am grateful to my research supervisor Robert Raussendorf
for the stress-free environment, patient guidance, and unwavering
confidence in my work. I thank my colleagues who had endured my ramblings
about coarse-grainings and operator algebras, and in return offered
feedback, encouragement and advice. In particular, I would like to
thank Dongsheng Wang, Fumika Suzuki, Jason Pollack, Ashmeet Singh,
Pedro Lopes, Michael Zurel and the members of my supervisory committee.
I want to thank my parents, Felix and Svetlana Kabernik, for being
so thoughtful and supportive. Most of all, I am grateful to the one
person who chooses to be by my side through thick and thin; Rita,
without you this journey would never begin. Finally, I must acknowledge
the calming influence of a particularly talented cat who shall remain
unnamed.\\

\begin{center}
\includegraphics[width=1\columnwidth]{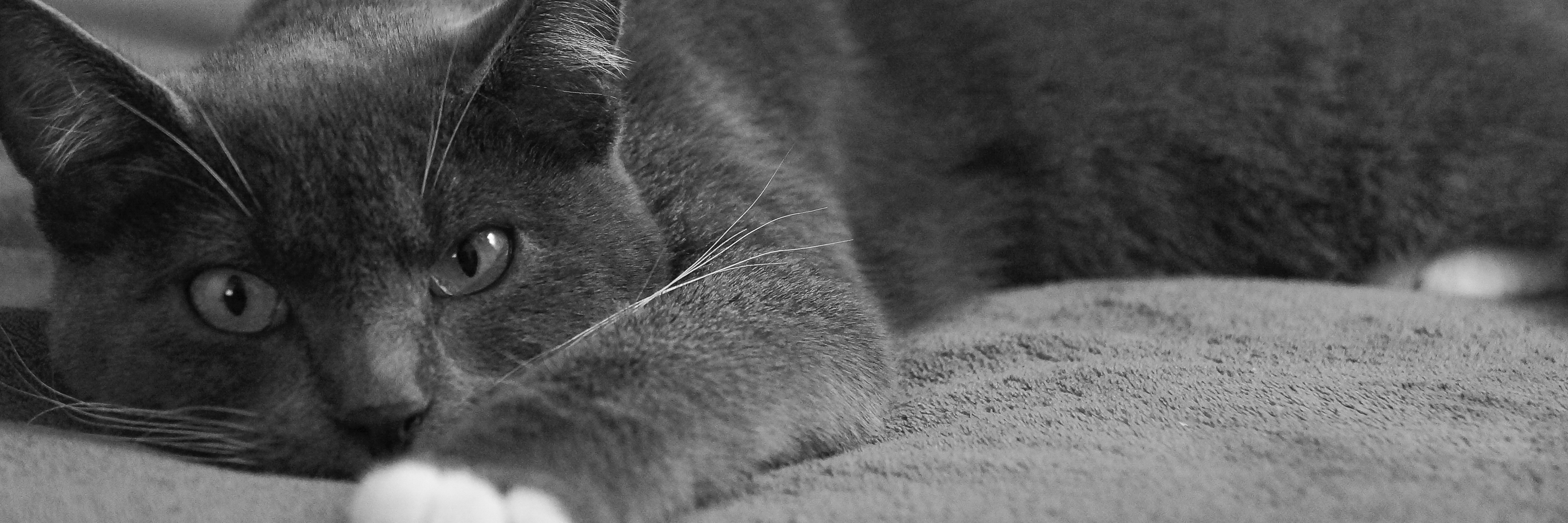}
\par\end{center}

\begin{flushleft}
\par\end{flushleft}

\mainmatter

\chapter{Introduction}

The main task of a theoretical physicist is to translate physical
reality into mathematical models, analyze the models, and then translate
the results back into predictions and explanations. The translation
from physics\emph{ }to math, however, is far from a rigorous process
and the only guiding principle, besides consistency with phenomenology,
is that ``Everything should be made as simple as possible, but no
simpler''.\footnote{This quote is commonly attributed to Albert Einstein.}

Any simplification in the description of a physical system can be
called a \emph{reduction.} In this thesis we will study the methods
and structures associated with reductions in finite-dimensional quantum
systems. Before we get into the details of what that means, we will
approach the idea of reduction from a broader perspective.

The essence of reduction is to single out some significant information
about the system and disregard everything else. For example, in classical
mechanics we reduce the state of a rigid body that consists of some
$10^{23}$ individual particles, to the position, momentum and angular
momentum of their center of mass. Similarly, in statistical mechanics
we reduce the intractable state of a many-body system to a handful
of variables such as the number of particles, energy, temperature,
chemical potential, volume, and pressure. The very existence of such
classical descriptions of physical systems is predicated on the idea
of reduction.

A very general perspective on reduction is to think of it as the result
of some coarse-graining where we choose not to distinguish between
every possible state of the system. One of the earliest formulations
of this perspective appears in the work of Paul and Tatiana Ehrenfest
\citep{ehrenfest1990conceptual}, where they elucidate the ideas of
Boltzmann, Gibbs and Einstein on statistical mechanics. Although coarse-graining
is a very simple and powerful perspective, it does not translate naturally
into quantum theory where the states are not fully distinguishable
to begin with.

Many modern analytical methods in physics can be viewed as reductions.
These include the mean-field approximation \citep{kadanoff2009more,kadanoff2000statistical},
separation of scales and renormalization group methods \citep{Wilson75renormalization,kadanoff2000statistical,mccomb2004renormalization},
matrix product states and tensor networks \citep{Verstraete2006Matrix,perez2007matrix,Vidal2007Entanglement}.
Furthermore, the idea of \emph{model reduction} in dynamical systems
is studied on a more general level as a subject of applied mathematics
\citep{schilders2008model,gorban2006model}. The landscape of all
the approaches to reduction is beyond the scope of this thesis, but
the notion of reduction that we will explore here is still based on
the same general principle: single out some significant information
about the system and disregard everything else.

Which information is significant and which is not, is the principal
problem of reduction. A remarkably fruitful approach to this problem
traces back to the seminal work by Emmy Noether, known as Noether's
theorem \citep{Noether1918}. This theorem states that when the dynamics
of a system have symmetries, there are conserved quantities that do
not evolve with time. When considering time evolutions we can therefore
focus on the non-conserved quantities, and disregard the conserved
ones; this results in a reduction. A particle in central potential
in three-dimensional space is thus reduced from having three dynamical
variables of position, to one.

Today, symmetry methods are a well established staple in physics with
far reaching implications and many dedicated textbooks (see for example
\citep{cornwell1997group,tung1985group,Georgi99}). The integration
of symmetry methods in quantum theory traces back to a result by Eugene
Wigner, known as Wigner's theorem \citep{wigner1959group,weinberg1995quantum}.
It states that in quantum mechanics symmetries are represented by
a group of unitary or anti-unitary operators acting on a Hilbert space.
Since anti-unitary representations are rare, especially in finite-dimensional
quantum mechanics, the mathematical formalism of symmetry methods
that we will focus on is that of unitary group representations.

With the advancement of finite-dimensional quantum theory driven by
the development of quantum information and quantum computing, the
mathematical structure given by the irreducible representations, or
\emph{irreps, }of groups has been brought into sharper focus. The
irreps structure has been recognized as the principal structure in
applications such as quantum error correction and fault tolerance
\citep{Zanardi97Noiseless,Knill2000,kribs2005operator,blume2008characterizing},
quantum reference frames and superselection rules \citep{Bartlett07,Kitaev2004Superselection},
and commodification of asymmetry as a resource \citep{marvian2013theory,Marvian2014Modes,marvian2014extending}.
Along the way it was realized that the irreps structure is primarily
defined not by the representations of groups, but by a rather more
general set of transformations known as \emph{operator algebra}s.

The use of operator algebras in quantum theory was originally pioneered
by John von Neumann and Francis Murray \citep{neumann1930algebra,murray1936rings},
focusing on infinite dimensional Hilbert spaces. The use of operator
algebras in finite-dimensional quantum mechanics was realized much
later; see the reviews in \citep{beny2015algebraic} or \citep{harlow2017ryu},
for example. These algebras are often distinguished as $C^{*}$-algebras
or von Neumann algebras but in finite dimensions these distinctions
are inconsequential, so we will keep calling them operator algebras.
We will later formally define operator algebras; for now let us just
say that these are sets of transformations, like groups, but they
are not represented by unitaries and they can be composed into both
products and sums.

The primary focus of this thesis is the generalization of reduction
methods from symmetries to operator algebras in finite-dimensional
quantum mechanics. One motivation for this generalization follows
from the fact that there is no systematic way to identify symmetries,
and we are mostly restricted to intuitively recognizable symmetries
such as rotations in space. By shifting the focus from symmetries
to operator algebras we will develop a symmetry-agnostic approach
to the reduction of dynamics.

Let us first briefly review what a reduction due to symmetries looks
like in finite-dimensional quantum mechanics. We consider a group
$\mathcal{G}$ represented by the unitaries $U\left(\mathcal{G}\right)$
acting on the Hilbert space $\mathcal{H}$ of our system. We call
the group $\mathcal{G}$ a symmetry if it commutes with the Hamiltonian
\[
\left[H,U\left(g\right)\right]=0\hspace{1cm}\forall g\in\mathcal{G}.
\]
Without going into the details of group representation theory, the
standard textbook procedure \citep{cornwell1997group,tung1985group,Georgi99}
for the reduction of dynamics due to symmetries can be summarized
in the following steps:
\begin{enumerate}
\item Identify the smallest subspaces $\mathcal{H}_{q,i}\subset\mathcal{H}$
that are closed under the symmetry transformations $U\left(\mathcal{G}\right)$.
These subspaces are called \emph{irreducible} because the action of
symmetry transformations cannot be further restricted to smaller subspaces.
Both indices $q$ and $i$ enumerate the irreducible subspaces $\mathcal{H}_{q,i}$,
but with $q$ we distinguish the subspaces with distinct representations
of $\mathcal{G}$, and with $i$ we distinguish the subspaces with
identical representations.
\item Change to the new basis $\ket{e_{ik}^{q}}$ in $\mathcal{H}$, where
$q$ and $i$ identify the irreducible subspace $\mathcal{H}_{q,i}$,
and $k$ enumerates the basis elements inside the subspace. Such change
of basis is explicitly specified by what is known as the Clebsch-Gordan
coefficients.
\item Assuming that $U\left(\mathcal{G}\right)$ commutes with $H$, use
the results of Schur's lemmas to conclude that 
\begin{equation}
\bra{e_{jl}^{p}}H\,\Bigl|e_{ik}^{q}\Bigr\rangle=\delta_{pq}\delta_{lk}\bra{e_{j}^{q}}H\,\Bigl|e_{i}^{q}\Bigr\rangle,\label{eq:H reduction from schur lemma}
\end{equation}
where we suppress the index $k$ on the right because the matrix elements
do not depend on it.
\end{enumerate}
Thus, the Hamiltonian reduces to the block-diagonal form where $q$
and $k$ enumerate the blocks, and $i$,$j$ refer to the matrix elements
inside each block. The key structure here is given by the basis $\ket{e_{ik}^{q}}$,
that in the language of group representation theory identify the irreducible
representations of $U\left(\mathcal{G}\right)$.

The fact that $U\left(\mathcal{G}\right)$ commutes with $H$ is the
defining property that makes $U\left(\mathcal{G}\right)$ not just
a group of transformations but a ``symmetry''. The above procedure
suggests that we can only rely on the irreps structure of symmetries
in order to reduce the dynamics, otherwise Schur's lemmas cannot be
invoked. One of the results that we will later show is that this is
not quite the case. That is, even groups that are not symmetries can
lead to a reduction of dynamics under a relaxed condition on the commutators
$\left[H,U\left(g\right)\right]$.

As an alternative to groups, we will introduce the symmetry-agnostic
approach where instead of asking ``Which group commutes with the
Hamiltonian?'' we ask ``Which operator algebra contains the Hamiltonian?''.
Although this may seem like two different questions, the later formulation
is a generalization of the former. The idea behind the symmetry-agnostic
approach is to identify an operator algebra $\mathcal{A}$ that contains
$H$, find the irreps structure of $\mathcal{A}$ given by the basis
$\ket{e_{ik}^{q}}$, and then reduce $H$ using the irrep basis as
we did in Eq. \eqref{eq:H reduction from schur lemma}.

For example, consider the Hilbert space of three qubits $\mathcal{H}_{qubit}^{\otimes3}$
and the Hamiltonian
\[
H\left(\epsilon\right)=H_{int}+\epsilon\sigma_{z}\otimes I\otimes I,
\]
where $\sigma_{z}$ is a Pauli matrix acting on the first qubit and
$\epsilon$ is a real parameter. The interaction term $H_{int}$ is
such that its first and only excited states are $\ket{++0}$ and $\ket{+11}$
(we use the notation $\ket{\pm}\propto\ket 0\pm\ket 1$), so if we
normalize its energy gap to $1$ it is just 
\[
H_{int}=\ketbra{++0}{++0}+\ketbra{+11}{+11}.
\]
The terms $H_{int}$ and $\sigma_{z}\otimes I\otimes I$ do not commute
so we cannot simultaneously diagonalize both terms to find the spectrum
of $H\left(\epsilon\right)$ as a function of $\epsilon$. We can,
however, reduce $H\left(\epsilon\right)$ and find its spectrum as
a function of $\epsilon$ from the reduced Hamiltonian blocks.

In principle, there is a symmetry group that reduces $H\left(\epsilon\right)$
but it can be difficult to identify, and even then, one has to find
the irrep basis that lead to the reduction. The symmetry-agnostic
approach offers an alternative where we directly derive the irrep
basis that lead to the reduction. The idea is to observe that $H\left(\epsilon\right)$
is a linear combination of two terms, so it is an element of the algebra
\[
\mathcal{A}=\left\langle H_{int},\,\sigma_{z}\otimes I\otimes I\right\rangle 
\]
generated by these terms. If we find the irrep basis of the operator
algebra $\mathcal{A}$, we can reduce $H\left(\epsilon\right)$.

It turns out, as we will see later, that the irreps structure of $\mathcal{A}$
is given by the basis 
\[
\ket{0-0},\ket{001},\ket{1-0},\ket{101},\ket{0+0},\ket{1+0},\ket{011},\ket{111}
\]
and when we present the Hamiltonian in these basis (in the above order)
we get
\[
H\left(\epsilon\right)=\begin{pmatrix}\epsilon\\
 & \epsilon\\
 &  & -\epsilon\\
 &  &  & -\epsilon\\
 &  &  &  & \frac{1}{2}+\epsilon & \frac{1}{2}\\
 &  &  &  & \frac{1}{2} & \frac{1}{2}-\epsilon\\
 &  &  &  &  &  & \frac{1}{2}+\epsilon & \frac{1}{2}\\
 &  &  &  &  &  & \frac{1}{2} & \frac{1}{2}-\epsilon
\end{pmatrix}.
\]
Thus, the spectrum of $H\left(\epsilon\right)$ consists of $\pm\epsilon$,
and the rest is given by the eigenvalues of the $2\times2$ block
\[
\begin{pmatrix}\frac{1}{2}+\epsilon & \frac{1}{2}\\
\frac{1}{2} & \frac{1}{2}-\epsilon
\end{pmatrix}.
\]

The advantage of this approach is that identifying the operator algebra
that contains the Hamiltonian is trivial compared to identifying symmetries.
The real challenge is in finding the irreps structure and the associated
basis. 

We are aware of two approaches in the literature to the problem of
finding the irreps structure. First, Murota \emph{et al}. \citep{Murota2010}
has proposed a numerical algorithm based on random sampling and motivated
by problems in semidefinite programming (it was adapted in \citep{wang2013numerical}
for physical applications). Second, Holbrook \emph{et al}. \citep{holbrook2003noiseless}
have proposed an algorithm without sampling, but it requires the ability
to find spans of sets of operators.

Ideally, just as we have a symbolic (not inherently numeric) algorithm
for diagonalizing a matrix using pen and paper, we would have a symbolic
algorithm for finding the irreps structure of a set of matrices. One
of the main technical contributions of this thesis is the derivation
of such algorithm.

The proposed algorithm is called the \emph{Scattering Algorithm}.
It is constructed around the basic operation called\emph{ }``scattering''
that acts on pairs of projections and is symbolically represented
as follows (the input is on the left, and the output is on the right):
\[
\begin{array}{c}
\Pi_{1}\\
\\
\Pi_{2}
\end{array}\Diagram{fdA &  & fuA\\
 & f\\
fuA &  & fdA
}
\begin{array}{c}
\Pi_{1}^{\left(\lambda_{1}\right)},\,\Pi_{1}^{\left(\lambda_{2}\right)},...\,,\Pi_{1}^{\left(0\right)}\\
\\
\Pi_{2}^{\left(\lambda_{1}\right)},\,\Pi_{2}^{\left(\lambda_{2}\right)},...\,,\Pi_{2}^{\left(0\right)}.
\end{array}
\]
The output projections $\Pi_{i}^{\left(\lambda\right)}$ are defined
as the elements of the spectral decompositions 
\[
\Pi_{1}\Pi_{2}\Pi_{1}=\sum_{\lambda\neq0}\lambda\Pi_{1}^{\left(\lambda\right)}\textrm{\hspace{1cm}and\hspace{1cm}}\Pi_{2}\Pi_{1}\Pi_{2}=\sum_{\lambda\neq0}\lambda\Pi_{2}^{\left(\lambda\right)}.
\]
From the sums of output projections we can recover the input projections
as $\Pi_{i}=\sum_{\lambda}\Pi_{i}^{\left(\lambda\right)}$ (the element
$\Pi_{i}^{\left(0\right)}$ ensures that). In this sense the scattering
operation ``breaks'' the input projections into lower rank constituents.
The main idea of the Scattering Algorithm is to start with the spectral
projections of the generators of the operator algebra, break them
into the minimal possible constituents, and then construct the irreps
structure from these minimal projections.

One application for the Scattering Algorithm that we will demonstrate
is in finding the possible qubit encodings for a given control Hamiltonian.
Since different encodings have different physical characteristics,
it is desirable to exhaust the possibilities of an encoding for a
given Hamiltonian. Finding such encodings is not a trivial task and
the symmetry of the Hamiltonian is often used to point to the possible
solution. With the Scattering Algorithm we can approach this task
in a more systematic manner and find solutions associated with less
obvious symmetries.

In quantum dot arrays, for example, it is possible to implement the
nearest neighbor Heisenberg interaction with tunable terms \citep{Loss98Quantum}
\[
H=\epsilon_{12}\vec{S}_{1}\cdot\vec{S}_{2}+\epsilon_{23}\vec{S}_{2}\cdot\vec{S}_{3}+\epsilon_{34}\vec{S}_{3}\cdot\vec{S}_{4}.
\]
DiVincenzo \emph{et al. }\citep{divincenzo2000universal} and Bacon
\emph{et al. }\citep{Bacon2000Universal} have proposed qubit encoding
in such arrays based on the reduction due to the $SU\left(2\right)$
symmetry of $H$. By adopting the symmetry-agnostic approach and using
the Scattering Algorithm we will find additional qubit encodings that
cannot be revealed by the $SU\left(2\right)$ symmetry alone.

Thus, a physical application of the Scattering Algorithm is in characterizing
the dynamics of Hamiltonians beyond its obvious symmetries. In particular,
for the purposes of quantum information processing, the Scattering
Algorithm identifies the possible qubit encodings in a systematic
manner without relying on the intuition of symmetries.

So far, we have focused on reductions that follow from the dynamics
of the system. There is, however, another kind of reductions that
arises when we have inaccessible degrees of freedom, such as the degrees
of freedom of the ``environment''. In order to distinguish such
reductions from the reductions of dynamics, we will refer to them
as the \emph{reductions of states}.

The prototypical state reduction is the partial trace map that reduces
the state $\rho_{AB}$ of a composite system $AB$ into the state
$\rho_{B}$ of subsystem $B$ alone. The operational meaning of the
reduced state $\rho_{B}$ is that it contains only the information
accessible with measurements on subsystem $B$. In other words, the
partial trace map is a map that accounts for the operational constraint
that only allows measurements on subsystem $B$.

There are more sophisticated operational constraints that cannot be
associated with a physical subsystem. A well known example of that
is the operational constraint that arises from a lack of common reference
frame \citep{Bartlett07}. That is, when two parties (Alice and Bob)
do not share a common reference frame, any information about the quantum
state that relies on this frame of reference is inaccessible to the
other party. The resulting operational constraint is a restriction
to observables that are symmetric under transformations of this reference
frame. The state reduction map that accounts for this constraint is
called the twirl \citep{Bartlett07} and it is rooted in the irreps
structure of the reference frame transformations group.

We will introduce and study the idea of state reductions due to operational
constraints in the common mathematical framework of operator algebras.
Within this framework state reductions are constructed directly from
operational constraints, and it subsumes the specialized state reduction
maps such as the partial trace or the twirl. Ideas such as noiseless
subsystems \citep{lidar2003decoherence} can also be incorporated
into this framework by observing that the requirement for logical
operations to commute with the operations of noise is an operational
constraint. In that case, the state reduction map is a map that decodes
the logical information from the physical state. 

It is also interesting to consider the dynamics of reduced states.
It is well known that the reduced state of a subsystem $B$ may undergo
decoherence if the composite system $AB$ evolves in a certain way.
What ``certain way'' means is that the Hamiltonian $H$ of $AB$
has an interaction term $H_{int}$ that couples the two subsystems
\citep{breuer2002theory}. That is, 
\[
H=H_{self}+H_{int},
\]
where $H_{self}=I_{A}\otimes H_{B}+H_{A}\otimes I_{B}$ and $H_{int}$
cannot be expressed in this way. Identifying the interaction term
that is responsible for decoherence is simple when we talk about subsystems.
However, when considering state reduction maps associated with different
operational constraints, the distinction between the ``self'' and
``interaction'' terms of the Hamiltonian is not as clear.

For example, we will consider the composite system $\underline{l}\otimes\underline{\frac{1}{2}}\otimes\underline{\frac{1}{2}}$
of two spin-$\frac{1}{2}$'s and an integer angular momentum $l$,
such as the Hydrogen atom, but with the simple Hamiltonian of uniform
magnetic field along the $\hat{y}$ axis
\[
H=\epsilon L_{y}+S_{1;y}+S_{2;y}.
\]
This Hamiltonian has no interaction terms so we do not expect it to
induce decoherence. This is true if we consider the reduced states
of the individual spins or angular momentum, but it is not the case
if different operational constraints are imposed. In particular, for
the reduced states that arise due to the lack of common reference
frame of directions in space, this Hamiltonian induces decoherence.
We will see that under this operational constraint the ``self''
and the ``interaction'' terms of the Hamiltonian are
\[
H=\epsilon\underset{H_{self}}{\underbrace{\left(L_{y}+S_{1;y}+S_{2;y}\right)}}+\left(1-\epsilon\right)\underset{H_{int}}{\underbrace{\left(S_{1;y}+S_{2;y}\right)}}.
\]

Taking a step back, let us return to the simple classical notion of
reduction associated with coarse-graining. Such notion of reduction
is implicit in statistical mechanics where we choose to distinguish
only between states that have different macroscopic properties. The
partitioning of the micro state space into macroscopic classes of
states, or macro states, is exactly what we mean by coarse-graining.
The idea of coarse-graining naturally extends into probability theory
where the probability for a macro state to occur is given by the probability
for any micro state in the class to occur. This kind of reasoning,
however, does not seem to extend naturally into quantum theory.

In the following, we will attempt to bridge this conceptual gap. In
order to do that we will have to extend the mathematical framework
beyond operator algebras and into \emph{operator systems}. In the
process we will end up generalizing the usual notion of a subsystem
(or a \emph{virtual} subsystem \citep{Zanardi2001Virtual,Zanardi:2004zz}),
to what we call a \emph{partial }subsystem that is no longer defined
by a tensor product bipartition of the Hilbert space. The main result
is the definition of a state reduction map called \emph{quantum coarse-graining,
}and the derivation of its operational meaning. These ideas will be
illustrated with a simple classical coarse-graining of a probability
distribution, and its quantum analogue. As a motivating example we
will consider the encoding (compression) of a three level system (qutrit)
into a two level system (qubit) using quantum coarse-graining.

Finally, we will carry out a case study of the uncertainty principle
on a lattice. Unlike previous topics, where the emphasis was on the
methods, here we will focus on specific physical questions. Because
a much simpler notion of coarse-graining will be used here, these
analysis will be presented in a self contained manner without relying
on the previously discussed mathematical framework.

It is well known that due to the uncertainty principle, the Planck
constant sets a resolution boundary in phase space (see the original
paper by Heisenberg \citep{heisenberg1927} or \citep{busch2006complementarity,busch2007heisenberg}
for a modern review). It is also known that in the classical regime
the outcomes of sufficiently coarse measurements of position and momentum
can simultaneously be determined. If we then continuously vary the
resolution of measurements, the uncertainty principle should transition
between the quantum and classical regimes, but the picture of how
this transition unfolds is not so clear.

In the following we will clarify this picture by studying a characteristic
function that quantifies the mutual disturbance effects responsible
for the uncertainty principle. Since it is also expected that the
uncertainty principle is modified by the existence of minimal length
in space (this is known as the generalized uncertainty principle \citep{ali2009discreteness}),
we will conduct our investigation on a lattice.

We will see how the discontinuity of the lattice perturbes the uncertainty
principle and its transition to the classical regime. We will also
see that in terms of lattice units, the uncertainty principle imposes
a resolution boundary given by the square root of the length of the
lattice, and the Planck constant is derived from it. We will discuss
the implications of these results for the existence of minimal length
in space.

As a guide to the reader, we summarize the contributions of this thesis
by chapters as follows:
\begin{description}
\item [{Chapter~2}] We introduce finite-dimensional operator algebras
and the structure of irreducible representations in a pedagogical,
self-contained manner. Here we will mostly derive previously known
results but with the help of two novel concepts: minimal isometries
and bipartition tables.
\item [{Chapter~3}] We introduce the Scattering Algorithm for finding
irreducible representations of operator algebras.
\item [{Chapter~4}] We introduce the framework of state reductions due
to operational constraints and integrate it with the study of decoherence.
\item [{Chapter~5}] We introduce the framework of symmetry-agnostic reduction
of dynamics and relax the condition for reduction with symmetries.
\item [{Chapter~6}] We introduce the notions of partial subsystems and
quantum coarse-graining along with their operational meaning.
\item [{Chapter~7}] We study a characteristic function that quantifies
how the uncertainty principle transitions between the quantum and
classical regimes on a lattice.
\end{description}

\chapter{Operator algebras and the structure of irreducible representations
\label{chap:Operator-algebras-and}}

The irreducible representations (irreps) structure is at the core
of all forms of reductions that rely on symmetries or operator algebras.
In fact, identifying the irreps structure is usually one of the main
technical challenges in the analysis that involve symmetries or operator
algebras. Before we can begin to address this challenge we need to
understand what operator algebras are, and what is the irreps structure.
This is the subject of this chapter.

The abstract mathematical notion of an \emph{algebra }and the more
concrete notion of an \emph{operator algebra} are well established
fields of study in the mathematical literature. The study of operator
algebras have been introduced and developed in the context of mathematical
physics by John von Neumann and Francis Murray \citep{neumann1930algebra,murray1936rings}
which is known today as the study of von Neumann algebras. In the
modern physics literature another name that is commonly used is a
$C^{*}$-algebra which is a slight generalization of the von Neumann
algebra\emph{. }Many\emph{ }of the subtleties in the study of operator
algebras (and thus the proliferation of different names) arise from
the issues associated with the infinite dimensionality of Hilbert
spaces of continuous functions. Since we are only concerned with finite-dimensional
Hilbert spaces, we can avoid the full mathematical treatment of this
subject and restrict our attention to finite-dimensional operator
algebras.

In the following we will introduce the main ideas behind finite-dimensional
operator algebras in a pedagogical manner focusing on the structural
aspects that are suitable for our purposes. Similar accounts of finite-dimensional
operator algebras in the physics literature can be found in \citep{beny2015algebraic}
or in the appendix of \citep{harlow2017ryu}. The abstract mathematical
treatment of this subject appears in many textbooks (mostly focusing
on the subtleties of infinite dimensional spaces), see for example
\citep{farenick2012algebras} or the notes \citep{Jones2015kbb}.

The central result of the representation theory of finite-dimensional
operator algebras is known as the Wedderburn Decomposition which we
will derive in Theorem \ref{thm: Wedderburn Decomp}. Even though
this result is far from novel, the path that we will take there, including
most of the proofs, will not follow any of the standard references.
In particular, we will introduce the notions of minimal isometries
and bipartition tables that anticipate the ideas behind the Scattering
Algorithm presented in Chapter \ref{chap:Identifying-the-irreps}.

In Section \ref{sec:Notation-and-some} we will begin by setting up
the notation and stating some basic mathematical facts. In Section
\ref{sec:Finite-dimensional-operator-alge} we will introduce the
finite-dimensional operator algebras and identify the key structural
elements. The general irreps structure will be identified in Section
\ref{sec:Irrep-structures-and} along with some implications and examples.
The irreps structure of groups will be treated as a special case.

\section{Notation and some mathematical facts\label{sec:Notation-and-some}}

Unless stated otherwise, we will assume $\hbar\equiv1$.

All Hilbert spaces are assumed to be complex and finite-dimensional.
We will denote with $\mathcal{H}_{1}\cong\mathcal{H}_{2}$ the Hilbert
spaces that are isometric to each other, which means that $\mathcal{H}_{1}$
and $\mathcal{H}_{2}$ are only different in how we label their basis.
All Hilbert spaces are therefore $\mathcal{H}\cong\mathbb{C}^{d}$
for some integer $d$.

We will denote with $\mathcal{L}\left(\mathcal{H}\right)$ the space
of linear operators on the Hilbert space $\mathcal{H}$. In order
to avoid unnecessary notation we will not distinguish between the
notions of linear operators\emph{ }(or simply operators) and their
representations as matrices. Assuming the dimension of $\mathcal{H}$
is $d$, the space $\mathcal{L}\left(\mathcal{H}\right)$ is also
a finite-dimensional Hilbert space of dimension $d^{2}$.

The symbol $\dagger$ will denote the conjugate-transpose for matrices
and Hermitian adjoint for operators. We will use $0$ to denote both
the scalar $0\in\mathbb{C}$ and the null operator $0\in\mathcal{L}\left(\mathcal{H}\right)$.
We will often invoke the fact that $A=0\in\mathcal{L}\left(\mathcal{H}\right)$
if and only if $AA^{\dagger}=0$.

We will denote sets of element such as $\left\{ e_{jk}^{i}\right\} $
with the convention that, unless explicitly stated otherwise, the
set consists of all the elements obtained by varying the free indices
($i,j,k$ in this case). When we explicitly state the free indices,
such as $\left\{ e_{jk}^{i}\right\} _{i=1,...}$ , it will mean that
only those indices are indeed free and the rest are a fixed constant
for all elements in the set.

Projection operators are define as follows.
\begin{defn}
An operator $\Pi\in\mathcal{L}\left(\mathcal{H}\right)$ is a \emph{projection
}(a.k.a. orthogonal projection) if $\Pi\neq0$, $\Pi=\Pi^{\dagger}$
and $\Pi=\Pi^{2}$.
\end{defn}
The following facts about projections will be used implicitly throughout
this thesis.
\begin{prop}
If $\Pi,\Pi'\in\mathcal{L}\left(\mathcal{H}\right)$ are projections
such that $\Pi=c\Pi'$ for some $c\in\mathbb{C}$, then $c=1$.
\end{prop}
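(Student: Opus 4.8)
The plan is to use idempotency of projections together with the fact that, by definition, a projection is nonzero. First I would observe that $c\neq 0$: if $c=0$ then $\Pi = c\Pi' = 0$, contradicting $\Pi\neq 0$.

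Next I would square the relation $\Pi = c\Pi'$. Since $\Pi = \Pi^{2}$ and $\Pi' = (\Pi')^{2}$, squaring gives $\Pi = \Pi^{2} = c^{2}(\Pi')^{2} = c^{2}\Pi'$. Subtracting this from $\Pi = c\Pi'$ yields $(c-c^{2})\Pi' = 0$, and because $\Pi'\neq 0$ (again by the definition of a projection) we conclude $c-c^{2}=0$, i.e. $c\in\{0,1\}$. Combined with $c\neq 0$ from the first step, this forces $c=1$.

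There is essentially no obstacle here; the argument uses only $\Pi=\Pi^{2}$ and $\Pi'\neq 0$, and in particular Hermiticity of the projections is not needed. Alternative routes exist — one could take traces, so that $c = \tr\Pi/\tr\Pi'$ is a ratio of positive integers and hence positive, or one could apply both sides to a unit vector $v$ with $\Pi'v = v$ and use positivity of $\Pi$ to get $c = \langle v,\Pi v\rangle\geq 0$ before invoking $c^{2}=c$ — but the idempotency route above is the shortest and requires no extra input. The only point that needs a moment's care is to remember to use $\Pi\neq 0$ and $\Pi'\neq 0$ from the preceding Definition, which is exactly what rules out the degenerate value $c=0$.
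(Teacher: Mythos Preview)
Your proof is correct and follows essentially the same approach as the paper: square the relation, use idempotency of both projections to obtain $c\Pi'=c^{2}\Pi'$, and conclude $c=1$ from $\Pi'\neq 0$. In fact you are slightly more explicit than the paper in ruling out $c=0$ via $\Pi\neq 0$.
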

\begin{proof}
Since $\Pi$ and $\Pi'$ are projections we have
\[
c\Pi'=\Pi=\Pi^{2}=\left(c\Pi'\right){}^{2}=c^{2}\Pi'.
\]
Since $\Pi'\neq0$ we must have $c=1$.
\end{proof}
When referring to a set of projections $\left\{ \Pi_{k}\right\} $
as orthogonal\emph{ }we will always mean that in the sense of pairwise
orthogonal: $\Pi_{k}\Pi_{k'}=\delta_{kk'}\Pi_{k}$ for all $k,k'$.
The eigenspace\emph{ }of a projection $\Pi$ is the subspace of $\mathcal{H}$
on which $\Pi$ projects all element of $\mathcal{H}$ ($\Pi$ acts
as the identity on its own eigenspace). The\emph{ }rank\emph{ }of
a projection\emph{ }$\Pi$ is the number of its non-zero eigenvalues
which is also its trace and it is also the dimension of its eigenspace
\[
\rnk\left[\Pi\right]=\tr\left[\Pi\right]=\dim\left[\mathbf{eigenspace}\left[\Pi\right]\right].
\]

Another special type of operators that we will work with are partial
isometries.
\begin{defn}
An operator $S\in\mathcal{L}\left(\mathcal{H}\right)$ is a \emph{partial
isometry} if $SS^{\dagger}=\Pi_{fin}$ for some projection $\Pi_{fin}$.
\end{defn}
The following facts about partial isometries will often be used implicitly.
\begin{prop}
If $S$ is a partial isometry then $S^{\dagger}S=\Pi_{in}$ is another
projection with the same rank as $\Pi_{fin}$.
\end{prop}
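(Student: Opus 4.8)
The plan is to establish the claim in three moves: first prove the "absorption" identity $SS^{\dagger}S=S$, then use it to show $S^{\dagger}S$ is a projection, and finally compare ranks via the trace.

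For the first move, I would set $A=\left(I-\Pi_{fin}\right)S=S-SS^{\dagger}S$ and compute $AA^{\dagger}=\left(I-\Pi_{fin}\right)SS^{\dagger}\left(I-\Pi_{fin}\right)=\left(I-\Pi_{fin}\right)\Pi_{fin}\left(I-\Pi_{fin}\right)$. Since $\Pi_{fin}$ is a projection, $\left(I-\Pi_{fin}\right)\Pi_{fin}=\Pi_{fin}-\Pi_{fin}^{2}=0$, so $AA^{\dagger}=0$ and hence $A=0$ by the fact recorded in Section \ref{sec:Notation-and-some}. This gives $SS^{\dagger}S=S$. This step is the one real obstacle — everything else is bookkeeping — and the trick is simply to phrase the desired identity as "$A=0$" and verify $AA^{\dagger}=0$.

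For the second move, write $\Pi_{in}:=S^{\dagger}S$. It is manifestly Hermitian, and $\Pi_{in}^{2}=S^{\dagger}\left(SS^{\dagger}S\right)=S^{\dagger}S=\Pi_{in}$ using the identity just proved. It is also nonzero: if $S^{\dagger}S=0$ then $S=0$ (again via $SS^{\dagger}=0\Rightarrow S=0$), forcing $\Pi_{fin}=SS^{\dagger}=0$, which contradicts $\Pi_{fin}$ being a projection. Hence $\Pi_{in}$ satisfies the definition of a projection.

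For the third move, I would invoke that the rank of a projection equals its trace together with cyclicity of the trace:
\[
\rnk\left[\Pi_{fin}\right]=\tr\left[SS^{\dagger}\right]=\tr\left[S^{\dagger}S\right]=\rnk\left[\Pi_{in}\right],
\]
which closes the argument.
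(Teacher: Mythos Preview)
Your proof is correct. The route differs from the paper's in one notable way: you first establish the absorption identity $SS^{\dagger}S=S$ (by showing $(I-\Pi_{fin})S=0$ via the $AA^{\dagger}=0$ trick), after which idempotency of $\Pi_{in}=S^{\dagger}S$ is a one-liner. The paper instead computes $\left(\Pi_{in}^{2}\right)^{2}=S^{\dagger}\Pi_{fin}^{3}S=S^{\dagger}\Pi_{fin}S=\Pi_{in}^{2}$, concludes that $\Pi_{in}^{2}$ is a projection, and then invokes the positive square root $\sqrt{\Pi_{in}^{2}}=\Pi_{in}$ to deduce that $\Pi_{in}$ itself is a projection. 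Your path is arguably more direct and, as a bonus, already delivers the identity $\Pi_{fin}S=S$ that the paper proves separately in the very next proposition---using essentially the same $AA^{\dagger}=0$ computation you give here. The rank comparison via cyclicity of the trace is identical in both arguments; you also make the nonzeroness of $\Pi_{in}$ explicit, which the paper leaves implicit (it follows there from $\tr[\Pi_{in}]=\tr[\Pi_{fin}]>0$).
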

\begin{proof}
Clearly $\Pi_{in}^{\dagger}=\left(S^{\dagger}S\right)^{\dagger}=\Pi_{in}$.
Note that $\Pi_{in}^{2}$ is a projection because $\left(\Pi_{in}^{2}\right)^{\dagger}=\Pi_{in}^{2}$
and
\[
\left(\Pi_{in}^{2}\right)^{2}=\left(S^{\dagger}S\right)^{4}=S^{\dagger}\Pi_{fin}^{3}S=S^{\dagger}\Pi_{fin}S=\Pi_{in}^{2}.
\]
Since $\Pi_{in}^{2}$ is a projection then $\sqrt{\Pi_{in}^{2}}=\Pi_{in}$
is a projection. The ranks of $\Pi_{fin}$ and $\Pi_{in}$ are the
same because
\[
\tr\left[\Pi_{fin}\right]=\tr\left[SS^{\dagger}\right]=\tr\left[S^{\dagger}S\right]=\tr\left[\Pi_{in}\right].
\]
\end{proof}
The eigenspace of $\Pi_{in}$, that is the initial space, is isometrically
mapped by $S$ to the eigenspace of $\Pi_{fin}$, that is the final
space. The partial isometry $S$ is only supported on the eigenspace
of $\Pi_{in}$ and all vectors that are orthogonal to it are annihilated.
Every projection $\Pi$ is also a partial isometry ($\Pi_{in}=\Pi_{fin}=\Pi$),
therefore we will say that $S$ is a proper\emph{ }partial isometry
if it is a partial isometry but it is not a projection.
\begin{prop}
If $S\in\mathcal{L}\left(\mathcal{H}\right)$ is a partial isometry
with the projections $\Pi_{in}$ and $\Pi_{fin}$ on its initial and
final spaces, then
\begin{eqnarray*}
 &  & S\Pi_{in}=\Pi_{fin}S=S\\
 &  & \Pi_{in}S^{\dagger}=S^{\dagger}\Pi_{fin}=S^{\dagger}
\end{eqnarray*}
\end{prop}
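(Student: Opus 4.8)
The plan is to prove the four identities directly from the defining relations $SS^{\dagger}=\Pi_{fin}$ and $S^{\dagger}S=\Pi_{in}$ (the latter established in the preceding proposition), together with the idempotency and self-adjointness of projections. The key trick, used repeatedly in this kind of argument, is to show that an operator $A$ vanishes by showing $AA^{\dagger}=0$ — the fact flagged in the notation section. So rather than manipulating $S$ directly, I would set $A=S\Pi_{in}-S$ and compute $AA^{\dagger}$.

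First I would handle $S\Pi_{in}=S$. Expand
\[
AA^{\dagger}=\left(S\Pi_{in}-S\right)\left(\Pi_{in}S^{\dagger}-S^{\dagger}\right)=S\Pi_{in}S^{\dagger}-S\Pi_{in}S^{\dagger}-S\Pi_{in}S^{\dagger}+SS^{\dagger},
\]
using $\Pi_{in}^{\dagger}=\Pi_{in}$ and $\Pi_{in}^{2}=\Pi_{in}$. Now substitute $\Pi_{in}=S^{\dagger}S$ in the surviving terms: each of $S\Pi_{in}S^{\dagger}$ and $SS^{\dagger}$ equals $SS^{\dagger}SS^{\dagger}=\Pi_{fin}^{2}=\Pi_{fin}$, so the whole expression collapses to $\Pi_{fin}-\Pi_{fin}-\Pi_{fin}+\Pi_{fin}=0$, hence $A=0$ and $S\Pi_{in}=S$. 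Next, for $\Pi_{fin}S=S$ I would run the mirror-image computation with $A=\Pi_{fin}S-S$, now using $\Pi_{fin}=SS^{\dagger}$ to rewrite every term as $SS^{\dagger}S S^{\dagger}=\Pi_{fin}$ again (note $S^{\dagger}S=\Pi_{in}$ is not even needed here, just $\Pi_{fin}=SS^{\dagger}$). Then the two identities on the second line, $\Pi_{in}S^{\dagger}=S^{\dagger}$ and $S^{\dagger}\Pi_{fin}=S^{\dagger}$, follow immediately by taking the Hermitian adjoint of the two first-line identities, since $(S\Pi_{in})^{\dagger}=\Pi_{in}S^{\dagger}$ and $(\Pi_{fin}S)^{\dagger}=S^{\dagger}\Pi_{fin}$, using self-adjointness of the projections.

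There is no real obstacle here — the statement is a routine consequence of the $C^{*}$-identity-style bookkeeping — but the one point to be careful about is keeping the substitutions $\Pi_{in}=S^{\dagger}S$ and $\Pi_{fin}=SS^{\dagger}$ straight and making sure each cross-term in $AA^{\dagger}$ is correctly resolved to $\Pi_{fin}$ before cancelling. An even slicker alternative, if one prefers to avoid expanding $AA^{\dagger}$: from $\Pi_{in}=S^{\dagger}S$ one gets $S\Pi_{in}=SS^{\dagger}S=\Pi_{fin}S$, so the two first-line identities are actually the same statement, and it suffices to check $SS^{\dagger}S=S$; but that in turn is cleanest via the $AA^{\dagger}=0$ device applied to $A=SS^{\dagger}S-S$, so the computation is essentially the one above. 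I would present the $AA^{\dagger}$ computation as the main line and note the adjoint symmetry to get the remaining identities for free.
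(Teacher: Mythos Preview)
Your proposal is correct and follows essentially the same approach as the paper: the paper also sets $A=S\Pi_{in}-S$, shows $AA^{\dagger}=0$, and then obtains the remaining identities by adjoints. The only cosmetic difference is that the paper goes straight to your ``slicker alternative'' $S\Pi_{in}=SS^{\dagger}S=\Pi_{fin}S$ to get the second first-line identity, rather than running a separate mirror computation.
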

\begin{proof}
Note that
\begin{eqnarray}
 & \left(S\Pi_{in}-S\right)\left(S\Pi_{in}-S\right)^{\dagger} & =S\Pi_{in}\Pi_{in}S^{\dagger}-S\Pi_{in}S^{\dagger}-S\Pi_{in}S^{\dagger}+SS^{\dagger}\\
 &  & =-SS^{\dagger}SS^{\dagger}+SS^{\dagger}=0
\end{eqnarray}
Therefore, $S\Pi_{in}-S=0$ and so $S\Pi_{in}=S$. The rest follows
because $S\Pi_{in}=SS^{\dagger}S=\Pi_{fin}S$ and $\Pi_{in}S^{\dagger}=\left(S\Pi_{in}\right)^{\dagger}$.
\end{proof}
\begin{prop}
If $S,S'\in\mathcal{L}\left(\mathcal{H}\right)$ are partial isometries
such that $S'=cS$ for some $c\in\mathbb{C}$, then $c=e^{i\varphi}$
with some real phase $\varphi$.
\end{prop}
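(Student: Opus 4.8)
The plan is to mimic the structure of the preceding propositions: reduce the claim to the defining identity $SS^\dagger = \Pi_{fin}$ together with $S'S'^\dagger = \Pi'_{fin}$ and show that the only freedom in $c$ is a unit modulus phase. Concretely, since $S' = cS$, I would first compute $S'S'^{\dagger} = |c|^2 SS^\dagger = |c|^2\,\Pi_{fin}$. But $S'S'^{\dagger} = \Pi'_{fin}$ is itself a projection, and $|c|^2\,\Pi_{fin}$ is a scalar multiple of the projection $\Pi_{fin}$, so by the Proposition already proved (if $\Pi = c\Pi'$ for projections then $c=1$) we conclude $|c|^2 = 1$, hence $|c| = 1$, i.e. $c = e^{i\varphi}$ for some real $\varphi$.

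The one subtlety worth handling carefully is the degenerate case $S = 0$: then $\Pi_{fin} = SS^\dagger = 0$, which is excluded by the definition of a projection (a projection is required to be nonzero), so a ``partial isometry'' in the sense of this paper is automatically nonzero, and $S' = cS$ with $S'$ also a (nonzero) partial isometry forces the argument above to go through. I would state this as a one-line remark rather than a separate case. There is genuinely nothing deep here — no step is a real obstacle — so the ``hard part'' is merely bookkeeping: making sure one invokes the already-established fact about scalar multiples of projections rather than re-deriving $c^2 = c$ by hand, and making sure the nonzero hypothesis is acknowledged so that $|c|^2$ can legitimately be read off from $|c|^2\,\Pi_{fin} = \Pi'_{fin}$.

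Written out, the proof would be: from $S' = cS$ we get $\Pi'_{fin} = S'S'^\dagger = (cS)(cS)^\dagger = |c|^2 SS^\dagger = |c|^2\,\Pi_{fin}$; since $\Pi_{fin}\neq 0$ this is a nonzero scalar multiple relation between the projections $\Pi'_{fin}$ and $\Pi_{fin}$ (note $|c|^2\neq 0$ since otherwise $\Pi'_{fin}=0$, contradicting that it is a projection), and the earlier Proposition then gives $|c|^2 = 1$, so $c = e^{i\varphi}$ with $\varphi$ real. I would keep it to three or four lines in a \texttt{proof} environment matching the style of the surrounding material, and I expect the author's proof to be essentially identical.
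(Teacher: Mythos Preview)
Your proposal is correct and essentially identical to the paper's proof: the paper computes $S'S'^{\dagger}=cc^{*}SS^{\dagger}$ and concludes $|c|^{2}=1$ from the fact that both sides are projections. Your version is slightly more explicit in invoking the earlier proposition on scalar multiples of projections and in noting the nonzero hypothesis, but the argument is the same.
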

\begin{proof}
Since both $S'S'^{\dagger}$ and $SS^{\dagger}$ are projections such
that $S'S'^{\dagger}=cc^{*}SS^{\dagger}$, we must have $cc^{*}=\left|c\right|^{2}=1$.
\end{proof}

\section{Finite-dimensional operator algebras\label{sec:Finite-dimensional-operator-alge}}

Since we are only concerned with finite-dimensional physical applications,
our operators are always naturally represented as matrices with respect
to some basis in the Hilbert space. In order to avoid an unnecessary
level of abstraction and notation, we will not distinguish between
operators and their defining representations as complex matrices.
Therefore, the algebras that we will be dealing with are the algebras
of complex matrices with the regular matrix multiplication and summation
rules. From here on, by \emph{operator algebra} we will always mean
the finite-dimensional algebra of complex matrices representing physical
operators (which is a special case of $C^{*}$-algebra and Von Neumann
algebra).
\begin{defn}
\label{def: operator-algebra}An \emph{operator algebra} is a subset
of operators $\mathcal{A}\subseteq\mathcal{L}\left(\mathcal{H}\right)$
such that:

(1) For all $A_{1},A_{2}\in\mathcal{A}$ and $c_{1},c_{2}\in\mathbb{C}$
we have $c_{1}A_{1}+c_{2}A_{2}\in\mathcal{A}$.

(2) For all $A_{1},A_{2}\in\mathcal{A}$ we have $A_{1}A_{2}\in\mathcal{A}.$

(3) For all $A\in\mathcal{A}$ we have $A^{\dagger}\in\mathcal{A}$.

\noindent We will say that an algebra $\mathcal{A}$ is a \emph{subalgebra}
of $\tilde{\mathcal{A}}$ and denote it as $\mathcal{A}\subseteq\tilde{\mathcal{A}}$,
if both $\mathcal{A}$ and $\tilde{\mathcal{A}}$ are algebras and
$\mathcal{A}$ is a subset of $\tilde{\mathcal{A}}$.
\end{defn}
\noindent Condition (1) of the definition \ref{def: operator-algebra}
(with the regular matrix summation and scalar multiplication rules)
implies that $\mathcal{A}$ is a vector space. If the dimension of
$\mathcal{H}$ is $d$ then $\mathcal{L}\left(\mathcal{H}\right)$
is a $d^{2}$ dimensional vector space and $\mathcal{A}\subseteq\mathcal{L}\left(\mathcal{H}\right)$
is a $D\leq d^{2}$ dimensional vector space. There is always a finite
subset of elements $\left\{ A_{1},A_{2},...,A_{D}\right\} \subset\mathcal{A}$
that spans the whole $\mathcal{A}$
\[
\mathcal{A}=\spn\left\{ A_{1},A_{2},...,A_{D}\right\} =\left\{ \sum_{n=1}^{D}c_{n}A_{n}\,|\,c_{n}\in\mathbb{C}\right\} .
\]

Conditions (2) and (3) imply that $\mathcal{A}$ is also equipped
with the non-vector-space operations of matrix product and Hermitian
adjoint that leave the vector space $\mathcal{A}$ closed. It is worth
noting that elements of unitary groups comply with conditions (2)
and (3) but not with (1). In this sense, operator algebras generalize
unitary groups by allowing linear combinations of elements in addition
to products and adjoints.

Some important canonical examples of operator algebras are:
\begin{enumerate}
\item A trivial example is the set of all linear operators $\mathcal{L}\left(\mathcal{H}\right)$
which we will refer to as the \emph{full }or \emph{trivial} algebra
of operators acting on $\mathcal{H}$. All algebras that consist of
operators acting on $\mathcal{H}$ are subalgebras of $\mathcal{L}\left(\mathcal{H}\right)$.
\item If the Hilbert space is composed of two (or more) subsystems $\mathcal{H}=\mathcal{H}_{L}\otimes\mathcal{H}_{R}$
then all operators acting only on one subsystem form an algebra
\[
\mathcal{A}=\left\{ A\otimes I_{R}\,|\,A\in\mathcal{L}\left(\mathcal{H}_{L}\right)\right\} .
\]
\item If the Hilbert space is composed of two (or more) sectors $\mathcal{H}=\mathcal{H}_{0}\oplus\mathcal{H}_{1}$
then all operators acting only on one sector form an algebra
\[
\mathcal{A}=\left\{ A_{0}\oplus0_{1}\,|\,A_{0}\in\mathcal{L}\left(\mathcal{H}_{0}\right)\right\} .
\]
(The operator $0_{1}$ is the null operator on $\mathcal{H}_{1}$.)
\item All operators that are proportional to some projection $\Pi$ (in
particular $\Pi\equiv I$), form an one-dimensional algebra
\[
\mathcal{A}=\left\{ c\Pi\,|\,c\in\mathbb{C}\right\} .
\]
\item All operators in the span of projections $\left\{ \Pi_{k}\right\} _{k=0}^{m}$
that are all orthogonal to each other $\Pi_{k}\Pi_{k'}=\delta_{kk'}$,
form an $m$-dimensional algebra 
\[
\mathcal{A}=\spn\left\{ \Pi_{k}\right\} _{k=0}^{m}.
\]
\item All operators in the span of some group $\mathcal{G}$ represented
by the unitary operators $\left\{ U_{g}\right\} _{g\in\mathcal{G}}$
form an algebra called the \emph{group algebra} 
\[
\mathcal{A}=\spn\left\{ U_{g}\right\} _{g\in\mathcal{G}}.
\]
\item All operators that commute with all operators in some subset $\mathcal{B}\subset\mathcal{L}\left(\mathcal{H}\right)$,
form an algebra called the \emph{commutant }of $\mathcal{B}$
\[
\mathcal{A}=\left\{ A\,|\,\left[A,B\right]=0,\,\,\forall B\in\mathcal{B}\right\} .
\]
In particular, for every algebra $\mathcal{A}\subseteq\mathcal{L}\left(\mathcal{H}\right)$
we have the commutant algebra that will be denoted by $\mathcal{A}'$.
\end{enumerate}
Note that in examples 4 and 5 all elements of the algebra commute
with each other. Such algebras are called \emph{commutative }or \emph{abelian}.
Also note that in general, the identity operator $I\in\mathcal{L}\left(\mathcal{H}\right)$
does not have to be an element of the algebra. This is clearly the
case in examples 3 and 4 (with $\Pi\neq I$). Operator algebras that
include an element that \emph{acts} as the identity on all other elements
in the algebra are called \emph{unital}. Finite-dimensional operator
algebras are always unital and there is always a projection $\Pi\in\mathcal{A}$
(possibly $\Pi\equiv I$) that acts as the identity on all elements
of $\mathcal{A}$. We will not prove this fact because in our applications
we can always have the full identity operator $I$ included in $\mathcal{A}$.

We will now define a common way to specify operator algebras via a
finite set of generators.
\begin{defn}
\label{def: generated OA}The operator algebra $\mathcal{A}=\left\langle M_{1},M_{2},...,M_{m}\right\rangle $
is said to be\emph{ generated} by the operators $M_{i}$ if it is
the closure of the subset $\left\{ M_{i}\right\} _{i=1,...,m}\subset\mathcal{L}\left(\mathcal{H}\right)$
with respect to the conditions (1) - (3) of the Definition \ref{def: operator-algebra}.
\end{defn}
In applications, operator algebras are often specified this way. Moreover,
all operator algebras can be specified via a finite set of generators.
This trivially follows from the observation that the spanning set
of an algebra is in particular its generating set, and all subalgebras
of $\mathcal{L}\left(\mathcal{H}\right)$ have finite spanning sets.

We can always assume that the generators $M_{i}$ are self-adjoint
operators because every non-self-adjoint operator $M$ can be expressed
as a linear combination of two self-adjoint operators 
\[
M=\frac{M+M^{\dagger}}{2}+i\frac{M-M^{\dagger}}{2i}=M_{+}+iM_{-}
\]
so we can always use $M_{+}$ and $M_{-}$ as generators instead of
$M$ and $M^{\dagger}$. The closure with respect to the conditions
(1) - (3) then means that the elements of $\mathcal{A}$ are all the
possible products of the generators and the linear combinations of
these products
\[
\left\langle M_{1},...,M_{m}\right\rangle =\left\{ \sum_{n=1}^{N}c_{n}M_{i_{n,1}}M_{i_{n,2}}\cdots M_{i_{n,K}}\,|\,N,K\in\mathbb{N};\,\,c_{n}\in\mathbb{C};\,\,i_{n,k}\in\left\{ 1,...,m\right\} \right\} .
\]
In principle, as with any subalgebra of $\mathcal{L}\left(\mathcal{H}\right)$,
the algebra $\left\langle M_{1},...,M_{m}\right\rangle $ is of finite
dimension $D\leq d^{2}$ so there is a finite spanning set $\left\{ A_{1},A_{2},...,A_{D}\right\} $
such that
\[
\left\langle M_{1},...,M_{m}\right\rangle =\left\{ \sum_{n=1}^{D}c_{n}A_{n}\,|\,c_{n}\in\mathbb{C}\right\} .
\]
In practice, given only the generators $M_{i}$, it is not a trivial
task to tell the dimension of $\left\langle M_{1},...,M_{m}\right\rangle $
and find a spanning set.

An important special case is the algebra $\left\langle M,I\right\rangle $
generated by a single self-adjoint operator $M=M^{\dagger}$ and the
identity $I$ (the identity is not really necessary here but we will
include it to avoid finding the operator that acts as the identity).
By definition, this algebra is the set
\begin{equation}
\left\langle M,I\right\rangle =\left\{ \sum_{n=0}^{N}c_{n}M^{n}\,|\,N\in\mathbb{N},\,c_{n}\in\mathbb{C}\right\} .\label{eq: def of <H,I>}
\end{equation}
 The key fact about this algebra is that it is spanned by the spectral
projections of $M$.
\begin{prop}
\label{prop: single generator irreps }Let $M$ be a self-adjoint
operator with the spectral decomposition 
\[
M=\sum_{k=1}^{m}\lambda_{k}\Pi_{k}+\lambda_{0}\Pi_{0}
\]
where $\lambda_{k=1...m}$ are the distinct non-zero eigenvalues,
$\Pi_{k=1...m}$ are the projections on the corresponding eigenspaces,
and $\Pi_{0}$ is the projection on the kernel of $M$ ($\lambda_{0}\equiv0$).
Then 
\[
\left\langle M,I\right\rangle =\spn\left\{ \Pi_{k}\right\} _{k=0}^{m}.
\]
\end{prop}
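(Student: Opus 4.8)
The plan is to prove the two set inclusions $\langle M,I\rangle\subseteq\spn\{\Pi_k\}_{k=0}^m$ and $\spn\{\Pi_k\}_{k=0}^m\subseteq\langle M,I\rangle$ separately. The first inclusion is the easy direction: by Eq.~\eqref{eq: def of <H,I>} every element of $\langle M,I\rangle$ is a polynomial $\sum_{n=0}^N c_n M^n$, so it suffices to show that each power $M^n$ lies in $\spn\{\Pi_k\}_{k=0}^m$. Using the spectral decomposition and the orthogonality relations $\Pi_k\Pi_{k'}=\delta_{kk'}\Pi_k$ (together with the fact that the spectral projections sum to the identity), one computes $M^n=\sum_{k=0}^m\lambda_k^n\Pi_k$, which is manifestly a linear combination of the $\Pi_k$. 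Hence any polynomial in $M$ is again such a linear combination.

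For the reverse inclusion I would show that each individual projection $\Pi_j$ can be written as a polynomial in $M$, since then $\Pi_j\in\langle M,I\rangle$ and, as $\langle M,I\rangle$ is a vector space, so is every linear combination of the $\Pi_j$. The natural tool here is Lagrange interpolation: define
\[
p_j(x)=\prod_{\substack{k=0\\ k\neq j}}^{m}\frac{x-\lambda_k}{\lambda_j-\lambda_k},
\]
a polynomial of degree $m$ with $p_j(\lambda_k)=\delta_{jk}$ (this is well-defined because the $\lambda_k$, $k=0,\dots,m$, are pairwise distinct — the $\lambda_{1},\dots,\lambda_m$ are the distinct nonzero eigenvalues and $\lambda_0=0$ is distinct from all of them). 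Evaluating at $M$ and using $M^n=\sum_k\lambda_k^n\Pi_k$ gives $p_j(M)=\sum_{k=0}^m p_j(\lambda_k)\Pi_k=\Pi_j$. Therefore $\Pi_j\in\langle M,I\rangle$ for every $j$, and closure under linear combinations yields $\spn\{\Pi_k\}_{k=0}^m\subseteq\langle M,I\rangle$.

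Combining the two inclusions gives the claimed equality. There is no serious obstacle in this argument; the only point requiring a little care is the bookkeeping around $\Pi_0$ and the identity. One should note explicitly that $I=\sum_{k=0}^m\Pi_k$ so that $M^0=I$ is covered by the formula $M^n=\sum_k\lambda_k^n\Pi_k$ (with the convention $0^0=1$ for the $\lambda_0$ term), and that including $\Pi_0$ among the interpolation nodes is exactly what lets the single Lagrange polynomial $p_j$ pick out $\Pi_j$ cleanly. If one preferred to avoid the $0^0$ convention, an alternative is to treat $\Pi_0=I-\sum_{k=1}^m\Pi_k$ separately at the end, but the interpolation argument already handles everything uniformly.
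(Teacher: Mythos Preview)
Your proposal is correct and follows essentially the same approach as the paper: both directions use the spectral decomposition to show polynomials in $M$ lie in $\spn\{\Pi_k\}$, and the Lagrange interpolation formula $\prod_{l\neq k}\frac{M-\lambda_l I}{\lambda_k-\lambda_l}$ to recover each $\Pi_k$ from $M$ and $I$. The only cosmetic difference is that the paper applies the interpolation formula for $k\geq 1$ and then obtains $\Pi_0$ separately via $\Pi_0=I-\sum_{k=1}^m\Pi_k$, which is precisely the alternative you mention at the end.
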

\begin{proof}
For every $k\geq1$ the spectral projection $\Pi_{k}$ can be expressed
as 
\begin{equation}
\Pi_{k}=\prod_{l\neq k}\frac{M-\lambda_{l}I}{\lambda_{k}-\lambda_{l}}.\label{eq:spec proj from operator}
\end{equation}
This means that $\Pi_{k}\in\left\langle M,I\right\rangle $ for all
$k\geq1$. This also means that $\Pi_{0}\in\left\langle M,I\right\rangle $
since 
\[
\Pi_{0}=I-\sum_{k=1}^{m}\Pi_{k}.
\]
Thus, $\spn\left\{ \Pi_{k}\right\} _{k=0}^{m}\subseteq\left\langle M,I\right\rangle $.

According to Eq. \eqref{eq: def of <H,I>}, every $A\in\left\langle M,I\right\rangle $
is of the form 
\[
A=\sum_{n=0}^{N}c_{n}M^{n}=c_{0}I+\sum_{n=1}^{N}c_{n}\sum_{k=1}^{m}\lambda_{k}^{n}\Pi_{k}=c_{0}\sum_{k=0}^{m}\Pi_{k}+\sum_{n=1}^{N}c_{n}\sum_{k=1}^{m}\lambda_{k}^{n}\Pi_{k}
\]
so $A\in\spn\left\{ \Pi_{k}\right\} _{k=0}^{m}$. Thus, $\left\langle M,I\right\rangle \subseteq\spn\left\{ \Pi_{k}\right\} _{k=0}^{m}$
and so 
\[
\left\langle M,I\right\rangle =\spn\left\{ \Pi_{k}\right\} _{k=0}^{m}.
\]
\end{proof}
We can always say that $\left\langle M_{i},I\right\rangle $ is a
subalgebra of $\mathcal{A}=\left\langle M_{1},M_{2},...,M_{m},I\right\rangle $
for each $i=1,..,m$, so the spectral projections $\left\{ \Pi_{i;k}\right\} $
of $M_{i}$ are elements of $\mathcal{A}$. Therefore, since the spectral
projections $\left\{ \Pi_{i;k}\right\} $ span the generators $M_{i}$,
we can use the projections as generators instead of $M_{i}$
\[
\mathcal{A}=\left\langle M_{1},M_{2},...,M_{m},I\right\rangle =\left\langle \left\{ \Pi_{1;k}\right\} ,\left\{ \Pi_{2;k}\right\} ,...,\left\{ \Pi_{m;k}\right\} ,I\right\rangle .
\]
This means that we can always use projections instead of self-adjoint
operators to generate algebras.

We will now begin introducing the concepts that characterizes the
structure of general operator algebras, starting with the simplest
building blocks defined as follows.
\begin{defn}
The projection $\Pi\in\mathcal{A}$ is called \emph{minimal projection}
if for every $A\in\mathcal{A}$ we have $\Pi A\Pi=c\Pi$ for some
$c\in\mathbb{C}$.
\end{defn}
Note that rank $1$ projections are of the form $\Pi=\ketbra{\psi}{\psi}$
for some $\ket{\psi}\in\mathcal{H}$ so they are always minimal 
\[
\ketbra{\psi}{\psi}A\ketbra{\psi}{\psi}=\left\langle A\right\rangle _{\psi}\ketbra{\psi}{\psi}\propto\ketbra{\psi}{\psi}.
\]
The name \emph{minimal }is chosen because of the following property.
\begin{prop}
Let $\Pi\in\mathcal{A}$ be a minimal projection and let $\Pi'\in\mathcal{A}$
be another projection such that $\Pi\Pi'=\Pi'$, then $\Pi=\Pi'$.
\end{prop}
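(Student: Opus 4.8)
The statement to prove is: if $\Pi \in \mathcal{A}$ is a minimal projection and $\Pi' \in \mathcal{A}$ is another projection with $\Pi\Pi' = \Pi'$, then $\Pi = \Pi'$. The plan is to exploit the minimality hypothesis applied to the element $A = \Pi'$, which is certainly in $\mathcal{A}$. First I would observe that the condition $\Pi\Pi' = \Pi'$ is equivalent (by taking adjoints, using that both are self-adjoint) to $\Pi'\Pi = \Pi'$ as well; geometrically this says the eigenspace of $\Pi'$ is contained in the eigenspace of $\Pi$. Then I would compute the ``sandwich'' $\Pi A \Pi$ with $A = \Pi'$: since $\Pi\Pi' = \Pi'$ and $\Pi'\Pi = \Pi'$, we get $\Pi\Pi'\Pi = \Pi'\Pi = \Pi'$.

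By minimality of $\Pi$, there must exist $c \in \mathbb{C}$ with $\Pi\Pi'\Pi = c\Pi$. Combining the two expressions gives $\Pi' = c\Pi$. Now both $\Pi$ and $\Pi'$ are projections, so by the Proposition proved earlier in the excerpt (if $\Pi = c\Pi'$ for projections then $c = 1$) — applied in the form $\Pi' = c\Pi$ — we conclude $c = 1$ and hence $\Pi' = \Pi$, which is exactly the claim.

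The only subtlety, and the one step worth checking carefully, is the passage from $\Pi\Pi' = \Pi'$ to $\Pi'\Pi = \Pi'$: this follows by applying $\dagger$ to both sides of $\Pi\Pi' = \Pi'$ and using $\Pi^\dagger = \Pi$, $\Pi'^\dagger = \Pi'$, which yields $\Pi'\Pi = \Pi'$. Everything else is a direct substitution plus the cited elementary proposition about scalars multiplying projections, so I expect no real obstacle — the argument is essentially one line once the adjoint symmetry is noted.
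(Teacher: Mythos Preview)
Your proposal is correct and follows essentially the same approach as the paper: both take adjoints to get $\Pi'\Pi=\Pi'$, deduce $\Pi\Pi'\Pi=\Pi'$, and then invoke minimality. The only cosmetic difference is that the paper phrases the last step as a contrapositive (``if $\Pi'\neq\Pi$ then $\Pi'\not\propto\Pi$, contradicting minimality'') whereas you conclude directly via the earlier proposition that proportional projections have $c=1$.
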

\begin{proof}
If $\Pi'=\Pi\Pi'$ then $\Pi'=\left(\Pi\Pi'\right)^{\dagger}=\Pi'\Pi$
and so $\Pi'=\Pi\Pi'\Pi$. Therefore, if $\Pi'\neq\Pi$ then $\Pi'\not\propto\Pi$
and so $\Pi$ is not minimal.
\end{proof}
The next step in the characterization of the structure behind operator
algebras is the definition of the following sets.
\begin{defn}
The set of projections $\left\{ \Pi_{k}\right\} \subset\mathcal{A}$
is called a \emph{maximal set of minimal projections} if all $\Pi_{k}$
are minimal, pairwise orthogonal $\Pi_{k}\Pi_{k'}=\delta_{kk'}\Pi_{k}$,
and sum to the identity $I=\sum_{k}\Pi_{k}$.
\end{defn}
Note that a maximal set of minimal projections does not mean that
these are all the minimal projections in the algebra, it just means
that these are minimal projections that resolve the identity. Every
algebra has at least one maximal set of minimal projections.
\begin{lem}
\label{lem:exsistance of max set of min proj}Let $\mathcal{A}\in\mathcal{L}\left(\mathcal{H}\right)$
be an operator algebra, then there is at least one maximal set of
minimal projections $\left\{ \Pi_{k}\right\} \subset\mathcal{A}$.
\end{lem}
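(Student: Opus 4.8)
\emph{Plan.} The idea is to start from the unit of the algebra and repeatedly split any projection that fails to be minimal, using a rank count to see that the process terminates. Fix a projection $\Pi_{\mathcal{A}}\in\mathcal{A}$ that acts as the identity on $\mathcal{A}$ (this exists by the unitality remark above; in our applications one may simply take $\Pi_{\mathcal{A}}=I$), which will play the role of the ``$I$'' in the condition $I=\sum_k\Pi_k$. The engine of the argument is the following \emph{splitting claim}: if a projection $\Pi\in\mathcal{A}$ is \emph{not} minimal, then $\Pi=\Pi'+\Pi''$ for two nonzero, mutually orthogonal projections $\Pi',\Pi''\in\mathcal{A}$, both dominated by $\Pi$ (i.e. $\Pi\Pi'=\Pi'$ and $\Pi\Pi''=\Pi''$), and consequently $\rnk\left[\Pi'\right]$ and $\rnk\left[\Pi''\right]$ are both strictly less than $\rnk\left[\Pi\right]$.

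To prove this claim, use non-minimality to pick $A\in\mathcal{A}$ with $\Pi A\Pi\ne c\Pi$ for every $c\in\mathbb{C}$, and split the compression $\Pi A\Pi$ into its self-adjoint and anti-self-adjoint parts; at least one of these parts, call it $A'$, lies in $\mathcal{A}$, is self-adjoint, satisfies $\Pi A'\Pi=A'$, and is not a scalar multiple of $\Pi$. From $\Pi A'\Pi=A'$ one gets $A'\Pi=\Pi A'=A'$, so $A'$ is supported inside $\mathbf{eigenspace}\left[\Pi\right]$ and annihilates its orthogonal complement; hence every spectral projection of $A'$ belonging to a nonzero eigenvalue is dominated by $\Pi$. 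Since $A'\neq 0$ (otherwise $A'=0\cdot\Pi$ would be a scalar multiple of $\Pi$), it has a nonzero eigenvalue; let $\Pi'$ be the associated spectral projection. Then $\Pi'\neq 0$, $\Pi'$ is dominated by $\Pi$, and $\Pi'\neq\Pi$ --- for if $\Pi'=\Pi$ then $A'$ would equal that eigenvalue times $\Pi$, contradicting the choice of $A'$. Moreover $\Pi'\in\mathcal{A}$, because spectral projections of an algebra element are polynomials in it by Eq.~\eqref{eq:spec proj from operator} (Proposition~\ref{prop: single generator irreps }), so $\Pi'\in\left\langle A',\Pi_{\mathcal{A}}\right\rangle\subseteq\mathcal{A}$. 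Setting $\Pi'':=\Pi-\Pi'$ gives a nonzero projection in $\mathcal{A}$ orthogonal to $\Pi'$, and the rank inequalities are immediate.

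With the splitting claim available, the proof finishes by a finite recursion. Begin with the family $\left\{\Pi_{\mathcal{A}}\right\}$, which already consists of pairwise orthogonal projections of $\mathcal{A}$ summing to $\Pi_{\mathcal{A}}$. While some member of the current family is not minimal, replace it by the two pieces given by the claim; the resulting family still consists of pairwise orthogonal projections of $\mathcal{A}$ summing to $\Pi_{\mathcal{A}}$ --- orthogonality is preserved since each new piece is dominated by the projection it replaced, hence orthogonal to all the others --- the total of the ranks of the members is unchanged, and the number of members strictly increases. Therefore the recursion halts after finitely many steps, at which point every member is minimal, yielding a maximal set of minimal projections $\left\{\Pi_k\right\}\subset\mathcal{A}$. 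The one delicate point is the splitting claim, specifically the passage from ``$\Pi A\Pi$ is never a scalar multiple of $\Pi$'' to ``there is a spectral projection of the compression lying strictly between $0$ and $\Pi$''; here one must argue on $\mathbf{eigenspace}\left[\Pi\right]$ rather than on all of $\mathcal{H}$ and keep track of whether $0$ is itself an eigenvalue of the compression on that subspace. Everything else is routine bookkeeping with ranks and orthogonality.
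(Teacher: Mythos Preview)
Your proof is correct and follows essentially the same recursive, rank-decreasing strategy as the paper: start from the unit, keep breaking non-minimal projections, and terminate because ranks are bounded below. The one noteworthy difference is that the paper simply asserts at each step that ``there is a minimal projection $\Pi_{k}$'' below the current remainder and peels it off, whereas you prove an explicit \emph{splitting claim} (via a spectral projection of a self-adjoint compression $\Pi A\Pi$) that yields two strictly smaller pieces, neither of which need be minimal. Your version is thus a bit more self-contained, since the paper's assertion that a minimal sub-projection exists is really what the lemma is meant to establish; your spectral-projection argument supplies exactly the missing justification.
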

\begin{proof}
This can be shown recursively by starting with the set of just the
identity $\left\{ I\right\} \subset\mathcal{A}$. If $I$ is a minimal
projection in $\mathcal{A}$ then we are done. If not then there is
a minimal projection $\Pi_{1}\in\mathcal{A}$ such that $\Pi_{1}I=\Pi_{1}\neq I$and
there is the compliment projection $\Pi'_{1}=I-\Pi_{1}\in\mathcal{A}$.
If $\Pi'_{1}$ is also minimal then $\left\{ \Pi_{1},\Pi'_{1}\right\} $
is a maximal set of minimal projections and we are done. If not, then
there is a minimal projection $\Pi_{2}$ such that $\Pi_{2}\Pi'_{1}=\Pi_{2}\neq\Pi'_{1}$
and there is the compliment $\Pi'_{2}=\Pi_{1}'-\Pi_{2}\in\mathcal{A}$.
After $k$ iterations we get the set $\left\{ \Pi_{1},\Pi_{2},...,\Pi_{k},\Pi'_{k}\right\} $
of pairwise orthogonal projections that sums to the identity $I$.
The first $k$ elements in the set are minimal projections and we
are done when the last element $\Pi'_{k}$ is also minimal. The recursion
will terminate after a finite number of steps because $\mathbf{rank}\left[\Pi'_{k}\right]<\mathbf{rank}\left[\Pi'_{k-1}\right]$
and projections of rank $1$ are always minimal.
\end{proof}
We can partition the maximal set of minimal projections into subsets
that will identify a block-diagonal form of the elements of the algebra
using the following equivalence relation.
\begin{prop}
\label{prop: equiv classes of minimal proj}Let $\left\{ \Pi_{k}\right\} $
be a maximal set of minimal projections in the algebra $\mathcal{A}$.
Then, the relation ``$\sim$'' where $\Pi_{k}\sim\Pi_{l}$ if and
only if there is an $A\in\mathcal{A}$ such that $\Pi_{k}A\Pi_{l}\neq0$,
is an equivalence relation.
\end{prop}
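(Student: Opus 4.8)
The plan is to verify the three defining properties of an equivalence relation — reflexivity, symmetry, and transitivity — for the relation ``$\sim$'' on the maximal set of minimal projections $\left\{\Pi_k\right\}$.

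For \emph{reflexivity}, I need $\Pi_k \sim \Pi_k$, i.e. some $A \in \mathcal{A}$ with $\Pi_k A \Pi_k \neq 0$. The obvious candidate is $A = \Pi_k$ itself, since $\Pi_k \Pi_k \Pi_k = \Pi_k \neq 0$ because a projection is by definition nonzero. For \emph{symmetry}, suppose $\Pi_k \sim \Pi_l$, so there is $A \in \mathcal{A}$ with $\Pi_k A \Pi_l \neq 0$. Then $A^\dagger \in \mathcal{A}$ by condition (3) of Definition \ref{def: operator-algebra}, and $\Pi_l A^\dagger \Pi_k = \left(\Pi_k A \Pi_l\right)^\dagger$. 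Using the fact recorded in Section \ref{sec:Notation-and-some} that an operator vanishes if and only if it times its adjoint vanishes (equivalently, $B = 0 \iff B^\dagger = 0$), we get $\Pi_l A^\dagger \Pi_k \neq 0$, hence $\Pi_l \sim \Pi_k$.

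\emph{Transitivity} is where the minimality of the projections genuinely enters, and I expect this to be the main obstacle. Suppose $\Pi_k \sim \Pi_l$ and $\Pi_l \sim \Pi_m$, witnessed by $A, B \in \mathcal{A}$ with $\Pi_k A \Pi_l \neq 0$ and $\Pi_l B \Pi_m \neq 0$. The natural candidate witness for $\Pi_k \sim \Pi_m$ is $C = A \Pi_l B \in \mathcal{A}$ (note $\Pi_l \in \mathcal{A}$, so $C \in \mathcal{A}$ by conditions (1)--(3)), giving $\Pi_k C \Pi_m = \Pi_k A \Pi_l B \Pi_m$. The difficulty is that the product of two nonzero operators can be zero, so I cannot immediately conclude $\Pi_k A \Pi_l B \Pi_m \neq 0$. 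To push through, consider the operator $X = \Pi_l A^\dagger \Pi_k A \Pi_l \in \mathcal{A}$. Since $\Pi_l$ is minimal, $X = c\,\Pi_l$ for some $c \in \mathbb{C}$, and $c \neq 0$: otherwise $\Pi_l A^\dagger \Pi_k A \Pi_l = 0$, which forces $\Pi_k A \Pi_l = 0$ by the ``$BB^\dagger = 0 \Rightarrow B = 0$'' fact applied to $B = \left(\Pi_k A \Pi_l\right)^\dagger = \Pi_l A^\dagger \Pi_k$. So $\Pi_l A^\dagger \Pi_k A \Pi_l = c\,\Pi_l$ with $c \neq 0$, i.e. $\Pi_l = \frac{1}{c}\,\Pi_l A^\dagger \Pi_k A \Pi_l$. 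Substituting this into $\Pi_l B \Pi_m$ and rearranging shows that if $\Pi_k A \Pi_l B \Pi_m = 0$ then $\Pi_l B \Pi_m = \frac{1}{c}\,\Pi_l A^\dagger\left(\Pi_k A \Pi_l B \Pi_m\right) = 0$, contradicting $\Pi_l B \Pi_m \neq 0$. Hence $\Pi_k C \Pi_m = \Pi_k A \Pi_l B \Pi_m \neq 0$, so $\Pi_k \sim \Pi_m$, and ``$\sim$'' is an equivalence relation. The only subtlety to double-check is the bookkeeping of which operators lie in $\mathcal{A}$ (all of $A$, $B$, $\Pi_l$, $A^\dagger$, and their products do), and the repeated use of the elementary fact $B = 0 \iff B^\dagger = 0$.
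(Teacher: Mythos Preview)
Your proof is correct and follows the same approach as the paper. The paper's proof of transitivity is extremely terse --- it simply asserts that $\Pi_k A \Pi_l \neq 0$ and $\Pi_l A \Pi_{k'} \neq 0$ imply $\Pi_k A \Pi_l A \Pi_{k'} \neq 0$ without explanation --- whereas you spell out the crucial use of minimality of $\Pi_l$ (via $\Pi_l A^\dagger \Pi_k A \Pi_l = c\,\Pi_l$ with $c \neq 0$) that actually makes this implication work; your version is the more rigorous one.
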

\begin{proof}
This relation is reflexive ($\Pi_{k}\sim\Pi_{k}$) since $\Pi_{k}\Pi_{k}\Pi_{k}\neq0$,
it is symmetric ($\Pi_{k}\sim\Pi_{l}$ implies $\Pi_{l}\sim\Pi_{k}$)
since $\Pi_{k}A\Pi_{l}\neq0$ implies $\Pi_{l}A^{\dagger}\Pi_{k}\neq0$,
and it is transitive ($\Pi_{k}\sim\Pi_{l}$ and $\Pi_{l}\sim\Pi_{k'}$
implies $\Pi_{k}\sim\Pi_{k'}$) since $\Pi_{k}A\Pi_{l}\neq0$ and
$\Pi_{l}A\Pi_{k'}\neq0$ implies $\Pi_{k}A\Pi_{l}A\Pi_{k'}\neq0$.
\end{proof}
Using this equivalence relation we partition the maximal set of minimal
projections into equivalence classes labeled by $q$ such that $\Pi_{k}^{q}\sim\Pi_{l}^{q'}$
if and only if $q=q'$ (the indices $k$ and $l$ refer now to the
distinct elements inside the equivalence classes). Thus, for every
$A\in\mathcal{A}$ and $q\neq q'$ we have $\Pi_{k}^{q}A\Pi_{l}^{q'}=0$
which prescribes a block-diagonal form for all the elements in the
algebra. We will therefore refer to the equivalence classes $q$ as
\emph{blocks}.

We saw that every element $A\in\left\langle M,I\right\rangle $ is
spanned by the spectral projections of $M$
\[
A=\sum_{n=0}^{m}c_{n}\Pi_{k}.
\]
Since the spectral projections $\left\{ \Pi_{k}\right\} _{k=0}^{m}$
are all orthogonal to each other we must have $\Pi_{k}A\Pi_{l}\propto\delta_{kl}\Pi_{k}$.
Therefore, the set $\left\{ \Pi_{k}\right\} _{k=0}^{m}$ is not only
the spanning set of $\left\langle M,I\right\rangle $ but it is a
maximal set of minimal projections in this algebra (in fact, this
is the only such set). Furthermore, each spectral projection is in
its own equivalence class ($\Pi_{k}\nsim\Pi_{l}$ when $k\neq l$)
which identifies the block diagonal structure of $\left\langle M,I\right\rangle $.

The case of $\left\langle M,I\right\rangle $ is too special to draw
any general conclusions about minimal projections. In general, the
set of \emph{all} the minimal projections in an algebra $\left\{ \Pi_{\alpha}\right\} \subset\mathcal{A}$
does not consists of pairwise orthogonal projections $\Pi_{\alpha}\Pi_{\alpha'}\neq0$,
and it has more than one maximal set of minimal projections. Furthermore,
when dealing with algebras generated by multiple operators $\mathcal{A}=\left\langle M_{1},M_{2},...,M_{m},I\right\rangle $,
the spectral projections $\left\{ \Pi_{i;k}\right\} $ of each generator
$M_{i}$ are not necessarily minimal projections in $\mathcal{A}$.
The distillation of minimal projections from the spectral projections
of the generators of the algebra is at the heart of the algorithm
that we will present in Chapter \ref{chap:Identifying-the-irreps}.

In order to fully capture the structure of an operator algebra (at
least in the way that is suitable for our purposes) we will need a
slight generalization of the notion of minimal projections.
\begin{defn}
The partial isometry $S\in\mathcal{A}$ is called a \emph{minimal
isometry} if the projections on its initial $S^{\dagger}S=\Pi_{in}$
and final $SS^{\dagger}=\Pi_{fin}$ spaces are minimal.
\end{defn}
Since every projection is a partial isometry $\Pi\Pi^{\dagger}=\Pi^{\dagger}\Pi=\Pi$,
minimal projections are in particular minimal isometries (the converse
is of course not true).

The most important property of minimal isometries is that given the
initial and final spaces they are unique up to a phase factor.
\begin{lem}
\label{lem:uniqueness of min isometr }Let $S,S'\in\mathcal{A}$ be
minimal isometries such that $S^{\dagger}S=S'^{\dagger}S'=\Pi_{in}$
and $SS^{\dagger}=S'S'^{\dagger}=\Pi_{fin}$, then $S'=e^{i\varphi}S$
for some real phase factor $\varphi$.
\end{lem}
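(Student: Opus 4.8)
The plan is to exploit the defining property of minimal projections — that $\Pi A\Pi = c\Pi$ for all $A\in\mathcal{A}$ — applied to the operator $A = S'^{\dagger}S \in \mathcal{A}$. First I would consider the product $S'^{\dagger}S$. Using the identities from the proposition on partial isometries (that $S = \Pi_{fin}S = S\Pi_{in}$ and likewise for $S'$), I can write
\[
S'^{\dagger}S = S'^{\dagger}\Pi_{fin}S = S'^{\dagger}(S'S'^{\dagger})S = (S'^{\dagger}S')(S'^{\dagger}S)\,\cdot\,(\text{careful bookkeeping}),
\]
but the cleaner route is to sandwich: since $\Pi_{in}$ is minimal and $S'^{\dagger}S\in\mathcal{A}$, and since $S'^{\dagger}S = \Pi_{in}(S'^{\dagger}S)\Pi_{in}$ (because $S'^{\dagger} = S'^{\dagger}\Pi_{fin}$ combined with $S = S\Pi_{in}$ and $S'^{\dagger} = \Pi_{in}S'^{\dagger}$ — wait, I need $S'^{\dagger}S$ to be flanked by $\Pi_{in}$ on both sides, which holds since $S = \Pi_{fin}S$, no — let me use $S'^{\dagger}S = (\Pi_{in}S'^{\dagger})(S\Pi_{in}) = \Pi_{in}(S'^{\dagger}S)\Pi_{in}$). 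Minimality of $\Pi_{in}$ then gives $S'^{\dagger}S = c\,\Pi_{in}$ for some $c\in\mathbb{C}$.

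Next I would pin down $|c|$ and then promote $c\Pi_{in}$ to the desired relation $S' = e^{i\varphi}S$. Multiplying $S'^{\dagger}S = c\Pi_{in}$ on the left by $S'$ and using $S'\Pi_{in} = S'$ together with $S'S'^{\dagger} = \Pi_{fin}$ gives $\Pi_{fin}S = c\,S'$, i.e. $S = cS'$ (since $\Pi_{fin}S = S$). By the earlier proposition on partial isometries differing by a scalar, $S = cS'$ forces $|c| = 1$, so $c = e^{-i\varphi}$ for a real $\varphi$, whence $S' = e^{i\varphi}S$. (Alternatively, take adjoints and multiply to get $|c|^2\Pi_{in} = S'^{\dagger}SS^{\dagger}S' = S'^{\dagger}\Pi_{fin}S' = S'^{\dagger}S' = \Pi_{in}$, directly yielding $|c|=1$.)

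The main obstacle — really the only subtlety — is making sure the sandwiching step is airtight: I must verify that $S'^{\dagger}S$ genuinely satisfies $S'^{\dagger}S = \Pi_{in}(S'^{\dagger}S)\Pi_{in}$ so that minimality of $\Pi_{in}$ applies, and separately that the leftover scalar propagates correctly without losing track of which projection ($\Pi_{in}$ vs.\ $\Pi_{fin}$) appears where. Once those index gymnastics are done cleanly using the proposition that $S\Pi_{in} = \Pi_{fin}S = S$ and its adjoint, the argument closes in a few lines. I would present it by first establishing $S'^{\dagger}S = c\Pi_{in}$, then deriving $S = cS'$, and finally invoking the scalar-multiple proposition to conclude $|c|=1$.
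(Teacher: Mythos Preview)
Your proposal is correct and is essentially the paper's own argument: the paper considers $S^{\dagger}S'=\Pi_{in}S^{\dagger}S'\Pi_{in}=c\Pi_{in}$ by minimality of $\Pi_{in}$, then multiplies on the left by $S$ to obtain $S'=\Pi_{fin}S'=cS$, and concludes $c=e^{i\varphi}$ from the scalar-multiple proposition for partial isometries. Up to the immaterial swap of $S$ and $S'$ in the product, your steps match the paper's line for line.
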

\begin{proof}
Since $\Pi_{in}$ is the initial space of both $S$ and $S'$ and
it is minimal we have 
\[
S^{\dagger}S'=\Pi_{in}S^{\dagger}S'\Pi_{in}=c\Pi_{in}=cS^{\dagger}S
\]
 for some $c\in\mathbb{C}$. Multiplying both sides of $S^{\dagger}S'=cS^{\dagger}S$
by $S$ from the left and using $SS^{\dagger}=\Pi_{fin}$ we get 
\[
S'=\Pi_{fin}S'=c\Pi_{fin}S=cS.
\]
Since both $S$ and $S'$ are partial isometries, we must have $c=e^{i\varphi}$.
\end{proof}
Although minimal isometries generalize minimal projections, we can
construct the former from the latter using the following Lemma.
\begin{lem}
\label{lem:construct of minimal isometr from minimal proj }Let $\Pi_{1}$
and $\Pi_{2}$ be minimal projections in $\mathcal{A}$, then for
any $A\in\mathcal{A}$ the operator $\tilde{S}=\Pi_{1}A\Pi_{2}$ is
proportional to a minimal isometry $S\in\mathcal{A}$. In particular,
when $\tilde{S}\neq0$, we have $\tr\left[\Pi_{1}\right]=\tr\left[\Pi_{2}\right]$
and the minimal isometry is given by $S=c\tilde{S}$ where 
\[
c=\sqrt{\frac{\tr\left[\Pi_{1}\right]}{\tr\left[\tilde{S}\tilde{S}^{\dagger}\right]}}\in\mathbb{R}.
\]
\end{lem}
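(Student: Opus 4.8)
The plan is to show that $\tilde S = \Pi_1 A \Pi_2$ is a scalar multiple of a partial isometry whose initial and final projections are exactly $\Pi_2$ and $\Pi_1$, and then invoke minimality of those projections. The first step is to observe that $\tilde S \tilde S^\dagger = \Pi_1 A \Pi_2 A^\dagger \Pi_1$ sits inside the compression $\Pi_1 \mathcal A \Pi_1$, so by minimality of $\Pi_1$ it equals $c_1 \Pi_1$ for some $c_1 \in \mathbb C$; taking adjoints shows $c_1$ is real, and taking traces shows $c_1 = \tr[\tilde S \tilde S^\dagger]/\tr[\Pi_1] \ge 0$. Symmetrically, $\tilde S^\dagger \tilde S = \Pi_2 A^\dagger \Pi_1 A \Pi_2 = c_2 \Pi_2$ with $c_2 = \tr[\tilde S^\dagger \tilde S]/\tr[\Pi_2] \ge 0$. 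Since $\tr[\tilde S \tilde S^\dagger] = \tr[\tilde S^\dagger \tilde S]$, we get $c_1 \tr[\Pi_1] = c_2 \tr[\Pi_2]$.

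Next I would dispose of the degenerate case: if $\tilde S = 0$ there is nothing to prove (the zero operator is trivially "proportional to a minimal isometry" in the vacuous sense, or one simply excludes it as the statement does when asserting the trace equality). So assume $\tilde S \neq 0$, hence $c_1, c_2 > 0$. Define $S = \tilde S / \sqrt{c_1}$. Then $S S^\dagger = \Pi_1$, so $S$ is a partial isometry (Definition of partial isometry) with final projection $\Pi_{fin} = \Pi_1$. By the Proposition characterizing partial isometries, $S^\dagger S$ is a projection $\Pi_{in}$ of the same rank as $\Pi_1$. But also $S^\dagger S = \tilde S^\dagger \tilde S / c_1 = (c_2/c_1)\Pi_2$, and since a nonzero scalar multiple of a projection that is itself a projection forces the scalar to be $1$, we conclude $c_2 = c_1$ and $\Pi_{in} = \Pi_2$. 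In particular $\tr[\Pi_1] = \rnk[\Pi_1] = \rnk[\Pi_2] = \tr[\Pi_2]$, which together with $c_1 = \tr[\tilde S \tilde S^\dagger]/\tr[\Pi_1]$ gives the claimed formula $c = 1/\sqrt{c_1} = \sqrt{\tr[\Pi_1]/\tr[\tilde S \tilde S^\dagger]}$.

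Finally, $S = c\tilde S \in \mathcal A$ because $\mathcal A$ is closed under products and linear combinations and $\tilde S = \Pi_1 A \Pi_2$ is a product of elements of $\mathcal A$. Its initial and final projections are $\Pi_2$ and $\Pi_1$, both minimal by hypothesis, so $S$ is a minimal isometry by definition. The main subtlety — really the only place any thought is needed — is the step forcing $c_2 = c_1$: one must notice that $S^\dagger S$ is constrained in two ways (it is a projection by the partial-isometry proposition, and it is a scalar multiple of $\Pi_2$ by direct computation) and that these two facts are compatible only when the scalar is $1$. Everything else is routine manipulation of the defining identities for projections and partial isometries established earlier in this section.
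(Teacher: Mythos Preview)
Your proof is correct and follows essentially the same route as the paper's: use minimality of $\Pi_1$ to get $\tilde S\tilde S^\dagger=c_1\Pi_1$, normalize to make $S$ a partial isometry with $SS^\dagger=\Pi_1$, then use that $S^\dagger S$ is simultaneously a projection and a scalar multiple of $\Pi_2$ to force the scalar to be $1$. The only cosmetic difference is that the paper introduces the second scalar via minimality of $\Pi_2$ applied to $S^\dagger S$ directly rather than to $\tilde S^\dagger\tilde S$ first, but the logic is the same.
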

\begin{proof}
If $\tilde{S}=0$ then $\tilde{S}$ is trivially proportional to all
operators. If $\tilde{S}\neq0$ then by the definition of minimal
projections there is a proportionality factor $a$ such that 
\[
\tilde{S}\tilde{S}^{\dagger}=\Pi_{1}A\Pi_{2}\Pi_{2}A^{\dagger}\Pi_{1}=a\Pi_{1}
\]
We know that $a\neq0$ (otherwise $\tilde{S}\tilde{S}^{\dagger}=0$
and so $\tilde{S}=0$), therefore 
\[
a=\frac{\tr\left[\tilde{S}\tilde{S}^{\dagger}\right]}{\tr\left[\Pi_{1}\right]}.
\]
Since both $\tilde{S}\tilde{S}^{\dagger}\neq0$ and $\Pi_{1}\neq0$
are non-negative self-adjoint operators, $a$ must be a positive real.
The operator $S=c\tilde{S}$ where $c=\frac{1}{\sqrt{a}}$ is then
a partial isometry because $SS^{\dagger}=\Pi_{1}$ is a projection.

Similarly, we know that there is a proportionality factor $a'\neq0$
such that
\[
S^{\dagger}S=\frac{1}{a}\tilde{S}^{\dagger}\tilde{S}=\frac{1}{a}\Pi_{2}A^{\dagger}\Pi_{1}\Pi_{1}A\Pi_{2}=\frac{a'}{a}\Pi_{2}.
\]
Since $S^{\dagger}S$ and $\Pi_{2}$ are projections, we must have
$\frac{a'}{a}=1$. Therefore, both $SS^{\dagger}=\Pi_{1}$ and $S^{\dagger}S=\Pi_{2}$
are minimal projections, $\tr\left[SS^{\dagger}\right]=\tr\left[S^{\dagger}S\right]$,
and $S$ is a minimal isometry.
\end{proof}
We can now easily prove the fact that minimal projections of different
ranks are orthogonal.
\begin{cor}
Let $\Pi_{1}$ and $\Pi_{2}$ be minimal projections in $\mathcal{A}$
such that $\tr\left[\Pi_{1}\right]\neq\tr\left[\Pi_{2}\right]$, then
$\Pi_{1}\Pi_{2}=0$.
\end{cor}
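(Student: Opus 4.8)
The plan is to apply Lemma \ref{lem:construct of minimal isometr from minimal proj } to the operator $\tilde{S} = \Pi_1 \Pi_2$, which is of the form $\Pi_1 A \Pi_2$ with $A = I \in \mathcal{A}$ (the algebra is unital). First I would observe that if $\tilde{S} = \Pi_1 \Pi_2 \neq 0$, then the Lemma forces $\tr[\Pi_1] = \tr[\Pi_2]$, which is exactly the contrapositive of what we want: since we are assuming $\tr[\Pi_1] \neq \tr[\Pi_2]$, we must have $\Pi_1 \Pi_2 = 0$.

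The only subtlety is making sure the hypothesis $A \in \mathcal{A}$ in the Lemma is satisfied — i.e., that $I \in \mathcal{A}$, or more precisely that the projection acting as the identity on $\mathcal{A}$ belongs to $\mathcal{A}$. The text notes right after Lemma \ref{lem:exsistance of max set of min proj} that finite-dimensional operator algebras are always unital with such a projection $\Pi \in \mathcal{A}$, and that in our applications we may as well take $\Pi \equiv I$. So one can either invoke that directly, or avoid the issue entirely: since $\Pi_1, \Pi_2 \in \mathcal{A}$ are themselves elements of $\mathcal{A}$, the product $\Pi_1 \Pi_2 \in \mathcal{A}$ by closure under multiplication, and the computation $\tilde{S}\tilde{S}^\dagger = \Pi_1 \Pi_2 \Pi_1 = a\Pi_1$ (using that $\Pi_1$ is minimal and $\Pi_2\Pi_1^\dagger\Pi_1 \cdot$ — more simply that $\Pi_1 B \Pi_1 = c\Pi_1$ for $B = \Pi_2 \in \mathcal{A}$) already gives the key proportionality without needing $I$.

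Concretely I would argue as follows. Suppose for contradiction that $\Pi_1 \Pi_2 \neq 0$. Set $\tilde{S} = \Pi_1 \Pi_2 \in \mathcal{A}$. Since $\Pi_1$ is a minimal projection and $\Pi_2 \in \mathcal{A}$, we have $\tilde{S}\tilde{S}^\dagger = \Pi_1 \Pi_2 \Pi_1 = a \Pi_1$ for some $a \in \mathbb{C}$; and $a \neq 0$ since otherwise $\tilde{S}\tilde{S}^\dagger = 0$ would force $\tilde{S} = 0$. Taking traces, $a = \tr[\tilde{S}\tilde{S}^\dagger]/\tr[\Pi_1] > 0$. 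Then $S = \tilde{S}/\sqrt{a}$ satisfies $SS^\dagger = \Pi_1$, so $S$ is a partial isometry, and by the proposition on partial isometries $S^\dagger S$ is a projection of rank $\tr[\Pi_1]$. But $S^\dagger S = \frac{1}{a}\Pi_2 \Pi_1 \Pi_2 = \frac{a'}{a}\Pi_2$ for some scalar $a'$ (using minimality of $\Pi_2$), and since both $S^\dagger S$ and $\Pi_2$ are projections, Proposition on scalar multiples of projections gives $a'/a = 1$, hence $S^\dagger S = \Pi_2$. Therefore $\tr[\Pi_1] = \tr[SS^\dagger] = \tr[S^\dagger S] = \tr[\Pi_2]$, contradicting the hypothesis. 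Hence $\Pi_1 \Pi_2 = 0$.

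There is essentially no obstacle here — the corollary is an immediate specialization of Lemma \ref{lem:construct of minimal isometr from minimal proj } with $A = \Pi_2$ (or $A = I$), and the only thing worth spelling out is why the relevant operator lies in $\mathcal{A}$, which is handled by closure under products. If one prefers the cleanest possible writeup, simply say: "Apply Lemma \ref{lem:construct of minimal isometr from minimal proj } with $A = \Pi_2 \in \mathcal{A}$; if $\Pi_1 \Pi_2 = \Pi_1 A \Pi_2 \neq 0$ then the Lemma yields $\tr[\Pi_1] = \tr[\Pi_2]$, contradiction."
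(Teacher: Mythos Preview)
Your proposal is correct and takes essentially the same approach as the paper: apply Lemma~\ref{lem:construct of minimal isometr from minimal proj } with $A=I$ (the paper's choice) or $A=\Pi_2$ (your alternative) to conclude that $\Pi_1\Pi_2\neq 0$ would force $\tr[\Pi_1]=\tr[\Pi_2]$, a contradiction. Your observation that taking $A=\Pi_2\in\mathcal{A}$ sidesteps any worry about whether $I\in\mathcal{A}$ is a nice touch, though the paper simply assumes $I\in\mathcal{A}$ as noted earlier in the text.
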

\begin{proof}
Consider $A=I$ in Lemma \ref{lem:construct of minimal isometr from minimal proj }.
Then $\tilde{S}=\Pi_{1}A\Pi_{2}=\Pi_{1}\Pi_{2}$ so if $\Pi_{1}\Pi_{2}\neq0$
we must have $\tr\left[\Pi_{1}\right]=\tr\left[\Pi_{2}\right]$.
\end{proof}
The set of operators that fully captures the structure of an operator
algebra is the following special spanning set of minimal isometries.
\begin{defn}
\label{def:minimal isom}A set of partial isometries $\left\{ S_{kl}^{q}\right\} \subset\mathcal{A}$
is called \emph{maximal set of minimal isometries }in $\mathcal{A}$
if all $S_{kl}^{q}$ are minimal, the set $\left\{ S_{kl}^{q}\right\} $
spans the algebra
\[
\mathcal{A}=\spn\left\{ S_{kl}^{q}\right\} ,
\]
and for all values of $q$, $k$, $l$ we have $S_{kl}^{q}=S_{lk}^{q\dagger}$
and $S_{kl}^{q}S_{l'k'}^{q'}=\delta_{qq'}\delta_{ll'}S_{kk'}^{q}$.
\end{defn}
The existence of\emph{ }maximal sets of minimal isometries in every
algebra is guaranteed by the following theorem.
\begin{thm}
\label{thm: exsitance of max set of min isom}Let $\mathcal{A}\in\mathcal{L}\left(\mathcal{H}\right)$
be an operator algebra, then, there is a maximal set of minimal isometries
$\left\{ S_{kl}^{q}\right\} \subset\mathcal{A}$ that spans it.
\end{thm}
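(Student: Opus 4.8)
The plan is to build the maximal set of minimal isometries out of the maximal set of minimal projections whose existence is guaranteed by Lemma \ref{lem:exsistance of max set of min proj}, using the equivalence-class decomposition from Proposition \ref{prop: equiv classes of minimal proj} and the construction-from-projections result of Lemma \ref{lem:construct of minimal isometr from minimal proj }. So the first step is to fix a maximal set of minimal projections $\left\{ \Pi_{k}^{q}\right\}$, where $q$ labels the equivalence class (block) under the relation $\sim$, and within each block $q$ the index $k$ labels the distinct minimal projections. Within a fixed block, any two projections $\Pi_{k}^{q},\Pi_{l}^{q}$ satisfy $\Pi_{k}^{q}\sim\Pi_{l}^{q}$, so there is some $A\in\mathcal{A}$ with $\Pi_{k}^{q}A\Pi_{l}^{q}\neq0$; by Lemma \ref{lem:construct of minimal isometr from minimal proj } this is proportional to a minimal isometry, which I will call $S_{kl}^{q}$, with initial space $\Pi_{l}^{q}$ and final space $\Pi_{k}^{q}$. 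For the diagonal elements I simply set $S_{kk}^{q}=\Pi_{k}^{q}$. This also re-proves along the way that all minimal projections within one block have equal rank.

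\textbf{Second step: verify the algebraic relations.} I would check $S_{kl}^{q}=S_{lk}^{q\dagger}$ by noting that $S_{lk}^{q\dagger}$ is a minimal isometry with the same initial space $\Pi_{l}^{q}$ and same final space $\Pi_{k}^{q}$ as $S_{kl}^{q}$, hence by Lemma \ref{lem:uniqueness of min isometr } it equals $S_{kl}^{q}$ up to a phase; one then absorbs that phase by an appropriate choice — concretely, one fixes a ``reference'' projection $\Pi_{1}^{q}$ in each block, defines $S_{1l}^{q}$ freely, sets $S_{k1}^{q}=S_{1k}^{q\dagger}$, and then defines $S_{kl}^{q}:=S_{k1}^{q}S_{1l}^{q}$ for general $k,l$. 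With this definition the multiplication rule $S_{kl}^{q}S_{l'k'}^{q'}=\delta_{qq'}\delta_{ll'}S_{kk'}^{q}$ follows: the $\delta_{qq'}$ is forced because $\Pi_{k}^{q}A\Pi_{l}^{q'}=0$ across distinct blocks; inside a block one computes $S_{k1}^{q}S_{1l}^{q}S_{l'1}^{q}S_{1k'}^{q}$, uses $S_{1l}^{q}S_{l'1}^{q}=\Pi_{1}^{q}\delta_{ll'}$ (which itself needs checking from the uniqueness lemma, since $S_{1l}^{q}S_{l1}^{q}$ is a minimal isometry from $\Pi_{1}^{q}$ to $\Pi_{1}^{q}$, i.e. equal to $\Pi_{1}^{q}$ up to phase, which one arranges to be $1$ by possibly rescaling the $S_{1l}^{q}$), and collapses the product to $S_{k1}^{q}S_{1k'}^{q}=S_{kk'}^{q}$. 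The self-adjointness relation then also holds for all $k,l$ since $(S_{k1}^{q}S_{1l}^{q})^{\dagger}=S_{1l}^{q\dagger}S_{k1}^{q\dagger}=S_{l1}^{q}S_{1k}^{q}=S_{lk}^{q}$.

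\textbf{Third step: the spanning property,} which I expect to be the main obstacle. One has to show $\mathcal{A}=\spn\left\{ S_{kl}^{q}\right\}$. The inclusion $\spn\left\{ S_{kl}^{q}\right\}\subseteq\mathcal{A}$ is immediate since each $S_{kl}^{q}\in\mathcal{A}$. For the reverse inclusion, take arbitrary $A\in\mathcal{A}$ and write $A=\sum_{q,q'}\sum_{k,l}\Pi_{k}^{q}A\Pi_{l}^{q'}$ using $I=\sum_{q,k}\Pi_{k}^{q}$; the cross-block terms vanish, so $A=\sum_{q}\sum_{k,l}\Pi_{k}^{q}A\Pi_{l}^{q}$. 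The crux is to argue each ``matrix element'' block $\Pi_{k}^{q}A\Pi_{l}^{q}$ is a scalar multiple of $S_{kl}^{q}$. For $k=l$ this is just minimality of $\Pi_{k}^{q}$. For $k\neq l$, the operator $\Pi_{k}^{q}A\Pi_{l}^{q}$ is by Lemma \ref{lem:construct of minimal isometr from minimal proj } proportional to a minimal isometry with initial space $\Pi_{l}^{q}$ and final space $\Pi_{k}^{q}$, and by the uniqueness Lemma \ref{lem:uniqueness of min isometr } any such minimal isometry is a phase multiple of $S_{kl}^{q}$ — hence $\Pi_{k}^{q}A\Pi_{l}^{q}=c_{kl}^{q}S_{kl}^{q}$ for some $c_{kl}^{q}\in\mathbb{C}$ (allowing $c=0$ for the zero case). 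Summing gives $A\in\spn\left\{ S_{kl}^{q}\right\}$. The subtle point to get right is the normalization bookkeeping across the whole block so that $S_{1l}^{q}S_{l1}^{q}=\Pi_{1}^{q}$ exactly (not merely up to a positive scalar) — this relies on both factors being genuine partial isometries, so their composite, being a minimal isometry from $\Pi_{1}^{q}$ to itself, must literally equal $\Pi_{1}^{q}$, not a rescaling of it; once that anchor is fixed, every other relation propagates cleanly, and I would also remark that $\sum_{q,k}S_{kk}^{q}=\sum_{q,k}\Pi_{k}^{q}=I$, so the identity lies in the span as expected.
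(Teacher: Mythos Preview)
Your proposal is correct and follows essentially the same route as the paper: start from a maximal set of minimal projections (Lemma~\ref{lem:exsistance of max set of min proj}) partitioned into blocks via Proposition~\ref{prop: equiv classes of minimal proj}, anchor each block at a reference projection $\Pi_{1}^{q}$, build $S_{k1}^{q}$ via Lemma~\ref{lem:construct of minimal isometr from minimal proj }, set $S_{1k}^{q}=(S_{k1}^{q})^{\dagger}$ and $S_{kl}^{q}=S_{k1}^{q}S_{1l}^{q}$, and then invoke the uniqueness Lemma~\ref{lem:uniqueness of min isometr } together with $I=\sum_{q,k}\Pi_{k}^{q}$ to obtain the spanning property. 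Your extra care about the normalization of $S_{1l}^{q}S_{l1}^{q}$ is in fact unnecessary---since $S_{l1}^{q}=S_{1l}^{q\dagger}$ by construction, one has $S_{1l}^{q}S_{l1}^{q}=S_{1l}^{q}S_{1l}^{q\dagger}=\Pi_{1}^{q}$ exactly, with no phase to fix---but this does not affect correctness.
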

\begin{proof}
Let $\left\{ \Pi_{k}^{q}\right\} $ be a maximal set of minimal projections
in $\mathcal{A}$ provided by Lemma \ref{lem:exsistance of max set of min proj}
and partitioned into equivalence classes $q$ according to Proposition
\ref{prop: equiv classes of minimal proj}. By the definition of these
equivalence classes, for every $q$, $k$, $l$ there is at least
one $A\in\mathcal{A}$ such that $\Pi_{k}^{q}A\Pi_{l}^{q}\neq0$.
Then, according to Lemma \ref{lem:construct of minimal isometr from minimal proj },
for each $q$, $k$, $l$ there is a minimal isometry $S_{kl}^{q}\in\mathcal{A}$
and a real positive constant $c$ such that $S_{kl}^{q}=c\Pi_{k}^{q}A\Pi_{l}^{q}$
for some $A\in\mathcal{A}$. In order to get the desired properties
of the Definition \ref{def:minimal isom} we can construct the maximal
set of minimal isometries with the following procedure. First, for
each $q$, $k$ arbitrarily choose $A\in\mathcal{A}$ such that $S_{k1}^{q}=c\Pi_{k}^{q}A\Pi_{1}^{q}\neq0$
(we fixed $l=1$ but it does not matter what value of $l$ is fixed).
Then, define $S_{1k}^{q}=\left(S_{k1}^{q}\right)^{\dagger}$ and $S_{kl}^{q}=S_{k1}^{q}S_{1l}^{q}$
which are also minimal isometries in $\mathcal{A}$. Thus, for all
values of $q$, $k$, $l$ we have
\[
S_{kl}^{q}=S_{k1}^{q}S_{1l}^{q}=\left(S_{l1}^{q}S_{1k}^{q}\right)^{\dagger}=S_{lk}^{q\dagger}
\]
\[
S_{kl}^{q}S_{l'k'}^{q'}=S_{k1}^{q}S_{1l}^{q}S_{l'1}^{q'}S_{1k'}^{q'}=\delta_{qq'}\delta_{ll'}S_{k1}^{q}S_{1k'}^{q'}=\delta_{qq'}\delta_{ll'}S_{kk'}^{q}.
\]
Lemma \ref{lem:uniqueness of min isometr } implies that each minimal
isometry is unique in $\mathcal{A}$ up to a phase factor. Therefore,
given the set $\left\{ S_{kl}^{q}\right\} $ as constructed above,
for any $A\in\mathcal{A}$ we either have $\Pi_{k}^{q}A\Pi_{l}^{q}=0$
or $\Pi_{k}^{q}A\Pi_{l}^{q}=\frac{e^{i\varphi}}{c}S_{kl}^{q}$ for
some phase $\varphi$ and a real $c$. Recalling that $\sum_{q,k}\Pi_{k}^{q}=I$,
we can express any $A\in\mathcal{A}$ as 
\[
A=\left(\sum_{q,k}\Pi_{k}^{q}\right)A\left(\sum_{q',l}\Pi_{l}^{q'}\right)=\sum_{q,k,l}\Pi_{k}^{q}A\Pi_{l}^{q}=\sum_{q,k,l}c_{kl}^{q}S_{kl}^{q}
\]
where we have used the fact that $\Pi_{k}^{q}A\Pi_{l}^{q'}=0$ for
$q\neq q'$ and introduced the complex coefficients $c_{kl}^{q}$.
Therefore $\mathcal{A}=\spn\left\{ S_{kl}^{q}\right\} $.
\end{proof}
Note that for $k=l$ the minimal isometries $S_{kk}^{q}$ are actually
the minimal projections $\Pi_{k}^{q}$, so the maximal set of minimal
projections $\left\{ \Pi_{k}^{q}\right\} $ is a subset of the maximal
set of minimal isometries $\left\{ S_{kl}^{q}\right\} $.

Although it was not very easy to get to the general result of Theorem
\ref{thm: exsitance of max set of min isom}, maximal sets of minimal
isometries are quite easy to find in some canonical examples. For
example, in the case of the full operator algebra $\mathcal{L}\left(\mathcal{H}_{qudit}\right)$
of a qudit $\mathcal{H}_{qudit}=\spn\left\{ \ket k\right\} _{k=1,...,d}$,
the minimal isometries are simply the matrix units
\[
S_{kl}=\ket k\bra l.
\]
Here the index $q$ is suppressed because all $\left\{ S_{kl}\right\} $
belong to the same block as we cannot partition them into subsets
that are completely orthogonal to each other.

For a a slightly more interesting example we may consider the qudit
$\mathcal{H}_{qudit}$ in a tensor product with a qubit $\mathcal{H}_{qubit}=\spn\left\{ \ket 0,\ket 1\right\} $.
Then, consider the algebra
\begin{equation}
\mathcal{A}=\left\{ \ket q\bra q\otimes A\,\,\,\,|\,q=0,1;\,\,\,A\in\mathcal{L}\left(\mathcal{H}_{qudit}\right)\right\} \label{eq: def examp. algebra 1}
\end{equation}
with the maximal set of minimal isometries 
\[
S_{kl}^{q}=\ket q\bra q\otimes\ket k\bra l.
\]
Here the index $q=0,1$ distinguishes the two blocks of completely
orthogonal isometries.

In both of the above examples, the initial and the final spaces of
the isometries are one-dimensional $\tr\left[S_{kl}^{q}S_{lk}^{q}\right]=1$.
In general, this is not the case and one should think of $\left\{ S_{kl}^{q}\right\} $
as generalized matrix units that map between orthogonal subspaces
of dimension one or higher. In the next section we will show how maximal
sets of minimal isometries fully capture the structure of irreducible
representations of operator algebras.

\section{Bipartition tables and the irreps structure\label{sec:Irrep-structures-and}}

In order to understand what a maximal set of minimal isometries tells
us about the algebra we will introduce a neat visual aid that captures
the implied structure. This visual aid is called a \emph{bipartition
table} and we will see that it specifies the structure of irreducible
representations. The correspondence between maximal sets of minimal
isometries and bipartition tables leads to the main result of the
representation theory of (finite-dimensional) operator algebras known
as the Wedderburn Decomposition.

Let us start with the definition.
\begin{defn}
\label{def:A-bipartition-table}A \emph{bipartition table (BPT)} is
an arrangement of some basis of the Hilbert space into a block-diagonal
table. This arrangement is specified by a choice of orthonormal basis
elements $\left\{ \ket{e_{ik}^{q}}\right\} $ labeled with the indices
of blocks $q$, rows $i$ and columns $k$. For each block $q$ we
construct the rectangular table

\noindent\begin{minipage}[c]{1\columnwidth}%
\begin{center}
\vspace{0.5\baselineskip}
\begin{tabular}{|c|c|c|}
\hline 
$e_{1,1}^{q}$ & $e_{1,2}^{q}$ & $\cdots$\tabularnewline
\hline 
$e_{2,1}^{q}$ & $e_{2,2}^{q}$ & $\cdots$\tabularnewline
\hline 
$\vdots$ & $\vdots$ & $\ddots$\tabularnewline
\hline 
\end{tabular} ,\vspace{0.5\baselineskip}
\par\end{center}%
\end{minipage} and the full bipartition table is given by the diagonal arrangement
of all the blocks

\noindent\begin{minipage}[c]{1\columnwidth}%
\begin{center}
\vspace{0.5\baselineskip}
\begin{tabular}{cc|cccccc}
\cline{1-2} \cline{2-2} 
\multicolumn{1}{|c|}{$e_{1,1}^{1}$} & $\cdots$ &  &  &  &  &  & \tabularnewline
\cline{1-2} \cline{2-2} 
\multicolumn{1}{|c|}{$\vdots$} & $\ddots$ &  &  &  &  &  & \tabularnewline
\cline{1-4} \cline{2-4} \cline{3-4} \cline{4-4} 
 &  & \multicolumn{1}{c|}{$e_{1,1}^{2}$} & \multicolumn{1}{c|}{$\cdots$} &  &  &  & \tabularnewline
\cline{3-4} \cline{4-4} 
 &  & \multicolumn{1}{c|}{$\vdots$} & \multicolumn{1}{c|}{$\ddots$} &  &  &  & \tabularnewline
\cline{3-4} \cline{4-4} 
 & \multicolumn{1}{c}{} &  &  & $\ddots$ &  &  & \tabularnewline
\cline{6-7} \cline{7-7} 
 & \multicolumn{1}{c}{} &  &  & \multicolumn{1}{c|}{} & \multicolumn{1}{c|}{$e_{1,1}^{q}$} & \multicolumn{1}{c|}{$\cdots$} & \tabularnewline
\cline{6-7} \cline{7-7} 
 & \multicolumn{1}{c}{} &  &  & \multicolumn{1}{c|}{} & \multicolumn{1}{c|}{$\vdots$} & \multicolumn{1}{c|}{$\ddots$} & \tabularnewline
\cline{6-7} \cline{7-7} 
 & \multicolumn{1}{c}{} &  &  &  &  &  & $\ddots$\tabularnewline
\end{tabular}\vspace{0.5\baselineskip}
\par\end{center}%
\end{minipage}
\end{defn}
What makes BPTs useful is that they tell us how to construct maximal
sets of minimal isometries. The construction is simple: Each pair
of columns $k$, $l$ in the block $q$, specifies the isometry
\begin{equation}
S_{kl}^{q}:=\sum_{i}\ket{e_{ik}^{q}}\bra{e_{il}^{q}},\label{eq: def of S_kl from BPT}
\end{equation}
where $i$ runs over all the rows in the block. The blocks of the
BPT partition the minimal isometries into orthogonal subsets, that
is $S_{kl}^{q}S_{k'l'}^{q'}=0$ for $q\neq q'$. The subsets of the
basis $\left\{ \ket{e_{il}^{q}}\right\} _{i=1,...}$ and $\left\{ \ket{e_{ik}^{q}}\right\} _{i=1,...}$
given by the columns $l$ and $k$ specify the initial and final spaces
of the isometry $S_{kl}^{q}$. The alignment of basis elements across
the rows specifies how the isometries map the vectors between the
subspaces, that is, the basis element $\ket{e_{il}^{q}}$ is mapped
to $\ket{e_{ik}^{q}}$ (these are the right and left singular vectors
of $S_{kl}^{q}$).

It is easy to show that the set of isometries constructed in this
way spans an algebra.
\begin{prop}
\label{prop: bpt isometries span an algebra}Let $\left\{ S_{kl}^{q}\right\} $
be the set of partial isometries constructed from a bipartition table
according to Eq. \eqref{eq: def of S_kl from BPT}. Then, $\mathcal{A}=\spn\left\{ S_{kl}^{q}\right\} $
is an operator algebra and $\left\{ S_{kl}^{q}\right\} $ is a maximal
set of minimal isometries in $\mathcal{A}$.
\end{prop}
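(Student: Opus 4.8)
The plan is to establish the familiar ``matrix unit'' relations for the operators $S_{kl}^{q}$ directly from the definition \eqref{eq: def of S_kl from BPT}, and then to read every claimed property off those relations. First I would record the two basic identities. Since $\left\{ \ket{e_{ik}^{q}}\right\}$ is an orthonormal basis, $\braket{e_{il}^{q}}{e_{i'l'}^{q'}}=\delta_{qq'}\delta_{ll'}\delta_{ii'}$, and a direct computation yields
\[
\left(S_{kl}^{q}\right)^{\dagger}=S_{lk}^{q},\qquad S_{kl}^{q}S_{l'k'}^{q'}=\delta_{qq'}\delta_{ll'}\,S_{kk'}^{q}.
\]
The first identity is immediate; in the second one uses that within a fixed block $q$ the row index $i$ ranges over the same set for every column, while basis vectors sitting in different blocks are orthogonal (this is where orthonormality across blocks, not merely within a block, enters).

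Given these identities, the first assertion is short, and I would verify the three conditions of Definition \ref{def: operator-algebra} in turn. Condition (1) holds because $\mathcal{A}$ is by construction a span. For condition (2) I would expand arbitrary $A=\sum_{q,k,l}c_{kl}^{q}S_{kl}^{q}$ and $B=\sum_{q,k,l}d_{kl}^{q}S_{kl}^{q}$, and use bilinearity of the product together with the product rule above to conclude that $AB$ is again a linear combination of the $S_{kl}^{q}$, hence lies in $\mathcal{A}$. For condition (3) I would use that the adjoint is antilinear and that $\left(S_{kl}^{q}\right)^{\dagger}=S_{lk}^{q}\in\mathcal{A}$.

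For the second assertion I would check the requirements in the definition of a maximal set of minimal isometries (Definition \ref{def:minimal isom}). The spanning property $\mathcal{A}=\spn\left\{ S_{kl}^{q}\right\}$ is the definition of $\mathcal{A}$, and the relations $S_{kl}^{q}=\left(S_{lk}^{q}\right)^{\dagger}$, $S_{kl}^{q}S_{l'k'}^{q'}=\delta_{qq'}\delta_{ll'}S_{kk'}^{q}$ are exactly the identities recorded above. It remains to show that each $S_{kl}^{q}$ is a minimal isometry. From the product rule, $S_{kl}^{q}\left(S_{kl}^{q}\right)^{\dagger}=S_{kk}^{q}$ and $\left(S_{kl}^{q}\right)^{\dagger}S_{kl}^{q}=S_{ll}^{q}$; each $S_{kk}^{q}=\sum_{i}\ketbra{e_{ik}^{q}}{e_{ik}^{q}}$ is a sum of mutually orthogonal rank-one projections, hence a projection, so $S_{kl}^{q}$ is a partial isometry with final and initial projections $S_{kk}^{q}$ and $S_{ll}^{q}$. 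Finally, for any $A=\sum_{q',k',l'}c_{k'l'}^{q'}S_{k'l'}^{q'}\in\mathcal{A}$, two applications of the product rule give $S_{kk}^{q}S_{k'l'}^{q'}S_{kk}^{q}=\delta_{qq'}\delta_{kk'}\delta_{kl'}\,S_{kk}^{q}$, so $S_{kk}^{q}AS_{kk}^{q}=c_{kk}^{q}S_{kk}^{q}$, which is the defining property of a minimal projection. Thus the initial and final projections of each $S_{kl}^{q}$ are minimal, so each $S_{kl}^{q}$ is a minimal isometry.

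I do not expect any serious obstacle; the proposition is essentially bookkeeping. The one point that warrants care is keeping the index ranges straight: that the product $S_{kl}^{q}S_{l'k'}^{q'}$ vanishes whenever $q\neq q'$, and that the row index $i$ runs over a block-dependent but column-independent set, so that the $S_{kk'}^{q}$ appearing on the right-hand side of the product rule is well defined.
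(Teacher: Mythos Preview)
Your proposal is correct and follows essentially the same route as the paper: establish the matrix-unit relations $\left(S_{kl}^{q}\right)^{\dagger}=S_{lk}^{q}$ and $S_{kl}^{q}S_{l'k'}^{q'}=\delta_{qq'}\delta_{ll'}S_{kk'}^{q}$ from the definition, and read off the algebra axioms and the maximal-set-of-minimal-isometries conditions directly. If anything you are more thorough than the paper, which asserts the identities as ``clear'' and does not separately spell out the minimality of $S_{kk}^{q}$ via $S_{kk}^{q}AS_{kk}^{q}=c_{kk}^{q}S_{kk}^{q}$ as you do.
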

\begin{proof}
Clearly, for all $q$, $k$, $l$ we have $S_{kl}^{q}=S_{lk}^{q\dagger}$
and $S_{kl}^{q}S_{l'k'}^{q'}=\delta_{qq'}\delta_{ll'}S_{kk'}^{q}$.
Therefore, for any $A_{1},A_{2}\in\spn\left\{ S_{kl}^{q}\right\} $
and $c_{1},c_{2}\in\mathbb{C}$ we have: $c_{1}A_{1}+c_{2}A_{2}\in\spn\left\{ S_{kl}^{q}\right\} $,
and $A_{1}A_{2}\in\spn\left\{ S_{kl}^{q}\right\} $, and $A_{1}^{\dagger},A_{2}^{\dagger}\in\spn\left\{ S_{kl}^{q}\right\} $.
Thus by definition $\mathcal{A}=\spn\left\{ S_{kl}^{q}\right\} $
is an operator algebra and $\left\{ S_{kl}^{q}\right\} $ is a maximal
set of minimal isometries in it.
\end{proof}
Let us consider some examples of BPTs that specify the minimal isometries
of some familiar algebras.

The minimal isometries $S_{kl}=\ket k\bra l$ of the full operator
algebra $\mathcal{L}\left(\mathcal{H}_{qudit}\right)$ of the qudit
$\mathcal{H}_{qudit}=\spn\left\{ \ket k\right\} _{k=1,...,d}$, are
constructed from the BPT

\noindent\begin{minipage}[c]{1\columnwidth}%
\begin{center}
\vspace{0.5\baselineskip}
\begin{tabular}{|c|c|c|c|}
\hline 
$1$ & $2$ & $\cdots$ & $d$\tabularnewline
\hline 
\end{tabular} .\vspace{0.5\baselineskip}
\par\end{center}%
\end{minipage} There is only one block here and this block has only one row. Following
the construction in Eq. \eqref{eq: def of S_kl from BPT} we can reproduce
all the minimal isometries of $\mathcal{L}\left(\mathcal{H}_{qudit}\right)$.

Adding a qubit $\mathcal{H}_{qubit}\otimes\mathcal{H}_{qudit}$ to
the qudit we consider again the algebra in Eq. \eqref{eq: def examp. algebra 1}.
Using the combined basis labels $\ket{q,k}\equiv\ket q_{qubit}\otimes\ket k_{qudit}$,
we can see that all the minimal isometries are of the form
\[
S_{kl}^{q}=\ket q\bra q\otimes\ket k\bra l\equiv\ket{q,k}\bra{q,l}.
\]
These isometries can be constructed from the BPT

\noindent\begin{minipage}[c]{1\columnwidth}%
\begin{center}
\vspace{0.5\baselineskip}
\begin{tabular}{|c|c|c|c|c|c|c|c|}
\cline{1-4} \cline{2-4} \cline{3-4} \cline{4-4} 
$0,1$ & $0,2$ & $\cdots$ & $0,d$ & \multicolumn{1}{c}{} & \multicolumn{1}{c}{} & \multicolumn{1}{c}{} & \multicolumn{1}{c}{}\tabularnewline
\hline 
\multicolumn{1}{c}{} & \multicolumn{1}{c}{} & \multicolumn{1}{c}{} &  & $1,1$ & $1,2$ & $\cdots$ & $1,d$\tabularnewline
\cline{5-8} \cline{6-8} \cline{7-8} \cline{8-8} 
\end{tabular} .\vspace{0.5\baselineskip}
\par\end{center}%
\end{minipage} Here we have two blocks with one row each.

Lastly, still with the Hilbert space $\mathcal{H}_{qubit}\otimes\mathcal{H}_{qudit}$,
consider the algebra of all the operators that act only on the qudit
\[
\mathcal{A}=\left\{ I_{qubit}\otimes A_{qudit}\,\,\,\,|\,A_{qudit}\in\mathcal{L}\left(\mathcal{H}_{qudit}\right)\right\} .
\]
The maximal set of minimal isometries in this case consists of
\[
S_{kl}=I_{qubit}\otimes\ket k\bra l=\ket{0,k}\bra{0,l}+\ket{1,k}\bra{1,l}
\]
and the BPT that produces them is

\noindent\begin{minipage}[c]{1\columnwidth}%
\begin{center}
\vspace{0.5\baselineskip}
\begin{tabular}{|c|c|c|c|}
\hline 
$0,1$ & $0,2$ & $\cdots$ & $0,d$\tabularnewline
\hline 
$1,1$ & $1,2$ & $\cdots$ & $1,d$\tabularnewline
\hline 
\end{tabular} .\vspace{0.5\baselineskip}
\par\end{center}%
\end{minipage} Here we have a single block with two rows.

In general, given a maximal set of minimal isometries of the algebra,
we can always find a BPT that produces it.
\begin{lem}
\label{lem: for all S_kl there is a BPT}Let $\left\{ S_{kl}^{q}\right\} $
be a maximal set of minimal isometries. Then, there is a BPT that
produces all $\left\{ S_{kl}^{q}\right\} $ according to Eq. \eqref{eq: def of S_kl from BPT}.
\end{lem}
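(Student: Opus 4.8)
The plan is to construct the BPT directly from the data contained in the maximal set of minimal isometries $\left\{ S_{kl}^{q}\right\}$, using its algebraic structure to organize a basis of $\mathcal{H}$ into the required block-diagonal table. The key observation is that the diagonal elements $\Pi_{k}^{q}:=S_{kk}^{q}$ are minimal projections (this was noted right after Theorem \ref{thm: exsitance of max set of min isom}), they are pairwise orthogonal with $\sum_{q,k}\Pi_{k}^{q}=I$ (since the set spans $\mathcal{A}$ and hence contains $I$, and from the multiplication rule $S_{kk}^{q}S_{k'k'}^{q'}=\delta_{qq'}\delta_{kk'}S_{kk}^{q}$), and for each fixed $q$ all the $\Pi_{k}^{q}$ have a common rank $r_q$ — because $S_{k1}^{q}$ is a minimal isometry with initial space $\Pi_{1}^{q}$ and final space $\Pi_{k}^{q}$, so Lemma \ref{lem:construct of minimal isometr from minimal proj } (or directly the definition of partial isometry) forces $\tr\left[\Pi_{k}^{q}\right]=\tr\left[\Pi_{1}^{q}\right]$.

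First I would fix, for each block $q$, an orthonormal basis $\left\{\ket{e_{i1}^{q}}\right\}_{i=1,\dots,r_q}$ of the eigenspace of $\Pi_{1}^{q}$ (the first column). Then for every other column index $k$ I would \emph{define} $\ket{e_{ik}^{q}}:=S_{k1}^{q}\ket{e_{i1}^{q}}$. The next step is to verify that the resulting collection $\left\{\ket{e_{ik}^{q}}\right\}$ is orthonormal: this is a routine computation using $\bra{e_{il}^{q}}S_{l'k'}^{q'\dagger}S_{k1}^{q}\ket{e_{i'1}^{q}}$ together with $S_{kl}^{q\dagger}=S_{lk}^{q}$ and $S_{lk}^{q}S_{k'1}^{q'}=\delta^{qq'}\delta_{kk'}S_{l1}^{q}$, which collapses the inner product to $\delta_{qq'}\delta_{kl}\braket{e_{i1}^{q}}{e_{i'1}^{q}}=\delta_{qq'}\delta_{kl}\delta_{ii'}$. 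Then I would check that this orthonormal set is in fact a basis of all of $\mathcal{H}$: the vectors in column $k$ of block $q$ span the eigenspace of $\Pi_{k}^{q}=S_{k1}^{q}S_{1k}^{q}$ (because $S_{1k}^{q}$ maps that eigenspace isometrically onto the eigenspace of $\Pi_{1}^{q}$, which the $\ket{e_{i1}^{q}}$ span), and summing over all $q,k$ gives $\sum_{q,k}\Pi_{k}^{q}=I$, so together they span $\mathcal{H}$. Finally, arranging $\ket{e_{ik}^{q}}$ into a table with block index $q$, row index $i$, column index $k$ yields a valid BPT in the sense of Definition \ref{def:A-bipartition-table}, and plugging it into Eq. \eqref{eq: def of S_kl from BPT} recovers $\sum_i \ket{e_{ik}^{q}}\bra{e_{il}^{q}} = \sum_i S_{k1}^{q}\ket{e_{i1}^{q}}\bra{e_{i1}^{q}}S_{1l}^{q} = S_{k1}^{q}\Pi_{1}^{q}S_{1l}^{q}=S_{kl}^{q}$, which is exactly what we wanted.

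The main obstacle I anticipate is bookkeeping rather than conceptual: one must be careful that the number of rows $r_q$ is genuinely block-dependent and that the definition $\ket{e_{ik}^{q}}=S_{k1}^{q}\ket{e_{i1}^{q}}$ is consistent — in particular that for $k=1$ it returns $\ket{e_{i1}^{q}}$ itself, which requires knowing $S_{11}^{q}=\Pi_{1}^{q}$ acts as the identity on its own eigenspace. One should also confirm at the outset that $I\in\mathcal{A}$ so that $\sum_{q,k}\Pi_{k}^{q}=I$ rather than merely some projection in $\mathcal{A}$; this is fine since the spanning set $\left\{S_{kl}^{q}\right\}$ generates a unital algebra and, as remarked earlier in the chapter, in our applications $I\in\mathcal{A}$. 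No genuinely hard estimate or deep idea is needed; the content is just the translation between the operator relations $S_{kl}^{q}=S_{lk}^{q\dagger}$, $S_{kl}^{q}S_{l'k'}^{q'}=\delta_{qq'}\delta_{ll'}S_{kk'}^{q}$ and the combinatorial picture of a block-diagonal table of basis vectors.
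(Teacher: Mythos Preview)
Your proposal is correct and follows essentially the same construction as the paper: choose an orthonormal basis for the eigenspace of $S_{11}^{q}=\Pi_{1}^{q}$ as the first column of each block, transport it to the other columns via $\ket{e_{ik}^{q}}:=S_{k1}^{q}\ket{e_{i1}^{q}}$, and then verify $\sum_i \ket{e_{ik}^{q}}\bra{e_{il}^{q}} = S_{k1}^{q}\Pi_{1}^{q}S_{1l}^{q}=S_{kl}^{q}$. You are in fact more thorough than the paper, which does not explicitly check orthonormality or completeness; the paper also handles the case $I\notin\mathcal{A}$ by a remark after the lemma rather than inside the proof, so your caveat there is well placed.
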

\begin{proof}
Given $\left\{ S_{kl}^{q}\right\} $ let us explicitly construct this
BPT as follows:

\noindent\begin{minipage}[t]{1\columnwidth}%
\begin{enumerate}
\item Each $q$ corresponds to a separate block of the BPT constructed independently.
\item Arbitrarily choose orthonormal basis $\left\{ \ket{e_{i1}^{q}}\right\} _{i=1,...}$
for the eigenspace of $S_{11}^{q}$ and assign them to the first column\\
\noindent\begin{minipage}[c]{1\columnwidth}%
\begin{center}
\vspace{0.5\baselineskip}
\begin{tabular}{|c|}
\hline 
$e_{11}^{q}$\tabularnewline
\hline 
$e_{21}^{q}$\tabularnewline
\hline 
$\vdots$\tabularnewline
\hline 
\end{tabular} .\vspace{0.5\baselineskip}
\par\end{center}%
\end{minipage}
\item For every $k>1$ map the first column to a new column in the block
using the isometries $\ket{e_{ik}^{q}}=S_{k1}^{q}\ket{e_{i1}^{q}}$\\
\noindent\begin{minipage}[c]{1\columnwidth}%
\begin{center}
\vspace{0.5\baselineskip}
\begin{tabular}{|c|c|c|c|}
\hline 
$e_{11}^{q}$ & $\cdots$ & $e_{1k}^{q}$ & $\cdots$\tabularnewline
\hline 
$e_{21}^{q}$ & $\cdots$ & $e_{2k}^{q}$ & $\cdots$\tabularnewline
\hline 
$\vdots$ & $\vdots$ & $\vdots$ & $\ddots$\tabularnewline
\hline 
\end{tabular} .\vspace{0.5\baselineskip}
\par\end{center}%
\end{minipage}
\end{enumerate}
\end{minipage}

\noindent Then, according to Eq. \eqref{eq: def of S_kl from BPT},
the isometries constructed from this table are 
\[
\tilde{S}_{kl}^{q}=\sum_{i}\ket{e_{ik}^{q}}\bra{e_{il}^{q}}=S_{k1}^{q}\left[\sum_{i}\ket{e_{i1}^{q}}\bra{e_{i1}^{q}}\right]S_{1l}^{q}=S_{k1}^{q}S_{11}^{q}S_{1l}^{q}=S_{kl}^{q}.
\]
\end{proof}
Note that when $\left\{ S_{kl}^{q}\right\} $ is not supported on
the whole Hilbert space $\mathcal{H}$, the set of orthonormal basis
$\left\{ \ket{e_{ik}^{q}}\right\} $ constructed in the above lemma
is not complete, and it only spans a proper subspace of $\mathcal{H}$
where the algebra is supported. It can be shown that the algebra $\mathcal{A}$
is supported on the whole Hilbert space $\mathcal{H}$, if and only
if $I\in\mathcal{A}$.

The above Lemma closes the logical arc started with the Theorem \ref{thm: exsitance of max set of min isom}
and Proposition \ref{prop: bpt isometries span an algebra}: Every
operator algebra is spanned by a maximal set of minimal isometries
that can be constructed from a BPT, and every BPT constructs a set
of minimal isometries that span an operator algebra. Thus, we can
directly relate BPTs to operator algebras and operator algebras to
BPTs.

We already know that by construction \ref{eq: def of S_kl from BPT},
the columns of the BPT specify a maximal set of minimal projections
in the algebra. The rows of the BPT are also meaningful and they specify
the following subspaces.
\begin{defn}
Let $\mathcal{A}\subseteq\mathcal{L}\left(\mathcal{H}\right)$ be
an operator algebra. The subspace $\mathcal{V}\subseteq\mathcal{H}$
is called an \emph{invariant subspace} under $\mathcal{A}$ if for
all $\ket{\psi}\in\mathcal{V}$ and $A\in\mathcal{A}$ we have $A\ket{\psi}\in\mathcal{\mathcal{V}}$.
If, in addition, every proper subspace $\mathcal{\mathcal{V}}'\subset\mathcal{V}$
is not invariant, then $\mathcal{V}$ is called a \emph{minimal} invariant
subspace.
\end{defn}
\begin{prop}
Let $\left\{ \ket{e_{ik}^{q}}\right\} $ be the orthonormal basis
forming a BPT of the operator algebra $\mathcal{A}$. Then, every
subspace $\mathcal{V}:=\spn\left\{ \ket{e_{il}^{q}}\right\} _{l=1,...}$
spanned by the basis elements in a single row is a minimal invariant
subspace\emph{.}
\end{prop}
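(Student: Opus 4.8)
The plan is to prove two things: that each row-span $\mathcal{V}=\spn\{\ket{e_{il}^{q}}\}_{l=1,\dots}$ (with $q$ and $i$ fixed) is invariant under $\mathcal{A}$, and that it contains no proper invariant subspace. For invariance, I would use the fact from Proposition~\ref{prop: bpt isometries span an algebra} that $\mathcal{A}=\spn\{S_{kl}^{q}\}$ with $S_{kl}^{q}=\sum_{i}\ket{e_{ik}^{q}}\bra{e_{il}^{q}}$, so it suffices to check the action of each $S_{kl}^{q}$ on each basis vector $\ket{e_{il}^{q}}$ of $\mathcal{V}$. A direct computation gives $S_{k'l'}^{q'}\ket{e_{il}^{q}} = \delta_{q'q}\,\delta_{l'l}\,\ket{e_{ik'}^{q}}$, using orthonormality of the BPT basis across blocks and within a block. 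This lands either in $\mathcal{V}$ (when $q'=q$, $l'=l$) or is zero, so $\mathcal{V}$ is invariant.

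For minimality, I would take a nonzero vector $\ket{\psi}\in\mathcal{V}$, write $\ket{\psi}=\sum_{l}c_{l}\ket{e_{il}^{q}}$ with some $c_{l_0}\neq0$, and show that the invariant subspace generated by $\ket{\psi}$ is all of $\mathcal{V}$. Applying the projection $S_{l_0 l_0}^{q}=\Pi_{l_0}^{q}$-type minimal isometry — more precisely, applying $S_{l_0 l_0}^{q}\in\mathcal{A}$ — picks out $c_{l_0}\ket{e_{i l_0}^{q}}$, so $\ket{e_{i l_0}^{q}}$ is in the generated subspace; then applying $S_{l l_0}^{q}$ for each $l$ produces every $\ket{e_{il}^{q}}$. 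Hence any invariant subspace containing a nonzero vector of $\mathcal{V}$ is all of $\mathcal{V}$, which is exactly minimality.

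The one genuinely delicate point is bookkeeping with the singular-value structure: the isometry $S_{kl}^{q}$ sends the \emph{whole} column $l$ (i.e.\ all the $\ket{e_{il}^{q}}$ as $i$ varies) to column $k$, so one must be careful that applying $S_{kl}^{q}$ to a single vector $\ket{e_{il}^{q}}$ really returns a single vector $\ket{e_{ik}^{q}}$ and does not mix rows — this follows because in $S_{kl}^{q}=\sum_{i'}\ket{e_{i'k}^{q}}\bra{e_{i'l}^{q}}$ only the $i'=i$ term survives by orthonormality. I expect this row-vs-column indexing to be the main place a careless argument could go wrong, but once the action formula $S_{k'l'}^{q'}\ket{e_{il}^{q}}=\delta_{q'q}\delta_{l'l}\ket{e_{ik'}^{q}}$ is established cleanly, both invariance and minimality are immediate corollaries. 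I would also note in passing that different rows $i$ give \emph{isomorphic} minimal invariant subspaces (the $S_{kl}^{q}$ act identically on the column index regardless of $i$), which is the observation that will later feed into the Wedderburn decomposition, though it is not needed for this proposition.
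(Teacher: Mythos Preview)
Your proposal is correct and follows essentially the same approach as the paper: both use the spanning set $\{S_{kl}^{q}\}$ and the key identity $S_{k'l'}^{q'}\ket{e_{il}^{q}}=\delta_{q'q}\delta_{l'l}\ket{e_{ik'}^{q}}$ to get invariance, and both exploit that the $S_{kl}^{q}$ act transitively on the row basis for minimality. The only cosmetic difference is that the paper phrases minimality as a contradiction (no orthogonal decomposition $\mathcal{V}=\mathcal{V}'\oplus\mathcal{V}''$ can satisfy $\bra{\psi''}A\ket{\psi'}=0$ for all $A$), whereas you give the equivalent direct cyclic-vector argument; the underlying computation is identical.
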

\begin{proof}
With the minimal isometries constructed as in Eq. \eqref{eq: def of S_kl from BPT},
we can express any $A\in\mathcal{A}=\spn\left\{ S_{kl}^{q}\right\} $
as 
\[
A=\sum_{q',k',l'}c_{k'l'}^{q'}S_{k'l'}^{q'}.
\]
Then, 
\begin{equation}
A\ket{e_{il}^{q}}=\sum_{q',k',l'}c_{k'l'}^{q'}S_{k'l'}^{q'}\ket{e_{il}^{q}}=\sum_{k'}c_{k'l}^{q}\ket{e_{ik'}^{q}}\in\mathcal{V},\label{eq: rows are min inv subspace identiy}
\end{equation}
so $\mathcal{V}$ is an invariant subspace under $\mathcal{A}$. If
$\mathcal{V}$ is not minimal then there is subspace $\mathcal{V}'\subset\mathcal{V}=\mathcal{V}'\oplus\mathcal{V}''$,
such that for every non-zero $\ket{\psi'}\in\mathcal{V}'$ and $\ket{\psi''}\in\mathcal{V}''$
we have $\bra{\psi''}A\ket{\psi'}=0$ for all $A\in\mathcal{A}$.
However, for every non-zero $\ket{\psi'},\ket{\psi''}\in\mathcal{V}=\spn\left\{ \ket{e_{il}^{q}}\right\} _{l=1,...}$
, there is always at least one $S_{kl}^{q}\in\mathcal{A}$ such that
$\bra{\psi''}S_{kl}^{q}\ket{\psi'}\neq0$, therefore $\mathcal{V}$
is minimal.
\end{proof}
The rows of the BPT identify the subspaces on which $\mathcal{A}$
acts irreducibly. Furthermore, it should be clear that the action
of $A\in\mathcal{A}$ is identical on every row in the same block
since the expression in Eq. \eqref{eq: rows are min inv subspace identiy}
does not depend on the row index $i$. Therefore, all the rows in
the same block carry equivalent irreducible representations of $\mathcal{A}$,
and the number of rows in the block is the multiplicity of that irreducible
representation.

The above statements are essentially the main result of the representation
theory of finite-dimensional operator algebras, albeit, in the non-standard
formulation that relies on the picture of BPTs. We will now present
this result in the standard form known as the Wedderburn Decomposition.
\begin{thm}
\label{thm: Wedderburn Decomp}Let $\mathcal{A}\subseteq\mathcal{L}\left(\mathcal{H}\right)$
be an operator algebra supported on the whole Hilbert space $\mathcal{H}$.
Then, there is a decomposition (Wedderburn Decomposition)
\begin{equation}
\mathcal{H}\cong\bigoplus_{q}\mathcal{H}_{\nu_{q}}\otimes\mathcal{H}_{\mu_{q}},\label{eq:Wedder. decomposition H space}
\end{equation}
such that
\[
\mathcal{A}\cong\bigoplus_{q}I_{\nu_{q}}\otimes\mathcal{L}\left(\mathcal{H}_{\mu_{q}}\right):=\left\{ \bigoplus_{q}I_{\nu_{q}}\otimes A_{q}\,|\,A_{q}\in\mathcal{L}\left(\mathcal{H}_{\mu_{q}}\right)\right\} .
\]
\end{thm}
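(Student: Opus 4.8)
The plan is to leverage the machinery built up in this section---specifically Theorem~\ref{thm: exsitance of max set of min isom} and Lemma~\ref{lem: for all S_kl there is a BPT}---so that the Wedderburn Decomposition becomes essentially a change-of-notation statement. First I would invoke Theorem~\ref{thm: exsitance of max set of min isom} to obtain a maximal set of minimal isometries $\left\{ S_{kl}^{q}\right\} \subset\mathcal{A}$ spanning $\mathcal{A}$, and then Lemma~\ref{lem: for all S_kl there is a BPT} to produce a BPT, i.e.\ an orthonormal set $\left\{ \ket{e_{ik}^{q}}\right\}$ with $S_{kl}^{q}=\sum_{i}\ketbra{e_{ik}^{q}}{e_{il}^{q}}$. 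Since $\mathcal{A}$ is supported on the whole of $\mathcal{H}$ (equivalently $I\in\mathcal{A}$), the remark following Lemma~\ref{lem: for all S_kl there is a BPT} guarantees that $\left\{ \ket{e_{ik}^{q}}\right\}$ is a complete orthonormal basis of $\mathcal{H}$. Let $\nu_{q}$ be the number of rows and $\mu_{q}$ the number of columns in block $q$.

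Next I would set up the isometry realizing \eqref{eq:Wedder. decomposition H space}. For each block $q$ introduce auxiliary Hilbert spaces $\mathcal{H}_{\nu_{q}}=\spn\left\{ \ket i\right\}_{i=1,\dots,\nu_{q}}$ and $\mathcal{H}_{\mu_{q}}=\spn\left\{ \ket k\right\}_{k=1,\dots,\mu_{q}}$, and define the linear map by $\ket{e_{ik}^{q}}\mapsto\ket i\otimes\ket k$ inside the $q$-th summand. Because $\left\{ \ket{e_{ik}^{q}}\right\}$ is an orthonormal basis of $\mathcal{H}$ and $\left\{ \ket i\otimes\ket k\right\}$ (ranging over $q$) is an orthonormal basis of $\bigoplus_{q}\mathcal{H}_{\nu_{q}}\otimes\mathcal{H}_{\mu_{q}}$, this map is a unitary, giving the isometry in \eqref{eq:Wedder. decomposition H space}.

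It then remains to compute the image of $\mathcal{A}$ under conjugation by this unitary. Since $\mathcal{A}=\spn\left\{ S_{kl}^{q}\right\}$ it suffices to track each generator: under the identification, $S_{kl}^{q}=\sum_{i}\ketbra{e_{ik}^{q}}{e_{il}^{q}}$ is carried to $\bigl(\sum_{i}\ketbra{i}{i}\bigr)\otimes\ketbra{k}{l}=I_{\nu_{q}}\otimes\ketbra{k}{l}$ in the $q$-th summand and $0$ elsewhere. Taking complex linear combinations over $k,l$ (with $q$ fixed) reproduces all of $I_{\nu_{q}}\otimes\mathcal{L}\left(\mathcal{H}_{\mu_{q}}\right)$, since $\left\{\ketbra{k}{l}\right\}$ is the matrix-unit basis of $\mathcal{L}\left(\mathcal{H}_{\mu_{q}}\right)$; and combining across blocks gives exactly $\bigoplus_{q}I_{\nu_{q}}\otimes\mathcal{L}\left(\mathcal{H}_{\mu_{q}}\right)$. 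Conversely every such block-diagonal operator is a linear combination of the $S_{kl}^{q}$, hence in $\mathcal{A}$; so the images coincide.

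The routine parts are the bookkeeping with indices and the verification that the proposed map is unitary. The one genuine subtlety---the part I would be most careful about---is the claim that the BPT basis is \emph{complete}, i.e.\ that $\mathcal{A}$ supported on all of $\mathcal{H}$ forces $\sum_{q,k}\Pi_{k}^{q}=I$ rather than merely projecting onto a proper subspace; this is precisely the ``if and only if $I\in\mathcal{A}$'' remark cited above, and since the hypothesis here is stated as ``supported on the whole Hilbert space,'' I would make explicit that this is the assumption $I\in\mathcal{A}$ (or, using the unital property of finite-dimensional algebras, that the identifying projection of $\mathcal{A}$ is the full $I$), so that the dimension count $\sum_{q}\nu_{q}\mu_{q}=\dim\mathcal{H}$ holds and no vectors of $\mathcal{H}$ are left unaccounted for.
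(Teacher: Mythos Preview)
Your proposal is correct and follows essentially the same approach as the paper: invoke Theorem~\ref{thm: exsitance of max set of min isom} and Lemma~\ref{lem: for all S_kl there is a BPT} to obtain the BPT basis $\{\ket{e_{ik}^{q}}\}$, use the support hypothesis to ensure completeness, define the unitary $\ket{e_{ik}^{q}}\mapsto\ket{n_i^q}\otimes\ket{m_k^q}$, and compute that $S_{kl}^{q}$ is carried to $I_{\nu_q}\otimes\ketbra{m_k^q}{m_l^q}$. Your discussion of the completeness subtlety is in fact more explicit than the paper's, which simply asserts the map is well-defined on all of $\mathcal{H}$.
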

\begin{proof}
Let $\left\{ S_{kl}^{q}\right\} $ be a maximal set of minimal isometries
in $\mathcal{A}$ as provided by Theorem \ref{thm: exsitance of max set of min isom},
and let $\left\{ \ket{e_{ik}^{q}}\right\} $ be the orthonormal basis
forming the BPT as provided by Lemma \ref{lem: for all S_kl there is a BPT}.
Since $\mathcal{A}$ is supported on the whole $\mathcal{H}$, we
can define the map
\[
V:\ket{e_{ik}^{q}}\longmapsto\ket{n_{i}^{q}}\otimes\ket{m_{k}^{q}}
\]
which isometrically maps the whole $\mathcal{H}$ to the tensor products
of $\mathcal{H}_{\nu_{q}}:=\spn\left\{ \ket{n_{i}^{q}}\right\} _{i=1,...}$
(associated with the row index) and $\mathcal{H}_{\mu_{q}}:=\spn\left\{ \ket{m_{k}^{q}}\right\} _{k=1,...}$
(associated with the column index). Thus, we identify the isometric
relation that specifies a decomposition of $\mathcal{H}$:
\[
\mathcal{H}\cong V\mathcal{H}=\bigoplus_{q}\mathcal{H}_{\nu_{q}}\otimes\mathcal{H}_{\mu_{q}}.
\]
The image of the algebra $\mathcal{A}$ under $V$ is then 
\[
\mathcal{A}\cong V\mathcal{A}V^{\dagger}=\spn\left\{ VS_{kl}^{q}V^{\dagger}\right\} .
\]
We can now see that 
\[
VS_{kl}^{q}V^{\dagger}=\sum_{i}V\ket{e_{ik}^{q}}\bra{e_{il}^{q}}V^{\dagger}=\sum_{i}\ket{n_{i}^{q}}\bra{n_{i}^{q}}\otimes\ket{m_{k}^{q}}\bra{m_{l}^{q}}=I_{\nu_{q}}\otimes\ket{m_{k}^{q}}\bra{m_{l}^{q}}
\]
and therefore
\[
\mathcal{A}\cong\spn\left\{ I_{\nu_{q}}\otimes\ket{m_{k}^{q}}\bra{m_{l}^{q}}\right\} =\bigoplus_{q}I_{\nu_{q}}\otimes\mathcal{L}\left(\mathcal{H}_{\mu_{q}}\right).
\]
\end{proof}
The decomposition in Eq. \eqref{eq:Wedder. decomposition H space}
is the general structure of irreducible representations of operator
algebras. In the broader mathematical context, this leads to the realization
that every operator algebra $\mathcal{A}\subseteq\mathcal{L}\left(\mathcal{H}\right)$
is just (up to an isomorphism) a direct sum of full operator algebras
$\mathcal{L}\left(\mathcal{H}_{\mu_{q}}\right)$. When the algebra
$\mathcal{A}$ is not supported on the whole $\mathcal{H}$, this
theorem applies to a proper subspace $\mathcal{H}'\subset\mathcal{H}$
where the operators of $\mathcal{A}$ are supported.\footnote{This should not be an issue for us since including the full identity
$I$ in the algebra will always be possible.}

In the qubit-qudit example with the algebra of operators that act
only on the qudit 
\[
\mathcal{A}=I_{qubit}\otimes\mathcal{L}\left(\mathcal{H}_{qudit}\right),
\]
the Wedderburn Decomposition is simply $\mathcal{H}=\mathcal{H}_{qubit}\otimes\mathcal{H}_{qudit}$
by the definition of $\mathcal{A}$. The matrix form of all $A\in\mathcal{A}$
is then
\[
A=I_{qubit}\otimes A_{qudit}=\ket 0\bra 0\otimes A_{qudit}+\ket 1\bra 1\otimes A_{qudit}=\begin{pmatrix}A_{qudit}\\
 & A_{qudit}
\end{pmatrix}.
\]

In general, Theorem \ref{thm: Wedderburn Decomp} tells us that there
is always a decomposition \eqref{eq:Wedder. decomposition H space}
where the operator algebra $\mathcal{A}$ acts as the identity on
$\mathcal{H}_{\nu_{q}}$'s and as the full operator algebra on $\mathcal{H}_{\mu_{q}}$'s,
and it does not map between the sectors $q$. That is, with respect
to the Wedderburn Decomposition, all $A\in\mathcal{A}$ are of the
form 
\[
A\cong\bigoplus_{q}I_{\nu_{q}}\otimes A_{q}=\bigoplus_{q}\left[\sum_{i=1}^{\dim\mathcal{H}_{\nu_{q}}}\ket{n_{i}^{q}}\bra{n_{i}^{q}}\otimes A_{q}\right]=\begin{pmatrix}\underbrace{\begin{array}{ccc}
A_{1}\\
 & \ddots\\
 &  & A_{1}
\end{array}}\\
\begin{array}{c}
^{\dim\mathcal{H}_{\nu_{1}}}\\
\\
\\
\end{array} & \underbrace{\begin{array}{ccc}
A_{2}\\
 & \ddots\\
 &  & A_{2}
\end{array}}\\
\begin{array}{c}
\\
\\
\end{array} & \begin{array}{c}
^{\dim\mathcal{H}_{\nu_{2}}}\\
\\
\end{array} & \begin{array}{c}
\ddots\\
\\
\end{array}
\end{pmatrix}.
\]

From the explicit block-diagonal matrix form we can see that for each
sector $q$, we have $\dim\mathcal{H}_{\nu_{q}}$ identical matrix
blocks where $\mathcal{A}$ acts irreducibly with the matrices $A_{q}$.
These matrix blocks correspond to the minimal invariant subspaces
spanned by a single row in the BPT
\[
\spn\left\{ \ket{e_{ik}^{q}}\right\} _{k=1,...}\cong\spn\left\{ \ket{n_{i}^{q}}\otimes\ket{m_{k}^{q}}\right\} _{k=1,...}.
\]
We can see now that the BPT block index $q$ distinguish between the
classes of minimal invariant subspaces on which the action of $A\in\mathcal{A}$
is represented independently with distinct $A_{q}$'s. Then, inside
the blocks, the BPT row index $i$ distinguishes between the minimal
invariant subspaces on which the action of $A\in\mathcal{A}$ is represented
with the same $A_{q}$. In other words, the \emph{rows} of the BPT
correspond to the \emph{irreducible} \emph{matrix blocks} of $A$,
while the \emph{blocks} of the BPT correspond to the\emph{ super-blocks
of identical irreducible matrix blocks }of $A$.

It should now be clear how BPTs specify the irreps structure by arranging
the basis into a table.\footnote{Note that BPTs only tell us how to arrange the basis \emph{labels}
into a table, they do not explicitly specify the basis themselves.
Defining the basis behind the labels in the BPT is an essential information
about the irreps structure.} Our earlier assertion that BPTs correspond to operator algebras can
now be restated in a stronger form: BPTs correspond to the irreps
structures behind operator algebras.

We will now consider group algebras as a special case and derive the
structure of group representations from the above results.
\begin{defn}
\label{def:group algebra}Given a finite or a Lie group $\mathcal{G}$
with the unitary representation $U\left(\mathcal{G}\right):=\left\{ U\left(g\right)\right\} _{g\in\mathcal{G}}\subset\mathcal{L}\left(\mathcal{H}\right)$,
the \emph{group algebra} is denoted and defined as 
\[
\mathcal{A}_{U\left(\mathcal{G}\right)}:=\spn\left\{ U\left(g\right)\right\} _{g\in\mathcal{G}}.
\]
\end{defn}
Clearly $U\left(\mathcal{G}\right)\subset\mathcal{A}_{U\left(\mathcal{G}\right)}$
so with respect to the Wedderburn Decomposition \eqref{eq:Wedder. decomposition H space},
for all $U\left(g\right)\in U\left(\mathcal{G}\right)$ there are
$U_{\mu_{q}}\left(g\right)\in\mathcal{L}\left(\mathcal{H}_{\mu_{q}}\right)$
such that 
\begin{equation}
U\left(g\right)\cong\bigoplus_{q}I_{\nu_{q}}\otimes U_{\mu_{q}}\left(g\right).\label{eq: unitary group rep decomp}
\end{equation}

\begin{thm}
\label{thm: group algebra irreps}Let $U\left(\mathcal{G}\right)$
be a unitary representation of the group $\mathcal{G}$ on $\mathcal{H}$,
and let Eq. \eqref{eq: unitary group rep decomp} be the decomposition
of the group action as given by the Theorem \ref{thm: Wedderburn Decomp}
for the group algebra $\mathcal{A}_{U\left(\mathcal{G}\right)}$.
Then, for all $q$, $U_{\mu_{q}}\left(\mathcal{G}\right)$ are inequivalent
irreducible unitary representations of $\mathcal{G}$ on $\mathcal{H}_{\mu_{q}}$.
\end{thm}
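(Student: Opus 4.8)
The plan is to check three things in turn: that each $U_{\mu_q}(\mathcal{G})$ is a unitary representation, that it is irreducible, and that for $q\neq q'$ the representations $U_{\mu_q}(\mathcal{G})$ and $U_{\mu_{q'}}(\mathcal{G})$ are inequivalent. Throughout I would work with the isometry $V$ and the explicit form $\mathcal{A}_{U(\mathcal{G})}\cong\bigoplus_q I_{\nu_q}\otimes\mathcal{L}(\mathcal{H}_{\mu_q})$ from Theorem~\ref{thm: Wedderburn Decomp}; since $I=U(e)\in\mathcal{A}_{U(\mathcal{G})}$ the algebra is supported on all of $\mathcal{H}$, so $V$ is a genuine isometry with $V^{\dagger}V=I$ and $VV^{\dagger}=I$ on $\bigoplus_q\mathcal{H}_{\nu_q}\otimes\mathcal{H}_{\mu_q}$. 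For unitarity, conjugating Eq.~\eqref{eq: unitary group rep decomp} gives $VU(g)V^{\dagger}=\bigoplus_q I_{\nu_q}\otimes U_{\mu_q}(g)$, which is unitary; since its summands live on mutually orthogonal sectors, each $I_{\nu_q}\otimes U_{\mu_q}(g)$ and hence each $U_{\mu_q}(g)$ is unitary. The homomorphism property follows by comparing the $q$-th sector of $\bigl(VU(g)V^{\dagger}\bigr)\bigl(VU(h)V^{\dagger}\bigr)=VU(gh)V^{\dagger}$, which gives $U_{\mu_q}(g)U_{\mu_q}(h)=U_{\mu_q}(gh)$, and similarly $U_{\mu_q}(e)=I_{\mu_q}$.

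For irreducibility, I would compress the identity $V\,\spn\{U(g)\}\,V^{\dagger}=\bigoplus_{q'} I_{\nu_{q'}}\otimes\mathcal{L}(\mathcal{H}_{\mu_{q'}})$ onto the $q$-th sector. The projection $Z_q$ onto that sector has the form $\bigoplus_{q'} I_{\nu_{q'}}\otimes(\delta_{qq'}I_{\mu_{q'}})$, so $Z_q\in\mathcal{A}_{U(\mathcal{G})}$, and conjugating by $Z_q$ (and restricting to the $q$-th sector) yields $\spn\{I_{\nu_q}\otimes U_{\mu_q}(g)\}_{g\in\mathcal{G}}=I_{\nu_q}\otimes\mathcal{L}(\mathcal{H}_{\mu_q})$, hence $\spn\{U_{\mu_q}(g)\}_{g\in\mathcal{G}}=\mathcal{L}(\mathcal{H}_{\mu_q})$. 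Now any subspace of $\mathcal{H}_{\mu_q}$ invariant under $U_{\mu_q}(\mathcal{G})$ is invariant under its linear span, i.e.\ under the full operator algebra $\mathcal{L}(\mathcal{H}_{\mu_q})$; but the full operator algebra acts irreducibly on $\mathcal{H}_{\mu_q}$ (rank-one operators already send any nonzero vector to any other), so the only such subspaces are $\{0\}$ and $\mathcal{H}_{\mu_q}$. Thus $U_{\mu_q}(\mathcal{G})$ is irreducible.

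The inequivalence is the step I expect to be the main obstacle. Suppose, for $p\neq q$, there were a nonzero intertwiner $T:\mathcal{H}_{\mu_p}\to\mathcal{H}_{\mu_q}$ with $TU_{\mu_p}(g)=U_{\mu_q}(g)T$ for all $g\in\mathcal{G}$. The idea is to promote $T$ to an operator $X\in\mathcal{L}(\mathcal{H})$ which, in the Wedderburn picture, acts as $\ketbra{n^q_1}{n^p_1}\otimes T$ from the $p$-th sector into the $q$-th sector and annihilates every other sector, where $\ket{n^p_1}\in\mathcal{H}_{\nu_p}$ and $\ket{n^q_1}\in\mathcal{H}_{\nu_q}$ are fixed unit vectors. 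A direct computation using the intertwining relation and the block form $VU(g)V^{\dagger}=\bigoplus_{q'} I_{\nu_{q'}}\otimes U_{\mu_{q'}}(g)$ shows that $X$ commutes with every $U(g)$, hence with $\spn\{U(g)\}=\mathcal{A}_{U(\mathcal{G})}$; in particular $[X,Z_q]=0$. But $X$ maps into the $q$-th sector and out of the $p$-th sector only, so $Z_qX=X$ while $XZ_q=0$, giving $[X,Z_q]=X\neq0$ --- a contradiction. Hence no such $T$ exists, so in fact $\mathrm{Hom}_{\mathcal{G}}(U_{\mu_p},U_{\mu_q})=\{0\}$ for $p\neq q$, which together with irreducibility yields that the $U_{\mu_q}(\mathcal{G})$ are pairwise inequivalent. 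The delicate points are carrying out the promotion $T\mapsto X$ and the commutation check cleanly, and recognizing that the sector projections $Z_q$ lie inside $\mathcal{A}_{U(\mathcal{G})}$ --- it is this last fact that forces everything commuting with the algebra (in particular every intertwiner, once promoted) to respect the sector decomposition.
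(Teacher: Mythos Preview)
Your argument is correct. For unitarity and irreducibility you follow essentially the same route as the paper: both reduce to the observation that $\spn\{U_{\mu_q}(g)\}_{g\in\mathcal{G}}=\mathcal{L}(\mathcal{H}_{\mu_q})$, which forbids nontrivial invariant subspaces.

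For inequivalence, your approach is genuinely different from the paper's. The paper argues algebraically: since the sector projection $I_{\nu_q}\otimes I_{\mu_q}$ lies in $\mathcal{A}_{U(\mathcal{G})}=\spn\{U(g)\}$, there exist coefficients $c(g)$ with $\sum_g c(g)U_{\mu_{q'}}(g)=\delta_{qq'}I_{\mu_q}$, so the same linear combination cannot vanish in one sector and equal the identity in another if the representations were equivalent. You instead take a Schur-type route through the commutant: promote a hypothetical intertwiner $T$ to an operator $X$ mapping sector $p$ into sector $q$, show it commutes with every $U(g)$ and hence with the sector projection $Z_q\in\mathcal{A}_{U(\mathcal{G})}$, then obtain the contradiction $Z_qX=X\neq 0=XZ_q$. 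Your approach directly yields $\mathrm{Hom}_{\mathcal{G}}(U_{\mu_p},U_{\mu_q})=\{0\}$, which is a slightly stronger conclusion than what the paper states, and it makes transparent why the inequivalence is really a statement about the commutant respecting the Wedderburn blocks. The paper's argument is more compact and avoids constructing $X$, but as literally written it only concludes that $U_{\mu_p}(g)\neq U_{\mu_q}(g)$ for some $g$; one still needs the easy observation that an equivalence would conjugate $\sum_g c(g)U_{\mu_q}(g)=I$ to $\sum_g c(g)U_{\mu_p}(g)=I$, contradicting $\sum_g c(g)U_{\mu_p}(g)=0$.
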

\begin{proof}
The fact that $U_{\mu_{q}}\left(\mathcal{G}\right)$ are unitary representations
of $\mathcal{G}$ follows directly from the fact that $U\left(\mathcal{G}\right)$
is a unitary representation of $\mathcal{G}$. According to Theorem
\ref{thm: Wedderburn Decomp}, the group algebra $\mathcal{A}_{U\left(\mathcal{G}\right)}$
acts on $\mathcal{H}_{\mu_{q}}$ as the full operator algebra $\mathcal{L}\left(\mathcal{H}_{\mu_{q}}\right)$.
Then, by the definition of group algebras, we must have
\[
\spn\left\{ U_{\mu_{q}}\left(g\right)\right\} _{g\in\mathcal{G}}=\mathcal{L}\left(\mathcal{H}_{\mu_{q}}\right).
\]
There can be no proper invariant subspaces of $\mathcal{H}_{\mu_{q}}$
under the action of $U_{\mu_{q}}\left(\mathcal{G}\right)$, because
the are no proper invariant subspaces under the action of $\mathcal{L}\left(\mathcal{H}_{\mu_{q}}\right)=\spn\left\{ U_{\mu_{q}}\left(g\right)\right\} _{g\in\mathcal{G}}$.
Therefore, $U_{\mu_{q}}\left(\mathcal{G}\right)$ acts irreducibly
on $\mathcal{H}_{\mu_{q}}$.

Furthermore, the general result of Theorem \ref{thm: Wedderburn Decomp}
implies that the algebra $\mathcal{A}_{U\left(\mathcal{G}\right)}$
includes the projection $I_{\nu_{q}}\otimes I_{\mu_{q}}$ on the sector
$q$. Then, there are coefficients $c\left(g\right)\in\mathbb{C}$
such that
\[
\sum_{g\in\mathcal{G}}c\left(g\right)U\left(g\right)\cong\bigoplus_{q'}I_{\nu_{q'}}\otimes\left[\sum_{g\in\mathcal{G}}c\left(g\right)U_{\mu_{q'}}\left(g\right)\right]=I_{\nu_{q}}\otimes I_{\mu_{q}},
\]
and so
\[
\sum_{g\in\mathcal{G}}c\left(g\right)U_{\mu_{q'}}\left(g\right)=\delta_{qq'}I_{\mu_{q}}.
\]
Therefore, for every $q'\neq q$ there must be some $g\in\mathcal{G}$
such that $U_{\mu_{q'}}\left(g\right)\neq U_{\mu_{q}}\left(g\right)$
and so the representations $U_{\mu_{q'}}\left(\mathcal{G}\right)$
and $U_{\mu_{q}}\left(\mathcal{G}\right)$ are not equivalent.
\end{proof}
Theorem \ref{thm: group algebra irreps} tells us that the irreps
structure of a group representation is, in fact, inherited from the
irreps structure of the group algebra. We can therefore use all the
insights about the irreps structure of operator algebras, in particular
BPTs, to characterize the representations of groups. 

As a simple example, consider the Hilbert space of two spins $\mathcal{H}=\underline{\frac{1}{2}}\otimes\underline{\frac{1}{2}}$
and the group of collective rotations. From group representation theory
of $SU\left(2\right)$ we know that this Hilbert space decomposes
as
\[
\mathcal{H}=\underline{\frac{1}{2}}\otimes\underline{\frac{1}{2}}=\underline{1}\oplus\underline{0}
\]
where the triplet (spin-$1$) and singlet (spin-$0$) subspaces are
spanned by the basis

\noindent\begin{minipage}[c]{1\columnwidth}%
\begin{center}
\vspace{0.5\baselineskip}
\begin{tabular}{l>{\raggedright}p{0.02\columnwidth}>{\raggedright}p{0.02\columnwidth}l}
$\ket{1;1}=\ket{\frac{1}{2},\frac{1}{2}}$ & \centering{} &  & $\ket{0;0}\propto\ket{\frac{1}{2},-\frac{1}{2}}-\ket{-\frac{1}{2},\frac{1}{2}}$\tabularnewline
$\ket{1;0}\propto\ket{\frac{1}{2},-\frac{1}{2}}+\ket{-\frac{1}{2},\frac{1}{2}}$ & \centering{}$\begin{array}{c}
\\
\\
\end{array}$ &  & \tabularnewline
$\ket{1;-1}=\ket{-\frac{1}{2},-\frac{1}{2}}.$ &  &  & \tabularnewline
\end{tabular}\vspace{0.5\baselineskip}
\par\end{center}%
\end{minipage} These basis identify the irreps structure of collective rotations
on two spins which can be summarized with a BPT as

\noindent\begin{minipage}[c]{1\columnwidth}%
\begin{center}
\vspace{0.5\baselineskip}
\begin{tabular}{|c|c|c|c|}
\cline{1-3} \cline{2-3} \cline{3-3} 
$1;1$ & $1;0$ & $1;-1$ & \multicolumn{1}{c}{}\tabularnewline
\hline 
\multicolumn{1}{c}{} & \multicolumn{1}{c}{} &  & $0;0$\tabularnewline
\cline{4-4} 
\end{tabular} .\vspace{0.5\baselineskip}
\par\end{center}%
\end{minipage} The two blocks here identify the two inequivalent irreps of $SU\left(2\right)$,
and each irrep is represented on a single invariant subspace, as per
the number of rows in each block.

If we add a third qubit, the Hilbert space will decompose under collective
rotations as
\[
\mathcal{H}=\underline{\frac{1}{2}}\otimes\underline{\frac{1}{2}}\otimes\underline{\frac{1}{2}}=\text{\ensuremath{\underline{\frac{3}{2}}}}\oplus\underline{\frac{1}{2}}\oplus\underline{\frac{1}{2}}
\]
with a single spin-$\frac{3}{2}$ subspace and two spin-$\frac{1}{2}$
subspaces. Given the basis of total spin $\ket{j;m}$, the BPT that
specifies the irreps structure is

\noindent\begin{minipage}[c]{1\columnwidth}%
\begin{center}
\vspace{0.5\baselineskip}
\begin{tabular}{cccc|c|c|}
\cline{1-4} \cline{2-4} \cline{3-4} \cline{4-4} 
\multicolumn{1}{|c|}{$\frac{3}{2};\frac{3}{2}$} & \multicolumn{1}{c|}{$\frac{3}{2};\frac{1}{2}$} & \multicolumn{1}{c|}{$\frac{3}{2};-\frac{1}{2}$} & $\frac{3}{2};-\frac{3}{2}$ & \multicolumn{1}{c}{} & \multicolumn{1}{c}{}\tabularnewline
\hline 
 &  &  &  & $\frac{1}{2};\frac{1}{2},1$ & $\frac{1}{2};-\frac{1}{2},1$\tabularnewline
\cline{5-6} \cline{6-6} 
 &  &  &  & $\frac{1}{2};\frac{1}{2},2$ & $\frac{1}{2};-\frac{1}{2},2$\tabularnewline
\cline{5-6} \cline{6-6} 
\end{tabular} .\vspace{0.5\baselineskip}
\par\end{center}%
\end{minipage} These two blocks identify the irreps of spin-$\frac{3}{2}$ and spin-$\frac{1}{2}$.
The second block has two rows since spin-$\frac{1}{2}$ is equivalently
represented on two separate invariant subspaces labeled with $i=1,2$.
The Wedderburn Decomposition implied by this BPT is
\[
\mathcal{H}=\text{\ensuremath{\underline{\frac{1}{2}}}}\otimes\underline{\frac{1}{2}}\otimes\underline{\frac{1}{2}}\cong\mathcal{H}_{\mu_{3/2}}\oplus\mathcal{H}_{\nu_{1/2}}\otimes\mathcal{H}_{\mu_{1/2}},
\]
where $\mathcal{H}_{\mu_{3/2}}$ and $\mathcal{H}_{\mu_{1/2}}$ are
the inequivalent irreps, and $\dim\mathcal{H}_{\nu_{1/2}}=2$ provides
the two-dimensional multiplicity to the spin-$\frac{1}{2}$ irrep.

Group representations are commonly used to identify the symmetries
of physically meaningful operators that commute with the group action.
The commutant algebra of a symmetry group representation is therefore
an interesting operator algebra that characterizes all the operators
that have that symmetry. The following theorem allows us to immediately
identify the commutant algebra from the BPT.
\begin{thm}
\label{thm:commutant is bpt transpose}Let $\left\{ \ket{e_{ik}^{q}}\right\} $
be the orthonormal basis forming the BPT of the operator algebra $\mathcal{A}$
supported on the whole Hilbert space $\mathcal{H}$. Then, the transposition
(interchanging rows with columns) of $\left\{ \ket{e_{ik}^{q}}\right\} $
produces the BPT of the commutant algebra
\[
\mathcal{A}'=\left\{ A'\in\mathcal{L}\left(\mathcal{H}\right)\,|\,\left[A',A\right]=0\,\,\,\,\forall A\in\mathcal{A}\right\} .
\]
\end{thm}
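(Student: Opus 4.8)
The plan is to write down explicitly the maximal set of minimal isometries that the transposed table produces, and then to show by a double inclusion that its span is the commutant. Fix the given BPT with basis $\left\{\ket{e_{ik}^{q}}\right\}$ and let $\left\{S_{kl}^{q}\right\}$ be the associated maximal set of minimal isometries, so that $S_{kl}^{q}=\sum_{i}\ket{e_{ik}^{q}}\bra{e_{il}^{q}}$ and $\mathcal{A}=\spn\left\{S_{kl}^{q}\right\}$ by Proposition \ref{prop: bpt isometries span an algebra}. Interchanging rows and columns in every block leaves another valid BPT built from the same basis; applying Eq. \eqref{eq: def of S_kl from BPT} and Proposition \ref{prop: bpt isometries span an algebra} to it yields the operators $T_{ij}^{q}:=\sum_{k}\ket{e_{ik}^{q}}\bra{e_{jk}^{q}}$ (the sum now running over the original column index $k$), which form a maximal set of minimal isometries spanning an algebra $\mathcal{B}:=\spn\left\{T_{ij}^{q}\right\}$. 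It remains to prove $\mathcal{A}'=\mathcal{B}$.

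First I would check $\mathcal{B}\subseteq\mathcal{A}'$ by a direct computation on the generators. For $q\neq q'$ the two blocks are orthogonal, so $S_{kl}^{q}T_{ij}^{q'}=T_{ij}^{q'}S_{kl}^{q}=0$; and for $q=q'$, using $\braket{e_{ab}^{q}}{e_{cd}^{q}}=\delta_{ac}\delta_{bd}$, one finds $S_{kl}^{q}T_{ij}^{q}=\ket{e_{ik}^{q}}\bra{e_{jl}^{q}}=T_{ij}^{q}S_{kl}^{q}$. Hence every $T_{ij}^{q}$ commutes with every $S_{kl}^{q'}$; since these span $\mathcal{B}$ and $\mathcal{A}$ respectively, $\mathcal{B}$ lies in the commutant of $\mathcal{A}$.

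The substantive direction is $\mathcal{A}'\subseteq\mathcal{B}$. Note first that $\mathcal{A}$ contains the block projections $P^{q}:=\sum_{k}S_{kk}^{q}$ (finite sums of the minimal projections $\Pi_{k}^{q}=S_{kk}^{q}$), with $\sum_{q}P^{q}=I$, so any $A'\in\mathcal{A}'$ commutes with each $P^{q}$ and is therefore block-diagonal with respect to $\mathcal{H}=\bigoplus_{q}\mathcal{H}^{q}$, where $\mathcal{H}^{q}:=\spn\left\{\ket{e_{ik}^{q}}\right\}_{i,k}$. It then suffices to treat one block at a time: expand $A'|_{\mathcal{H}^{q}}=\sum_{i,j,k,l}c_{ij}^{kl}\ket{e_{ik}^{q}}\bra{e_{jl}^{q}}$ in the operator basis of $\mathcal{L}(\mathcal{H}^{q})$, impose $\bigl[A'|_{\mathcal{H}^{q}},S_{kl}^{q}\bigr]=0$ for all $k,l$, and compare coefficients of the basis operators; this forces $c_{ij}^{kl}=\delta_{kl}\,c_{ij}$ with $c_{ij}$ independent of the common value of $k$, which is exactly the assertion that $A'|_{\mathcal{H}^{q}}=\sum_{ij}c_{ij}T_{ij}^{q}$. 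Summing over $q$ gives $A'\in\mathcal{B}$. Equivalently, one may route this through Theorem \ref{thm: Wedderburn Decomp}: under the identification $\ket{e_{ik}^{q}}\leftrightarrow\ket{n_{i}^{q}}\otimes\ket{m_{k}^{q}}$ one has $S_{kl}^{q}\cong I_{\nu_{q}}\otimes\ket{m_{k}^{q}}\bra{m_{l}^{q}}$, hence $\mathcal{A}\cong\bigoplus_{q}I_{\nu_{q}}\otimes\mathcal{L}(\mathcal{H}_{\mu_{q}})$, and an auxiliary Schur-type lemma (writing an element commuting with $I_{\nu}\otimes\mathcal{L}(\mathcal{H}_{\mu})$ as $\sum_{ij}\ket{n_i}\bra{n_j}\otimes B_{ij}$ and forcing each $B_{ij}\propto I_{\mu}$) shows the commutant of $I_{\nu}\otimes\mathcal{L}(\mathcal{H}_{\mu})$ inside $\mathcal{L}(\mathcal{H}_{\nu}\otimes\mathcal{H}_{\mu})$ is $\mathcal{L}(\mathcal{H}_{\nu})\otimes I_{\mu}$; the resulting $\bigoplus_{q}\mathcal{L}(\mathcal{H}_{\nu_{q}})\otimes I_{\mu_{q}}$ is precisely $\spn\left\{T_{ij}^{q}\right\}$.

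Combining the two inclusions gives $\mathcal{A}'=\spn\left\{T_{ij}^{q}\right\}$. Since $I\in\mathcal{A}'$, the commutant is supported on all of $\mathcal{H}$, and since $\left\{T_{ij}^{q}\right\}$ is by construction the maximal set of minimal isometries produced by the transposed table, that table is a BPT of $\mathcal{A}'$, as claimed. I expect the main obstacle to be the index bookkeeping in the coefficient-matching step of the second inclusion — keeping the four index pairs straight while extracting $c_{ij}^{kl}=\delta_{kl}c_{ij}$ — which is the reason one might prefer to pass through the Wedderburn Decomposition together with the clean statement that the commutant of $I_{\nu}\otimes\mathcal{L}(\mathcal{H}_{\mu})$ is $\mathcal{L}(\mathcal{H}_{\nu})\otimes I_{\mu}$.
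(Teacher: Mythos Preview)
Your proposal is correct and the easy inclusion $\mathcal{B}\subseteq\mathcal{A}'$ matches the paper exactly (same explicit product $S_{kl}^{q}T_{ij}^{q}=\ket{e_{ik}^{q}}\bra{e_{jl}^{q}}=T_{ij}^{q}S_{kl}^{q}$). For the hard inclusion $\mathcal{A}'\subseteq\mathcal{B}$, however, the paper takes a shorter route than either of your two options. Rather than block-diagonalizing and then matching coefficients (your option~1) or invoking the Wedderburn decomposition plus a Schur-type lemma (your option~2), the paper writes the identity as $I=\sum_{q,k}S_{kl}^{q}S_{lk}^{q}$ for an \emph{arbitrary fixed} $l$, inserts this into $A'=A'I$, and commutes $A'$ past the left factor:
\[
A'=\sum_{q,k}S_{kl}^{q}A'S_{lk}^{q}=\sum_{q,i,j}\bra{e_{il}^{q}}A'\ket{e_{jl}^{q}}\,T_{ij}^{q}.
\]
This single manipulation simultaneously produces the block-diagonality, the $k$-independence of the coefficients, and the expansion in the $T_{ij}^{q}$, bypassing the index bookkeeping you anticipated as the main obstacle. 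Your approaches are perfectly sound and arguably more transparent about \emph{why} the commutant has this form, but the paper's trick is a one-liner worth knowing.
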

\begin{proof}
By construction \ref{eq: def of S_kl from BPT}, the minimal isometries
produced by the original and the transposed BPTs are 
\[
S_{kl}^{q}=\sum_{i}\ket{e_{ik}^{q}}\bra{e_{il}^{q}},\,\,\,\,\,\,\,\,\,\,\,\,\,\,\,\,\,\,\,\tilde{S}_{ij}^{q}=\sum_{k}\ket{e_{ik}^{q}}\bra{e_{jk}^{q}}.
\]
Since $\mathcal{A}$ is supported on the whole $\mathcal{H}$ we have
$I=\sum_{q,k}S_{kk}^{q}=\sum_{q,k}S_{kl}^{q}S_{lk}^{q}$ for any $l$.
By the definition of $\mathcal{A}'$, for every $A'\in\mathcal{A}'$
and $S_{kl}^{q}\in\mathcal{A}$ we have $S_{kl}^{q}A'=A'S_{kl}^{q}$,
and so 
\begin{align*}
A' & =A'\sum_{q,k}S_{kl}^{q}S_{lk}^{q}=\sum_{q,k}S_{kl}^{q}A'S_{lk}^{q}=\sum_{q,k}\sum_{i}\ket{e_{ik}^{q}}\bra{e_{il}^{q}}A'\sum_{j}\ket{e_{jl}^{q}}\bra{e_{jk}^{q}}\\
 & =\sum_{q,i,j}\bra{e_{il}^{q}}A'\ket{e_{jl}^{q}}\sum_{k}\ket{e_{ik}^{q}}\bra{e_{jk}^{q}}=\sum_{q,i,j}\bra{e_{il}^{q}}A'\ket{e_{jl}^{q}}\tilde{S}_{ij}^{q}.
\end{align*}
Therefore, $\mathcal{A}'\subseteq\spn\left\{ \tilde{S}_{ij}^{q}\right\} $.
By explicit multiplication we can see that $S_{kl}^{q'}\tilde{S}_{ij}^{q}=\tilde{S}_{ij}^{q}S_{kl}^{q'}=\delta_{qq'}\ket{e_{ik}^{q}}\bra{e_{jl}^{q}}$,
so $\left[\tilde{S}_{ij}^{q},S_{kl}^{q'}\right]=0$ and so $\spn\left\{ \tilde{S}_{ij}^{q}\right\} \subseteq\mathcal{A}'$.
Therefore, $\mathcal{A}'=\spn\left\{ \tilde{S}_{ij}^{q}\right\} $.
\end{proof}
In terms of the Wedderburn Decomposition, Theorem \ref{thm:commutant is bpt transpose}
tells us that
\[
\mathcal{A}\cong\bigoplus_{q}I_{\nu_{q}}\otimes\mathcal{L}\left(\mathcal{H}_{\mu_{q}}\right)\,\,\,\,\,\,\,\Leftrightarrow\,\,\,\,\,\,\mathcal{A}'\cong\bigoplus_{q}\mathcal{L}\left(\mathcal{H}_{\nu_{q}}\right)\otimes I_{\mu_{q}}.
\]
That is, operator algebras and their commutants have the same Wedderburn
Decomposition with the roles of $\mu_{q}$ and $\nu_{q}$ exchanged.
This theorem also trivially implies the following well known result.
\begin{cor}
(Bicommutant Theorem) Let $\mathcal{A}$ be an operator algebra supported
on the whole Hilbert space $\mathcal{H}$ and let $\mathcal{A}''$
be its bicommutant (commutant of a commutant) algebra. Then, $\mathcal{A}=\mathcal{A}''$.
\end{cor}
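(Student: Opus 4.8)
The plan is to obtain this as an immediate consequence of Theorem \ref{thm:commutant is bpt transpose}, applied twice. First I would note that $I\in\mathcal{A}'$ trivially, since the identity commutes with every operator; hence $\mathcal{A}'$ is also an operator algebra that is supported on the whole Hilbert space $\mathcal{H}$ (as remarked after Theorem \ref{thm: Wedderburn Decomp}, an algebra is supported on all of $\mathcal{H}$ iff it contains $I$). This is exactly what is needed so that the hypotheses of Theorem \ref{thm:commutant is bpt transpose} apply to $\mathcal{A}'$ in its own right.

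Next, let $\left\{\ket{e_{ik}^{q}}\right\}$ be an orthonormal basis forming a BPT of $\mathcal{A}$; by Theorem \ref{thm: exsitance of max set of min isom} and Lemma \ref{lem: for all S_kl there is a BPT} the minimal isometries $S_{kl}^{q}=\sum_{i}\ket{e_{ik}^{q}}\bra{e_{il}^{q}}$ built from it via Eq. \eqref{eq: def of S_kl from BPT} span $\mathcal{A}$. By Theorem \ref{thm:commutant is bpt transpose}, transposing this table — relabelling so the old column index becomes the new row index and vice versa — produces a BPT of $\mathcal{A}'$. Applying Theorem \ref{thm:commutant is bpt transpose} once more, now to $\mathcal{A}'$, transposing that table produces a BPT of $\mathcal{A}''=(\mathcal{A}')'$. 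But transposing twice returns the original arrangement $\left\{\ket{e_{ik}^{q}}\right\}$, so by Proposition \ref{prop: bpt isometries span an algebra} the algebra $\mathcal{A}''$ is spanned by the very same isometries $S_{kl}^{q}$ that span $\mathcal{A}$. Therefore $\mathcal{A}=\mathcal{A}''$.

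Equivalently one can run the argument directly through the Wedderburn Decomposition: $\mathcal{A}\cong\bigoplus_{q}I_{\nu_{q}}\otimes\mathcal{L}\left(\mathcal{H}_{\mu_{q}}\right)$ gives, by the displayed consequence of Theorem \ref{thm:commutant is bpt transpose}, $\mathcal{A}'\cong\bigoplus_{q}\mathcal{L}\left(\mathcal{H}_{\nu_{q}}\right)\otimes I_{\mu_{q}}$, and taking the commutant a second time swaps the roles of $\mu_{q}$ and $\nu_{q}$ back, yielding $\mathcal{A}''\cong\bigoplus_{q}I_{\nu_{q}}\otimes\mathcal{L}\left(\mathcal{H}_{\mu_{q}}\right)=\mathcal{A}$.

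There is essentially no hard step. The only point deserving a line of care is confirming that $\mathcal{A}'$ satisfies the ``supported on the whole $\mathcal{H}$'' hypothesis so that Theorem \ref{thm:commutant is bpt transpose} can be invoked a second time, and this follows because $I\in\mathcal{A}'$ always. As a sanity check one may also note the trivial inclusion $\mathcal{A}\subseteq\mathcal{A}''$ — every element of $\mathcal{A}$ commutes with everything in $\mathcal{A}'$ — which is consistent with, but not needed for, the BPT computation above.
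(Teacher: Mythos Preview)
Your proposal is correct and follows essentially the same approach as the paper: apply Theorem \ref{thm:commutant is bpt transpose} twice and observe that transposing the BPT twice returns the original. Your added verification that $\mathcal{A}'$ is supported on all of $\mathcal{H}$ (via $I\in\mathcal{A}'$) is a worthwhile detail that the paper's one-line proof leaves implicit.
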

\begin{proof}
According to Theorem \ref{thm:commutant is bpt transpose}, the BPT
of $\mathcal{A}''$ is produced by transposing the BPT of $\mathcal{A}$
twice, which leaves it unchanged.
\end{proof}
In the example of three qubits, the commutant algebra of collective
rotations is then given by the BPT

\noindent\begin{minipage}[c]{1\columnwidth}%
\begin{center}
\vspace{0.5\baselineskip}
\begin{tabular}{|c|cc}
\cline{1-1} 
$\frac{3}{2};\frac{3}{2}$ &  & \tabularnewline
\cline{1-1} 
$\frac{3}{2};\frac{1}{2}$ &  & \tabularnewline
\cline{1-1} 
$\frac{3}{2};-\frac{1}{2}$ &  & \tabularnewline
\cline{1-1} 
$\frac{3}{2};-\frac{3}{2}$ &  & \tabularnewline
\hline 
\multicolumn{1}{c|}{} & \multicolumn{1}{c|}{$\frac{1}{2};\frac{1}{2},1$} & \multicolumn{1}{c|}{$\frac{1}{2};\frac{1}{2},2$}\tabularnewline
\cline{2-3} \cline{3-3} 
\multicolumn{1}{c|}{} & \multicolumn{1}{c|}{$\frac{1}{2};-\frac{1}{2},1$} & \multicolumn{1}{c|}{$\frac{1}{2};-\frac{1}{2},2$}\tabularnewline
\cline{2-3} \cline{3-3} 
\end{tabular} .\vspace{0.5\baselineskip}
\par\end{center}%
\end{minipage} Thus, all three-qubit operators that are symmetric under collective
rotations are spanned by the five partial isometries:

\noindent\begin{minipage}[c]{1\columnwidth}%
\begin{center}
\vspace{0.5\baselineskip}
\begin{tabular}{ll}
\multicolumn{2}{l}{$S^{\frac{3}{2}}=\ket{\frac{3}{2};\frac{3}{2}}\bra{\frac{3}{2};\frac{3}{2}}+\ket{\frac{3}{2};\frac{1}{2}}\bra{\frac{3}{2};\frac{1}{2}}+\ket{\frac{3}{2};-\frac{1}{2}}\bra{\frac{3}{2};-\frac{1}{2}}+\ket{\frac{3}{2};-\frac{3}{2}}\bra{\frac{3}{2};-\frac{3}{2}}$}\tabularnewline
 & \tabularnewline
$S_{11}^{\frac{1}{2}}=\ket{\frac{1}{2};\frac{1}{2},1}\bra{\frac{1}{2};\frac{1}{2},1}+\ket{\frac{1}{2};-\frac{1}{2},1}\bra{\frac{1}{2};-\frac{1}{2},1}$ & $S_{22}^{\frac{1}{2}}=\ket{\frac{1}{2};\frac{1}{2},2}\bra{\frac{1}{2};\frac{1}{2},2}+\ket{\frac{1}{2};-\frac{1}{2},2}\bra{\frac{1}{2};-\frac{1}{2},2}$\tabularnewline
$S_{12}^{\frac{1}{2}}=\ket{\frac{1}{2};\frac{1}{2},1}\bra{\frac{1}{2};\frac{1}{2},2}+\ket{\frac{1}{2};-\frac{1}{2},1}\bra{\frac{1}{2};-\frac{1}{2},2}$ & $S_{21}^{\frac{1}{2}}=\ket{\frac{1}{2};\frac{1}{2},2}\bra{\frac{1}{2};\frac{1}{2},1}+\ket{\frac{1}{2};-\frac{1}{2},2}\bra{\frac{1}{2};-\frac{1}{2},1}$.\tabularnewline
\end{tabular}\vspace{0.5\baselineskip}
\par\end{center}%
\end{minipage}

~\\
The above construction of commutants provides some indication that
there are benefit in using the BPT picture beyond the derivations
and proofs of this chapter. In the following chapters we will use
the BPT picture extensively. In Chapter \ref{chap:Identifying-the-irreps}
we will use it to describe the last step of the Scattering Algorithm
that finds the irreps structures of arbitrarily generated operator
algebras. In Chapters \ref{chap:Operational-reductions-of-dyn} and
\ref{chap:Operational-reductions-of-st} we will define the reductions
of states and Hamiltonians in terms of BPTs. In Chapter \ref{chap:Quantum-coarse-graining}
we will take advantage of the visual representation in terms of BPTs
to generalize state reductions beyond operator algebras. Thus, we
will see that BPTs can be a useful tool for specifying, manipulating
and producing tensor product structures, such as the structure of
irreducible representations.

\chapter{Finding the irreps structure with the Scattering Algorithm\label{chap:Identifying-the-irreps}}

In Theorem \ref{thm: Wedderburn Decomp} we have identified the general
irreps structure of operator algebras; unfortunately, it was not a
constructive result. We have learned that operator algebras can be
specified via a set of generators (see Definition \ref{def: generated OA})
but we do not know yet how to find the irreps structure of operator
algebras specified this way. In this chapter we will take a constructive
approach and address this problem.

Formally this problem can be stated as:
\begin{quote}
\emph{Given a finite set of self-adjoint operators $\mathcal{M}:=\left\{ M_{1},M_{2},...,M_{n}\right\} $
that generate the algebra $\left\langle \mathcal{M}\right\rangle $,
find the basis that identify the irreps structure of $\left\langle \mathcal{M}\right\rangle $
as promised by Theorem \ref{thm: Wedderburn Decomp}.}
\end{quote}
It can be equivalently formulated (but not solved) in a simpler mathematical
language:
\begin{quote}
\emph{Given a finite set of self-adjoint matrices $\mathcal{M}:=\left\{ M_{1},M_{2},...,M_{n}\right\} $,
find the basis in which all }$M_{i}\in\mathcal{M}$\emph{ are simultaneously
block-diagonal with the smallest possible blocks.}
\end{quote}
When $\mathcal{M}=\left\{ M\right\} $ is just one matrix, this means
find the basis that diagonalize $M$. When $\mathcal{M}$ is a set
of matrices that commute, this means find the basis that simultaneously
diagonalize all $M_{i}\in\mathcal{M}$. In general, for non-commuting
matrices, the basis that identify the irreps structure of $\left\langle \mathcal{M}\right\rangle $
are the basis that simultaneously block-diagonalize all $M_{i}\in\mathcal{M}$
with the smallest possible blocks. Therefore, we can think of this
problem as a problem of diagonalizing a set of matrices $\mathcal{M}$,
where not all matrices necessarily commute.

Solving this problem is essential for the practical applications that
require some form of reduction. More concretely, we would like to
be able to answer questions such as:
\begin{itemize}
\item If $\mathcal{M}\subset\mathcal{L}\left(\mathcal{H}\right)$ are terms
in a Hamiltonian, how can we restrict the dynamics to lower dimensional
subspaces where the Hamiltonian has a simpler form.
\item If $\mathcal{M}\subset\mathcal{L}\left(\mathcal{H}\right)$ is a subset
of observables, how can we reduce the dimension of the Hilbert space
while preserving all information about these observables?
\item If $\mathcal{M}\subset\mathcal{L}\left(\mathcal{H}\right)$ is a set
of error operators of a noisy quantum channel, how can we encode information
so it will not be affected by noise.
\end{itemize}
Just as we have a symbolic, not inherently numeric, algorithm for
diagonalizing matrices using pen and paper, our goal is to introduce
a symbolic algorithm for finding the irreps structure. The solution
we propose is called the Scattering Algorithm. The idea of this algorithm
was originally published in \citep{Kabernik2020Quantum}.

We are aware of two other approaches to this problem in the literature.
First, a numerical algorithm was proposed by Murota \emph{et al}.
\citep{Murota2010} in the context of semidefinite programming. A
key step in their algorithm involves sampling for a random matrix
in the algebra, which requires the ability to span the operator space
of the algebra. Second, in a more physical context, Holbrook \emph{et
al}. \citep{holbrook2003noiseless} proposed an algorithm for computing
the noise commutant of an error algebra associated with a noisy channel.
Similarly to what we intend to achieve here, they propose a symbolic
algorithm, however, this algorithm also requires the ability to span
the operator space of the algebra. Unlike these approaches, the Scattering
Algorithm does not require spanning the operator space of the algebra,
which is not a trivial task given only the generators $\mathcal{M}$.

In the following, Section \ref{sec:How-the-Scattering} is dedicated
to describing and demonstrating how the Scattering Algorithm works
without rigorous proofs. In Section \ref{sec:Why-the-Scattering}
we will go over the details with more rigor and prove the correctness
of the results.

\section{How the Scattering Algorithm works\label{sec:How-the-Scattering}}

\subsection{Overview}

The main idea behind the Scattering Algorithm is to take the spectral
projection of the generators and to break them down into minimal projections
from which the irreps structure is built. The whole process proceeds
in four steps:
\begin{enumerate}
\item Compile the initial set of projections from the spectral projections
of the generators.
\item Apply the rank-reducing operation called scattering\emph{ }on all
pairs of projections until no further reduction is possible.
\item Verify that all projections are minimal and the set is complete; fix
it if necessary.
\item Construct minimal isometries and then the BPT basis that identify
the irreps structure.
\end{enumerate}

\subsubsection{Phase 1}

The first phase of the algorithm is just the spectral decomposition
of all generators and the extraction of spectral projections on eigenspaces
with distinct eigenvalues. After this, the original generators are
left behind and their spectral projections move forward.

\subsubsection{Phase 2}

This phase is the heart of the algorithm where most of the calculations
take place. In this phase we will apply the scattering operation defined
as follows.
\begin{defn}
\emph{\label{def:Scattering }Scattering} is an operation that takes
a pair of projections and breaks each one into lower rank projections:
\[
\begin{array}{c}
\Pi_{1}\\
\\
\Pi_{2}
\end{array}\Diagram{fdA &  & fuA\\
 & f\\
fuA &  & fdA
}
\begin{array}{c}
\Pi_{1}^{\left(\lambda_{1}\right)},\,\Pi_{1}^{\left(\lambda_{2}\right)},...\,,\Pi_{1}^{\left(0\right)}\\
\\
\Pi_{2}^{\left(\lambda_{1}\right)},\,\Pi_{2}^{\left(\lambda_{2}\right)},...\,,\Pi_{2}^{\left(0\right)}.
\end{array}
\]
The lower rank projections are produced from the spectral decompositions
of 
\[
\Pi_{1}\Pi_{2}\Pi_{1}=\sum_{\lambda\neq0}\lambda\Pi_{1}^{\left(\lambda\right)}\textrm{\hspace{1cm}and\hspace{1cm}}\Pi_{2}\Pi_{1}\Pi_{2}=\sum_{\lambda\neq0}\lambda\Pi_{2}^{\left(\lambda\right)},
\]
with the addition of $\Pi_{i=1,2}^{\left(0\right)}:=\Pi_{i}-\sum_{\lambda\neq0}\Pi_{i}^{\left(\lambda\right)}$
called the \emph{null} \emph{projections}.
\end{defn}
Note that we do not yet assume that the spectrum $\left\{ \lambda\right\} $
is the same for both decompositions, however, we will later prove
that it is. Also note that the null projections are not the projections
on the kernel of $\Pi_{i}\Pi_{j}\Pi_{i}$ (the kernel projections
are given by $I-\sum_{\lambda\neq0}\Pi_{i}^{\left(\lambda\right)}$)
and it is possible that $\Pi_{i}^{\left(0\right)}=0$. We will treat
the null projections just as $\lambda=0$ elements of the set of spectral
projections $\left\{ \Pi_{i=1,2}^{\left(\lambda\right)}\right\} $.
The projections produced by scattering are therefore the set $\left\{ \Pi_{i=1,2}^{\left(\lambda\right)}\right\} $
of pairwise orthogonal projections that sum to their predecessor:
\[
\Pi_{i}=\Pi_{i}^{\left(\lambda_{1}\right)}+\Pi_{i}^{\left(\lambda_{2}\right)}+...+\Pi_{i}^{\left(0\right)}.
\]
Thus, in analogy with the scattering of particles, scattering of projections
breaks them into smaller (lower rank) constituents of the original
pair.

In Phase 2 of the algorithm we keep picking pairs of projections and
applying the scattering operation (after each scattering only the
resulting projections move forward) until all pairs have the following
property.
\begin{defn}
\label{def:Reflecting}A pair of projections $\Pi_{1}$, $\Pi_{2}$
is called \emph{reflecting }if both projections remain unbroken under
scattering, that is 
\begin{align}
\Pi_{1}\Pi_{2}\Pi_{1} & =\lambda\Pi_{1}\label{eq:def of reflecting 1}\\
\Pi_{2}\Pi_{1}\Pi_{2} & =\lambda\Pi_{2}.\label{eq:def of reflecting 2}
\end{align}
The coefficient $\lambda$ is then called a \emph{reflection} \emph{coefficient}
and we will say that $\Pi_{1}$, $\Pi_{2}$ are \emph{properly} \emph{reflecting
}if the reflection coefficient is not $0$ (i.e. they are not orthogonal
$\Pi_{1}\Pi_{2}\neq0$).
\end{defn}
We will later show that after one scattering, the resulting pairs
$\left\{ \Pi_{1}^{\left(\lambda\right)},\Pi_{2}^{\left(\lambda'\right)}\right\} $
are properly reflecting for all $\lambda=\lambda'$, and orthogonal
for $\lambda\neq\lambda'$. By repetitively applying the scattering
operation on non-reflecting pairs we are guaranteed to reach the state
where all pairs are reflecting. That is because scattering reduces
the ranks of projections (unless they are reflecting) and eventually
we will either reach all reflecting pairs with ranks higher than 1
or we will reach projections of rank 1, which are always reflecting.

The final output of Phase 2 of the algorithm is a graph of reflection
relations defined as follows.
\begin{defn}
\label{def:RefNetwork}A (proper)\emph{ reflection network }is a graph
$G=\left\{ V,E\right\} $ where the vertices set $V:=\left\{ \Pi_{\mathrm{v}}\right\} $
consists of pairwise reflecting projections and every properly reflecting
pair is connected with an edge (orthogonal pairs are unconnected)
\[
E:=\left\{ \left\{ \Pi_{\mathrm{v}},\Pi_{\mathrm{u}}\right\} \subset V\,\left|\begin{array}{c}
\Pi_{\mathrm{v}}\Pi_{\mathrm{u}}\Pi_{\mathrm{v}}=\lambda\Pi_{\mathrm{v}}\\
\Pi_{\mathrm{u}}\Pi_{\mathrm{v}}\Pi_{u}=\lambda\Pi_{\mathrm{u}}
\end{array}\right.,\,\lambda\neq0\right\} .
\]
An \emph{improper} reflection network\emph{ }is the generalization
where not all projections are known to be reflecting so there are
two kinds of edges: black edges for properly reflecting pairs and
red edges for unknown relations.
\end{defn}
Note that according to this definition only known orthogonal projections
are not connected by any edge. In general, reflection networks may
have multiple connected components formed by subsets of projections
that are orthogonal to every projection outside the subset. It does
not mean, however, that projections in the same connected component
cannot be orthogonal; as long as there is a sequence of proper reflection
(or unknown) relations connecting the projections, they will be in
the same connected component.

With the above definition we can say that Phase 2 begins with an improper
reflection notwork of projections produced in Phase 1. Then, as we
keep applying the scattering operation, the reflection network evolves
until it becomes a proper reflection network. The proper reflection
network is what proceeds to the next phase.

\subsubsection{Phase 3}

In order to construct the irreps structure we have to establish that
the reflection network has the following two properties:
\begin{enumerate}
\item \emph{(minimality}) All projections in the reflection network are
minimal projections.
\item \emph{(completeness}) The reflection network has a maximal set of
minimal projections
\end{enumerate}
Phase 3 is where we establish that the reflection network produced
in Phase 2 is indeed minimal and complete.

Although minimality is not guaranteed to hold for a proper reflection
network, in practice, purely on empirical grounds, reflection networks
produced in Phase 2 tend to always be minimal. Nevertheless, in the
next section we will show how to check if this property holds and
how to fix it if it does not.

Completeness is a rather trivial property that is guaranteed if any
of the initial generators is supported on the whole Hilbert space.
This can be arranged by adding the identity to the set of generators.
When adding the identity is not feasible we will show in the next
section how to complete the reflection network to have a maximal set.

\subsubsection{Phase 4}

In the last phase we take the proper reflection network that is minimal
and complete and construct the BPT basis that specify the irreps structure.
Each connected component in the reflection network will correspond
to a block in the BPT. We construct the BPT basis by following the
steps:
\begin{enumerate}
\item Select a maximal set of minimal projections $\left\{ \Pi_{k}^{q}\right\} _{k=1,...}$
in the connected component $q$.
\item For each $\Pi_{k}^{q}$ in the connected component $q$, take any
path form $\Pi_{1}^{q}$ to $\Pi_{k}^{q}$ and construct the minimal
isometry $S_{k1}^{q}\propto\Pi_{k}^{q}\cdots\Pi_{1}^{q}$ by taking
the product of projections along the path (the proportionality coefficient
is fixed after construction).
\item Use the minimal isometries $S_{k1}^{q}$ to construct the BPT basis
as described in the proof of Lemma \ref{lem: for all S_kl there is a BPT}.
\end{enumerate}

\subsection{Illustrative example}

In order to see how the Scattering Algorithm works we consider the
Hilbert space of three qubits $\mathcal{H}=\mathcal{H}_{qubit}^{\otimes3}$
and study a peculiar Hamiltonian whose choice is mainly motivated
by the fact that it presents a non-trivial problem in a relatively
simple setting.

The Hamiltonian we consider consists of two terms $H\left(\epsilon\right)=H_{int}+\epsilon H_{z_{1}}$.
Using the notation $\ket{\pm}=\left(\ket 0\pm\ket 1\right)/\sqrt{2}$,
the term $H_{int}$ is some interaction such that both $\ket{++0}$
and $\ket{+11}$ are the first (and only) excited states, and $\ket{-00}$,$\ket{-01}$,$\ket{-10}$,$\ket{-11}$,
$\ket{+01}$,$\ket{+-0}$ are the ground states. The second term is
$H_{z_{1}}=\sigma_{z}\otimes I_{23}$, where $\sigma_{z}$ is a Pauli
matrix acting on the first qubit so $\ket{000}$,$\ket{001}$,$\ket{010}$,$\ket{011}$
are the excited states. The excitation energy gap of $H_{int}$ is
normalized to $1$, while $\epsilon$ is a free parameter that controls
the gap of $H_{z_{1}}$. We would like to find out the spectrum and
the eigenstates of $H\left(\epsilon\right)$ as a function of $\epsilon$.

Since $H_{int}$ and $H_{z_{1}}$ do not commute, we cannot simultaneously
diagonalize them. If $\epsilon$ is small we could use perturbation
theory, but we do not want to assume that. What we can do instead
is observe that for all $\epsilon$, $H\left(\epsilon\right)$ is
an element of the operator algebra generated by $H_{int}$ and $H_{z_{1}}$.
Thus, with respect to the Wedderburn Decomposition of this algebra,
$H\left(\epsilon\right)$ may have a much simpler form. Another way
to say it is this: although $H_{int}$ and $H_{z_{1}}$ cannot be
simultaneously diagonalized, they can be simultaneously block-diagonalized.
Then, if the blocks are small and/or repetitive, the spectrum of $H\left(\epsilon\right)$
can be easier to analyze.

We will therefore find the irreps structure of the algebra $\left\langle H_{int},H_{z_{1}}\right\rangle $
which amounts to finding its BPT basis.

\subsubsection{Phase 1}

Recall that the energy gap of $H_{int}$ is $1$ and we can shift
the whole spectrum so that its ground energy is $0$. Then, this Hamiltonian
term is just a projection $H_{int}=\Pi_{int}$ on its exited states
\[
\Pi_{int}:=\ket{++0}\bra{++0}+\ket{+11}\bra{+11}.
\]
The second Hamiltonian term consists of two spectral projections $H_{z_{1}}=\Pi_{z_{1};0}-\Pi_{z_{1};1}$
where 
\[
\Pi_{z_{1};0}:=\ket 0\bra 0\otimes I_{23}\,\,\,\,\,\,\,\,\,\,\,\,\,\,\,\,\,\,\,\,\,\,\,\,\,\,\Pi_{z_{1};1}:=\ket 1\bra 1\otimes I_{23}.
\]
Overall, we compile the three spectral projections $\left\{ \Pi_{int},\Pi_{z_{1};0},\Pi_{z_{1};1}\right\} $.\footnote{We could shift the spectrum again and drop the second projection $\Pi_{z_{1};1}$
but then none of the generators will be supported on the whole Hilbert
space. This will result in an incomplete reflection network, which
is easy to fix, but there is no reason to deliberately create this
complication.}

\subsubsection{Phase 2}

The initial (improper) reflection network is shown Fig. \ref{fig:exmprefnet1}
where the red edges indicate unknown relations and the absent edge
between $\Pi_{z_{1};0}$ and $\Pi_{z_{1};1}$ indicates our prior
knowledge that they are orthogonal.

\begin{figure}[H]
\begin{centering}
\includegraphics[viewport=0bp 250bp 737bp 453bp,clip,width=0.6\columnwidth]{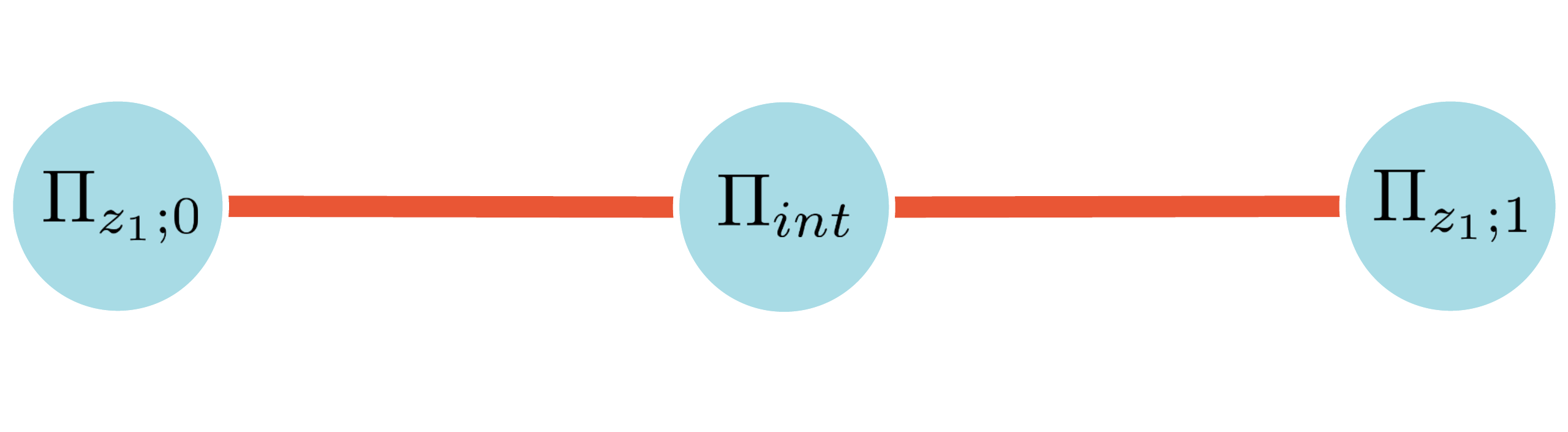}
\par\end{centering}
\centering{}\caption{\label{fig:exmprefnet1}An example of the initial (improper) reflection
network.}
\end{figure}

For the first scattering we pick any pair connected by a red edge,
say $\left\{ \Pi_{z_{1};0},\Pi_{int}\right\} $. For the scattering
calculation it is convenient to first calculate the product 
\[
\Pi_{z_{1};0}\Pi_{int}=\frac{1}{\sqrt{2}}\ket{0+0}\bra{++0}+\frac{1}{\sqrt{2}}\ket{011}\bra{+11},
\]
and then it is easy to get the scattering result for both projections
\begin{align*}
\Pi_{int}\Pi_{z_{1};0}\Pi_{int} & =\frac{1}{2}\ket{++0}\bra{++0}+\frac{1}{2}\ket{+11}\bra{+11}=\frac{1}{2}\Pi_{int}\\
\Pi_{z_{1};0}\Pi_{int}\Pi_{z_{1};0} & =\frac{1}{2}\ket{0+0}\bra{0+0}+\frac{1}{2}\ket{011}\bra{011}=:\frac{1}{2}\Pi_{z_{1};0}^{\left(1/2\right)}.
\end{align*}
After scattering, $\Pi_{int}$ remains unbroken and $\Pi_{z_{1};0}$
breaks into $\Pi_{z_{1};0}^{\left(1/2\right)}$ and the null projection
\[
\Pi_{z_{1};0}^{\left(0\right)}:=\Pi_{z_{1};0}-\Pi_{z_{1};0}^{\left(1/2\right)}=\ket{0-0}\bra{0-0}+\ket{001}\bra{001}.
\]
At this point, one can explicitly verify that $\Pi_{int}$ is reflecting
with $\Pi_{z_{1};0}^{\left(1/2\right)}$ and orthogonal to $\Pi_{z_{1};0}^{\left(0\right)}$
(in fact, this verification is unnecessary since this is a general
property of projections produced by scattering that we will prove
in Theorem \ref{thm: scattering of projections }). We also know that
$\Pi_{z_{1};1}$ is orthogonal to both $\Pi_{z_{1};0}^{\left(1/2\right)}$
and $\Pi_{z_{1};0}^{\left(0\right)}$ since it was orthogonal to their
predecessor. The updated reflection network is shown in Fig. \ref{fig:exmprefnet2}.

\begin{figure}[H]
\begin{centering}
\includegraphics[viewport=0bp 100bp 737bp 453bp,clip,width=0.6\columnwidth]{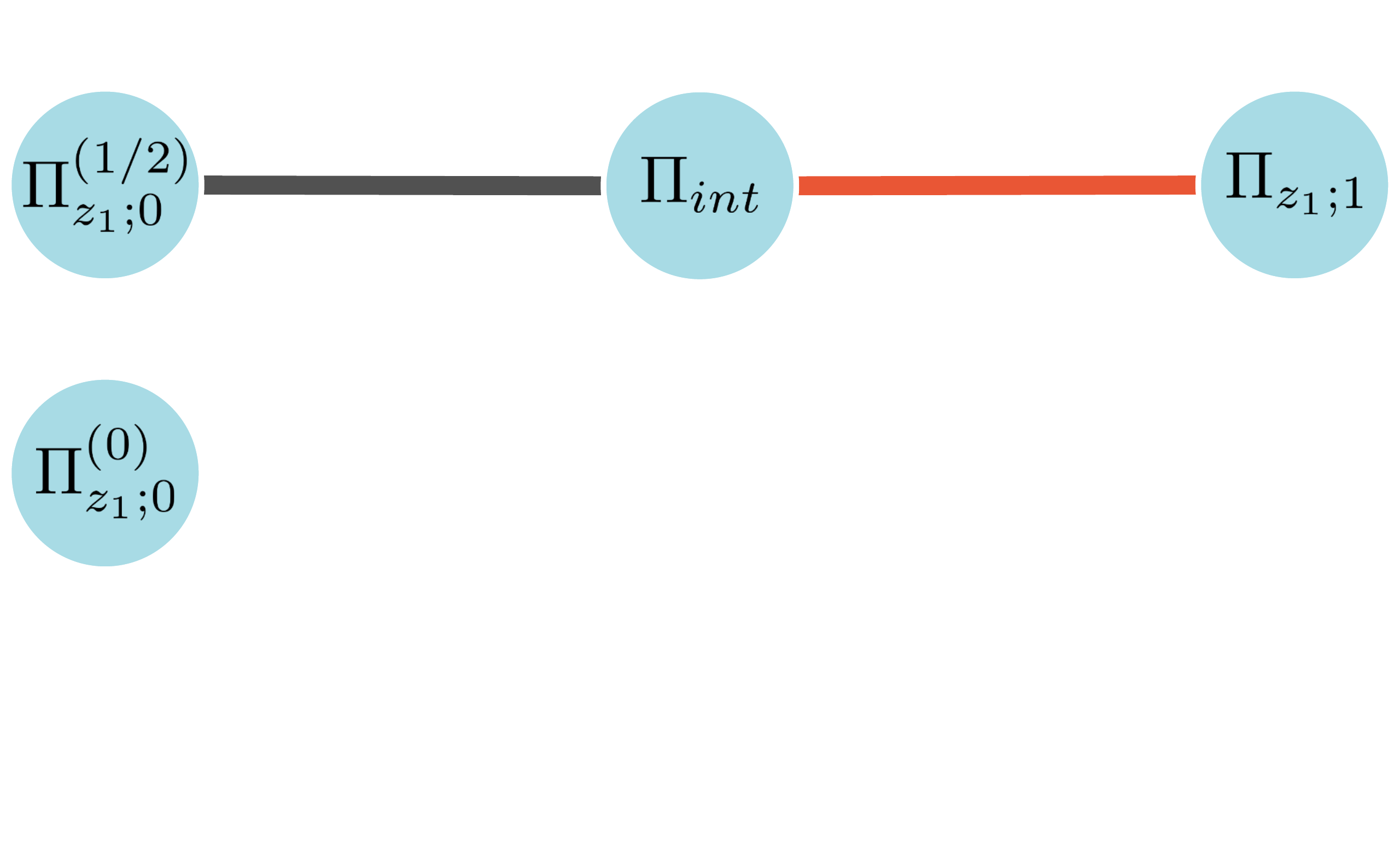}
\par\end{centering}
\centering{}\caption{\label{fig:exmprefnet2}An example of the intermediate reflection
network after one scattering.}
\end{figure}

The only remaining red edge is between the pair $\left\{ \Pi_{z_{1};1},\Pi_{int}\right\} $
which after scattering similarly yields
\begin{align*}
\Pi_{int}\Pi_{z_{1};1}\Pi_{int} & =\frac{1}{2}\ket{++0}\bra{++0}+\frac{1}{2}\ket{+11}\bra{+11}=\frac{1}{2}\Pi_{int}\\
\Pi_{z_{1};1}\Pi_{int}\Pi_{z_{1};1} & =\frac{1}{2}\ket{1+0}\bra{1+0}+\frac{1}{2}\ket{111}\bra{111}=:\frac{1}{2}\Pi_{z_{1};1}^{\left(1/2\right)}.
\end{align*}
Again, $\Pi_{int}$ remains unbroken and $\Pi_{z_{1};1}$ breaks into
$\Pi_{z_{1};1}^{\left(1/2\right)}$ and the null projection
\[
\Pi_{z_{1};1}^{\left(0\right)}:=\Pi_{z_{1};1}-\Pi_{z_{1};1}^{\left(1/2\right)}=\ket{1-0}\bra{1-0}+\ket{101}\bra{101}.
\]
The final and proper reflection network is shown in Fig. \ref{fig:exmprefnet3}.

\begin{figure}[H]
\centering{}\includegraphics[viewport=0bp 100bp 737bp 453bp,clip,width=0.6\columnwidth]{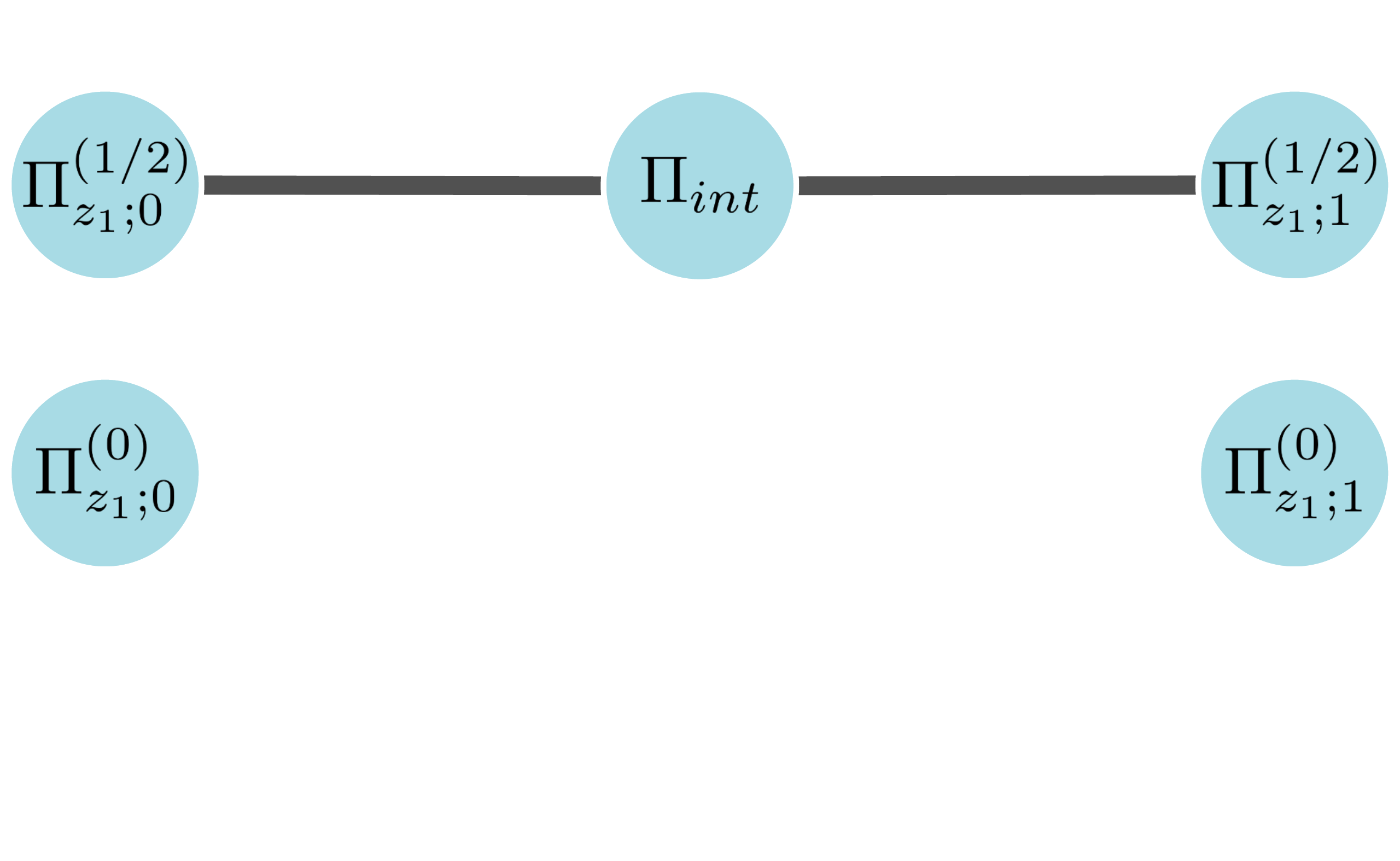}\caption{\label{fig:exmprefnet3}An example of the final (proper) reflection
network after two scatterings.}
\end{figure}

\subsubsection{Phase 3}

Since we had not discussed yet how to check minimality and completeness
of the reflection network, we will just assert that these properties
hold.

\subsubsection{Phase 4}

In the final reflection network in Fig. \ref{fig:exmprefnet3} we
identify three connected components that will correspond to three
blocks in the BPT. The two connected components with a single projection
have a single column given by the eigenspace of the projection. Since
we are free to choose the basis for the first column in each block,
we will stick with $\ket{0-0}$, $\ket{001}$ for $\Pi_{z_{1};0}^{\left(0\right)}$,
and with $\ket{1-0}$, $\ket{101}$ for $\Pi_{z_{1};1}^{\left(0\right)}$.

For the last block we need to choose a maximal subset of minimal projections
whose eigenspaces will correspond to the columns. In this case it
can only be $\left\{ \Pi_{z_{1};0}^{\left(1/2\right)},\Pi_{z_{1};1}^{\left(1/2\right)}\right\} $
and we pick $\Pi_{z_{1};0}^{\left(1/2\right)}$ to be the first column.
Again, we naturally choose the first column basis to be the eigenstates
$\ket{0+0}$, $\ket{011}$ of $\Pi_{z_{1};0}^{\left(1/2\right)}$.
For the second column in this block we cannot freely choose the basis.
Instead, we get the second column basis by mapping the first column
with the minimal isometry 
\[
S_{01}^{\left(1/2\right)}\propto\Pi_{z_{1};1}^{\left(1/2\right)}\Pi_{int}\Pi_{z_{1};0}^{\left(1/2\right)}=\frac{1}{2}\ket{1+0}\bra{0+0}+\frac{1}{2}\ket{111}\bra{011}
\]
constructed by taking the product of projections along the connecting
path in the reflection network (the factors $\frac{1}{2}$ are removed
by normalization). Thus, the second column also consists of the familiar
basis $\ket{1+0}$, $\ket{111}$ but we could not know that a priori.

The final result of the Scattering Algorithm is summarized by the
BPT

\noindent\begin{minipage}[c]{1\columnwidth}%
\begin{center}
\vspace{0.5\baselineskip}
\begin{tabular}{c|c|cc}
\cline{1-1} 
\multicolumn{1}{|c|}{$0-0$} & \multicolumn{1}{c}{} &  & \tabularnewline
\cline{1-1} 
\multicolumn{1}{|c|}{$001$} & \multicolumn{1}{c}{} &  & \tabularnewline
\cline{1-2} \cline{2-2} 
 & $1-0$ &  & \tabularnewline
\cline{2-2} 
 & $101$ &  & \tabularnewline
\cline{2-4} \cline{3-4} \cline{4-4} 
\multicolumn{1}{c}{} &  & \multicolumn{1}{c|}{$0+0$} & \multicolumn{1}{c|}{$1+0$}\tabularnewline
\cline{3-4} \cline{4-4} 
\multicolumn{1}{c}{} &  & \multicolumn{1}{c|}{$011$} & \multicolumn{1}{c|}{$111$}\tabularnewline
\cline{3-4} \cline{4-4} 
\end{tabular} .\vspace{0.5\baselineskip}
\par\end{center}%
\end{minipage} \\

Returning to our original question, the Wedderburn Decomposition given
by the above BPT is 
\[
\mathcal{H}_{qubit}^{\otimes3}\cong\mathcal{H}_{\nu_{a}}\oplus\mathcal{H}_{\nu_{b}}\oplus\mathcal{H}_{\nu_{c}}\otimes\mathcal{H}_{\mu_{c}}
\]
where we have labeled the three blocks as $a,b,c$. Since $H\left(\epsilon\right)\in\left\langle H_{int},H_{z_{1}}\right\rangle $,
for all $\epsilon$ this Hamiltonian must have the block-diagonal
form
\[
H\left(\epsilon\right)=I_{\nu_{a}}\alpha\left(\epsilon\right)\oplus I_{\nu_{b}}\beta\left(\epsilon\right)\oplus I_{\nu_{c}}\otimes H_{\mu_{c}}\left(\epsilon\right)=\begin{pmatrix}\alpha\left(\epsilon\right)\\
 & \alpha\left(\epsilon\right)\\
 &  & \beta\left(\epsilon\right)\\
 &  &  & \beta\left(\epsilon\right)\\
 &  &  &  & H_{\mu_{c}}\left(\epsilon\right)\\
 &  &  &  &  & H_{\mu_{c}}\left(\epsilon\right)
\end{pmatrix}
\]
where $\alpha\left(\epsilon\right)$, $\beta\left(\epsilon\right)$
are $\epsilon$-dependent scalars and $H_{\mu_{c}}\left(\epsilon\right)$
is an $\epsilon$-dependent $2\times2$ matrix.

We can calculate these scalars and matrix elements using the original
definition 
\[
H\left(\epsilon\right)=H_{int}+\epsilon H_{z_{1}}=\ket{++0}\bra{++0}+\ket{+11}\bra{+11}+\epsilon\sigma_{z}\otimes I_{23}.
\]
Since all rows in the same block of the BPT are identical representations
of $H\left(\epsilon\right)$, we only need to calculate the matrix
elements for a single row in each BPT block:

\[
\alpha\left(\epsilon\right)=\bra{001}H\left(\epsilon\right)\ket{001}=\epsilon\,\,\,\,\,\,\,\,\,\,\,\,\,\,\,\,\,\,\,\,\,\,\,\,\beta\left(\epsilon\right)=\bra{101}H\left(\epsilon\right)\ket{101}=-\epsilon
\]
\[
H_{\mu_{c}}\left(\epsilon\right)=\begin{pmatrix}\bra{011}H\left(\epsilon\right)\ket{011} & \bra{011}H\left(\epsilon\right)\ket{111}\\
\bra{111}H\left(\epsilon\right)\ket{011} & \bra{111}H\left(\epsilon\right)\ket{111}
\end{pmatrix}=\begin{pmatrix}\frac{1}{2}+\epsilon & \frac{1}{2}\\
\frac{1}{2} & \frac{1}{2}-\epsilon
\end{pmatrix}.
\]
Therefore, using the basis arranged in the BPT (reading the BPT top
to bottom, left to right)
\[
\left\{ \ket{0-0},\ket{001},\ket{1-0},\ket{101},\ket{0+0},\ket{1+0},\ket{011},\ket{111}\right\} ,
\]
results in the block-diagonal matrix representation of this Hamiltonian
\[
H\left(\epsilon\right)=\begin{pmatrix}\epsilon\\
 & \epsilon\\
 &  & -\epsilon\\
 &  &  & -\epsilon\\
 &  &  &  & \frac{1}{2}+\epsilon & \frac{1}{2}\\
 &  &  &  & \frac{1}{2} & \frac{1}{2}-\epsilon\\
 &  &  &  &  &  & \frac{1}{2}+\epsilon & \frac{1}{2}\\
 &  &  &  &  &  & \frac{1}{2} & \frac{1}{2}-\epsilon
\end{pmatrix}.
\]

The states $\left\{ \ket{0-0},\ket{001}\right\} $ and $\left\{ \ket{1-0},\ket{101}\right\} $
are clearly the eigenvectors with the eigenvalues $\epsilon$, $-\epsilon$
respectively. The $2\times2$ matrix block $H_{\mu_{c}}\left(\epsilon\right)$
can be decomposed into Pauli matrices
\[
H_{\mu_{c}}\left(\epsilon\right)=\frac{1}{2}\sigma_{x}+\epsilon\sigma_{z}+\frac{1}{2}I
\]
and we can disregard the identity as it only generates a phase factor.
Then we can see that $H_{\mu_{c}}\left(\epsilon\right)$ is just the
Hamiltonian of a single spin in transverse fields. The ``up'' and
``down'' states of this spin are $\left\{ \ket{0+0},\ket{1+0}\right\} $
for one matrix block and $\left\{ \ket{011},\ket{111}\right\} $ for
the other. Thus, the whole task reduces to analyzing a single spin
in transverse fields, which is a significant simplification of the
original problem.

\section{The Scattering Algorithm in detail\label{sec:Why-the-Scattering}}

With the above overview and example we are in a good position to formally
go over the details of the Scattering Algorithm and prove the correctness
of the solution that it finds.

The input of this algorithm is a finite set of self-adjoint matrices
$\mathcal{M}\subseteq\mathcal{L}\left(\mathcal{H}\right)$ that generate
the algebra $\mathcal{A}:=\left\langle \mathcal{M}\right\rangle $.
The output is a set of BPT basis $\left\{ \ket{e_{ik}^{q}}\right\} $
where the indices $q$ specify the distinct irreps, $i$ specify the
multiple instances of identical irreps, and $k$ specify the distinct
basis elements inside each irrep. The BPT basis specify the irreps
structure (Wedderburn Decomposition) of $\mathcal{A}$ as described
in the proof of Theorem \ref{thm: Wedderburn Decomp}.

The main procedure of the algorithm is as follows:

\noindent 
\begin{algorithm}[H]

\begin{algorithmic}[1] 

\Procedure{FindIrrepStructure}{$\mathcal{M}$}     
 
\State $SpecProjections \gets$  \Call{GetAllSpectralProjections}{$\mathcal{M}$}

\State $ReflectNet \gets$ \Call{ScatterProjections}{$SpecProjections$}

\State $ReflectNet \gets$ \Call{EstablishMinimality}{$ReflectNet$}

\State $ReflectNet \gets$ \Call{EstablishCompleteness}{$ReflectNet$}

\State $BptBasis \gets$ \Call{ConstructBptBasis}{$ReflectNet$}

\State \Return $BptBasis$
\EndProcedure

\end{algorithmic}

\caption{\label{alg:The-Scattering-Algorithm}The Scattering Algorithm}
\end{algorithm}

\noindent We will now go over the details of each procedure (except
the trivial first step of getting all the spectral projections from
the generators) and prove the accompanying facts. In Section \ref{subsec:Why-the-Scattering Alg works}
we will prove the correctness of the whole algorithm.

\subsection{Scattering of projections}

Following the Definitions \ref{def:Scattering } and \ref{def:Reflecting}
of scattering and reflecting projections, we will now prove a few
useful facts.

The most important fact about the scattering operation is that regardless
of what the initial projections $\Pi_{1}$, $\Pi_{2}$ are, the resulting
projections are always a series of reflecting pairs $\Pi_{1}^{\left(\lambda\right)},\Pi_{2}^{\left(\lambda\right)}$
with reflection coefficients $\lambda$, and every pair is orthogonal
to any other pair.
\begin{thm}
\label{thm: scattering of projections }Let $\Pi_{1}$, $\Pi_{2}$
be a pair of projections before scattering and let $\left\{ \Pi_{1}^{\left(\lambda\right)}\right\} $,
$\left\{ \Pi_{2}^{\left(\lambda\right)}\right\} $ be the resulting
projections after scattering. Then:

\noindent (1) The non-zero eigenvalues $\left\{ \lambda\right\} $
are the same for both $\Pi_{1}\Pi_{2}\Pi_{1}$ and $\Pi_{2}\Pi_{1}\Pi_{2}$.

\noindent (2) For all $\lambda\neq\lambda'$ the pairs of projections
$\Pi_{1}^{\left(\lambda\right)}$, $\Pi_{2}^{\left(\lambda'\right)}$
are orthogonal.

\noindent (3) For all $\lambda=\lambda'$ the pairs of projections
$\Pi_{1}^{\left(\lambda\right)}$, $\Pi_{2}^{\left(\lambda\right)}$
are reflecting with reflection coefficient $\lambda$.
\end{thm}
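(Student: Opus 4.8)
The plan is to reduce everything to a single self-adjoint operator whose spectral theory does all the work. Set $M := \Pi_1 \Pi_2 \Pi_1$; this is self-adjoint and positive semidefinite, with spectral decomposition $M = \sum_{\lambda \neq 0} \lambda\, \Pi_1^{(\lambda)}$ (plus the null projection $\Pi_1^{(0)}$ absorbing the rest of $\Pi_1$). The central observation I would establish first is a conjugation symmetry: applying $\Pi_2$ on both sides sends $\Pi_1 \Pi_2 \Pi_1$ to $\Pi_2 \Pi_1 \Pi_2 \Pi_1 \Pi_2$, and there is a natural partial isometry $W := \Pi_2 \Pi_1$ that intertwines the two operators via $W M = (\Pi_2 \Pi_1 \Pi_2)\, W$ — i.e., $\Pi_2 \Pi_1 \cdot (\Pi_1 \Pi_2 \Pi_1) = (\Pi_2 \Pi_1 \Pi_2) \cdot \Pi_2 \Pi_1$, which is just associativity together with $\Pi_i^2 = \Pi_i$. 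This is the engine for all three parts.

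For part (1), I would argue that $W$ restricts to a linear isomorphism from the $\lambda$-eigenspace of $\Pi_1\Pi_2\Pi_1$ onto the $\lambda$-eigenspace of $\Pi_2\Pi_1\Pi_2$ whenever $\lambda \neq 0$. The intertwining relation shows $W$ maps the former into the latter; symmetrically $W^\dagger = \Pi_1\Pi_2$ maps the latter into the former; and the composite $W^\dagger W = \Pi_1\Pi_2\Pi_1 = M$ acts as multiplication by $\lambda \neq 0$ on the $\lambda$-eigenspace, hence is invertible there, so $W$ is injective with the right inverse up to the scalar $\lambda$. This gives a bijection between the nonzero spectra, proving (1); moreover it identifies $\Pi_2^{(\lambda)}$ concretely, up to normalization, as the projection onto $W(\mathrm{range}\,\Pi_1^{(\lambda)})$, equivalently $\Pi_2^{(\lambda)} = \tfrac{1}{\lambda}\Pi_2\Pi_1^{(\lambda)}\Pi_2$ — a formula I would verify directly and then reuse.

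For part (2), orthogonality of $\Pi_1^{(\lambda)}$ and $\Pi_1^{(\lambda')}$ for $\lambda \neq \lambda'$ is automatic since both are spectral projections of the same operator $M$; the cross term $\Pi_1^{(\lambda)}\Pi_2^{(\lambda')}$ I would kill using the formula $\Pi_2^{(\lambda')} = \tfrac{1}{\lambda'}\Pi_2\Pi_1^{(\lambda')}\Pi_2$ together with $\Pi_1^{(\lambda)}\Pi_1^{(\lambda')} = 0$, after pushing one $\Pi_2$ through — more precisely, I would compute $\Pi_1^{(\lambda)}\Pi_2\Pi_1^{(\lambda')}$ and show it vanishes by noting that $\Pi_1^{(\lambda)}\Pi_2\Pi_1^{(\lambda')} = \Pi_1^{(\lambda)} M \Pi_1^{(\lambda')}/(\text{scalar})$ once a $\Pi_1$ is inserted, which is $0$ since $M$ commutes with its spectral projections. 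For part (3), the reflecting relations $\Pi_1^{(\lambda)}\Pi_2^{(\lambda)}\Pi_1^{(\lambda)} = \lambda\,\Pi_1^{(\lambda)}$ and its partner follow by substituting the explicit formula for $\Pi_2^{(\lambda)}$ and using that $\Pi_1^{(\lambda)}\Pi_1\Pi_2\Pi_1\Pi_1^{(\lambda)} = \Pi_1^{(\lambda)}M\Pi_1^{(\lambda)} = \lambda\Pi_1^{(\lambda)}$.

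The main obstacle I anticipate is bookkeeping the normalization constants that turn the raw operators $\Pi_2\Pi_1^{(\lambda)}\Pi_2$ into genuine projections $\Pi_2^{(\lambda)}$, and making sure the "$\Pi_2^{(\lambda)}$" defined this way really coincides with the spectral projection of $\Pi_2\Pi_1\Pi_2$ at eigenvalue $\lambda$ (rather than merely being supported there) — this needs the rank equality from part (1) plus the fact that these candidates are pairwise orthogonal and sum correctly. Everything else is formal manipulation of products of projections; no appeal to the algebra $\mathcal{A}$ is needed, only $\Pi_i = \Pi_i^\dagger = \Pi_i^2$ and the spectral theorem for a single self-adjoint operator.
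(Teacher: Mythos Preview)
Your proposal is correct and follows essentially the same route as the paper. The key formula $\Pi_2^{(\lambda)} = \tfrac{1}{\lambda}\Pi_2\Pi_1^{(\lambda)}\Pi_2$ and the identity $\Pi_1^{(\lambda)}\Pi_2\Pi_1^{(\lambda')} = \delta_{\lambda\lambda'}\lambda\,\Pi_1^{(\lambda)}$ (which is exactly your observation that $\Pi_1^{(\lambda)}\Pi_2\Pi_1^{(\lambda')} = \Pi_1^{(\lambda)}M\Pi_1^{(\lambda')}$, no extra scalar needed) are precisely the paper's two working identities, and parts~(2) and~(3) are dispatched from them in the same way. The only cosmetic difference is in part~(1): the paper verifies directly that the operators $\tfrac{1}{\lambda}\Pi_2\Pi_1^{(\lambda)}\Pi_2$ are pairwise-orthogonal projections summing inside $\Pi_2\Pi_1\Pi_2$ and invokes uniqueness of the spectral decomposition, whereas you phrase the same fact via the intertwiner $W = \Pi_2\Pi_1$; the ``obstacle'' you flag (that the candidates really are the spectral projections) is exactly what the paper's direct verification handles.
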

\begin{proof}
We will assume that $\left\{ \lambda\right\} $ are the eigenvalues
of $\Pi_{1}\Pi_{2}\Pi_{1}$ while the eigenvalues of $\Pi_{2}\Pi_{1}\Pi_{2}$
are unknown. Since all $\left\{ \Pi_{1}^{\left(\lambda\right)}\right\} $
are pairwise orthogonal and sum to $\Pi_{1}$, we have $\Pi_{1}^{\left(\lambda\right)}\Pi_{1}=\Pi_{1}\Pi_{1}^{\left(\lambda\right)}=\Pi_{1}^{\left(\lambda\right)}$
for all $\lambda$. Then, if we multiply the definition of scattering
$\Pi_{1}\Pi_{2}\Pi_{1}=\sum_{\lambda''\neq0}\lambda''\Pi_{1}^{\left(\lambda''\right)}$
from left and right with $\Pi_{1}^{\left(\lambda\right)}$ and $\Pi_{1}^{\left(\lambda'\right)}$,
we get the identity

\begin{equation}
\Pi_{1}^{\left(\lambda\right)}\Pi_{2}\Pi_{1}^{\left(\lambda'\right)}=\delta_{\lambda\lambda'}\lambda\Pi_{1}^{\left(\lambda\right)}.\label{eq:scat of proj proof identity 1}
\end{equation}
This equation holds for all $\lambda$ including $\lambda=0$ regardless
of whether $\Pi_{1}^{\left(0\right)}=0$ or not. In particular 
\[
\left(\Pi_{1}^{\left(0\right)}\Pi_{2}\right)\left(\Pi_{1}^{\left(0\right)}\Pi_{2}\right)^{\dagger}=\Pi_{1}^{\left(0\right)}\Pi_{2}\Pi_{1}^{\left(0\right)}=0
\]
so $\Pi_{1}^{\left(0\right)}\Pi_{2}=\Pi_{2}\Pi_{1}^{\left(0\right)}=0$.
Therefore, 
\begin{align}
\Pi_{2}\Pi_{1}\Pi_{2} & =\Pi_{2}\left(\Pi_{1}-\Pi_{1}^{\left(0\right)}\right)\Pi_{2}=\Pi_{2}\left(\sum_{\lambda\neq0}\Pi_{1}^{\left(\lambda\right)}\right)\Pi_{2}=\sum_{\lambda\neq0}\lambda\left(\frac{1}{\lambda}\Pi_{2}\Pi_{1}^{\left(\lambda\right)}\Pi_{2}\right)\nonumber \\
 & =\sum_{\lambda\neq0}\lambda\tilde{\Pi}_{2}^{\left(\lambda\right)},\label{eq: scat of proj proof identity 2}
\end{align}
where the last step suggests the definition $\tilde{\Pi}_{2}^{\left(\lambda\right)}:=\frac{1}{\lambda}\Pi_{2}\Pi_{1}^{\left(\lambda\right)}\Pi_{2}$.
The operators $\tilde{\Pi}_{2}^{\left(\lambda\right)}$ are clearly
self-adjoint and, using Eq. \eqref{eq:scat of proj proof identity 1},
we have 
\[
\tilde{\Pi}_{2}^{\left(\lambda\right)}\tilde{\Pi}_{2}^{\left(\lambda'\right)}=\frac{1}{\lambda\lambda'}\Pi_{2}\Pi_{1}^{\left(\lambda\right)}\Pi_{2}\Pi_{1}^{\left(\lambda'\right)}\Pi_{2}=\delta_{\lambda\lambda'}\frac{1}{\lambda}\Pi_{2}\Pi_{1}^{\left(\lambda\right)}\Pi_{2}=\delta_{\lambda\lambda'}\tilde{\Pi}_{2}^{\left(\lambda\right)}.
\]
Therefore, the operators $\left\{ \tilde{\Pi}_{2}^{\left(\lambda\right)}\right\} $
form a set of pairwise orthogonal projections. In that case, Eq. \eqref{eq: scat of proj proof identity 2}
is the spectral decomposition of $\Pi_{2}\Pi_{1}\Pi_{2}$. Since the
spectral decomposition is unique we must have $\tilde{\Pi}_{2}^{\left(\lambda\right)}=\Pi_{2}^{\left(\lambda\right)}$
for all $\lambda\neq0$ and so the non-zero eigenvalues are the same
for both $\Pi_{1}\Pi_{2}\Pi_{1}$ and $\Pi_{2}\Pi_{1}\Pi_{2}$. This
proves claim 1 and produces the identity
\begin{equation}
\Pi_{2}^{\left(\lambda\right)}=\frac{1}{\lambda}\Pi_{2}\Pi_{1}^{\left(\lambda\right)}\Pi_{2}.\label{eq:scat of proj proof identity 3}
\end{equation}
Using the identities \eqref{eq:scat of proj proof identity 3} and
\eqref{eq:scat of proj proof identity 1}, we get another identity
\[
\Pi_{1}^{\left(\lambda\right)}\Pi_{2}^{\left(\lambda'\right)}=\frac{1}{\lambda'}\Pi_{1}^{\left(\lambda\right)}\Pi_{2}\Pi_{1}^{\left(\lambda'\right)}\Pi_{2}=\delta_{\lambda\lambda'}\Pi_{1}^{\left(\lambda\right)}\Pi_{2},
\]
which proves claim 2. In particular, for $\lambda=\lambda'$, we can
multiply the last identity with its own adjoint from both sides
\begin{align*}
\Pi_{1}^{\left(\lambda\right)}\Pi_{2}^{\left(\lambda\right)}\Pi_{1}^{\left(\lambda\right)} & =\Pi_{1}^{\left(\lambda\right)}\Pi_{2}\Pi_{1}^{\left(\lambda\right)}\\
\Pi_{2}^{\left(\lambda\right)}\Pi_{1}^{\left(\lambda\right)}\Pi_{2}^{\left(\lambda\right)} & =\Pi_{2}\Pi_{1}^{\left(\lambda\right)}\Pi_{2}.
\end{align*}
Then, using the identity \eqref{eq:scat of proj proof identity 1}
in the first line, and the identity \eqref{eq:scat of proj proof identity 3}
in the second, we get
\begin{align*}
\Pi_{1}^{\left(\lambda\right)}\Pi_{2}^{\left(\lambda\right)}\Pi_{1}^{\left(\lambda\right)} & =\lambda\Pi_{1}^{\left(\lambda\right)}\\
\Pi_{2}^{\left(\lambda\right)}\Pi_{1}^{\left(\lambda\right)}\Pi_{2}^{\left(\lambda\right)} & =\lambda\Pi_{2}^{\left(\lambda\right)},
\end{align*}
which proves claim 3.
\end{proof}
Note that Eq. \eqref{eq:scat of proj proof identity 3} tells us how
to calculate the projections $\left\{ \Pi_{2}^{\left(\lambda\neq0\right)}\right\} $
if we know $\left\{ \Pi_{1}^{\left(\lambda\neq0\right)}\right\} $.
That is, we only need to calculate one spectral decomposition of $\Pi_{1}\Pi_{2}\Pi_{1}$,
and then get the spectral decomposition of $\Pi_{2}\Pi_{1}\Pi_{2}$
for free. In practice, it is often easier to get the spectral projections
of both $\Pi_{1}\Pi_{2}\Pi_{1}$ and $\Pi_{2}\Pi_{1}\Pi_{2}$ from
the left and right singular vectors of $\Pi_{1}\Pi_{2}$.

Another useful fact that we will need is:
\begin{prop}
\label{prop:reflecting proj}Let $\Pi_{1}$, $\Pi_{2}$ be a pair
of properly reflecting projections with the reflection coefficient
$\lambda\neq0$, then, $\Pi_{1}$ and $\Pi_{2}$ have the same rank.
If in addition $\lambda=1$ then $\Pi_{1}=\Pi_{2}$.
\end{prop}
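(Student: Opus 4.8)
The plan is to realize $\tilde S:=\Pi_{1}\Pi_{2}$ as a rescaled partial isometry carrying the eigenspace of $\Pi_{2}$ onto that of $\Pi_{1}$. First I would use the two defining relations of a properly reflecting pair to compute
\[
\tilde S\tilde S^{\dagger}=\Pi_{1}\Pi_{2}\Pi_{1}=\lambda\Pi_{1},\qquad\tilde S^{\dagger}\tilde S=\Pi_{2}\Pi_{1}\Pi_{2}=\lambda\Pi_{2}.
\]
Since $\Pi_{1}\Pi_{2}\Pi_{1}=(\Pi_{2}\Pi_{1})^{\dagger}(\Pi_{2}\Pi_{1})$ is positive semidefinite and nonzero (because $\Pi_{1}\neq0$), and it equals $\lambda\Pi_{1}$ with $\Pi_{1}$ a projection, the reflection coefficient $\lambda$ must be a positive real; together with $\lambda\neq0$ this gives $\lambda>0$, so $\lambda^{-1/2}$ is well defined.

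Next I would set $S:=\lambda^{-1/2}\tilde S$. Then $SS^{\dagger}=\lambda^{-1}\tilde S\tilde S^{\dagger}=\Pi_{1}$ is a projection, so $S$ is a partial isometry with final projection $\Pi_{fin}=\Pi_{1}$. By the earlier Proposition on partial isometries (the one stating that for a partial isometry $S$ the operator $S^{\dagger}S$ is a projection of the same rank as $SS^{\dagger}$), $S^{\dagger}S$ is then a projection of rank $\rnk[\Pi_{1}]$; but $S^{\dagger}S=\lambda^{-1}\tilde S^{\dagger}\tilde S=\Pi_{2}$, hence $\rnk[\Pi_{1}]=\rnk[\Pi_{2}]$. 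This is just the minimality-free version of the construction in Lemma~\ref{lem:construct of minimal isometr from minimal proj }.

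For the case $\lambda=1$, the identity $\tilde S\tilde S^{\dagger}=\Pi_{1}$ already shows that $S=\tilde S=\Pi_{1}\Pi_{2}$ itself is a partial isometry. I would then invoke $A=0\iff AA^{\dagger}=0$: expanding $(\Pi_{1}-\Pi_{1}\Pi_{2})(\Pi_{1}-\Pi_{1}\Pi_{2})^{\dagger}=\Pi_{1}-\Pi_{1}\Pi_{2}\Pi_{1}$ and using $\Pi_{1}\Pi_{2}\Pi_{1}=\Pi_{1}$ gives $0$, so $\Pi_{1}=\Pi_{1}\Pi_{2}$; taking adjoints gives $\Pi_{1}=\Pi_{2}\Pi_{1}$. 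Multiplying $\Pi_{1}=\Pi_{1}\Pi_{2}$ on the left by $\Pi_{2}$ and using $\Pi_{2}\Pi_{1}\Pi_{2}=\Pi_{2}$ yields $\Pi_{2}\Pi_{1}=\Pi_{2}$, whence $\Pi_{1}=\Pi_{2}\Pi_{1}=\Pi_{2}$.

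I do not expect any genuine obstacle here; the only step deserving a moment's care is the justification that $\lambda>0$, which is precisely the positive semidefiniteness of $\Pi_{1}\Pi_{2}\Pi_{1}$, so that rescaling $\tilde S$ to an honest partial isometry is legitimate. Everything else is routine manipulation of projections.
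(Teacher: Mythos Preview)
Your proof is correct. The $\lambda=1$ part is essentially identical to the paper's argument, so nothing to add there.

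For the rank equality, however, the paper takes a more direct route: it simply traces both defining relations to obtain $\tr(\Pi_{1}\Pi_{2})=\lambda\,\tr(\Pi_{1})$ and $\tr(\Pi_{1}\Pi_{2})=\lambda\,\tr(\Pi_{2})$, and since $\lambda\neq0$ concludes $\tr(\Pi_{1})=\tr(\Pi_{2})$. Your argument instead builds the partial isometry $S=\lambda^{-1/2}\Pi_{1}\Pi_{2}$ and invokes the earlier proposition that initial and final projections of a partial isometry share rank. This is a genuinely different (and perfectly valid) path: the trace argument is shorter and avoids having to verify $\lambda>0$, while your construction is more structural and, as you note, foreshadows the content of Lemma~\ref{lem:construct of minimal isometr from minimal proj }. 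Either way the result falls out immediately.
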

\begin{proof}
If we take the trace on both sides of Eqs. \eqref{eq:def of reflecting 1},
\eqref{eq:def of reflecting 2} we get 
\[
\tr\left(\Pi_{1}\Pi_{2}\right)=\lambda\tr\left(\Pi_{1}\right),\,\,\,\,\,\,\,\,\,\,\,\,\,\,\,\,\,\,\,\,\,\,\,\,\,\,\,\,\,\,\,\tr\left(\Pi_{1}\Pi_{2}\right)=\lambda\tr\left(\Pi_{2}\right).
\]
Since $\lambda$'s are the same (see Theorem \ref{thm: scattering of projections })
then $\tr\left(\Pi_{1}\right)=\frac{\tr\left(\Pi_{1}\Pi_{2}\right)}{\lambda}=\tr\left(\Pi_{2}\right)$
and so they have the same rank. If $\lambda=1$ then
\begin{align*}
\left(\Pi_{1}-\Pi_{1}\Pi_{2}\right)\left(\Pi_{1}-\Pi_{1}\Pi_{2}\right)^{\dagger} & =\Pi_{1}-\Pi_{1}\Pi_{2}\Pi_{1}=\Pi_{1}-\lambda\Pi_{1}=0\\
\left(\Pi_{2}-\Pi_{2}\Pi_{1}\right)\left(\Pi_{2}-\Pi_{2}\Pi_{1}\right)^{\dagger} & =\Pi_{2}-\Pi_{2}\Pi_{1}\Pi_{2}=\Pi_{2}-\lambda\Pi_{2}=0.
\end{align*}
Therefore, $\Pi_{1}-\Pi_{1}\Pi_{2}=0$ and $\left(\Pi_{2}-\Pi_{2}\Pi_{1}\right)^{\dagger}=0$
so $\Pi_{1}=\Pi_{1}\Pi_{2}=\Pi_{2}$.
\end{proof}
The first statement of the above proposition implies that all the
projections that belong to the same connected component of a proper
reflection network (see Definition \ref{def:RefNetwork}) have the
same rank. The second statement of the above proposition implies that
whenever we scatter the pair $\Pi_{1}$, $\Pi_{2}$ and there is a
$\lambda=1$ in the spectrum, then $\Pi_{1}^{\left(\lambda=1\right)}=\Pi_{2}^{\left(\lambda=1\right)}$
(recall claim 3 in Theorem \ref{thm: scattering of projections }).
This situation occurs when the eigenspaces of $\Pi_{1}$ and $\Pi_{2}$
have a common subspace so $\Pi_{i=1,2}^{\left(\lambda=1\right)}$
is the common projection on it. During the scattering procedure we
can eliminate either $\Pi_{1}^{\left(\lambda=1\right)}$ or $\Pi_{2}^{\left(\lambda=1\right)}$
in order to avoid redundant operations in the future (it is not strictly
necessary though).

We can now consider how a single scattering operation changes the
reflection network (recall Definition \ref{def:RefNetwork} of the
reflection network). According to Theorem \ref{thm: scattering of projections },
a pair of projections $\Pi_{1}$, $\Pi_{2}$ whose relation is initially
unknown (red edge) scatters into a series of reflecting pairs (black
edges except for $\lambda=0$), and each pair is orthogonal (no edges)
to all other pairs. Since both projections $\Pi_{1}$, $\Pi_{2}$
are part of a larger network, we have to specify how the resulting
projections $\left\{ \Pi_{i=1,2}^{\left(\lambda\right)}\right\} $
inherit the relations with the rest of the network; see Fig. \ref{fig:scattering update rule}.
\begin{figure}[H]
\begin{centering}
\includegraphics[height=0.2\paperheight]{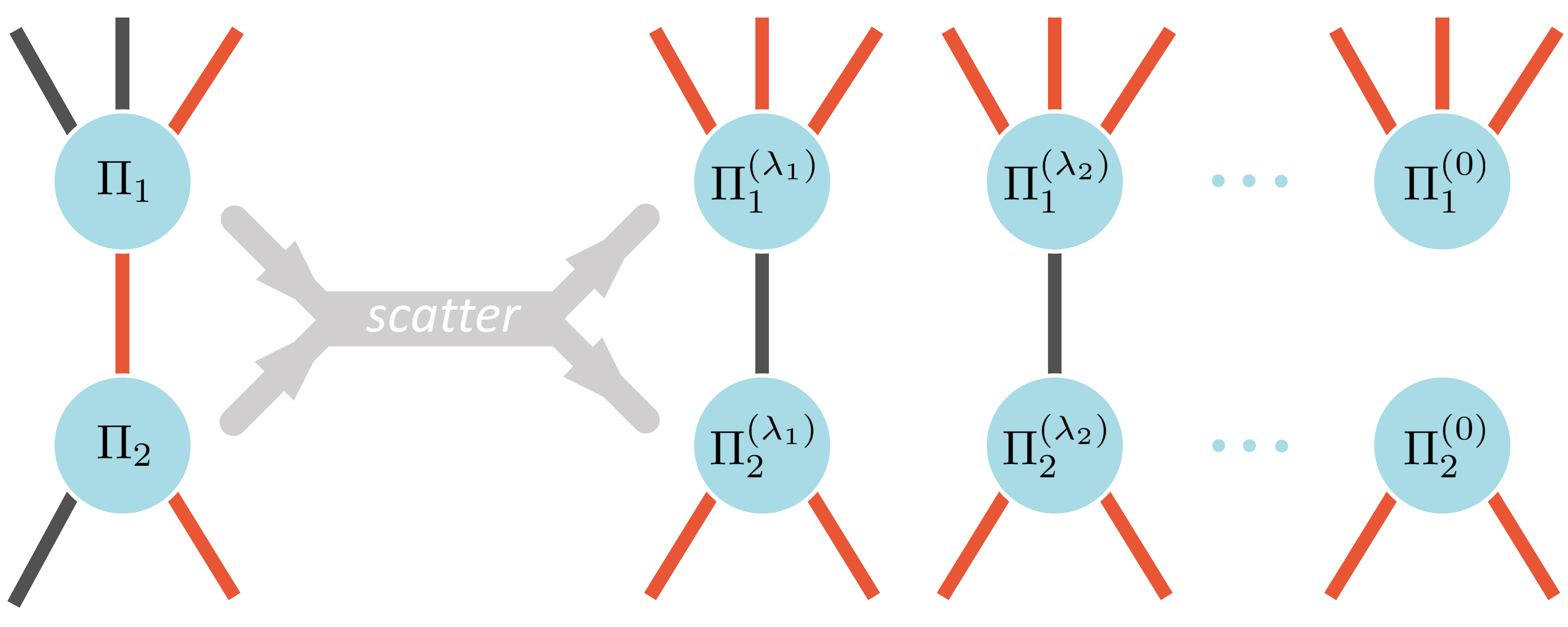}
\par\end{centering}
\raggedright{}\caption{\label{fig:scattering update rule}Generic update rules for the reflection
relations after scattering. The red edges represent unknown reflection
relations, black edges represent properly reflecting pairs, absent
edges represent orthogonal pairs. Open ended edges stand for the reflection
relations with the other projections in the network. In the generic
case each $\Pi_{i=1,2}$ breaks into $\left\{ \Pi_{i=1,2}^{\left(\lambda\right)}\right\} $
and the result is a series of properly reflecting pairs (for $\lambda=0$
the pair is orthogonal) as described in Theorem \ref{thm: scattering of projections }.
The open ended (external) edges are inherited from $\Pi_{i=1,2}$
by each of $\left\{ \Pi_{i=1,2}^{\left(\lambda\right)}\right\} $
with the black edges being reset to red (assuming $\Pi_{i=1,2}$ did
break under scattering).}
\end{figure}

First, note that orthogonality with other (external) projections is
preserved under scattering so we do not need to add new edges that
we did not already have. Second, the external red edges also do not
need to be updated since every unknown relation that $\Pi_{i=1,2}$
had, remains unknown for $\left\{ \Pi_{i=1,2}^{\left(\lambda\right)}\right\} $.
The external black edges, however, do not survive when a projection
is broken into smaller rank projections. That is because properly
reflecting pairs must have the same rank (see Proposition \ref{prop:reflecting proj})
so when one of the projections in the pair is broken, the resulting
projections are necessarily of lower rank than the projection that
was on the other side of that black edge. Therefore, the black edges
that $\Pi_{i=1,2}$ had before scattering have to be reset to red
when inherited by $\left\{ \Pi_{i=1,2}^{\left(\lambda\right)}\right\} $,
unless $\Pi_{i=1,2}$ did not break under scattering.

The special case where only one of the projections in the pair breaks
under scattering is presented in Fig. \ref{fig:scattering update rule special case}.
When both projections in the pair do not break (this is not shown
in the figures), we only need to update the connecting red edge to
black.
\begin{figure}[H]
\centering{}\includegraphics[viewport=0bp 0bp 860bp 453bp,clip,height=0.2\paperheight]{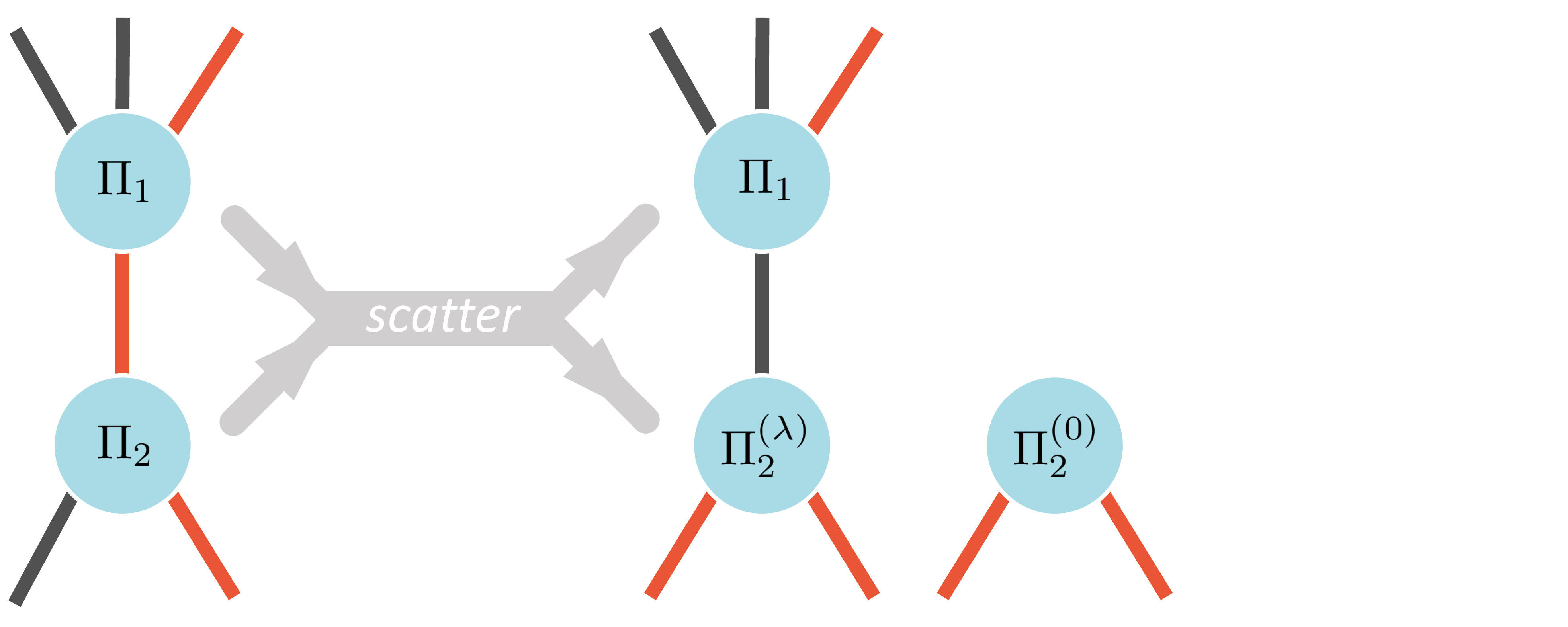}\caption{\label{fig:scattering update rule special case}Update rules for the
reflection relations in the case where only one of the projections
breaks. In this case, $\Pi_{2}$ may break to at most two projections
(if $\Pi_{2}$ also does not break then the red edge of $\left\{ \Pi_{1},\Pi_{2}\right\} $
is just set to black). The difference from the generic case is that
the external black edges of the unbroken projection $\Pi_{1}$ are
not reset to red.}
\end{figure}

The procedure \textproc{ScatterProjections} in the Scattering Algorithm
\ref{alg:The-Scattering-Algorithm} proceeds as follows:
\begin{enumerate}
\item Construct the improper reflection network from the initial spectral
projections and initializing all edges to red except for the ones
that are known to be reflecting (rank 1 and orthogonal projections).
\item Repeat until all edges are black: pick a pair of projections connected
by a red edge, scatter it and update the relations in the network
according to the rules given by Figs. \ref{fig:scattering update rule}
and \ref{fig:scattering update rule special case}.
\end{enumerate}
As was discussed after Theorem \ref{thm: scattering of projections },
we don't have to keep the duplicates if the scattered projections
share a common subspace. Also, we can argue heuristically that lower
rank projection are less likely to break under scattering which triggers
the resets of the previously established black edges. Thus, we may
reduce the overall number of scatterings needed if we prioritize scattering
the projections of lower ranks first.

The above procedure finishes when the reflection network is proper,
that is, when all edges are black. The fact that it always successfully
finishes in a finite number of steps is proven in the following lemma.
\begin{lem}
\label{lem: scat proc. always stops }Given a finite number of input
projections $\left\{ \Pi_{i}\right\} $, the procedure \textproc{ScatterProjections}
described above finishes in a finite number of steps and produces
a proper reflection network.
\end{lem}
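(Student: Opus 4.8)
The plan is to exhibit an explicit integer-valued potential that strictly decreases at every scattering step and is bounded below, so that the loop in \textproc{ScatterProjections} cannot run indefinitely. First I would record a conservation fact: when a projection $\Pi_i$ in the network is replaced by its scattering output $\{\Pi_i^{(\lambda)}\}$, these pieces are pairwise orthogonal and sum (as operators) to $\Pi_i$, so their ranks sum to $\rnk[\Pi_i]$; discarding null pieces or the identified duplicate $\Pi_1^{(1)}=\Pi_2^{(1)}$ only lowers this total further. Hence the total rank $\sum_{\Pi}\rnk[\Pi]$ over the projections currently in the network is non-increasing, and if $N_0:=\sum_i\rnk[\Pi_i]$ denotes the total rank of the initial spectral projections, the network contains at most $N_0$ nonzero projections at all times, hence at most $\binom{N_0}{2}$ edges. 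Every quantity used below therefore ranges over a finite set of non-negative integers.

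Next I would define $\Phi:=(\Phi_1,\Phi_2)$, ordered lexicographically, where $\Phi_1:=\sum_{\Pi}\rnk[\Pi]^2$ over the current projections and $\Phi_2$ is the number of red edges currently present. The heart of the argument is a case analysis of one scattering step applied to a red-connected pair $\Pi_1,\Pi_2$. If at least one of the two projections breaks, then (by Definition~\ref{def:RefNetwork} of ``breaks'' and Proposition~\ref{prop:reflecting proj}) that projection, of some rank $r$, is replaced by projections each of rank $\le r-1$ whose ranks sum to at most $r$; since a sum of squares of positive integers each $\le r-1$ and summing to at most $r$ is at most $(r-1)r<r^{2}$, the contribution of this projection to $\Phi_1$ strictly drops while no other projection's contribution rises, so $\Phi_1$ strictly decreases and $\Phi$ decreases regardless of how $\Phi_2$ changes (the resets of external black edges to red and the new internal edges only affect $\Phi_2$, which stays bounded). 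If neither projection breaks, then by the update rule nothing in the network changes except that the connecting red edge turns black; so $\Phi_1$ is unchanged and $\Phi_2$ drops by exactly one, and again $\Phi$ strictly decreases. Since $\Phi$ takes values in a finite totally ordered set and strictly decreases each iteration, the loop halts after finitely many steps; and because every iteration with a red edge present admits a well-defined scattering step (the spectral decompositions always exist), the loop can only halt when no red edge remains.

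Finally I would observe that, by Theorem~\ref{thm: scattering of projections } together with the update rules of Figures~\ref{fig:scattering update rule} and \ref{fig:scattering update rule special case}, the object maintained throughout is always a (possibly improper) reflection network. When the loop exits, no red edges remain, so every pair of projections is either joined by a black edge---hence properly reflecting---or joined by no edge---hence orthogonal, i.e.\ reflecting with coefficient $0$. Thus all pairs are reflecting and the output is a proper reflection network in the sense of Definition~\ref{def:RefNetwork}, as claimed. I do not anticipate a real obstacle; the only point requiring care is verifying that $\Phi_1$ genuinely strictly decreases whenever a projection breaks, which is exactly the elementary inequality on sums of squares of the parts noted above. (Equivalently, one could phrase termination as well-foundedness of the multiset of ranks under the Dershowitz--Manna ordering, but the explicit bounded potential $\Phi$ is the most self-contained route.)
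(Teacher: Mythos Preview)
Your proof is correct and follows the same two-tier termination scheme as the paper, but with a different primary potential. The paper tracks the pair $(N_t,R_t)$ where $N_t$ is the \emph{number} of projections and $R_t$ the number of red edges: since the total rank $\gamma_t=\sum_\Pi\rnk[\Pi]$ is conserved under scattering, $N_t\le\gamma_0$, and each break strictly \emph{increases} $N_t$ while each non-break strictly decreases $R_t$; the argument is then phrased as a walk on the $(N,R)$ grid that can move right only finitely often. You instead take $\Phi_1=\sum_\Pi\rnk[\Pi]^2$, which strictly \emph{decreases} on breaks via the elementary inequality $\sum r_i^2\le(r-1)\sum r_i<r^2$. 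Both choices are integer-valued and confined to a finite range, so both arguments terminate; the paper's is marginally simpler (only rank conservation is needed, no sum-of-squares inequality), while yours is packaged more cleanly as a single strictly decreasing lexicographic potential and avoids the grid picture. The Dershowitz--Manna alternative you mention is essentially the same idea as your $\Phi_1$.
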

\begin{proof}
Let $\left\{ \Pi_{i;t}\right\} $ be the set of projections in the
reflection network at step $t=0,1,...$ of the procedure. At each
$t$ consider the total number of projections $N_{t}$, the total
number of red edges $R_{t}$, and the total rank of all the projection
$\gamma_{t}=\sum_{i}\mathbf{rank}\Pi_{i;t}$. By the Definition \ref{def:Scattering },
the scattering operation does not change the total rank of projections
between input and output so for all $t$ we have $\gamma_{t}=\gamma_{0}$.
At each step, only two things can happen: Either both projections
do not break so $N_{t+1}=N_{t}$ and the connecting red edge becomes
black so $R_{t+1}=R_{t}-1$; or at least one projection breaks so
for some $n_{t},r_{t}>0$ we have $N_{t+1}=N_{t}+n_{t}$ and the value
$R_{t+1}=R_{t}\pm r_{t}$ can increase or decrease (it will decrease
only if the broken projections were not connected to any other projection
in the network). Now, consider the progression of the coordinate $\left(N_{t},R_{t}\right)$
on a two-dimensional grid. At each step it can either move one position
down or it can move diagonally but always to the right:
\[
\left(N_{t},R_{t}\right)\mapsto\left(N_{t+1},R_{t+1}\right)=\begin{cases}
(N_{t},R_{t}-1)\\
(N_{t}+n_{t},R_{t}\pm r_{t}).
\end{cases}
\]
Furthermore, we have the upper bound $N_{t}\leq\gamma_{t}=\gamma_{0}$
because projections cannot have ranks lower than $1$. We also know
that $N_{0}$ and $R_{0}$ are finite because $\left\{ \Pi_{i}\right\} $
is finite. Therefore, after a finite number of steps we will either
reach $R_{t}=0$ which means the reflection network is now proper,
or we will reach $N_{t}=\gamma_{0}$, after which every step will
decrease $R_{t}$ by $1$ until it reaches $0$.
\end{proof}

\subsection{Establishing minimality of the reflection network}

Establishing minimality of a reflection network means making sure
that all the projections in the network are minimal projections in
the algebra that they generate. Minimality can be established by considering
the paths in the network.

A path in a reflection network is given by an ordered set of vertices
$\pi=\left(v_{1},v_{2},...,v_{n}\right)$ that identifies a sequence
of connected projections in the reflection network. By taking the
product of all projections along the path and normalizing we define
an isometry
\begin{equation}
S_{\pi}:=\frac{1}{\lambda_{\pi}}\Pi_{v_{1}}\Pi_{v_{2}}\cdots\Pi_{v_{n}},\label{eq:def of path isometry}
\end{equation}
where the normalization $\lambda_{\pi}$ is the unique non-zero singular
value of the product of projections. We will refer to these operators
as \emph{path-isometries} that map from the initial space given by
$\Pi_{v_{n}}$ to the final space given by $\Pi_{v_{1}}$, along the
path $\pi$. It should be clear that $S_{\pi}^{\dagger}$ is a path-isometry
along the same path as $S_{\pi}$ but in reverse direction.

The minimality of reflection network can then be established using
the following lemma.
\begin{lem}
\label{lem:Minimality of reflecting projections }Let $\left\{ \Pi_{v}\right\} $
be a set of projections forming a proper reflection network such that
all path-isometries in the network are proportional $S_{\pi}\propto S_{\pi'}$
wherever the paths $\pi$ and $\pi'$ have the same initial and final
vertices. Then, all the projections $\left\{ \Pi_{v}\right\} $ are
minimal in the algebra that they generate $\mathcal{A}:=\left\langle \left\{ \Pi_{v}\right\} \right\rangle $.
\end{lem}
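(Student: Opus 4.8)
The plan is to show directly that each projection $\Pi_v$ in the network satisfies $\Pi_v A \Pi_v \propto \Pi_v$ for every $A \in \mathcal{A} = \langle \{\Pi_v\}\rangle$. Since $\mathcal{A}$ is generated by the $\Pi_v$'s, a generic element $A$ is a linear combination of products $\Pi_{u_1}\Pi_{u_2}\cdots\Pi_{u_m}$; by linearity it suffices to handle a single such monomial, so the real task is to show that for any finite word $w = (u_1, \ldots, u_m)$ in the vertices (with repetitions allowed, and no adjacency assumed) we have
\[
\Pi_v \Pi_{u_1}\Pi_{u_2}\cdots\Pi_{u_m}\Pi_v = c\,\Pi_v
\]
for some scalar $c$.

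The first step is to reduce an arbitrary word to a genuine path. Two observations do this. First, if consecutive letters $u_i, u_{i+1}$ are \emph{orthogonal} (no edge between them), the whole product vanishes, and $0 = 0\cdot\Pi_v$, so we are done; hence we may assume every consecutive pair in the word is connected by an edge, i.e. the word traces a walk in the graph. Second, I would like to argue that any such walk-product is proportional to a path-product. This is where the reflection property is the workhorse: if a letter repeats, say $u_i = u_j$ with $i < j$, then between them sits a closed walk based at $u_i$; using Eqs. \eqref{eq:def of reflecting 1}, \eqref{eq:def of reflecting 2} repeatedly (each reflecting pair lets us collapse $\Pi_a\Pi_b\Pi_a = \lambda\Pi_a$) one collapses the loop and replaces $\Pi_{u_i}\cdots\Pi_{u_j}$ by a scalar times $\Pi_{u_i}$, strictly shortening the word. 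Iterating, every walk-product equals a scalar times a product along a \emph{simple path}, i.e. equals $\lambda_\pi\, c\, S_\pi$ for some path $\pi$ from $u_m$'s reduced endpoint to $u_1$'s reduced endpoint and scalar $c$.

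The second step pins down the sandwiched expression. After the reduction, $\Pi_v \Pi_{u_1}\cdots\Pi_{u_m}\Pi_v$ is proportional to $\Pi_v S_\pi \Pi_v$ where $\pi$ is a path between two vertices $a$ (initial) and $b$ (final) of the reduced word. Now $\Pi_v S_\pi \Pi_v = (\Pi_v \Pi_b \cdots)(\cdots \Pi_a \Pi_v)$; prepending $\Pi_v$ and appending $\Pi_v$ to the path data just produces another path-product, namely (up to normalization) a path-isometry $S_{\pi'}$ whose initial and final vertices are both $v$ — provided $\Pi_v\Pi_b \ne 0$ and $\Pi_a\Pi_v \ne 0$; if either product is zero the whole thing is $0 = 0\cdot\Pi_v$ again. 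So $\Pi_v S_\pi \Pi_v \propto S_{\pi'}$ with $\pi'$ a path from $v$ to $v$. By hypothesis all path-isometries with the same initial and final vertex are proportional, so $S_{\pi'} \propto S_{(v)} = \Pi_v$ (the trivial length-one path at $v$ is itself a path from $v$ to $v$). Chaining the proportionalities gives $\Pi_v \Pi_{u_1}\cdots\Pi_{u_m}\Pi_v = c\,\Pi_v$, which is exactly minimality of $\Pi_v$.

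The main obstacle I anticipate is the bookkeeping in the loop-collapsing reduction: one must be careful that the vertices appearing in a walk-product are genuinely pairwise reflecting (so that $\Pi_a\Pi_b\Pi_a = \lambda\Pi_a$ is available for every consecutive pair — which holds since the network is \emph{proper}, meaning all edges are black and Proposition \ref{prop:reflecting proj} applies along edges) and that the induction on word length is set up so the scalar $c$ is tracked correctly through possibly-zero factors. A secondary subtlety is making sure the "same initial and final vertex" hypothesis is invoked only for honest paths, not walks; the reduction step is precisely what licenses that, so the two steps must be interleaved in the right order — reduce to a path first, then apply the proportionality hypothesis. Once the reduction lemma is in hand, the rest is a short algebraic chase.
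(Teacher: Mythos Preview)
Your overall strategy matches the paper's: show $\Pi_v A \Pi_v \propto \Pi_v$ by reducing to monomials, recognizing each nonzero monomial as a scalar times a path isometry, sandwiching by $\Pi_v$, and invoking the hypothesis on the resulting closed path versus the trivial one at $v$. The gap is in your first step, where you claim an arbitrary walk-product can be collapsed to a simple-path product using only the reflecting relation $\Pi_a\Pi_b\Pi_a=\lambda\Pi_a$. That relation only collapses an immediate back-and-forth $u_k=u_{k+2}$; it does nothing to a longer loop such as $\Pi_a\Pi_b\Pi_c\Pi_a$ with $a,b,c$ distinct and pairwise reflecting. In fact such a product is generally \emph{not} a scalar multiple of $\Pi_a$: this is exactly the failure mode that Lemma~\ref{lem:fixing minimality} is built to detect, where the cycle isometry $U_{\pi\pi'}$ acts as a nontrivial unitary on the range of the endpoint projection. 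Your collapsing argument, if it worked, would prove minimality from pairwise reflection alone, which is false.

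The fix is to drop the reduction entirely. In the paper, a ``path'' is any sequence of edge-connected vertices with repetitions allowed, so the hypothesis already applies to closed walks. A product along such a walk is automatically a scalar times a partial isometry (this much \emph{does} follow from reflecting: $(\Pi_{v_1}\cdots\Pi_{v_n})(\Pi_{v_1}\cdots\Pi_{v_n})^\dagger$ collapses from the inside out to a scalar times $\Pi_{v_1}$), so $\mathcal{A}=\spn\{S_\pi\}$ with $\pi$ ranging over walks. One then compares the closed walk $(v,\pi,v)$ directly to the trivial path $(v,v)$ via the hypothesis, which is precisely your second step. Once you abandon the simple-path restriction and the worry you flag at the end, your argument and the paper's coincide.
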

\begin{proof}
Every element $A\in\mathcal{A}$ is a linear combination of products
of $\left\{ \Pi_{v}\right\} $, therefore $\mathcal{A}=\mathsf{\spn}\left\{ S_{\pi}\right\} $
where $\left\{ S_{\pi}\right\} $ are all the path isometries in the
network. Then, all the projections $\left\{ \Pi_{v}\right\} $ are
minimal if $\Pi_{v_{0}}S_{\pi}\Pi_{v_{0}}\propto\Pi_{v_{0}}$ for
all $v_{0}$ and $\pi$. When the path $\pi$ does not start or end
next to $v_{0}$, we have $\Pi_{v_{0}}S_{\pi}=0$ or $S_{\pi}\Pi_{v_{0}}=0$
so the relation $\Pi_{v_{0}}S_{\pi}\Pi_{v_{0}}=0\propto\Pi_{v_{0}}$
trivially holds. Let us now consider $\pi:=\left(v_{1},v_{2},...,v_{n}\right)$
such that $\Pi_{v_{0}}S_{\pi}\Pi_{v_{0}}\neq0$. We can therefore
append $v_{0}$ to the beginning and the end of $\pi$ to get the
circular path $\tilde{\pi}:=\left(v_{0},v_{1},v_{2},...,v_{n},v_{0}\right)$.
Another circular path from $v_{0}$ to itself is the trivial path
$\left(v_{0},v_{0}\right)$, and so 
\[
\Pi_{v_{0}}S_{\pi}\Pi_{v_{0}}\propto S_{\tilde{\pi}}\propto S_{\left(v_{0},v_{0}\right)}\propto\Pi_{v_{0}}\Pi_{v_{0}}=\Pi_{v_{0}}.
\]
Therefore, all $\left\{ \Pi_{v_{0}}\right\} $ are minimal in $\mathcal{A}=\mathsf{\spn}\left\{ S_{\pi}\right\} $.
\end{proof}
By checking whether the path-isometries in a reflection network depend
only on the initial and final vertices independently from the paths,
we can verify that all projections are minimal. Note that projections
of rank $1$ are minimal so connected components with rank $1$ projections
are always minimal.

In case minimality could not be established, Lemma \ref{lem:Minimality of reflecting projections }
also implies a correction that can be implemented.
\begin{lem}
\label{lem:fixing minimality}In the setting of Lemma \ref{lem:Minimality of reflecting projections },
let $\pi$, $\pi'$ be two paths that share the same initial $v_{in}$
and final $v_{fin}$ vertices, but $S_{\pi}\not\propto S_{\pi'}$.
Then, the spectral projections of $U_{\pi\pi'}:=S_{\pi}S_{\pi'}^{\dagger}$
are \emph{not} reflecting with $\Pi_{v_{fin}}$.
\end{lem}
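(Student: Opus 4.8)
The plan is to show that $U_{\pi\pi'}$ is, up to its kernel, a \emph{unitary} supported on the eigenspace of $\Pi_{v_{fin}}$, and then to invoke Proposition~\ref{prop:reflecting proj} to see that a spectral projection of such an operator can be reflecting with $\Pi_{v_{fin}}$ only if it \emph{equals} $\Pi_{v_{fin}}$ --- which the hypothesis $S_{\pi}\not\propto S_{\pi'}$ forbids.

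First I would set up the bookkeeping for the path-isometries. Since $\pi$ and $\pi'$ both run between $v_{in}$ and $v_{fin}$, the normalization in Eq.~\eqref{eq:def of path isometry} makes $S_{\pi}$ and $S_{\pi'}$ genuine partial isometries with $S_{\pi}^{\dagger}S_{\pi}=S_{\pi'}^{\dagger}S_{\pi'}=\Pi_{v_{in}}$ and $S_{\pi}S_{\pi}^{\dagger}=S_{\pi'}S_{\pi'}^{\dagger}=\Pi_{v_{fin}}$, and the partial-isometry identities of Section~\ref{sec:Notation-and-some} give $S_{\pi}\Pi_{v_{in}}=S_{\pi}$, $\Pi_{v_{fin}}S_{\pi}=S_{\pi}$ (and likewise for $S_{\pi'}$). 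With these in hand a short computation yields
\[
U_{\pi\pi'}U_{\pi\pi'}^{\dagger}=S_{\pi}\left(S_{\pi'}^{\dagger}S_{\pi'}\right)S_{\pi}^{\dagger}=S_{\pi}\Pi_{v_{in}}S_{\pi}^{\dagger}=\Pi_{v_{fin}},\qquad U_{\pi\pi'}^{\dagger}U_{\pi\pi'}=S_{\pi'}\left(S_{\pi}^{\dagger}S_{\pi}\right)S_{\pi'}^{\dagger}=S_{\pi'}\Pi_{v_{in}}S_{\pi'}^{\dagger}=\Pi_{v_{fin}}.
\]
Hence $U_{\pi\pi'}$ is a partial isometry whose initial and final projections both equal $\Pi_{v_{fin}}$; equivalently it restricts to a unitary on the eigenspace of $\Pi_{v_{fin}}$, is normal, and satisfies $\Pi_{v_{fin}}U_{\pi\pi'}=U_{\pi\pi'}\Pi_{v_{fin}}=U_{\pi\pi'}$. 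Its spectral decomposition therefore reads $U_{\pi\pi'}=\sum_{j}e^{i\theta_{j}}P_{j}$ with pairwise-orthogonal projections $P_{j}$ summing to $\Pi_{v_{fin}}$, so in particular $P_{j}\Pi_{v_{fin}}=\Pi_{v_{fin}}P_{j}=P_{j}$ for every $j$.

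Next I would rule out that any such $P_{j}$ is reflecting with $\Pi_{v_{fin}}$. Suppose it were. If the reflection coefficient were $0$ the two projections would be orthogonal, contradicting $P_{j}\Pi_{v_{fin}}=P_{j}\neq0$; so the pair is \emph{properly} reflecting, and Proposition~\ref{prop:reflecting proj} forces $\tr\left[P_{j}\right]=\tr\left[\Pi_{v_{fin}}\right]$. Since $P_{j}$ is a sub-projection of $\Pi_{v_{fin}}$ of the same rank, this gives $P_{j}=\Pi_{v_{fin}}$, hence $U_{\pi\pi'}=e^{i\theta_{j}}\Pi_{v_{fin}}$. Multiplying this on the right by $S_{\pi'}$ and using $\Pi_{v_{fin}}S_{\pi'}=S_{\pi'}$ together with $U_{\pi\pi'}S_{\pi'}=S_{\pi}\left(S_{\pi'}^{\dagger}S_{\pi'}\right)=S_{\pi}\Pi_{v_{in}}=S_{\pi}$ yields $S_{\pi}=e^{i\theta_{j}}S_{\pi'}$, i.e.\ $S_{\pi}\propto S_{\pi'}$, contrary to hypothesis. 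Therefore $P_{j}\neq\Pi_{v_{fin}}$ and $\left\{ P_{j},\Pi_{v_{fin}}\right\} $ is not reflecting, which is the claim. (In the algorithm this is exactly what is needed: adding the $P_{j}$ and scattering $\Pi_{v_{fin}}$ against them genuinely breaks $\Pi_{v_{fin}}$, repairing minimality.)

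The argument is short, so the real work is in the conventions. The main point to pin down is that the relevant ``spectral projections of $U_{\pi\pi'}$'' are those onto its unimodular eigenspaces; the kernel projection $I-\Pi_{v_{fin}}$ is orthogonal to $\Pi_{v_{fin}}$ and would vacuously ``reflect'' with coefficient $0$, so it must be excluded for the statement to be both meaningful and true. The second care point is the claim that the path-isometry normalization of Eq.~\eqref{eq:def of path isometry} really produces partial isometries with the stated initial and final projections $\Pi_{v_{in}}$, $\Pi_{v_{fin}}$ --- this is where properness of the reflection network and the uniqueness of minimal isometries (Lemma~\ref{lem:uniqueness of min isometr }) enter implicitly. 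Once those are fixed, everything reduces to the displayed chain.
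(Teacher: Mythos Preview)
Your proof is correct and follows essentially the same route as the paper's: both show that $U_{\pi\pi'}$ is unitary on the eigenspace of $\Pi_{v_{fin}}$, then argue that a spectral projection $P_{j}$ being reflecting with $\Pi_{v_{fin}}$ would force $P_{j}=\Pi_{v_{fin}}$ and hence $U_{\pi\pi'}\propto\Pi_{v_{fin}}$, contradicting $S_{\pi}\not\propto S_{\pi'}$. Your version makes the computation $U_{\pi\pi'}U_{\pi\pi'}^{\dagger}=U_{\pi\pi'}^{\dagger}U_{\pi\pi'}=\Pi_{v_{fin}}$ explicit and adds the useful caveat about excluding the kernel projection, which the paper handles implicitly by restricting to nonzero eigenvalues.
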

\begin{proof}
The operator $U_{\pi\pi'}$ is an isometry from the eigenspace of
$\Pi_{v_{fin}}$ to itself, so it is a unitary on the eigenspace of
$\Pi_{v_{fin}}$ . We therefore have the spectral projections $\Pi^{\left(\omega\right)}$
with the non-zero eigenvalues $\omega$ such that $U_{\pi\pi'}=\sum_{\omega}\omega\Pi^{\left(\omega\right)}$
and $\sum_{\omega}\Pi^{\left(\omega\right)}=\Pi_{v_{fin}}$. If there
was only one non-zero eigenvalue $\omega$ then $U_{\pi\pi'}=\omega\Pi^{\left(\omega\right)}=\omega\Pi_{v_{fin}}$
so $U_{\pi\pi'}=S_{\pi}S_{\pi'}^{\dagger}\propto\Pi_{v_{fin}}=S_{\pi'}S_{\pi'}^{\dagger}$,
but that implies $S_{\pi}\propto S_{\pi'}$. Therefore, there is more
than one spectral projection $\left\{ \Pi^{\left(\omega\right)}\right\} $
and so $\Pi_{v_{fin}}\Pi^{\left(\omega\right)}\Pi_{v_{fin}}=\Pi^{\left(\omega\right)}\not\propto\Pi_{v_{fin}}$.
\end{proof}
The procedure \textproc{EstablishMinimality} in the Scattering Algorithm
\ref{alg:The-Scattering-Algorithm} proceeds as follows:
\begin{enumerate}
\item Check whether path-isometries in the reflection network depend only
on the initial and final vertices and if they are, finish.
\item If not, given the paths $\pi$, $\pi'$ that violate the premise of
Lemma \ref{lem:Minimality of reflecting projections }, take $U_{\pi\pi'}=S_{\pi}S_{\pi'}^{\dagger}$
and add its spectral projections $\left\{ \Pi^{\left(\omega\right)}\right\} $
to the connected component where $\pi$ and $\pi'$ reside.
\item Initiate another round of scatterings on the connected component with
the new projections until the reflection network is proper again,
then repeat step 1.
\end{enumerate}
In step 1 only connected components with projections of rank higher
than $1$ need to be checked. This procedure is guaranteed to stop
because it either finishes on step 1 or it reduces the rank of projections
in the connected component, and minimality trivially holds if it reaches
projections of rank $1$.

\subsection{Establishing completeness of the reflection network\label{subsec:Establishing-completeness-of}}

Completeness of a reflection network means that there is a subset
of vertices in the network that forms a maximal set of minimal projections.
If any of the initial generators of the algebra are supported on the
whole Hilbert space, which means their spectral projections sum to
the identity, then completeness is guaranteed and we don't have to
do anything here. That is because after scattering, the descendants
of these spectral projections in the proper reflection network will
still sum to the identity, so they will form a maximal set of minimal
projections.

If we can always add the identity to the initial set of generators,
completeness becomes a trivial property. Nevertheless, it is also
possible (and sometimes easier) to scatter the initial projections
regardless of them being supported on the whole Hilbert space, and
then fix the completeness of the final reflection network after the
fact.

Given the projections $\left\{ \Pi_{v}\right\} $ forming a reflection
network we will assume that at this point minimality has been established.
Consider the largest subset of pairwise orthogonal projections $\left\{ \Pi_{v_{k}}\right\} \subseteq\left\{ \Pi_{v}\right\} $
where $\Pi_{v_{k}}\Pi_{v_{l}}=\delta_{kl}\Pi_{v_{k}}$. From the perspective
of graphs this is the maximal independent set of vertices in the network
and it does not have to be unique. The subset $\left\{ \Pi_{v_{k}}\right\} $
is a maximal set of minimal projections if the operator
\begin{equation}
I_{\mathcal{A}}:=\sum_{k}\Pi_{v_{k}}\label{eq:def of identity in reflect. net.}
\end{equation}
acts as the identity on every operator in the algebra $\mathcal{A}:=\left\langle \left\{ \Pi_{v}\right\} \right\rangle $,
meaning $I_{\mathcal{A}}\Pi_{v}=\Pi_{v}$ for all $v$. If it is not,
we can use the result of the following lemma to complete $I_{\mathcal{A}}$
to act as the identity.
\begin{lem}
\label{lem:Completness}Let $\Pi_{v}$ be a minimal projection in
a reflection network such that $I_{\mathcal{A}}\Pi_{v}\neq\Pi_{v}$
with $I_{\mathcal{A}}$ defined in Eq. \eqref{eq:def of identity in reflect. net.}.
Then, with the appropriate normalization factor $c$, the operator
\begin{equation}
\tilde{\Pi}_{v}:=\frac{1}{c}\left(I-I_{\mathcal{A}}\right)\Pi_{v}\left(I-I_{\mathcal{A}}\right),\label{eq:def of complementary projection}
\end{equation}
where $I$ is the full identity matrix, has the following properties:

(1) $\tilde{\Pi}_{v}$ is a minimal projection in $\mathcal{A}:=\left\langle \left\{ \Pi_{v}\right\} \right\rangle $.

(2) $\tilde{\Pi}_{v}$ is orthogonal to all $\Pi_{v_{k}}$ in Eq.
\eqref{eq:def of identity in reflect. net.}.

(3) The operator $\tilde{I}_{\mathcal{A}}:=I_{\mathcal{A}}+\tilde{\Pi}_{v}$
is such that $\tilde{I}_{\mathcal{A}}\Pi_{v}=\Pi_{v}$.
\end{lem}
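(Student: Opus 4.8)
The plan is to compute everything directly from the expansion of $Q\Pi_v Q$, where $Q:=I-I_{\mathcal{A}}$, and to use the minimality of $\Pi_v$ to fix the normalization factor $c$. I would begin with two bookkeeping observations. First, $I_{\mathcal{A}}=\sum_k\Pi_{v_k}$ is a sum of pairwise orthogonal projections, hence itself a projection, so $Q$ is the complementary projection $Q^{2}=Q=Q^{\dagger}$. Second, although the full identity $I$ need not belong to $\mathcal{A}$, for every $A\in\mathcal{A}$ the operator $QAQ=A-I_{\mathcal{A}}A-AI_{\mathcal{A}}+I_{\mathcal{A}}AI_{\mathcal{A}}$ lies in $\mathcal{A}$, because the stray $I$'s cancel and what remains is a linear combination of products of elements of $\mathcal{A}$. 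In particular $Q\Pi_v Q\in\mathcal{A}$, so $\tilde{\Pi}_v\in\mathcal{A}$ once we know $c\neq0$.

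Next I would pin down $c$. Since $\Pi_v$ is minimal and $I_{\mathcal{A}}\in\mathcal{A}$, there is a scalar $b$ with $\Pi_v I_{\mathcal{A}}\Pi_v=b\Pi_v$; writing $\Pi_v I_{\mathcal{A}}\Pi_v=(I_{\mathcal{A}}\Pi_v)^{\dagger}(I_{\mathcal{A}}\Pi_v)$ shows $b\geq0$, and $I_{\mathcal{A}}\leq I$ gives $b\leq1$. Setting $c:=1-b$, a short computation gives $Q\Pi_v Q\Pi_v=Q(\Pi_v-\Pi_v I_{\mathcal{A}}\Pi_v)=cQ\Pi_v$, hence $(Q\Pi_v Q)^{2}=(Q\Pi_v Q\Pi_v)Q=cQ\Pi_v Q$, so $\tilde{\Pi}_v=\tfrac1c Q\Pi_v Q$ is idempotent; it is self-adjoint because $c$ is real and $Q,\Pi_v$ are self-adjoint. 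The one point that actually uses the hypothesis is $c\neq0$: if $b=1$ then $(\Pi_v-\Pi_v I_{\mathcal{A}})(\Pi_v-\Pi_v I_{\mathcal{A}})^{\dagger}=\Pi_v-\Pi_v I_{\mathcal{A}}\Pi_v=0$, so $\Pi_v=I_{\mathcal{A}}\Pi_v$, contradicting $I_{\mathcal{A}}\Pi_v\neq\Pi_v$. Thus $\tilde{\Pi}_v$ is a genuine (nonzero) projection in $\mathcal{A}$. Minimality then follows at once: for $A\in\mathcal{A}$, $\tilde{\Pi}_v A\tilde{\Pi}_v=\tfrac1{c^{2}}\,Q\Pi_v(QAQ)\Pi_v Q$, and since $QAQ\in\mathcal{A}$ and $\Pi_v$ is minimal we have $\Pi_v(QAQ)\Pi_v=a\Pi_v$ for a scalar $a$, whence $\tilde{\Pi}_v A\tilde{\Pi}_v=\tfrac{a}{c}\tilde{\Pi}_v\propto\tilde{\Pi}_v$. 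This settles property (1).

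For property (2), observe that $Q\Pi_{v_k}=\Pi_{v_k}-I_{\mathcal{A}}\Pi_{v_k}=0$, since $\Pi_{v_k}$ is one of the pairwise orthogonal summands defining $I_{\mathcal{A}}$ in Eq. \eqref{eq:def of identity in reflect. net.}; therefore $\tilde{\Pi}_v\Pi_{v_k}=\tfrac1c Q\Pi_v Q\Pi_{v_k}=0$. For property (3), I would reuse the identity $Q\Pi_v Q\Pi_v=cQ\Pi_v$ from the previous paragraph, which gives $\tilde{\Pi}_v\Pi_v=Q\Pi_v=\Pi_v-I_{\mathcal{A}}\Pi_v$; hence $\tilde{I}_{\mathcal{A}}\Pi_v=(I_{\mathcal{A}}+\tilde{\Pi}_v)\Pi_v=I_{\mathcal{A}}\Pi_v+\Pi_v-I_{\mathcal{A}}\Pi_v=\Pi_v$, as required.

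The only genuinely delicate step is recognizing that the hypothesis $I_{\mathcal{A}}\Pi_v\neq\Pi_v$ is exactly what excludes the degenerate case $c=0$ (equivalently $b=1$); everything else is routine manipulation, resting on the minimality relation for $\Pi_v$ and on the cancellation of the $I$-terms that keeps all relevant operators inside $\mathcal{A}$. It is worth noting in passing that $\tilde{\Pi}_v$ is nothing but the compression of $\Pi_v$ to the orthogonal complement of the current ``identity'' $I_{\mathcal{A}}$, which is precisely why adjoining it enlarges $I_{\mathcal{A}}$ toward a true identity for $\mathcal{A}$.
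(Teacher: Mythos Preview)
Your proof is correct and follows essentially the same route as the paper: both use minimality of $\Pi_v$ to write $\Pi_v I_{\mathcal{A}}\Pi_v=b\Pi_v$ (the paper calls this coefficient $\lambda$), set $c=1-b$, verify idempotence via $(Q\Pi_v Q)^2=cQ\Pi_v Q$, deduce minimality from $QAQ\in\mathcal{A}$, and handle properties (2) and (3) by the same direct computations. Your treatment is in fact slightly more explicit in two spots---you spell out why $b=1$ forces $I_{\mathcal{A}}\Pi_v=\Pi_v$ via the $AA^{\dagger}=0\Rightarrow A=0$ trick, and you note that $I_{\mathcal{A}}$ is itself a projection---but these are refinements of the same argument rather than a different approach.
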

\begin{proof}
If we distribute the terms in Eq. \eqref{eq:def of complementary projection}
we will get $c\tilde{\Pi}_{v}=\Pi_{v}-I_{\mathcal{A}}\Pi_{v}-\Pi_{v}I_{\mathcal{A}}+I_{\mathcal{A}}\Pi_{v}I_{\mathcal{A}}$
so clearly $\tilde{\Pi}_{v}$ is a self-adjoint operator in $\mathcal{A}$.
Since $\Pi_{v}$ is minimal we have
\begin{equation}
\Pi_{v}\left(I-I_{\mathcal{A}}\right)\Pi_{v}=\Pi_{v}-\Pi_{v}I_{\mathcal{A}}\Pi_{v}=\left(1-\lambda\right)\Pi_{v}.\label{eq:minimality of Pi_v}
\end{equation}
Here $\lambda$ is the proportionality factor in the minimality relation
$\Pi_{v}I_{\mathcal{A}}\Pi_{v}\propto\Pi_{v}$ and $\lambda$ is not
$1$ because that would contradict $I_{\mathcal{A}}\Pi_{v}\neq\Pi_{v}$.
Then, choosing $c=1-\lambda$ and taking the square of $\tilde{\Pi}_{v}$
we get
\begin{align*}
\tilde{\Pi}_{v}\tilde{\Pi}_{v} & =\frac{1}{c^{2}}\left(I-I_{\mathcal{A}}\right)\Pi_{v}\left(I-I_{\mathcal{A}}\right)\Pi_{v}\left(I-I_{\mathcal{A}}\right)\\
 & =\frac{1}{\left(1-\lambda\right)^{2}}\left(I-I_{\mathcal{A}}\right)\left(1-\lambda\right)\Pi_{v}\left(I-I_{\mathcal{A}}\right)=\tilde{\Pi}_{v}.
\end{align*}
Therefore, $\tilde{\Pi}_{v}$ is a projection. It is minimal because
for any $A\in\mathcal{A}$ we have 
\[
\tilde{\Pi}_{v}A\tilde{\Pi}_{v}=\frac{1}{c^{2}}\left(I-I_{\mathcal{A}}\right)\Pi_{v}\tilde{A}\Pi_{v}\left(I-I_{\mathcal{A}}\right)
\]
where $\tilde{A}:=\left(I-I_{\mathcal{A}}\right)A\left(I-I_{\mathcal{A}}\right)$.
Since $\tilde{A}\in\mathcal{A}$ and $\Pi_{v}$ is minimal we get
$\Pi_{v}\tilde{A}\Pi_{v}\propto\Pi_{v}$ and so $\tilde{\Pi}_{v}A\tilde{\Pi}_{v}\propto\tilde{\Pi}_{v}$.
This proves statement 1. Statement 2 follows from $\left(I-I_{\mathcal{A}}\right)\Pi_{v_{k}}=\Pi_{v_{k}}-\Pi_{v_{k}}=0$
so $\tilde{\Pi}_{v}\Pi_{v_{k}}=0$. Finally, recalling that $c=1-\lambda$
and using the identity \eqref{eq:minimality of Pi_v} again, we get
\[
\tilde{\Pi}_{v}\Pi_{v}=\frac{1}{c}\left(I-I_{\mathcal{A}}\right)\Pi_{v}\left(I-I_{\mathcal{A}}\right)\Pi_{v}=\left(I-I_{\mathcal{A}}\right)\Pi_{v}.
\]
Thus, $\tilde{I}_{\mathcal{A}}\Pi_{v}=I_{\mathcal{A}}\Pi_{v}+\left(I-I_{\mathcal{A}}\right)\Pi_{v}=\Pi_{v}$,
which proves statement 3.
\end{proof}
\noindent The procedure \textproc{EstablishCompleteness} in the Scattering
Algorithm \ref{alg:The-Scattering-Algorithm} proceeds as follows:
\begin{enumerate}
\item Choose the largest subset of pairwise orthogonal projections in the
network $\left\{ \Pi_{v_{k}}\right\} $ and if $I_{\mathcal{A}}=\sum_{k}\Pi_{v_{k}}$
acts as the identity on all other projections, finish.
\item If it does not, then for each projection such that $I_{\mathcal{A}}\Pi_{v}\neq\Pi_{v}$
construct the complementary projection $\tilde{\Pi}_{v}$ as defined
in Eq. \eqref{eq:def of complementary projection} and add it to the
network.
\end{enumerate}
\noindent Since by construction $\tilde{\Pi}_{v}$'s are minimal projections
in the same algebra, they do not render the reflection network improper.
Lemma \ref{lem:Completness} then ensures that after the completion
of the network all the new $\tilde{\Pi}_{v}$'s will join the largest
subset of pairwise orthogonal projections in the network and sum to
$\tilde{I}_{\mathcal{A}}$ that acts as the identity on every element.
It should be noted again that this procedure is only needed if none
of the original generators were supported on the whole Hilbert space.

\subsection{Constructing the bipartition table}

We already know from Lemma \ref{lem: for all S_kl there is a BPT}
how to construct BPTs from maximal sets of minimal isometries. What
we need then is to construct a maximal set of minimal isometries from
the reflection network. This is achieved with the help of the following
lemma.
\begin{lem}
\label{lem:BPT construction from refnet}Let $\left\{ \Pi_{v}\right\} $
be the projections of a reflection network for which minimality and
completeness holds. Then, there is a set $\left\{ S_{kl}^{q}\right\} $
of path-isometries in the network that is a maximal set of minimal
isometries in the algebra $\mathcal{A}:=\left\langle \left\{ \Pi_{v}\right\} \right\rangle $.
\end{lem}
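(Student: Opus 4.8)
The plan is to build the maximal set of minimal isometries block-by-block, using one connected component of the reflection network at a time, and then invoke Lemma \ref{lem: for all S_kl there is a BPT} to turn it into a BPT. Fix a connected component $q$. By the completeness assumption there is a maximal set of pairwise orthogonal minimal projections in the network; restricting to those that lie in component $q$ gives a set $\left\{\Pi_{k}^{q}\right\}_{k=1,\dots}$ of pairwise orthogonal minimal projections summing to $I_{\mathcal{A}}$ when taken over all $q$. (Orthogonality of projections from different components is automatic since distinct components share no reflecting, hence no nonorthogonal, relations; Proposition \ref{prop:reflecting proj} guarantees all $\Pi_{k}^{q}$ in a fixed component have equal rank.) Designate $\Pi_{1}^{q}$ as a base vertex in component $q$.

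Next I would define the candidate isometries. For each $k$, pick a path $\pi_{k}$ in component $q$ from $\Pi_{1}^{q}$ to $\Pi_{k}^{q}$ — such a path exists because the component is connected — and set $S_{k1}^{q}:=S_{\pi_{k}}$, the normalized path-isometry of Eq. \eqref{eq:def of path isometry}; then put $S_{1k}^{q}:=(S_{k1}^{q})^{\dagger}$ and $S_{kl}^{q}:=S_{k1}^{q}S_{1l}^{q}$. I must check these are well-defined minimal isometries with the right algebra: minimality of the endpoint projections is exactly the minimality hypothesis on the network, and $S_{\pi_{k}}=\frac{1}{\lambda_{\pi_k}}\Pi_{1}^{q}\cdots\Pi_{k}^{q}$ is, by Lemma \ref{lem:construct of minimal isometr from minimal proj } applied inductively along the path (each consecutive pair being reflecting, hence nonorthogonal, so the intermediate products are nonzero), proportional to a minimal isometry of $\mathcal{A}$; Lemma \ref{lem:uniqueness of min isometr } then says it is unique up to a phase, so the choice of path $\pi_k$ does not matter up to phase. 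Fixing the phases by the explicit construction $S_{kl}^{q}=S_{k1}^{q}S_{1l}^{q}$ gives exactly the relations $S_{kl}^{q}=S_{lk}^{q\dagger}$ and $S_{kl}^{q}S_{l'k'}^{q'}=\delta_{qq'}\delta_{ll'}S_{kk'}^{q}$, verified by the same computation as in the proof of Theorem \ref{thm: exsitance of max set of min isom}, using $S_{1l}^{q}S_{l'1}^{q'}=\delta_{qq'}\delta_{ll'}\Pi_{1}^{q}$ (which holds because $S_{1l}^{q}S_{l'1}^{q'}$ is a minimal isometry or zero, its initial and final spaces both $\Pi_{1}^{q}$, forcing it to be a scalar multiple of $\Pi_{1}^{q}$, and the scalar is $1$ since it is a projection-or-zero).

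It then remains to show $\left\{S_{kl}^{q}\right\}$ spans $\mathcal{A}$. Here I would argue: $\mathcal{A}=\spn\left\{S_{\pi}\right\}$ over all paths $\pi$ in the network (every generator $\Pi_{v}$ is among these, and products of path-isometries are path-isometries up to scalars), and by minimality every path-isometry with endpoints $\Pi_{k}^{q},\Pi_{l}^{q}$ is proportional to $S_{kl}^{q}$; a path-isometry between two projections in different components is zero. But one must also confirm that the $\left\{\Pi_{k}^{q}\right\}$ — the chosen maximal independent set — exhaust the relevant "column" directions, i.e. that every minimal projection in component $q$ is the final space of some $S_{k1}^{q}$; this follows because any minimal projection $\Pi_{v}$ in the component is reachable by a path from $\Pi_{1}^{q}$, giving a minimal isometry from $\Pi_{1}^{q}$ onto $\Pi_{v}$, and maximality of $\left\{\Pi_{k}^{q}\right\}$ forces $\Pi_{v}$ to be nonorthogonal to — hence (by minimality and equal rank via Proposition \ref{prop:reflecting proj}, with reflection coefficient $1$) equal to — one of the $\Pi_{k}^{q}$. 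I expect this last bookkeeping step — pinning down that the chosen independent set captures \emph{all} minimal projections of the component up to the isometry action, so that the span is genuinely all of $\mathcal{A}$ and not a proper subalgebra — to be the main obstacle; everything else is a phase-fixing exercise essentially identical to the proof of Theorem \ref{thm: exsitance of max set of min isom}. Once $\left\{S_{kl}^{q}\right\}$ is shown to be a maximal set of minimal isometries, Lemma \ref{lem: for all S_kl there is a BPT} produces the BPT and we are done.
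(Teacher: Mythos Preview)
Your overall construction is exactly the paper's: pick a maximal set of pairwise orthogonal minimal projections $\{\Pi_{k}^{q}\}$ in each connected component, fix a base vertex, and define $S_{kl}^{q}$ via path-isometries through the base. The verification of the relations $S_{kl}^{q}=(S_{lk}^{q})^{\dagger}$ and $S_{kl}^{q}S_{l'k'}^{q'}=\delta_{qq'}\delta_{ll'}S_{kk'}^{q}$ is identical as well.

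The gap is in your spanning argument. You correctly note that $\mathcal{A}=\spn\{S_{\pi}\}$ over all paths $\pi$, but then you try to reduce to paths with endpoints in $\{\Pi_{k}^{q}\}$ by arguing that any minimal projection $\Pi_{v}$ in the component, being nonorthogonal to some $\Pi_{k}^{q}$ by maximality of the independent set, must have reflection coefficient $1$ with it and hence equal it. This inference is wrong: nonorthogonality only gives $\lambda\neq 0$, not $\lambda=1$, and Proposition~\ref{prop:reflecting proj} runs in the opposite direction. A concrete counterexample already appears in Fig.~\ref{fig:exmprefnet3}: in the three-vertex component the maximal independent set is $\{\Pi_{z_{1};0}^{(1/2)},\Pi_{z_{1};1}^{(1/2)}\}$, while $\Pi_{int}$ lies in the same component, reflects with both at coefficient $1/2$, and equals neither. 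In general the vertices of a reflection network are \emph{not} exhausted by any maximal independent set, so path-isometries genuinely have endpoints outside $\{\Pi_{k}^{q}\}$.

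The paper's fix bypasses this issue by using completeness directly. Since $I_{\mathcal{A}}=\sum_{q,k}\Pi_{v_{k}^{q}}$ acts as the identity on $\mathcal{A}$, every path-isometry satisfies
\[
S_{\pi}=I_{\mathcal{A}}\,S_{\pi}\,I_{\mathcal{A}}=\sum_{k,l}\Pi_{v_{k}^{q}}\,S_{\pi}\,\Pi_{v_{l}^{q}},
\]
with $q$ the component containing $\pi$. Each nonzero summand $\Pi_{v_{k}^{q}}S_{\pi}\Pi_{v_{l}^{q}}$ is proportional to the path-isometry along the extended path $(v_{k}^{q},\pi,v_{l}^{q})$, which now \emph{does} have endpoints in the chosen set; minimality together with Lemma~\ref{lem:uniqueness of min isometr } then forces it to be proportional to $S_{kl}^{q}$. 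This sandwich argument is what your final paragraph should be replaced with.
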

\begin{proof}
Let $\left\{ \Pi_{v_{k}^{q}}\right\} \subseteq\left\{ \Pi_{v}\right\} $
be a maximal set of minimal projections partitioned into connected
components $q$. Let $\Pi_{v_{1}^{q}}$ be an arbitrarily chosen first
element in this set for each connected component $q$. Then, for every
$k\geq1$ there is a path $\pi$ between $\Pi_{v_{1}^{q}}$ and $\Pi_{v_{k}^{q}}$
identifying the path-isometries $S_{k1}^{q}\equiv S_{\pi}$ and $S_{1k}^{q}\equiv S_{\pi}^{\dagger}$
as in Eq. \ref{eq:def of path isometry}. For all $k,l\geq1$ we can
identify $S_{kl}^{q}:=S_{k1}^{q}S_{1l}^{q}$ which are path-isometries
from $\Pi_{v_{l}^{q}}$ to $\Pi_{v_{k}^{q}}$ via $\Pi_{v_{1}^{q}}$.
Isometries defined this way have the properties $S_{kl}^{q}=S_{lk}^{q\dagger}$
and $S_{kl}^{q}S_{l'k'}^{q'}=\delta_{qq'}\delta_{ll'}S_{kk'}^{q}$
as required by the Definition \ref{def:minimal isom} of maximal sets
of minimal isometries. The final property that we need to show is
that $\left\{ S_{kl}^{q}\right\} $ spans $\mathcal{A}$. Since $\mathcal{A}$
is spanned by products of $\left\{ \Pi_{v}\right\} $ that are proportional
to path-isometries $\left\{ S_{\pi}\right\} $, it is sufficient to
show that for any path $\pi$ the path-isometry $S_{\pi}$ is spanned
by $\left\{ S_{kl}^{q}\right\} $. Since completeness holds, the sum
$\sum_{q,k}\Pi_{v_{k}^{q}}=I_{\mathcal{A}}$ acts as the identity
of the algebra. Therefore, 
\begin{equation}
S_{\pi}=I_{\mathcal{A}}S_{\pi}I_{\mathcal{A}}=\sum_{kl}\Pi_{v_{k}^{q}}S_{\pi}\Pi_{v_{l}^{q}},\label{eq:BPT const from refnet eq 1}
\end{equation}
where $q$ is the connected component that contains the path $\pi$.
By definition of path-isometries, every non-vanishing term $\Pi_{v_{k}^{q}}S_{\pi}\Pi_{v_{l}^{q}}$
is proportional to the path-isometry $S_{\left(v_{k}^{q},\pi,v_{l}^{q}\right)}$.
Since minimality holds, according to Lemma \ref{lem:uniqueness of min isometr }
the path-isometries $S_{\left(v_{k}^{q},\pi,v_{l}^{q}\right)}\propto S_{kl}^{q}$
are proportional because they have the same initial and final spaces.
Therefore, for all $k,l$ in Eq. \eqref{eq:BPT const from refnet eq 1},
$\Pi_{v_{k}^{q}}S_{\pi}\Pi_{v_{l}^{q}}\propto S_{kl}^{q}$ so $S_{\pi}$
is in the span of $\left\{ S_{kl}^{q}\right\} $.
\end{proof}
The procedure \textproc{ConstructBptBasis} in the Scattering Algorithm
\ref{alg:The-Scattering-Algorithm} proceeds as follows:
\begin{enumerate}
\item Identify a maximal set of orthogonal projections $\left\{ \Pi_{v_{k}^{q}}\right\} $
in each connected component $q$ and arbitrarily designate the first
element $\Pi_{v_{1}^{q}}$.
\item In each connected component $q$ construct the path-isometries $\left\{ S_{k1}^{q}\right\} $
from $\Pi_{v_{1}^{q}}$ to every other element $\Pi_{v_{k}^{q}}$.
\item Use the path-isometries $\left\{ S_{k1}^{q}\right\} $ to construct
the BPT basis as described in the proof of Lemma \ref{lem: for all S_kl there is a BPT}.
\end{enumerate}
Note that the reason that we can use Lemma \ref{lem: for all S_kl there is a BPT}
in step 3 is because $S_{k1}^{q}$ are minimal isometries as established
by Lemma \ref{lem:BPT construction from refnet}.

\subsection{Why the Scattering Algorithm works: putting it all together\label{subsec:Why-the-Scattering Alg works}}

Following the above results we are almost ready to prove that the
output of the Scattering Algorithm is correct. What remains before
we can put it all together is to show that the algebra generated by
the final reflection network is the same algebra generated by the
input $\mathcal{M}$.
\begin{lem}
\label{lem:final algebra is the same as initial}Let $\left\{ \Pi_{v}\right\} $
be the projections in the final reflection network (after minimality
and completeness have been established) produced by the Scattering
Algorithm \ref{alg:The-Scattering-Algorithm}. Then, the algebra generated
by the projections $\left\langle \left\{ \Pi_{v}\right\} \right\rangle $
and the algebra generated by the input $\left\langle \mathcal{M}\right\rangle $
is the same algebra.
\end{lem}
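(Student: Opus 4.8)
The plan is to show that passing to the final reflection network never leaves the algebra and never fails to recover it, so that $\left\langle \left\{ \Pi_{v}\right\} \right\rangle = \left\langle \mathcal{M}\right\rangle$. I would track the algebra through each of the four phases of Algorithm~\ref{alg:The-Scattering-Algorithm} and argue that each phase preserves the generated algebra. Phase~1 is immediate: by Proposition~\ref{prop: single generator irreps } each generator $M_i$ has the same span as its spectral projections, so $\left\langle \mathcal{M}\right\rangle = \left\langle \left\{ \Pi_{i;k}\right\} \right\rangle$ as noted already in the text. It therefore suffices to show that Phases~2, 3 and 4 (the scattering procedure, the minimality fix, and the completeness fix) each leave the generated algebra unchanged.

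The key observation, which I would state and prove as a small lemma, is that \emph{one scattering step} preserves the generated algebra. If $\Pi_1,\Pi_2$ scatter into $\left\{\Pi_i^{(\lambda)}\right\}$, then on one hand each $\Pi_i^{(\lambda)}$ with $\lambda\neq 0$ lies in $\left\langle \Pi_1,\Pi_2\right\rangle$ because it is a spectral projection of $\Pi_1\Pi_2\Pi_1$ (resp.\ $\Pi_2\Pi_1\Pi_2$), which is a polynomial in that single self-adjoint operator via Eq.~\eqref{eq:spec proj from operator}; and $\Pi_i^{(0)} = \Pi_i - \sum_{\lambda\neq 0}\Pi_i^{(\lambda)}$ is then also in the algebra. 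Conversely $\Pi_i = \sum_\lambda \Pi_i^{(\lambda)}$ recovers the predecessor, so the set of generators before and after the step generate the same algebra. Since Phase~2 is a finite sequence of such steps (finiteness by Lemma~\ref{lem: scat proc. always stops }), the algebra is unchanged at the end of Phase~2.

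For Phase~3 I would handle the two corrections separately. The minimality fix adds spectral projections of $U_{\pi\pi'} = S_\pi S_{\pi'}^\dagger$; since each $S_\pi$ is (a normalization of) a product of projections already in the network, $U_{\pi\pi'}\in\mathcal{A}$, hence so are its spectral projections, and they add nothing new to $\mathcal{A}$ — the subsequent extra round of scatterings again preserves $\mathcal{A}$ by the one-step lemma. The completeness fix adds $\tilde\Pi_v = \frac{1}{c}(I-I_{\mathcal{A}})\Pi_v(I-I_{\mathcal{A}})$; here the only subtlety is the full identity $I$, but $I_{\mathcal{A}}$ acts as the identity of $\mathcal{A}$ on all its elements, so $I - I_{\mathcal{A}}$ acts exactly as $I_{\mathcal{A}}$ would if we were working inside the unital algebra $\mathcal{A}$ (restricted to its support), and statement~(1) of Lemma~\ref{lem:Completness} already certifies $\tilde\Pi_v\in\mathcal{A}$. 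Finally, Phase~4 only selects and multiplies existing projections to build path-isometries and the BPT basis; by Lemma~\ref{lem:BPT construction from refnet} these path-isometries span exactly $\mathcal{A}$, so the BPT basis presents the same algebra. Chaining these equalities gives $\left\langle \left\{ \Pi_{v}\right\} \right\rangle = \left\langle \mathcal{M}\right\rangle$.

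The main obstacle I anticipate is the completeness step: it is the only place where the full identity $I$ (which need not belong to $\mathcal{A}$) appears, so one must be careful that "$(I-I_{\mathcal{A}})$" really does produce an element of $\mathcal{A}$ and not merely of $\mathcal{L}(\mathcal{H})$. The cleanest route is simply to invoke statement~(1) of Lemma~\ref{lem:Completness}, which already proves $\tilde\Pi_v\in\mathcal{A}$ without needing $I\in\mathcal{A}$; so in fact the potential difficulty is dissolved by a result proved earlier, and the whole lemma reduces to bookkeeping over the four phases.
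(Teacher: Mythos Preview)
Your proposal is correct and follows essentially the same approach as the paper: both arguments track the algebra through each phase of the algorithm (spectral projections, scattering, minimality fix, completeness fix) and verify that every step only adds elements already in $\mathcal{A}$ while retaining enough to regenerate it, invoking Eq.~\eqref{eq:spec proj from operator} for the scattering step and Lemma~\ref{lem:Completness}(1) for the completeness step. Your discussion of Phase~4 is extraneous since the lemma concerns only the final reflection network $\{\Pi_v\}$, but this does not affect correctness.
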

\begin{proof}
Let $\mathcal{A:=\left\langle \mathcal{M}\right\rangle }$. All the
spectral projections of the operators in $\mathcal{M}$ span the operators
in $\mathcal{M}$ and are themselves in the algebra $\mathcal{A}$
(with Eq. \ref{eq:spec proj from operator} we can show that this
is true for every spectral projection of any $A\in\mathcal{A}$).
Therefore, the algebra $\mathcal{A}$ is generated by the output of
\textproc{GetAllSpectralProjections}. During the procedure \textproc{ScatterProjections},
we repeat the scattering operation where we replace a pair of projections
$\Pi_{1}$, $\Pi_{2}$ with the spectral projections $\left\{ \Pi_{1}^{\left(\lambda\right)}\right\} $,
$\left\{ \Pi_{2}^{\left(\lambda\right)}\right\} $ as specified in
the Definition \ref{def:Scattering }. Since $\left\{ \Pi_{i=1,2}^{\left(\lambda\right)}\right\} $
are the spectral projections of $\Pi_{i}\Pi_{j}\Pi_{i}$, and $\Pi_{i}\Pi_{j}\Pi_{i}\in\mathcal{A}$,
each $\Pi_{i=1,2}^{\left(\lambda\right)}$ (including the null projections)
is an element of $\mathcal{A}$. Conversely, the sum of $\left\{ \Pi_{i=1,2}^{\left(\lambda\right)}\right\} $
gives back $\Pi_{i=1,2}$, so replacing the pair $\Pi_{1}$, $\Pi_{2}$
with $\left\{ \Pi_{1}^{\left(\lambda\right)}\right\} $, $\left\{ \Pi_{2}^{\left(\lambda\right)}\right\} $
does not change the generating power of projections in the reflection
network. Therefore, the set of projection in the output of \textproc{ScatterProjections}
still generates the same algebra $\mathcal{A}$. During the procedure
\textproc{EstablishMinimality}, we may add the spectral projections
of $U_{\pi\pi'}$, but once again, since $U_{\pi\pi'}\in\mathcal{A}$
its spectral projections are elements of $\mathcal{A}$ so the output
of \textproc{EstablishMinimality} still generates the same algebra
$\mathcal{A}$. During the procedure \textproc{EstablishCompleteness}
we may add more projections as provided by Lemma \ref{lem:Completness}
but it guarantees that they are all in $\mathcal{A}$, so the output
of \textproc{EstablishCompleteness} still generates $\mathcal{A}$.
\end{proof}
\begin{thm}
The BPT basis $\left\{ \ket{e_{ik}^{q}}\right\} $ produced by the
Scattering Algorithm \ref{alg:The-Scattering-Algorithm} on the input
$\mathcal{M}$, identify the irreps structure of the algebra $\left\langle \mathcal{M}\right\rangle $
as established by Theorem \ref{thm: Wedderburn Decomp}.
\end{thm}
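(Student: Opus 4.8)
The plan is to chain together the results of this section, since each procedure of Algorithm \ref{alg:The-Scattering-Algorithm} has already been certified to do its job. First I would invoke Lemma \ref{lem:final algebra is the same as initial}: the set of projections $\left\{ \Pi_{v}\right\} $ in the final reflection network generates exactly the algebra $\mathcal{A}:=\left\langle \mathcal{M}\right\rangle $. This is the key consistency fact, since every manipulation along the way — extracting spectral projections in \textproc{GetAllSpectralProjections}, breaking pairs in \textproc{ScatterProjections}, adjoining the spectral projections of $U_{\pi\pi'}$ in \textproc{EstablishMinimality}, and adjoining complementary projections in \textproc{EstablishCompleteness} — leaves the generated algebra unchanged.

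Next I would argue that the algorithm terminates and that the network it outputs has the two required properties. Termination of \textproc{ScatterProjections} into a proper reflection network is Lemma \ref{lem: scat proc. always stops }; the loops inside \textproc{EstablishMinimality} and \textproc{EstablishCompleteness} terminate by the rank-reduction arguments given after Lemmas \ref{lem:fixing minimality} and \ref{lem:Completness} (each ``fix'' either closes the check or strictly lowers the rank of some projection, and rank-$1$ projections are automatically minimal and obstruct nothing). At exit the network is proper, minimality holds by the path-isometry criterion of Lemma \ref{lem:Minimality of reflecting projections }, and completeness holds because \textproc{EstablishCompleteness} guarantees a maximal independent set of vertices summing to an operator $I_{\mathcal{A}}$ that acts as the identity on $\mathcal{A}$ (this is automatic when one of the original generators is supported on all of $\mathcal{H}$, e.g.\ after including $I$ in $\mathcal{M}$).

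With a proper, minimal, complete reflection network $\left\{ \Pi_{v}\right\} $ generating $\mathcal{A}$, Lemma \ref{lem:BPT construction from refnet} produces a maximal set of minimal isometries $\left\{ S_{kl}^{q}\right\} \subset\mathcal{A}$ from the path-isometries, and \textproc{ConstructBptBasis} turns these into the orthonormal family $\left\{ \ket{e_{ik}^{q}}\right\} $ exactly as in the proof of Lemma \ref{lem: for all S_kl there is a BPT}, so that $S_{kl}^{q}=\sum_{i}\ket{e_{ik}^{q}}\bra{e_{il}^{q}}$. Finally, Theorem \ref{thm: Wedderburn Decomp}, applied on the support of $\mathcal{A}$ (which is precisely $\spn\left\{ \ket{e_{ik}^{q}}\right\} $), shows that the assignment $\ket{e_{ik}^{q}}\mapsto\ket{n_{i}^{q}}\otimes\ket{m_{k}^{q}}$ realizes the Wedderburn Decomposition $\bigoplus_{q}\mathcal{H}_{\nu_{q}}\otimes\mathcal{H}_{\mu_{q}}$ with $\mathcal{A}\cong\bigoplus_{q}I_{\nu_{q}}\otimes\mathcal{L}\left(\mathcal{H}_{\mu_{q}}\right)$; hence $\left\{ \ket{e_{ik}^{q}}\right\} $ is the BPT basis promised by that theorem. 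I do not expect a genuinely hard step here — the mathematical content was front-loaded into the lemmas — so the only care needed is bookkeeping: verifying that the output of each procedure is a legitimate input to the next (same algebra, proper network, properties preserved), and, when no generator is supported on all of $\mathcal{H}$, tracking that the statement concerns the restriction of $\mathcal{A}$ to its support subspace rather than all of $\mathcal{H}$.
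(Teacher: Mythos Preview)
Your proposal is correct and follows essentially the same route as the paper: both proofs simply chain together the section's lemmas (termination via Lemma \ref{lem: scat proc. always stops }, minimality via Lemmas \ref{lem:Minimality of reflecting projections } and \ref{lem:fixing minimality}, completeness via Lemma \ref{lem:Completness}, BPT construction via Lemmas \ref{lem:BPT construction from refnet} and \ref{lem: for all S_kl there is a BPT}, algebra-preservation via Lemma \ref{lem:final algebra is the same as initial}, and finally Theorem \ref{thm: Wedderburn Decomp}). The only cosmetic difference is that you invoke Lemma \ref{lem:final algebra is the same as initial} up front while the paper invokes it near the end, but the logical content is identical.
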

\begin{proof}
The procedure \textproc{GetAllSpectralProjections} outputs the projections
that will form the initial improper reflection network. Lemma \ref{lem: scat proc. always stops }
ensures that the procedure \textproc{ScatterProjections} will take
the initial improper reflection network and output a proper reflection
network in a finite number of steps. Lemma \ref{lem:Minimality of reflecting projections }
ensures that the procedure \textproc{EstablishMinimality} correctly
identifies whether the reflection network consists of minimal projections.
Lemma \ref{lem:fixing minimality} ensures that \textproc{EstablishMinimality}
correctly modifies the reflection network to consist of minimal projections
if it did not initially. Lemma \ref{lem:Completness} ensures that
the procedure \textproc{EstablishCompleteness} correctly modifies
the reflection network to include a maximal set of minimal projections.
At this point we have a minimal and complete reflection network that
consists of projections $\left\{ \Pi_{v}\right\} $. Since minimality
and completeness hold, Lemma \ref{lem:BPT construction from refnet}
ensures that the procedure \textproc{ConstructBptBasis} finds minimal
isometries and, following Lemma \ref{lem: for all S_kl there is a BPT},
constructs the BPT basis $\left\{ \ket{e_{ik}^{q}}\right\} $ for
the algebra $\left\langle \left\{ \Pi_{v}\right\} \right\rangle $.
Lemma \ref{lem:final algebra is the same as initial} then ensures
that $\left\{ \ket{e_{ik}^{q}}\right\} $ are also the BPT basis for
the algebra $\left\langle \mathcal{M}\right\rangle $. Finally, the
proof of Theorem \ref{thm: Wedderburn Decomp} demonstrates how the
BPT basis $\left\{ \ket{e_{ik}^{q}}\right\} $ identify the irreps
structure of the algebra $\left\langle \mathcal{M}\right\rangle $.
\end{proof}

\chapter{Reduction of states\label{chap:Operational-reductions-of-st}}

In this chapter we will consider reductions of states and their implications
in the form of superselection and decoherence. By reduction of states
we loosely mean the reduction of information contained in the quantum
state as a result of some operational constraint. The best known reduction
of states is the partial trace map. By shifting the focus from subsystems
to operator algebras we will consider more general reductions of states.

The reduced states produced by the partial trace map were motivated
by the need to describe the states of individual subsystems, even
when they are entangled. Identifying such reduced states turned out
to be more than a mathematical exercise because without it we could
not define decoherence and understand its role in the emergence of
classicality (see \citep{zurek2003decoherence,Schlosshauer:2003zy}
for a review of the decoherence program). The idea that other physically
motivated (but more general) state reductions can lead to decoherence
and emergence of classicality has been explored in \citep{CastagninoLombardi2004,Castagnino2008,FortinLombardiCastagnino2014,piazza2010glimmers,KoflerBrukner2007Classical,Alicki09,cotler2019locality,carroll2020quantum}.

A shift in perspective on the notion of a subsystem and the accompanying
state reduction is due to Zanardi \emph{et al}. \citep{Zanardi2001Virtual,Zanardi:2004zz,Knill2000,viola2001constructing},
that have defined the concept of a \emph{virtual }subsystem via operator
algebras. This idea found many applications in the quantum error correction
community with the development of decoherence free subsystems and
operator quantum error correction (a.k.a. subsystem codes) \citep{Knill2000,Kempe01,lidar2003decoherence,lidar2014review,kribs2005operator,kribs2005unified,bacon2006operator,blume2008characterizing,BlumeKohout10}.
These ideas have also percolated into the study of bulk reconstruction
in AdS/CFT correspondence where the holographic-error-correcting-code
approach was introduced \citep{almheiri2015bulk,pastawski2015holographic}.
Beyond quantum error correction, the definition of subsystems via
operator algebras (in particular group algebras) plays a central role
in ideas such as generalized entanglement \citep{Barnum03,Barnum04},
quantum reference frames \citep{Bartlett07,kabernik2014quantum} and
quantum state compression \citep{Bluhm2018}.

In the following, Section \ref{sec:State-reductions-and} is dedicated
to re-examining the partial trace map and re-framing it as an instance
of a state reduction map that arises from an operational constraint.
We will derive an alternative representation of the partial trace
map and show how it is visually captured by a bipartition table. This
alternative representation will then be used to describe the process
of decoherence without referring to the interacting subsystems (without
the system-environment split).

In Section \ref{sec:State-reudcions-due} we will consider state reductions
due to more general operational constraints that go beyond inaccessible
subsystems. We will see that in general, operational constraints lead
to a combination of superselection and decoherence.

In order to clarify these ideas we will study a few examples. In the
first example we will consider the operational constraint of not having
a shared reference frame and see how that leads to superselection.
In the second example we will demonstrate how an operational constraint
can lead to decoherence in a simple system such as the Hydrogen atom
even when no interactions or couplings to an external environment
are present. In the last example we will identify the possible encodings
of quantum information into a decoherence free subsystem by considering
the noise as an operational constraint. In this example we will demonstrate
how the Scattering Algorithm allows us to expand the scope of treatable
operational constraints beyond group representations.

\section{State reductions and decoherence due to inaccessible subsystems\label{sec:State-reductions-and}}

The partial trace map is the prototypical example of quantum state
reduction. It is usually introduced by the following reasoning: We
are given the bipartite Hilbert space $\mathcal{H}_{AB}:=\mathcal{H}_{A}\otimes\mathcal{H}_{B}$
and the operational constraint that allows measurements only on subsystem
$B$. Then, we consider the map $\tr_{A}$ that reduces the full states
of $AB$ to the states of $B$ and preserves all information about
$B$. In other words, the partial trace map $\tr_{A}$ is defined
by the condition that for all $\rho\in\mathcal{L}\left(\mathcal{H}_{AB}\right)$
and all $O_{B}\in\mathcal{L}\left(\mathcal{H}_{B}\right)$ it produces
reduced states $\rho_{B}:=\tr_{A}\left[\rho\right]$ such that 
\begin{equation}
\tr\left[O_{B}\rho_{B}\right]=\tr\left[I_{A}\otimes O_{B}\,\rho\right].\label{eq:part trace condition}
\end{equation}

If $\left\{ \ket{a_{i}}\right\} $ are some basis in $\mathcal{H}_{A}$,
the map $\tr_{A}$ can be expressed in the operator sum representation
as
\begin{equation}
\tr_{A}\left[\rho\right]:=\sum_{i}K_{i}\,\rho\,K_{i}^{\dagger},\label{eq:part trace op sum rep}
\end{equation}
where $K_{i}:=\left\langle a_{i}\right|\otimes I_{B}$. Since $\sum_{i}K_{i}^{\dagger}K_{i}=I_{AB}$,
it is completely positive and trace preserving (CPTP) so it maps quantum
states to quantum states. Using the cyclical property of the trace
(not partial) and linearity we can show that the condition in Eq.
\eqref{eq:part trace condition} holds for the map in Eq. \eqref{eq:part trace op sum rep}:
\begin{align*}
\tr\left[O_{B}\rho_{B}\right] & =\tr\left[O_{B}\left(\sum_{i}\left\langle a_{i}\right|\otimes I_{B}\,\rho\,\left|a_{i}\right\rangle \otimes I_{B}\right)\right]\\
 & =\tr\left[\sum_{i}\left|a_{i}\right\rangle \left\langle a_{i}\right|\otimes O_{B}\,\rho\right]\\
 & =\tr\left[I_{A}\otimes O_{B}\,\rho\right].
\end{align*}

We will now derive an alternative representation of the partial trace
map in the framework of operator algebras. First, we note that the
operational constraint dictates that only observables of the form
$I_{A}\otimes O_{B}$ are physically relevant. Therefore, the operational
constraint identifies the operator algebra 
\begin{equation}
\mathcal{A}:=\left\{ I_{A}\otimes O_{B}\,|\,O_{B}\in\mathcal{L}\left(\mathcal{H}_{B}\right)\right\} \label{eq: bipartite Hilb space algebra}
\end{equation}
that contains all the relevant observables. Since $\mathcal{A}=I_{A}\otimes\mathcal{L}\left(\mathcal{H}_{B}\right)$,
it can be reduced to $\mathcal{H}_{B}$ by mapping $I_{A}\otimes O_{B}\longmapsto O_{B}$.
The accompanying state reduction map $\rho_{B}:=\tr_{A}\left[\rho\right]$
must comply with the condition \eqref{eq:part trace condition}. By
linearity, it is sufficient to satisfy this condition for the minimal
isometries $S_{kl}:=I_{A}\otimes\ket{b_{k}}\bra{b_{l}}$ that span
$\mathcal{A}$, so we require the condition
\[
\tr\left[\ket{b_{k}}\bra{b_{l}}\rho_{B}\right]=\tr\left[S_{kl}\,\rho\right].
\]

With the above condition and the resolution of identity $I_{B}=\sum_{l}\ket{b_{l}}\bra{b_{l}}$,
we can express 
\[
\rho_{B}=\left(\sum_{l}\ket{b_{l}}\bra{b_{l}}\right)\rho_{B}\left(\sum_{k}\ket{b_{k}}\bra{b_{k}}\right)=\sum_{kl}\tr\left[\ket{b_{k}}\bra{b_{l}}\rho_{B}\right]\text{\ensuremath{\ket{b_{l}}\bra{b_{k}}}}=\sum_{kl}\tr\left[S_{kl}\,\rho\right]\text{\ensuremath{\ket{b_{l}}\bra{b_{k}}}}.
\]
Since $\rho_{B}=\tr_{A}\left[\rho\right]$, we have derived above
a new representation for the partial trace map 
\begin{equation}
\tr_{A}\left[\rho\right]:=\sum_{kl}\tr\left[S_{kl}\rho\right]\text{\ensuremath{\ket{b_{l}}\bra{b_{k}}}}.\label{eq:part trace S_kl rep}
\end{equation}
Such representations of maps between operators are known as ``Input/Output''
or ``Tomographic'' \citep{milz2017introduction}. The easiest way
to see that \ref{eq:part trace S_kl rep} is the same partial trace
map as \eqref{eq:part trace op sum rep}, is to verify that both representations
yield
\[
\tr_{A}\left[\ket{a_{i}}\bra{a_{j}}\otimes\ket{b_{l}}\bra{b_{k}}\right]=\delta_{ij}\ket{b_{l}}\bra{b_{k}}.
\]
Then, by linearity both maps have to be identical since $\left\{ \ket{a_{i}}\bra{a_{j}}\otimes\ket{b_{l}}\bra{b_{k}}\right\} $
span all the operators in $\mathcal{L}\left(\mathcal{H}_{AB}\right)$.

As an illustration, consider the Hilbert space $\mathcal{H}:=\underline{l}\otimes\underline{\frac{1}{2}}$
of spin-$l$ and spin-$\frac{1}{2}$. The algebra of observables on
spin-$\frac{1}{2}$ is spanned by the minimal isometries $S_{mm'}:=I_{l}\otimes\ket{\frac{1}{2},m}\bra{\frac{1}{2},m'}$
for $m,m'=\pm\frac{1}{2}$. Then, for any pure state $\ket{\mathcal{\psi}}\in\mathcal{H}$
we can express the partial trace over spin-$l$ using \eqref{eq:part trace S_kl rep}
as 
\begin{equation}
\ket{\psi}\longmapsto\tr_{l}\left[\ketbra{\psi}{\psi}\right]=\sum_{m,m'=\pm\frac{1}{2}}\bra{\psi}S_{mm'}\ket{\psi}\ketbra{\frac{1}{2},m'}{\frac{1}{2},m}=\begin{pmatrix}\bra{\psi}S_{\frac{1}{2},\frac{1}{2}}\ket{\psi} & \bra{\psi}S_{-\frac{1}{2},\frac{1}{2}}\ket{\psi}\\
\bra{\psi}S_{\frac{1}{2},-\frac{1}{2}}\ket{\psi} & \bra{\psi}S_{-\frac{1}{2},-\frac{1}{2}}\ket{\psi}
\end{pmatrix}.\label{eq:part trace over spin l}
\end{equation}

Let us now consider how the partial trace map looks in the BPT picture.
The algebra of relevant observables for the partial trace over $A$
is $I_{A}\otimes\mathcal{L}\left(\mathcal{H}_{B}\right)$, so for
some product basis $\ket{e_{ik}}:=\ket{a_{i}}\ket{b_{k}}$, where
$i=1,...,d_{A}$ and $k=1,...,d_{B}$, the irreps of this algebra
are given by the BPT

\noindent\begin{minipage}[c]{1\columnwidth}%
\begin{center}
\vspace{0.5\baselineskip}
\begin{tabular}{|c|c|c|c|}
\hline 
$e_{1,1}$ & $e_{1,2}$ & $\cdots$ & $e_{1,d_{B}}$\tabularnewline
\hline 
$e_{2,1}$ & $e_{2,2}$ & $\cdots$ & $e_{2,d_{B}}$\tabularnewline
\hline 
$\vdots$ & $\vdots$ & $\ddots$ & $\vdots$\tabularnewline
\hline 
$e_{d_{A},1}$ & $e_{d_{A},2}$ & $\cdots$ & $e_{d_{A},d_{B}}$\tabularnewline
\hline 
\multicolumn{1}{c}{$\downarrow$} & \multicolumn{1}{c}{$\downarrow$} & \multicolumn{1}{c}{} & \multicolumn{1}{c}{$\downarrow$}\tabularnewline
\hline 
$b_{1}$ & $b_{2}$ & $\cdots$ & $b_{d_{B}}$\tabularnewline
\hline 
\end{tabular}.\vspace{0.5\baselineskip}
\par\end{center}%
\end{minipage} We have added an additional single row on the bottom which represents
the Hilbert space of reduced states.

This picture implies that states that are supported on a single column
$\ket{\varphi_{k}}=\sum_{i}c_{i}\ket{e_{ik}}$---we will call them
\emph{column kets}---reduce as
\[
\ket{\varphi_{k}}\longmapsto\ket{b_{k}}.
\]
That is because column kets are the product states $\ket{\varphi_{k}}=\ket{\varphi_{A;k}}\ket{b_{k}}$
for some $\ket{\varphi_{A;k}}\in\mathcal{H}_{A}$.

A general pure state $\ket{\mathcal{\psi}}\in\mathcal{H}$ that is
supported on multiple columns can then be expressed as a sum of unnormalized
column kets $\ket{\psi}=\sum_{k}\ket{\varphi_{k}}$. Then, using the
representation \eqref{eq:part trace S_kl rep} of the partial trace
map , all pure states reduce as 
\begin{equation}
\ket{\psi}\longmapsto\rho_{B}=\sum_{kl}\bra{\varphi_{k}}S_{kl}\ket{\varphi_{l}}\text{\ensuremath{\ket{b_{l}}\bra{b_{k}}}}.\label{eq:state reduc map with column kets-1}
\end{equation}

Observe that in the reduced state $\rho_{B}$, the probability weights
of the diagonal terms $\ket{b_{k}}\bra{b_{k}}$ are given by the overlaps
$\bra{\varphi_{k}}S_{kk}\ket{\varphi_{k}}=\braket{\varphi_{k}}{\varphi_{k}}$
of the column kets with themselves (these are their square norms).
In general, the weights and phases of the reduced coherence terms
$\text{\ensuremath{\ket{b_{l}}\bra{b_{k}}}}$ are given by the overlaps
$\bra{\varphi_{k}}S_{kl}\ket{\varphi_{l}}$ of the corresponding column
kets. The overlap is calculated by mapping the kets to the same column
with the isometries $S_{kl}$.

In the following section we will use this perspective in order to
makes sense of the process of decoherence in reduced states that arise
from more general operational constraints. Before we do that, however,
let us describe the process of decoherence using this perspective
in a familiar setting where the reduced states are given by the partial
trace over a subsystem.

Going back to the composite system of spin-$l$ and spin-$\frac{1}{2}$,
we adopt the shorter notation for the product basis $\ket{m,\pm\frac{1}{2}}:=\ket{l,m}\ket{\frac{1}{2},\pm\frac{1}{2}}$.
The BPT picture of the partial trace over spin-$l$ is then

\noindent\begin{minipage}[c]{1\columnwidth}%
\begin{center}
\vspace{0.5\baselineskip}
\begin{tabular}{|c|c|}
\hline 
$+l,+\frac{1}{2}$ & $+l,-\frac{1}{2}$\tabularnewline
\hline 
$\vdots$ & $\vdots$\tabularnewline
\hline 
$0,+\frac{1}{2}$ & $0,-\frac{1}{2}$\tabularnewline
\hline 
$\vdots$ & $\vdots$\tabularnewline
\hline 
$-l,+\frac{1}{2}$ & $-l,-\frac{1}{2}$\tabularnewline
\hline 
\multicolumn{1}{c}{$\downarrow$} & \multicolumn{1}{c}{$\downarrow$}\tabularnewline
\hline 
$+\frac{1}{2}$ & $-\frac{1}{2}$\tabularnewline
\hline 
\end{tabular} .\vspace{0.5\baselineskip}
\par\end{center}%
\end{minipage} This BPT specifies the minimal isometries and the partial trace map
over spin-$l$ as given in Eq. \ref{eq:part trace over spin l}.

Now, consider the dynamics in the form of the interaction Hamiltonian

\noindent 
\[
H_{int}=-\epsilon\,L_{z}\otimes\sigma_{z},
\]

\noindent with the operator $\sigma_{z}=\Pi^{\left(+\right)}-\Pi^{\left(-\right)}$,
where $\Pi^{\left(\pm\right)}=\ketbra{\pm\frac{1}{2}}{\pm\frac{1}{2}}$.
Let us separate $H_{int}$ into two terms supported on the two columns
of the above BPT
\[
H_{int}=-\epsilon\,\left(L_{z}\otimes\Pi^{\left(+\right)}-\,L_{z}\otimes\Pi^{\left(-\right)}\right)=H_{int}^{\left(+\right)}\otimes\Pi^{\left(+\right)}+H_{int}^{\left(-\right)}\otimes\Pi^{\left(-\right)},
\]
where $H_{int}^{\left(\pm\right)}:=\mp\epsilon\,L_{z}$. Using the
fact that $\Pi^{\left(+\right)}$ and $\Pi^{\left(-\right)}$ are
orthogonal, the overall time evolution is given by 
\begin{align*}
e^{-itH_{int}} & =\sum_{n=0}^{\infty}\frac{1}{n!}\left(-itH_{int}^{\left(+\right)}\otimes\Pi^{\left(+\right)}-itH_{int}^{\left(-\right)}\otimes\Pi^{\left(-\right)}\right)^{n}\\
 & =\sum_{n=0}^{\infty}\frac{1}{n!}\left(-itH_{int}^{\left(+\right)}\right)^{n}\otimes\Pi^{\left(+\right)}+\sum_{n=0}^{\infty}\frac{1}{n!}\left(-itH_{int}^{\left(-\right)}\right)^{n}\otimes\Pi^{\left(-\right)}\\
 & =e^{-itH_{int}^{\left(+\right)}}\otimes\Pi^{\left(+\right)}+e^{-itH_{int}^{\left(-\right)}}\otimes\Pi^{\left(-\right)}.
\end{align*}

We can now see that $H_{int}^{\left(+\right)}$ generates the time
evolution inside the subspace of the left column of the BPT, and $H_{int}^{\left(-\right)}$
generates the evolution inside the right column. Since $H_{int}^{\left(+\right)}$
and $H_{int}^{\left(-\right)}$ differ in the overall sign (this traces
back to the eigenvalues of $\sigma_{z}$), the column kets evolve
in opposite directions inside the columns. From that we conclude that
the Hamiltonian $H_{int}$ will drive the column kets apart, which
will reduce their overlap, and that kills off the coherence terms
in the reduced states.

For concreteness, consider the initial product state
\[
\ket{\psi}_{AB}:=\ket{\varphi}\left(\alpha\ket{+\frac{1}{2}}+\beta\ket{-\frac{1}{2}}\right).
\]
After some time $t$ we will have
\[
\ket{\psi\left(t\right)}_{AB}=\ket{\varphi_{+}\left(t\right)}\ket{+\frac{1}{2}}+\ket{\varphi_{-}\left(t\right)}\ket{-\frac{1}{2}},
\]
where
\[
\ket{\varphi_{+}\left(t\right)}:=\alpha e^{-itH_{int}^{\left(+\right)}}\ket{\varphi},\hspace{1cm}\ket{\varphi_{-}\left(t\right)}:=\beta e^{-itH_{int}^{\left(-\right)}}\ket{\varphi}.
\]
The coefficient of the reduced coherence term $\text{\ensuremath{\ket{+\frac{1}{2}}\bra{-\frac{1}{2}}}}$
of spin-$\frac{1}{2}$ is given by the overlap of the corresponding
column kets
\[
\bra{\varphi_{-}\left(t\right)}\bra{-\frac{1}{2}}S_{-\frac{1}{2},\frac{1}{2}}\ket{\varphi_{+}\left(t\right)}\ket{+\frac{1}{2}}=\braket{\varphi_{-}\left(t\right)}{\varphi_{+}\left(t\right)}=\beta^{*}\alpha\bra{\varphi}e^{itH_{int}^{\left(-\right)}}e^{-itH_{int}^{\left(+\right)}}\ket{\varphi}.
\]
Unless $\ket{\varphi}$ is an eigenstate of $H_{int}^{\left(\pm\right)}$,
the overlap $\bra{\varphi}e^{itH_{int}^{\left(-\right)}}e^{-itH_{int}^{\left(+\right)}}\ket{\varphi}$
will vanish with time (the rate depends on the coupling strength $\epsilon$
and the magnitude of spin-$l$).

In Fig. \ref{fig:Purity-against-time} we have plotted the purity
$\tr\left(\rho_{B}^{2}\right)$ of the reduced state of spin-$\frac{1}{2}$
coupled to spin-$100$ with the initial state 
\[
\ket{\psi}_{AB}:=\ket{m_{x}=100}\left(\frac{1}{\sqrt{2}}\ket{+\frac{1}{2}}+\frac{1}{\sqrt{2}}\ket{-\frac{1}{2}}\right).
\]
Here $\ket{m_{x}=100}$ is the maximally $\hat{x}$-polarized eigenstate
of $L_{x}$.
\begin{figure}[H]
\centering{}\includegraphics[width=0.7\columnwidth]{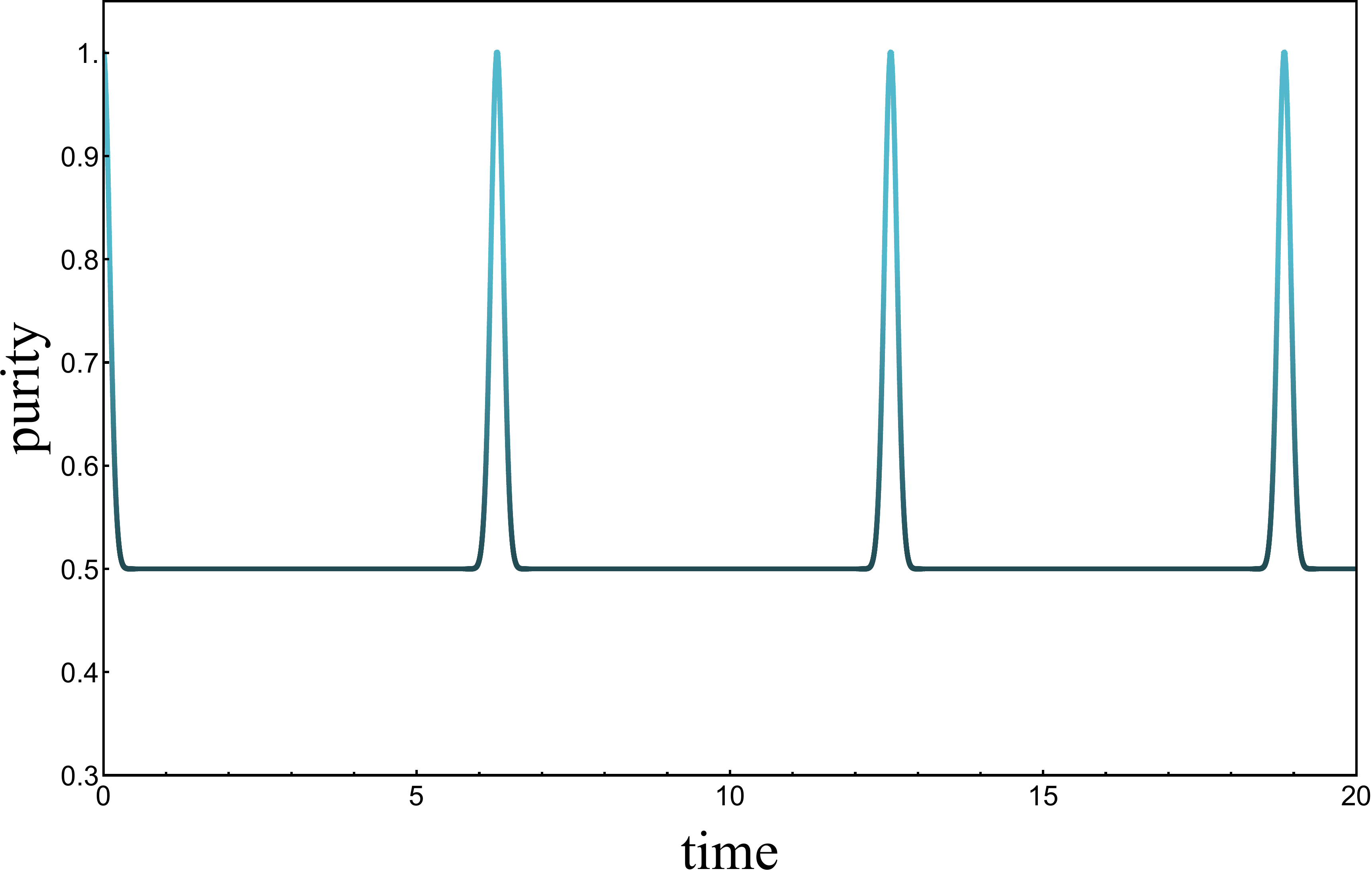}
\caption{\label{fig:Purity-against-time}Time dependent purity of spin-$\frac{1}{2}$
coupled to spin-$100$.}
\end{figure}

As we can see, starting with the maximal purity of $1$ the purity
rapidly drops to its minimal value of $0.5$ and stays there until
it reaches periodic brief revivals back to $1$. This can be understood
from the behavior of the overlaps between two column kets as they
evolve. Initially, both column kets correspond to the same state of
spin-$100$ that is polarized in the $\hat{x}$ direction. As they
evolve with the opposite Hamiltonians $H_{int}^{\left(\pm\right)}=\mp\frac{\epsilon}{2}\,L_{z}$,
they rotate in opposite directions in the $\hat{x}-\hat{y}$ plane;
see Fig. \ref{fig:opposite rotating spins}. Thus, the overlap between
the two column kets rapidly vanishes, and after a while it briefly
revives as they periodically meet in the $\hat{x}-\hat{y}$ plane.

\begin{figure}[H]
\centering{}\includegraphics[width=0.5\columnwidth]{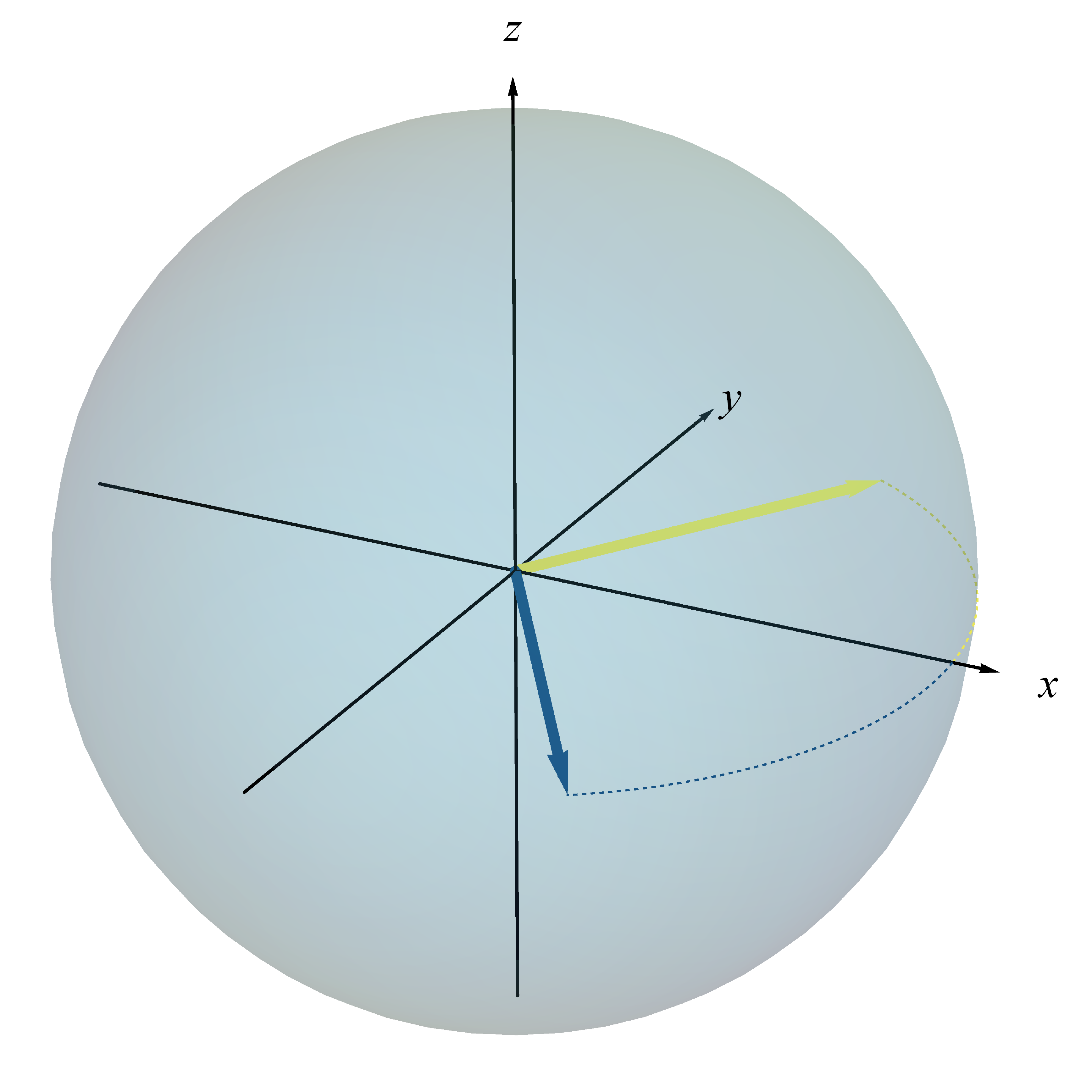}
\caption{\label{fig:opposite rotating spins}Column kets of maximally polarized
spin states rotate in opposite directions in the $\hat{x}-\hat{y}$
plane. The periodic brief alignments of the kets is responsible for
the periodic brief revivals of coherence.}
\end{figure}

As simplistic as the above example is, it demonstrates how the decoherence
of reduced states can be described without referring to subsystems
and instead focus on the BPT that specifies the state reduction. We
can solidify this observation by considering how the interaction terms
and the non-interaction terms of the Hamiltonian act on the BPT in
the generic bipartite system $\mathcal{H}_{A}\otimes\mathcal{H}_{B}$.

The self-Hamiltonian of system $A$ acts identically on all the columns
of the BPT as 
\[
H_{self:A}:=H_{A}\otimes I_{B}=\sum_{k}H_{A}\otimes\ketbra{b_{k}}{b_{k}}.
\]
This means that $H_{self:A}$ drives all the column kets in sync,
and that does not diminish their overlaps and does not cause decoherence.
The self-Hamiltonian of system $B$ does not generate dynamics inside
the columns at all, instead it generates dynamics inside the row subspaces
\[
H_{self:B}:=I_{A}\otimes H_{B}=\sum_{i}\ketbra{a_{i}}{a_{i}}\otimes H_{B}.
\]
This, of course, changes the reduced states but it does so unitarily.

What causes decoherence are the Hamiltonian terms that generate unsynchronized
evolutions inside the column subspaces which eliminates the overlaps
between the column kets; see Fig. \ref{fig:generic col kets} for
an illustration. This property is what characterizes the generic interaction
term

\[
H_{int}:=\sum_{k}H_{A_{k}}\otimes\ketbra kk,
\]
where the column Hamiltonians $H_{A_{k}}$ vary with $k$.

\begin{figure}[H]
\centering{}\includegraphics[viewport=75bp 150bp 1000bp 700bp,clip,width=0.49\columnwidth]{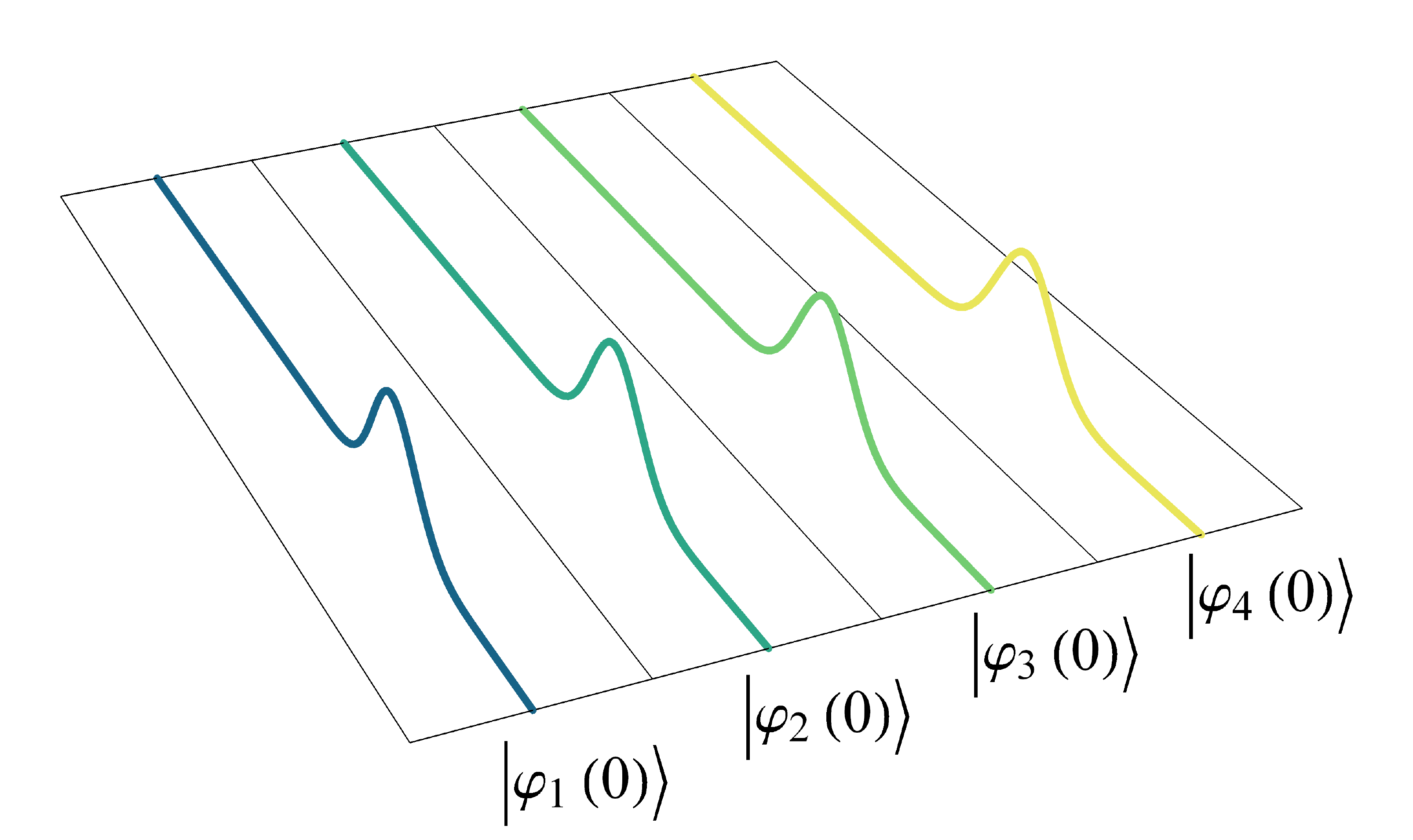}
\includegraphics[viewport=80bp 75bp 1000bp 625bp,clip,width=0.49\columnwidth]{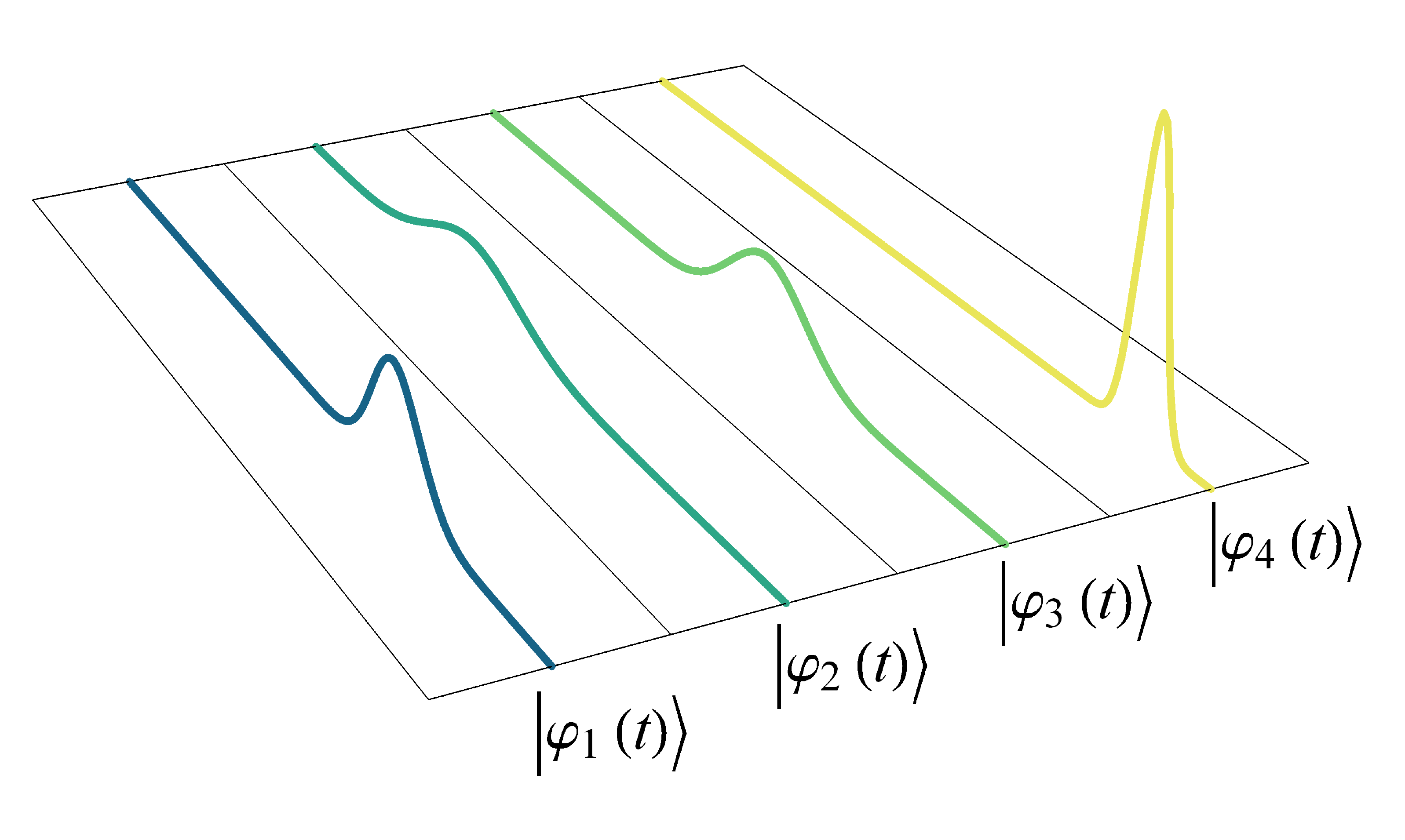}
\caption{\label{fig:generic col kets}Generic picture of column kets on a bipartition
table before (left) and after (right) unsynchronized evolution inside
the columns caused by the interaction term. The diminishing overlaps
between the different column kets translates into diminishing coherence
terms in the reduced state.}
\end{figure}

\section{State reductions and decoherence due to operational constraints\label{sec:State-reudcions-due}}

In the previous section we have derived the partial trace map by imposing
the operational constraint such that only the observables from the
algebra $I_{A}\otimes\mathcal{L}\left(\mathcal{H}_{B}\right)$ are
accessible. This does not imply that all the observables $I_{A}\otimes O_{B}$
have to be accessible, but it does exclude what definitely cannot
be accessed (information about $A$ in this case). The resulting state
reduction map accounts for the operational constraint of not having
access to one of the subsystems.

It is then natural to consider the operational constraints that go
beyond the restriction to physical subsystems. These constraints can
be the result of having only access to collective observables of a
composite system, or having a noisy quantum channel that renders some
observables irrelevant by randomizing their outcomes. In such scenarios,
what specifies the operational constraint is not the physical subsystem
decomposition, such as the system-environment split, but an algebra
$\mathcal{A}\subset\mathcal{L}\left(\mathcal{H}\right)$ of relevant
observables. This algebra, of course, may be an overstatement of the
practical reality and not all $O\in\mathcal{A}$ are necessarily accessible,
but it is still a useful notion that excludes what is definitely out
of reach.

For example, consider again the Hilbert space $\mathcal{H}:=\underline{l}\otimes\underline{\frac{1}{2}}$
of spin-$l$ and spin-$\frac{1}{2}$ and let us assume that in principle
we can measure whatever we want. However, there is some uncontrollable
effect that applies an unknown random rotation $U_{l}\left(R\right)\otimes I$
on the spin-$l$ subsystem, while leaving spin-$\frac{1}{2}$ unaffected.
Since we do not know what rotation has been applied, this effect renders
the orientation of spin-$l$ completely random. Under such circumstances
the only observables that remain relevant are the ones that are unaffected
by the unknown rotations. Therefore, this uncontrollable effect imposes
an operational constraint that restricts the algebra of relevant observables
to the ones that commute with all the rotations of spin-$l$:
\begin{equation}
\mathcal{A}=\left\{ O\in\mathcal{L}\left(\mathcal{H}\right)\,|\,\left[O,U_{l}\left(R\right)\otimes I\right]=0\,\,\,\forall R\right\} .\label{eq: commutant of spin_l rot}
\end{equation}

Once the algebra of relevant observables is identified, we would like
to define the state reduction map that accounts for the limitations
imposed by the operational constraints. Such reduced state will then
be the effective state of the system that we have access to in light
of the operational constraints.

Following the operator-algebraic derivation of the partial trace map
in the previous section, we can adopt the same approach to produce
state reduction maps for any operator algebra (provided we can find
its irreps structure). The derivation of the general state reduction
map is outlined as follows:
\begin{enumerate}
\item Identify the operator algebra $\mathcal{A}\subseteq\mathcal{L}\left(\mathcal{H}\right)$
of relevant observables.
\item Find its irreps structure $\mathcal{H}\cong\bigoplus_{q}\mathcal{H}_{\nu_{q}}\otimes\mathcal{H}_{\mu_{q}}$
and identify the minimal isometries $\left\{ S_{kl}^{q}\right\} $.
\item The Hilbert space of reduced states is given by 
\begin{equation}
\mathcal{H}_{\left\{ \mu_{q}\right\} }:=\bigoplus_{q}\mathcal{H}_{\mu_{q}},\label{eq: reduced Hilb space induced by Alg}
\end{equation}
and the state reduction map is
\begin{equation}
\tr_{\left\{ \nu_{q}\right\} }\left[\rho\right]:=\bigoplus_{q}\sum_{kl}\tr\left[S_{kl}^{q}\rho\right]\ket{m_{l}^{q}}\bra{m_{k}^{q}}.\label{eq:general state reduc map}
\end{equation}
\end{enumerate}
The physical meaning of state reduction maps depends on the physical
context. For the usual partial trace map, the reduced state is the
effective state of a physical subsystem. In the more general case
we can think of reduced states as the states of some virtual subsystems
which embody the degrees of freedom associated with the algebra of
relevant observables. Ultimately, it is the algebra of the relevant
observables that gives meaning to the reduced state.

Returning to our example of spin-$l$ and spin-$\frac{1}{2}$, the
irreps of rotations $U_{l}\left(R\right)\otimes I$ are specified
by the BPT

\noindent\begin{minipage}[c]{1\columnwidth}%
\begin{center}
\vspace{0.5\baselineskip}
\begin{tabular}{|c|c|c|c|c|}
\hline 
$+l,+\frac{1}{2}$ & $\cdots$ & $0,+\frac{1}{2}$ & $\cdots$ & $-l,+\frac{1}{2}$\tabularnewline
\hline 
$+l,-\frac{1}{2}$ & $\cdots$ & $0,-\frac{1}{2}$ & $\cdots$ & $-l,-\frac{1}{2}$\tabularnewline
\hline 
\end{tabular}.\vspace{0.5\baselineskip}
\par\end{center}%
\end{minipage} Since the algebra of relevant observables \eqref{eq: commutant of spin_l rot}
is the commutant of these rotations, the irreps we are interested
in are given by the transposition of this BPT

\noindent\begin{minipage}[c]{1\columnwidth}%
\begin{center}
\vspace{0.5\baselineskip}
\begin{tabular}{|c|c|}
\hline 
$+l,+\frac{1}{2}$ & $+l,-\frac{1}{2}$\tabularnewline
\hline 
$\vdots$ & $\vdots$\tabularnewline
\hline 
$0,+\frac{1}{2}$ & $0,-\frac{1}{2}$\tabularnewline
\hline 
$\vdots$ & $\vdots$\tabularnewline
\hline 
$-l,+\frac{1}{2}$ & $-l,-\frac{1}{2}$\tabularnewline
\hline 
\multicolumn{1}{c}{$\downarrow$} & \multicolumn{1}{c}{$\downarrow$}\tabularnewline
\hline 
$+\frac{1}{2}$ & $-\frac{1}{2}$\tabularnewline
\hline 
\end{tabular}\vspace{0.5\baselineskip}
\par\end{center}%
\end{minipage} As we have seen in the previous section, this BPT defines the partial
trace over spin-$l$. Not surprisingly, the appropriate state reduction
map that accounts for having the spin-$l$ subsystem randomly rotated,
is the erasure of all information about the state of spin-$l$.

In the more general cases of state reductions, the reduced Hilbert
space \eqref{eq: reduced Hilb space induced by Alg} is a direct sum
of orthogonal sectors $\left\{ \mathcal{H}_{\mu_{q}}\right\} $ identified
by the distinct inequivalent irreps of the algebra. These sectors
are commonly referred to as \emph{superselection sectors}. In the
case of the partial trace map we have only one distinct irrep so the
reduced Hilbert space has only one superselection sector. When multiple
superselection sectors are present (that is when the BPT has multiple
blocks), the state reduction map eliminates all coherence terms between
the basis elements belonging to distinct superselection sectors, regardless
of the state.

We will see how the superselection sectors arise from operational
constraints in the more elaborate examples below. There is one extremal
case, however, that we can briefly point out here.

Consider the operational constraint that allows only one observable
$O\in\mathcal{L}\left(\mathcal{H}\right)$ to be measured. In this
case, the algebra of relevant observables is just $\left\langle O\right\rangle $.
The minimal isometries of this algebra are the spectral projections
$\left\{ \Pi^{\left(\lambda\right)}\right\} $ of $O$, where the
eigenvalues $\lambda\neq0$ identify the distinct one-dimensional
irreps. Because the irreps are one-dimensional, the reduced Hilbert
space \eqref{eq: reduced Hilb space induced by Alg} in this case
is just $\bigoplus_{\lambda}\ket{\lambda}$. The state reduction map
\eqref{eq:general state reduc map} is then
\[
\rho\longmapsto\bigoplus_{\lambda}\tr\left[\Pi^{\left(\lambda\right)}\rho\right]\ket{\lambda}\bra{\lambda}.
\]
The resulting reduced state is completely diagonal and it represents
the probability distribution over the observable's outcomes $\left\{ \lambda\right\} $.

Not surprisingly, when we constrain the measurements to a single observable,
the state reduction map becomes the mapping of the quantum state $\rho$
to the classical probability distribution $p\left(\lambda\right):=\tr\left[\Pi^{\left(\lambda\right)}\rho\right]$
over the outcomes of that observable. As we can see, the extremal
constraint of having only one observable leads to the complete elimination
of coherence terms. Therefore, when more observables are available,
we expect an intermediate outcome where the coherence terms between
some subspaces are eliminated while the coherence terms inside these
subspaces are preserved.

Such elimination of coherence terms by the state reduction map is
what we call superselection, and it is distinct from the dynamical
elimination of coherence terms in the process of decoherence.\footnote{The dynamical elimination of coherence terms is sometimes referred
to as \emph{einselection}, which stands for environment-induced-superselection
\citep{zurek2003decoherence}.}

In order to clarify these ideas we will now study three more elaborate
examples.

\newcounter{example}
\refstepcounter{example}

\subsubsection*{Example \theexample \refstepcounter{example}}

In this example we will study a simple case where the superselection
sectors appear due to the lack of a shared reference frame. See \citep{Bartlett07}
for a review of this topic.

When the agent that prepares the states (Alice) and the agent that
measures them (Bob) do not share a common reference frame, it imposes
an operational constraint on the latter. Let us consider such situation
with the same system $\mathcal{H}:=\underline{l}\otimes\underline{\frac{1}{2}}$
of spin-$l$ and spin-$\frac{1}{2}$.

We will assume that Alice and Bob share a common reference frame for
the $\hat{z}$ axis but they are misaligned in the $\hat{x}-\hat{y}$
plane by an unknown angle $\theta$. If Alice can send multiple states
to Bob then he could implement some protocol for aligning his reference
frame with Alice by inferring the angle $\theta$ from the collection
of states. If, however, Bob receives only one state then from his
perspective it is rotated by an unknown angle around $\hat{z}$. The
only relevant observables that remain for Bob are the ones that commute
with $U_{\hat{z}}\left(\theta\right)$ for all $\theta$. Our goal
is to find the state reduction map that accounts for the lack of common
reference frame in the $\hat{x}-\hat{y}$ plane.

The abelian group $\left\{ U_{\hat{z}}\left(\theta\right)\right\} $
is generated by the single $J_{z}$ component of the total angular
momentum operator. The irreps of this group are all one-dimensional
and are given by the eigenvectors $\ket{j,m}$ of $J_{z}$, with distinct
eigenvalues $m$ identifying distinct irreps. For brevity, let us
specialize to $l=1$ so $j=1\pm\frac{1}{2}$.

The one-dimensional irreps of $\left\{ U_{\hat{z}}\left(\theta\right)\right\} $
are summarized by the BPT

\noindent\begin{minipage}[c]{1\columnwidth}%
\begin{center}
\vspace{0.5\baselineskip}
\begin{tabular}{c|c|c|c}
\cline{1-1} 
\multicolumn{1}{|c|}{$\frac{3}{2},\frac{3}{2}$} & \multicolumn{1}{c}{} & \multicolumn{1}{c}{} & \tabularnewline
\cline{1-2} \cline{2-2} 
 & $\frac{3}{2},\frac{1}{2}$ & \multicolumn{1}{c}{} & \tabularnewline
\cline{2-2} 
 & $\frac{1}{2},\frac{1}{2}$ & \multicolumn{1}{c}{} & \tabularnewline
\cline{2-3} \cline{3-3} 
\multicolumn{1}{c}{} &  & $\frac{3}{2},-\frac{1}{2}$ & \tabularnewline
\cline{3-3} 
\multicolumn{1}{c}{} &  & $\frac{1}{2},-\frac{1}{2}$ & \tabularnewline
\cline{3-4} \cline{4-4} 
\multicolumn{1}{c}{} & \multicolumn{1}{c}{} &  & \multicolumn{1}{c|}{$\frac{3}{2},-\frac{3}{2}$}\tabularnewline
\cline{4-4} 
\end{tabular}.\vspace{0.5\baselineskip}
\par\end{center}%
\end{minipage} Note that the abelian group $\left\{ U_{\hat{z}}\left(\theta\right)\right\} $
acts on the states in each BPT block with a different phase factor;
this is why the eigenvalues of $J_{z}$ distinguish the irreps.

The commutant algebra, and the implied state reduction map, are given
by the transposition of the BPT (note that we re-use the same labels
for the reduced basis on the bottom)

\noindent\begin{minipage}[c]{1\columnwidth}%
\begin{center}
\vspace{0.5\baselineskip}
\begin{tabular}{c|cc|cc|c}
\cline{1-1} 
\multicolumn{1}{|c|}{$\frac{3}{2},\frac{3}{2}$} &  & \multicolumn{1}{c}{} &  & \multicolumn{1}{c}{} & \tabularnewline
\cline{1-3} \cline{2-3} \cline{3-3} 
 & \multicolumn{1}{c|}{$\frac{3}{2},\frac{1}{2}$} & $\frac{1}{2},\frac{1}{2}$ &  & \multicolumn{1}{c}{} & \tabularnewline
\cline{2-5} \cline{3-5} \cline{4-5} \cline{5-5} 
\multicolumn{1}{c}{} &  &  & \multicolumn{1}{c|}{$\frac{3}{2},-\frac{1}{2}$} & $\frac{1}{2},-\frac{1}{2}$ & \tabularnewline
\cline{4-6} \cline{5-6} \cline{6-6} 
\multicolumn{1}{c}{} &  & \multicolumn{1}{c}{} &  &  & \multicolumn{1}{c|}{$\frac{3}{2},-\frac{3}{2}$}\tabularnewline
\cline{6-6} 
\multicolumn{1}{c}{$\downarrow$} & $\downarrow$ & \multicolumn{1}{c}{$\downarrow$} & $\downarrow$ & \multicolumn{1}{c}{$\downarrow$} & $\downarrow$\tabularnewline
\hline 
\multicolumn{1}{|c|}{$\frac{3}{2},\frac{3}{2}$} & \multicolumn{1}{c|}{$\frac{3}{2},\frac{1}{2}$} & $\frac{1}{2},\frac{1}{2}$ & \multicolumn{1}{c|}{$\frac{3}{2},-\frac{1}{2}$} & $\frac{1}{2},-\frac{1}{2}$ & \multicolumn{1}{c|}{$\frac{3}{2},-\frac{3}{2}$}\tabularnewline
\hline 
\end{tabular}.\vspace{0.5\baselineskip}
\par\end{center}%
\end{minipage} Instead of explicitly specifying all the minimal isometries needed
for the definition of the state reduction map as given in Eq. \eqref{eq:general state reduc map},
we can read the implied state reduction map directly from the BPT.
It tells us that all the basis elements reduce to themselves but only
the coherence terms between elements in the same row remain in tact;
all other coherence terms are eliminated.

One can verify explicitly using the definition in Eq. \eqref{eq:general state reduc map}
that the implied state reduction map reduces the pure states $\ket{\psi}=\sum_{j,m}c_{j,m}\ket{j,m}$
to
\[
\ket{\psi}\longmapsto\rho_{Bob}=\begin{pmatrix}\left|c_{\frac{3}{2},\frac{3}{2}}\right|^{2}\\
 & \left|c_{\frac{3}{2},\frac{1}{2}}\right|^{2} & c_{\frac{3}{2},\frac{1}{2}}c_{\frac{1}{2},\frac{1}{2}}^{*}\\
 & c_{\frac{1}{2},\frac{1}{2}}c_{\frac{3}{2},\frac{1}{2}}^{*} & \left|c_{\frac{1}{2},\frac{1}{2}}\right|^{2}\\
 &  &  & \left|c_{\frac{3}{2},-\frac{1}{2}}\right|^{2} & c_{\frac{3}{2},-\frac{1}{2}}c_{\frac{1}{2},-\frac{1}{2}}^{*}\\
 &  &  & c_{\frac{1}{2},-\frac{1}{2}}c_{\frac{3}{2},-\frac{1}{2}}^{*} & \left|c_{\frac{1}{2},-\frac{1}{2}}\right|^{2}\\
 &  &  &  &  & \left|c_{\frac{3}{2},-\frac{3}{2}}\right|^{2}
\end{pmatrix}.
\]
The coherence terms that got eliminated are exactly where the unknown
phase factors due to the unknown rotation $U_{\hat{z}}\left(\theta\right)$
were present. Since Bob has no access to this phase factor, whether
Alice sends him $\ket{\psi_{1}}$ or $\ket{\psi_{2}}=U_{\hat{z}}\left(\theta'\right)\ket{\psi_{1}}$
(for any $\theta'$), there is nothing he can do that will differentiate
the two cases. The resulting state reduction map accounts for that
by eliminating the coherence terms whose values remain unknown within
Bob's operational constraint.

In this example we saw a state reduction map that enforces superselection
between states with different eigenvalues of $J_{z}$ due to the lack
of common reference frame. In the next example we will see a state
reduction map that combines superselection with a partial-trace-like
map.

\subsubsection*{Example \theexample \refstepcounter{example}}

In this example we will consider the dynamics of a reduced state.
The main takeaway here is that decoherence does not have to be only
the consequence of interactions with inaccessible subsystems, it can
also arise from other combinations of dynamics and operational constraints.

Let us consider the composite system $\mathcal{H}:=\underline{l}\otimes\underline{\frac{1}{2}}\otimes\underline{\frac{1}{2}}$
of two spin-$\frac{1}{2}$'s and an integer angular momentum $l$,
such as the Hydrogen atom. Assume that we have a large ensemble of
$N$ Hydrogen atoms and we have come up with a procedure that allows
us to prepare all of them in the same arbitrary state $\ket{\psi}\in\mathcal{H}$.
Unfortunately, we cannot control the individual orientations of the
atoms so instead of having the collective state $\ket{\psi}^{\otimes N}$,
each atom ends up in the state $U\left(R\right)\ket{\psi}$, where
$U\left(R\right)$ is a random rotation that is independently chosen
for each atom. If we sample a single atom from this ensemble, what
is the effective state of this atom?

The operational constraint that the limitation of state preparations
imposes, is the restriction to rotationally invariant measurements.
The algebra of relevant observables is therefore the commutant of
the $SU\left(2\right)$ group. We can find the irreps structure of
this algebra from the representation theory of $SU\left(2\right)$.

We know that our Hilbert space decomposes into
\[
\underline{l}\otimes\underline{\frac{1}{2}}\otimes\underline{\frac{1}{2}}=\underline{l}\otimes\left(\underline{1}\oplus\underline{0}\right)=\underline{l+1}\oplus\underline{l}\oplus\underline{l-1}\oplus\underline{l}.
\]
That is, under $SU\left(2\right)$ rotations we have one irrep that
transforms as $j=l+1$, one that transforms as $j=l-1$, and two irreps
that transform as $j=l$. Note that the two $j=l$ irreps can be distinguished
by whether the spins are in the singlet or triplet states. We thus
have the $j=l\pm1$ total angular momentum basis $\ket{l\pm1,m}$,
and the singlet / triplet variants $\alpha=s,t$ of the $j=l$ total
angular momentum basis $\ket{l,\alpha,m}$. The irreps of $SU\left(2\right)$
can then be specified by the BPT (in each row $m=j,...,-j$).

\noindent\begin{minipage}[c]{1\columnwidth}%
\begin{center}
\vspace{0.5\baselineskip}
\begin{tabular}{ccc|c|c|c|ccc}
\cline{1-3} \cline{2-3} \cline{3-3} 
\multicolumn{1}{|c|}{$\cdots$} & \multicolumn{1}{c|}{$l+1,m$} & $\cdots$ & \multicolumn{1}{c}{} & \multicolumn{1}{c}{} & \multicolumn{1}{c}{} &  &  & \tabularnewline
\cline{1-6} \cline{2-6} \cline{3-6} \cline{4-6} \cline{5-6} \cline{6-6} 
 &  &  & $\cdots$ & $l,s,m$ & $\cdots$ &  &  & \tabularnewline
\cline{4-6} \cline{5-6} \cline{6-6} 
 &  &  & $\cdots$ & $l,t,m$ & $\cdots$ &  &  & \tabularnewline
\cline{4-9} \cline{5-9} \cline{6-9} \cline{7-9} \cline{8-9} \cline{9-9} 
 &  & \multicolumn{1}{c}{} & \multicolumn{1}{c}{} & \multicolumn{1}{c}{} &  & \multicolumn{1}{c|}{$\cdots$} & \multicolumn{1}{c|}{$l-1,m$} & \multicolumn{1}{c|}{$\cdots$}\tabularnewline
\cline{7-9} \cline{8-9} \cline{9-9} 
\end{tabular}\vspace{0.5\baselineskip}
\par\end{center}%
\end{minipage}

The irreps structure of the commutant algebra is given by its transposition

\noindent\begin{minipage}[c]{1\columnwidth}%
\begin{center}
\vspace{0.5\baselineskip}
\begin{tabular}{c|cc|c}
\cline{1-1} 
\multicolumn{1}{|c|}{$\vdots$} &  & \multicolumn{1}{c}{} & \tabularnewline
\cline{1-1} 
\multicolumn{1}{|c|}{$l+1,m$} &  & \multicolumn{1}{c}{} & \tabularnewline
\cline{1-1} 
\multicolumn{1}{|c|}{$\vdots$} &  & \multicolumn{1}{c}{} & \tabularnewline
\cline{1-3} \cline{2-3} \cline{3-3} 
 & \multicolumn{1}{c|}{$\vdots$} & $\vdots$ & \tabularnewline
\cline{2-3} \cline{3-3} 
 & \multicolumn{1}{c|}{$l,s,m$} & $l,t,m$ & \tabularnewline
\cline{2-3} \cline{3-3} 
 & \multicolumn{1}{c|}{$\vdots$} & $\vdots$ & \tabularnewline
\cline{2-4} \cline{3-4} \cline{4-4} 
\multicolumn{1}{c}{} &  &  & \multicolumn{1}{c|}{$\vdots$}\tabularnewline
\cline{4-4} 
\multicolumn{1}{c}{} &  &  & \multicolumn{1}{c|}{$l-1,m$}\tabularnewline
\cline{4-4} 
\multicolumn{1}{c}{} &  &  & \multicolumn{1}{c|}{$\vdots$}\tabularnewline
\cline{4-4} 
\multicolumn{1}{c}{$\downarrow$} & $\downarrow$ & \multicolumn{1}{c}{$\downarrow$} & $\downarrow$\tabularnewline
\hline 
\multicolumn{1}{|c|}{$l+1$} & \multicolumn{1}{c|}{$l,s$} & $l,t$ & \multicolumn{1}{c|}{$l-1$}\tabularnewline
\hline 
\end{tabular}\vspace{0.5\baselineskip}
\par\end{center}%
\end{minipage} The reduced Hilbert space consists of the basis $\ket{l+1}$, $\ket{l,s}$,
$\ket{l,t}$, $\ket{l-1}$.

The pure state $\ket{\psi}$ can now be expanded in the column kets
of the above BPT
\[
\ket{\psi}=\ket{\varphi_{l+1}}+\ket{\varphi_{l,s}}+\ket{\varphi_{l,t}}+\ket{\varphi_{l-1}}.
\]
Since different BPT blocks specify distinct irreps, no coherences
between column kets supported on different blocks are preserved; this
is superselection. In the central block, however, where we have two
columns, the reduced coherence terms are given by the overlaps $\bra{\varphi_{l,t}}S_{ts}\ket{\varphi_{l,s}}$
(where $S_{ts}:=\sum_{m=-l}^{l}\ket{l,t,m}\bra{l,s,m}$) between the
column kets; this is the partial-trace-like reduction. Overall, the
state reduction map is summarized as

\[
\ket{\psi}\longmapsto\rho=\begin{pmatrix}\braket{\varphi_{l+1}}{\varphi_{l+1}}\\
 & \braket{\varphi_{l,s}}{\varphi_{l,s}} & \bra{\varphi_{l,t}}S_{ts}\ket{\varphi_{l,s}}\\
 & \bra{\varphi_{l,s}}S_{st}\ket{\varphi_{l,t}} & \braket{\varphi_{l,t}}{\varphi_{l,t}}\\
 &  &  & \braket{\varphi_{l-1}}{\varphi_{l-1}}
\end{pmatrix}.
\]

Therefore, the only information we are left with is the one qubit
encoded between the triplet and singlet variants of the $j=l$ irrep,
and the overall probability distribution over the total angular momentum
$j=l+1,l,l-1$. The consequence of such operational constraint is
even more pronounced if we consider unitary dynamics acting on this
system.

Even the simple Hamiltonian of uniform magnetic field along the $\hat{y}$
axis (without spin-spin or spin-orbit interactions)
\[
H=\epsilon L_{y}+S_{1;y}+S_{2;y}
\]
can induce decoherence in such reduced states. Here $L_{y}$, $S_{1;y}$,
$S_{2;y}$ are the $\hat{y}$ components of the individual angular
momentum operators, and $\epsilon$ is the coupling strength of the
orbital angular momentum to the external field (for the two spins
it is normalized to $1$). We will now see that the non-uniformity
of the coupling strengths is responsible for the decoherence.

Let us separate the Hamiltonian into the uniform and the difference
parts: 
\[
H=\epsilon\left(L_{y}+S_{1;y}+S_{2;y}\right)+\left(1-\epsilon\right)\left(S_{1;y}+S_{2;y}\right).
\]
Now, consider how these two terms act on the above column kets. The
first term is the component $J_{y}=L_{y}+S_{1;y}+S_{2;y}$ of the
total angular momentum operator so it generates global rotations around
the $\hat{y}$ axis according to the representations $j=l+1,l,l-1$.
Therefore, the term $J_{y}$ generates identical time evolutions inside
the two columns of total angular momentum $j=l$, and it does not
map between the columns.

The second Hamiltonian term is the $\hat{y}$ component of the total
spin operator $S_{y}=S_{1;y}+S_{2;y}$. This operator acts trivially
on the singlet spin states $\ket{\varphi_{l,s}}$, but otherwise,
it does not preserve the total angular momentum $j$ and it is free
to map between all $\alpha\neq s$ columns. So, for $\epsilon\neq1$
the overlap between the column kets $\ket{\varphi_{l,s}}$ and $\ket{\varphi_{l,t}}$
will fluctuate as one column ket will remain stationary while the
other will not.

Following the above distinction between the two terms of the Hamiltonian,
we can label them as the effective ``self'' and ``interaction''
terms
\[
H=\epsilon H_{self}+\left(1-\epsilon\right)H_{int}
\]
\[
H_{self}:=L_{y}+S_{1;y}+S_{2;y}\,\,\,\,\,\,\,\,\,\,\,\,\,\,\,\,\,\,H_{int}:=S_{1;y}+S_{2;y}.
\]
The term $H_{self}$ only changes the global orientation of the system
(which we are completely ignorant of due to the operational constraint),
so we can think of $H_{self}$ as the self Hamiltonian of the inaccessible
environment. We can then think of the other term $H_{int}$, as the
effective interaction term because it couples the singlet-triplet
qubit to the global orientation of the system.

For concreteness, let us assume that $l=3$ and $\epsilon=0$ so there
is only the effective interaction term $H=H_{int}$. If the initial
unreduced state is 
\[
\ket{\psi}=\frac{1}{\sqrt{2}}\ket{l,s,0}+\frac{1}{\sqrt{2}}\ket{l,t,0},
\]
then the initial reduced state is 
\[
\rho=\frac{1}{2}\begin{pmatrix}0\\
 & 1 & 1\\
 & 1 & 1\\
 &  &  & 0
\end{pmatrix}.
\]

The purity of $\rho$ as a function of time under the evolution with
$H$ is shown in Fig. \ref{fig:purity of Bob's state}. This illustrates
how the singlet-triplet qubit periodically decoheres into the effective
environment imposed by the operational constraint.

\begin{figure}[H]
\centering{}\includegraphics[width=0.7\columnwidth]{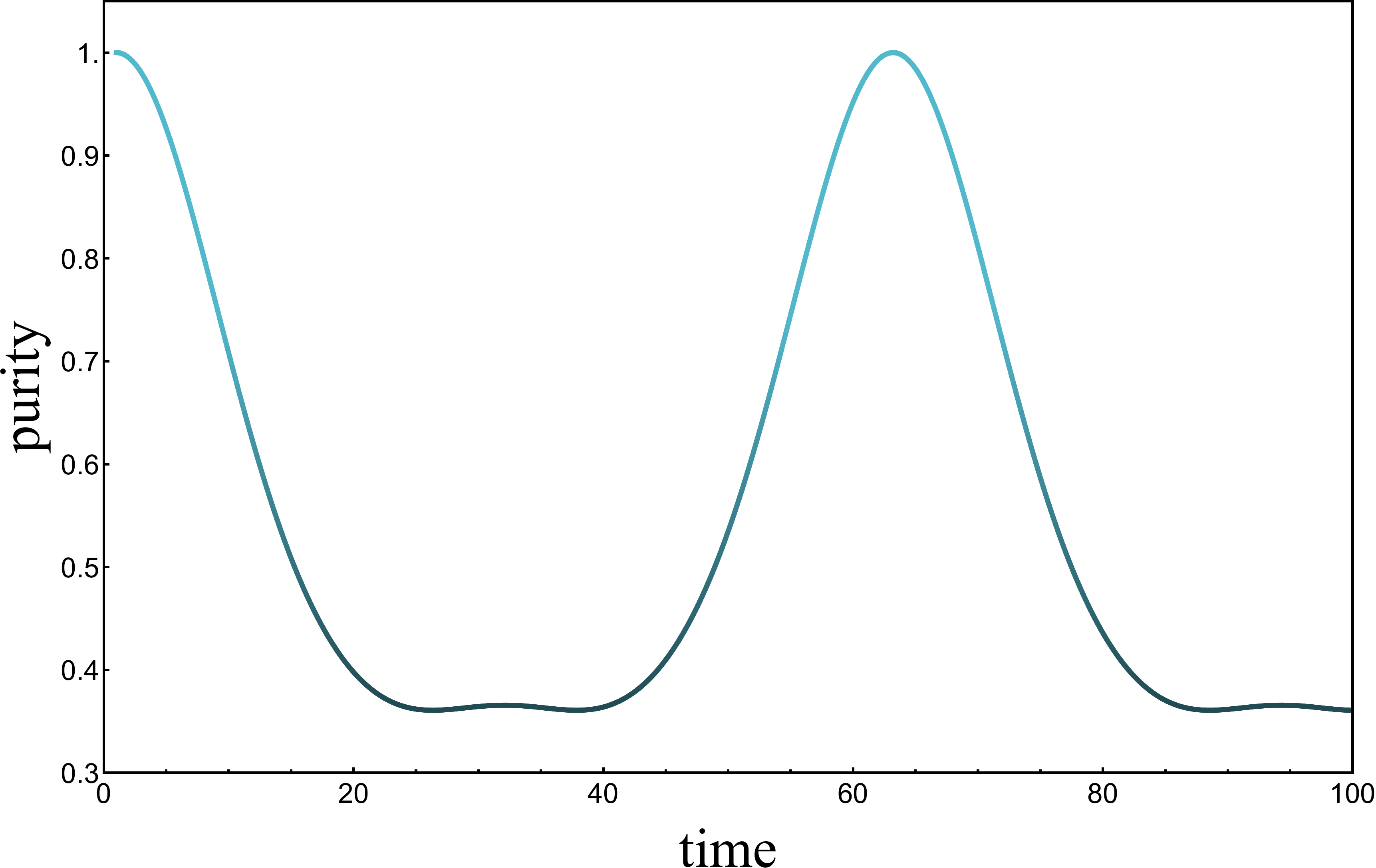}
\caption{\label{fig:purity of Bob's state}The purity of the reduced state
under the evolution with $H=H_{int}$.}
\end{figure}

In Fig. \ref{fig:Bob's column kets on the BPT } we can see the BPT
perspective on this decoherence process. At $t=0$ the initial state
is the even superposition of the two basis elements in the two $j=l$
columns. As time progresses, the column ket of the singlet remains
unchanged while the column ket of the triplet evolves both inside
the $j=l$ column and it leaks into the $j=l\pm1$ columns. Thus,
the overlap between the initial singlet and triplet column kets diminishes.
Periodically, as the spin rotations complete a full cycle, the triplet
column ket returns to its initial configuration (around $t=63$ for
the first time), which results in the revivals of coherence.

\begin{figure}[H]
\centering{}\includegraphics[viewport=0bp 80bp 720bp 540bp,clip,width=1\columnwidth]{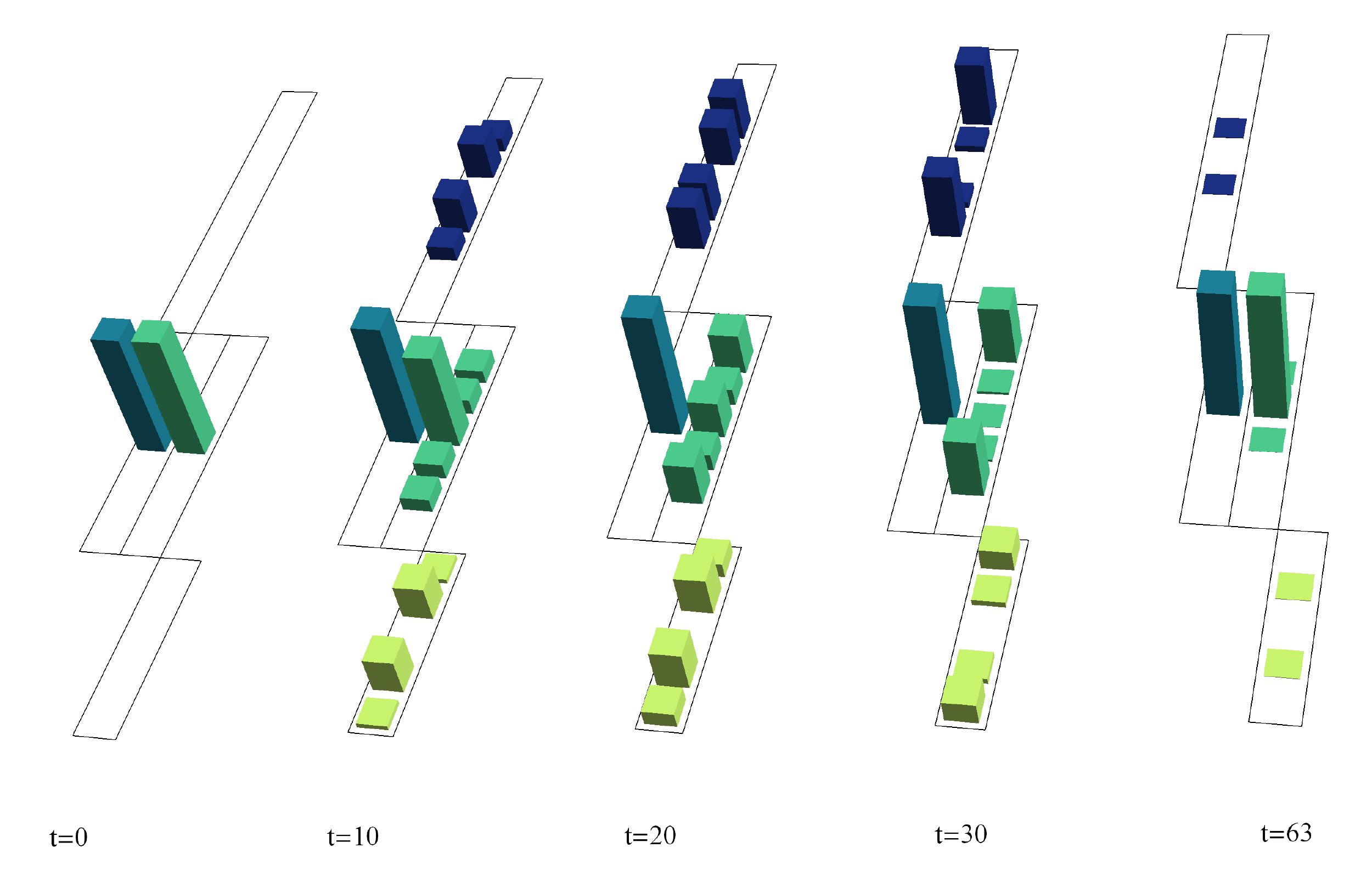}
\caption{\label{fig:Bob's column kets on the BPT }The column kets of the unreduced
state as it evolves on the BPT. The height of each bar corresponds
to the absolute value of the coefficient of the underlying BPT basis
element. The distinct bar color is assigned to each column for contrast
and has no numeric meaning.}
\end{figure}

In conclusion, there is no conceptual difference between decoherence
in this example and decoherence from coupling to a physical environment.
Ultimately, it is the combination of operational constraints and unitary
dynamics that leads to non-unitary evolutions of reduced states.

\subsubsection*{Example \theexample \refstepcounter{example}}

A basic question in quantum information is how to encode a logical
qubit in a physical system in a way that will be least susceptible
to noise. The idea of decoherence free subspaces and subsystems \citep{Knill2000,Kempe01,lidar2003decoherence,lidar2014review}
has emerged to address this question. The main obstacle to finding
decoherence free subspaces and subsystems is finding the relevant
irreps structure, so it is only the irreps of group representations
that are commonly treated. In this example we will demonstrate how
the Scattering Algorithm can expand the scope of treatable problems
beyond group representations.

As before, we consider the composite system $\mathcal{H}:=\underline{l}\otimes\underline{\frac{1}{2}}$
of spin-$l$ (either integer or half integer) and spin-$\frac{1}{2}$.
Assume that Alice wants to send some quantum information to Bob by
encoding it into the physical state of this system. Furthermore, Alice
knows that on its way it will be susceptible to noise dominated by
the Ising interaction $L_{z}\otimes S_{z}$, and spin-$\frac{1}{2}$
rotations generated by $I\otimes S_{x}$. The question is how can
Alice encode quantum information (and how many qubits) without it
being affected by the dominant sources of noise.

We can answer this question by considering the noise as an imposed
operational constraint such that only the observables that commute
with both $L_{z}\otimes S_{z}$ and $I\otimes S_{x}$ remain relevant.
The reduced states will then contain all the information that is unaffected
by noise and the state reduction map will tell us how to encode and
decode it.

In order to address this problem we need to find the irreps of the
commutant of the algebra generated by both $L_{z}\otimes S_{z}$ and
$I\otimes S_{x}$. We know the irreps of the individual terms---they
are given by their eigenvectors---but we do not know the irreps structure
of the combined algebra.

In the following we will use the product basis $\ket{m,\uparrow}$,
$\ket{m,\downarrow}$, where $m=l,...,-l$ and $\uparrow,\downarrow$
are the spin-up spin-down states along $\hat{z}$, or alternatively
$\ket{m,+}$,$\ket{m,-}$ where $+,-$ are the spin-up spin-down states
along $\hat{x}$. Before we address the problem for general $l$,
let us solve it for $l=\frac{1}{2}$.

The combination of the irreps of $L_{z}\otimes S_{z}$ and of $I\otimes S_{x}$
can be expressed as the addition of BPTs

\noindent\begin{minipage}[c]{1\columnwidth}%
\begin{center}
\vspace{0.5\baselineskip}
\begin{tabular}{|c|c|}
\cline{1-1} 
$\frac{1}{2},\uparrow$ & \multicolumn{1}{c}{}\tabularnewline
\cline{1-1} 
$-\frac{1}{2},\downarrow$ & \multicolumn{1}{c}{}\tabularnewline
\hline 
\multicolumn{1}{c|}{} & $\frac{1}{2},\downarrow$\tabularnewline
\cline{2-2} 
\multicolumn{1}{c|}{} & $-\frac{1}{2},\uparrow$\tabularnewline
\cline{2-2} 
\end{tabular}\hspace*{10bp}$\boldsymbol{+}$\hspace*{10bp} %
\begin{tabular}{|c|c|}
\cline{1-1} 
$\frac{1}{2},+$ & \multicolumn{1}{c}{}\tabularnewline
\cline{1-1} 
$-\frac{1}{2},+$ & \multicolumn{1}{c}{}\tabularnewline
\hline 
\multicolumn{1}{c|}{} & $\frac{1}{2},-$\tabularnewline
\cline{2-2} 
\multicolumn{1}{c|}{} & $-\frac{1}{2},-$\tabularnewline
\cline{2-2} 
\end{tabular} \vspace{0.5\baselineskip}
\par\end{center}%
\end{minipage} The left BPT consists of two single-column blocks that correspond
to the distinct eigenvalues of $L_{z}\otimes S_{z}$ with the two
degenerate eigenvectors in each columns. The right BPT comes from
$I\otimes S_{x}$ with similar interpretation. The scattering calculations
in the case of $l=\frac{1}{2}$ are very simple, however, in a anticipation
of the general case let us simplify things even further.

First, we note that separate blocks of BPTs can be detached into separate
terms in the sum, that is

\noindent\begin{minipage}[c]{1\columnwidth}%
\begin{center}
\vspace{0.5\baselineskip}
\begin{tabular}{|c|c|}
\cline{1-1} 
$\frac{1}{2},\uparrow$ & \multicolumn{1}{c}{}\tabularnewline
\cline{1-1} 
$-\frac{1}{2},\downarrow$ & \multicolumn{1}{c}{}\tabularnewline
\hline 
\multicolumn{1}{c|}{} & $\frac{1}{2},\downarrow$\tabularnewline
\cline{2-2} 
\multicolumn{1}{c|}{} & $-\frac{1}{2},\uparrow$\tabularnewline
\cline{2-2} 
\end{tabular} $\boldsymbol{+}$ %
\begin{tabular}{|c|c|}
\cline{1-1} 
$\frac{1}{2},+$ & \multicolumn{1}{c}{}\tabularnewline
\cline{1-1} 
$-\frac{1}{2},+$ & \multicolumn{1}{c}{}\tabularnewline
\hline 
\multicolumn{1}{c|}{} & $\frac{1}{2},-$\tabularnewline
\cline{2-2} 
\multicolumn{1}{c|}{} & $-\frac{1}{2},-$\tabularnewline
\cline{2-2} 
\end{tabular} \hspace*{10bp}$=$\hspace*{10bp}%
\begin{tabular}{|c|}
\hline 
$\frac{1}{2},\uparrow$\tabularnewline
\hline 
$-\frac{1}{2},\downarrow$\tabularnewline
\hline 
\end{tabular} $\boldsymbol{+}$ %
\begin{tabular}{|c|}
\hline 
$\frac{1}{2},\downarrow$\tabularnewline
\hline 
$-\frac{1}{2},\uparrow$\tabularnewline
\hline 
\end{tabular} $\boldsymbol{+}$ %
\begin{tabular}{|c|}
\hline 
$\frac{1}{2},+$\tabularnewline
\hline 
$-\frac{1}{2},+$\tabularnewline
\hline 
\end{tabular} $\boldsymbol{+}$ %
\begin{tabular}{|c|}
\hline 
$\frac{1}{2},-$\tabularnewline
\hline 
$-\frac{1}{2},-$\tabularnewline
\hline 
\end{tabular} \vspace{0.5\baselineskip}
\par\end{center}%
\end{minipage} We can do that because the set of operators that we can generate
from either side of this equation is the same, so it is the same algebra.

The second simplification is that we can drop redundant terms in the
combination. If we know that some terms can be generated by other
terms then they do not add anything to the combined algebra, and therefore
can be dropped. In this case we can drop any one of the columns because
the sole projection that it defines can be spanned by the other three.

We end up with the following combination of BPTs

\noindent\begin{minipage}[c]{1\columnwidth}%
\begin{center}
\vspace{0.5\baselineskip}
\begin{tabular}{|c|}
\hline 
$\frac{1}{2},\uparrow$\tabularnewline
\hline 
$-\frac{1}{2},\downarrow$\tabularnewline
\hline 
\end{tabular}\hspace*{10bp}$\boldsymbol{+}$\hspace*{10bp} %
\begin{tabular}{|c|}
\hline 
$\frac{1}{2},\downarrow$\tabularnewline
\hline 
$-\frac{1}{2},\uparrow$\tabularnewline
\hline 
\end{tabular} \hspace*{10bp}$\boldsymbol{+}$\hspace*{10bp} %
\begin{tabular}{|c|}
\hline 
$\frac{1}{2},+$\tabularnewline
\hline 
$-\frac{1}{2},+$\tabularnewline
\hline 
\end{tabular} .\vspace{0.5\baselineskip}
\par\end{center}%
\end{minipage} The three projections defined by these columns are
\begin{align*}
\Pi_{zz;\frac{1}{4}} & :=\ketbra{\frac{1}{2},\uparrow}{\frac{1}{2},\uparrow}+\ketbra{-\frac{1}{2},\downarrow}{-\frac{1}{2},\downarrow} & \,\,\,\,\,\,\,\,\, & \Pi_{zz;-\frac{1}{4}}:=\ketbra{\frac{1}{2},\downarrow}{\frac{1}{2},\downarrow}+\ketbra{-\frac{1}{2},\uparrow}{-\frac{1}{2},\uparrow}
\end{align*}
\[
\Pi_{x;\frac{1}{2}}:=\ketbra{\frac{1}{2},+}{\frac{1}{2},+}+\ketbra{-\frac{1}{2},+}{-\frac{1}{2},+}.
\]
The subscripts $zz;\pm\frac{1}{4}$ refers to the $\pm\frac{1}{4}$
eigenvalues of $L_{z}\otimes S_{z}$, and $x;\frac{1}{2}$ refers
to the $\frac{1}{2}$ eigenvalue of $I\otimes S_{x}$. By scattering
\[
\begin{array}{c}
\Pi_{zz;\frac{1}{4}}\\
\\
\Pi_{x;\frac{1}{2}}
\end{array}\Diagram{fdA &  & fuA\\
 & f\\
fuA &  & fdA
}
\begin{array}{c}
\Pi_{zz;\frac{1}{4}}\\
\\
\Pi_{x;\frac{1}{2}}
\end{array}\hspace{4cm}\begin{array}{c}
\Pi_{zz;-\frac{1}{4}}\\
\\
\Pi_{x;\frac{1}{2}}
\end{array}\Diagram{fdA &  & fuA\\
 & f\\
fuA &  & fdA
}
\begin{array}{c}
\Pi_{zz;-\frac{1}{4}}\\
\\
\Pi_{x;\frac{1}{2}}
\end{array}
\]
we learn that these projections are reflecting so the resulting reflection
network is shown in Fig \ref{fig:exmprefnet4}.

\begin{figure}[H]
\begin{centering}
\includegraphics[viewport=0bp 250bp 737bp 453bp,clip,width=0.6\columnwidth]{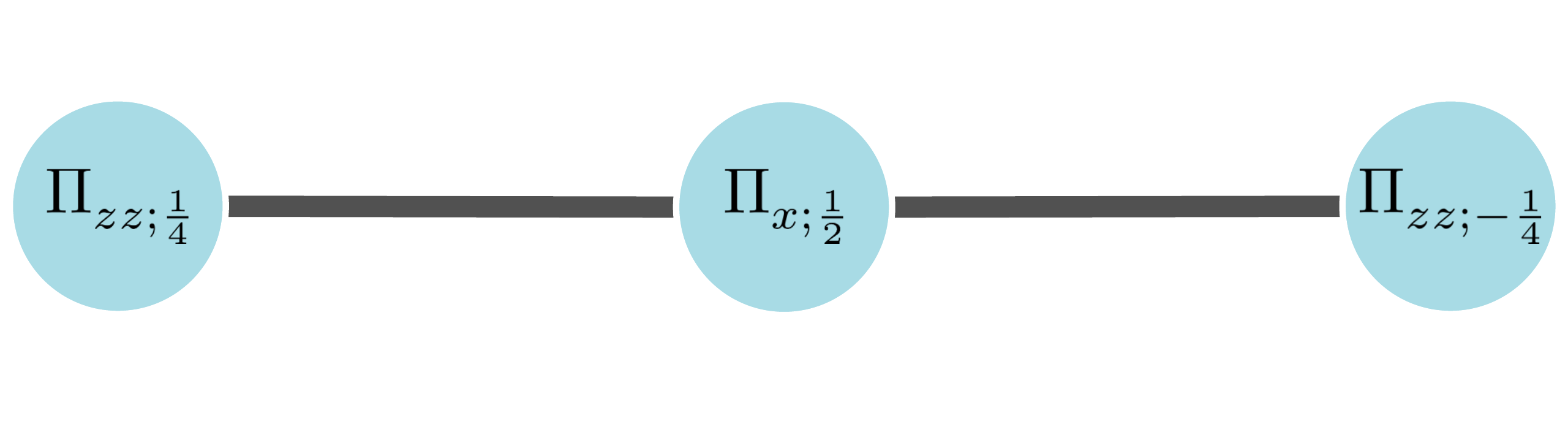}
\par\end{centering}
\centering{}\caption{\label{fig:exmprefnet4}The final reflection network for $l=\frac{1}{2}$.}
\end{figure}
The condition of minimality holds because the reflection network has
no cycles so all path-isometries between the same vertices have to
follow the same paths and therefore be equal. The condition of completeness
holds because $\Pi_{zz;\frac{1}{4}}+\Pi_{zz;-\frac{1}{4}}=I$.

The columns of the new BPT are determined by the maximal independent
set in the network, which is $\left\{ \Pi_{zz;\frac{1}{4}},\Pi_{zz;-\frac{1}{4}}\right\} $.
The choice and alignment of basis elements in the columns of the new
BPT is given by the path-isometry that connects the independent vertices,
which is 
\[
S_{-\frac{1}{4},\frac{1}{4}}\propto\Pi_{zz;-\frac{1}{4}}\Pi_{x;\frac{1}{2}}\Pi_{zz;\frac{1}{4}}\propto\ketbra{\frac{1}{2},\downarrow}{\frac{1}{2},\uparrow}+\ketbra{-\frac{1}{2},\uparrow}{-\frac{1}{2},\downarrow}.
\]
The resulting BPT is therefore

\noindent\begin{minipage}[c]{1\columnwidth}%
\begin{center}
\vspace{0.5\baselineskip}
\begin{tabular}{|c|}
\hline 
$\frac{1}{2},\uparrow$\tabularnewline
\hline 
$-\frac{1}{2},\downarrow$\tabularnewline
\hline 
\end{tabular}\hspace*{10bp}$\boldsymbol{+}$\hspace*{10bp} %
\begin{tabular}{|c|}
\hline 
$\frac{1}{2},\downarrow$\tabularnewline
\hline 
$-\frac{1}{2},\uparrow$\tabularnewline
\hline 
\end{tabular} \hspace*{10bp}$\boldsymbol{+}$\hspace*{10bp} %
\begin{tabular}{|c|}
\hline 
$\frac{1}{2},+$\tabularnewline
\hline 
$-\frac{1}{2},+$\tabularnewline
\hline 
\end{tabular}\hspace*{10bp}$\boldsymbol{=}$\hspace*{10bp} %
\begin{tabular}{|c|c|}
\hline 
$\frac{1}{2},\uparrow$ & $\frac{1}{2},\downarrow$\tabularnewline
\hline 
$-\frac{1}{2},\downarrow$ & $-\frac{1}{2},\uparrow$\tabularnewline
\hline 
\end{tabular} .\vspace{0.5\baselineskip}
\par\end{center}%
\end{minipage}

In order to get the irreps of the commutant we transpose the resulting
BPT and label the reduced basis according to the common state of spin-$l$
in each column:

\noindent\begin{minipage}[c]{1\columnwidth}%
\begin{center}
\vspace{0.5\baselineskip}
\begin{tabular}{|c|c|}
\hline 
$\frac{1}{2},\uparrow$ & $-\frac{1}{2},\downarrow$\tabularnewline
\hline 
$\frac{1}{2},\downarrow$ & $-\frac{1}{2},\uparrow$\tabularnewline
\hline 
\multicolumn{1}{c}{$\downarrow$} & \multicolumn{1}{c}{$\downarrow$}\tabularnewline
\hline 
$\frac{1}{2}$ & $-\frac{1}{2}$\tabularnewline
\hline 
\end{tabular} .\vspace{0.5\baselineskip}
\par\end{center}%
\end{minipage} For the state reduction map we explicitly define the minimal isometries
given by the alignment of columns 
\begin{align*}
S_{\frac{1}{2},\frac{1}{2}} & :=\ketbra{\frac{1}{2},\uparrow}{\frac{1}{2},\uparrow}+\ketbra{\frac{1}{2},\downarrow}{\frac{1}{2},\downarrow} & \,\,\,\,\,\,\,\,\, & S_{-\frac{1}{2},-\frac{1}{2}}:=\ketbra{-\frac{1}{2},\downarrow}{-\frac{1}{2},\downarrow}+\ketbra{-\frac{1}{2},\uparrow}{-\frac{1}{2},\uparrow}
\end{align*}
\[
S_{-\frac{1}{2},\frac{1}{2}}:=\ketbra{-\frac{1}{2},\downarrow}{\frac{1}{2},\uparrow}+\ketbra{-\frac{1}{2},\uparrow}{\frac{1}{2},\downarrow}.
\]
The state reduction map is then given by

\[
\ket{\psi}\longmapsto\rho_{Bob}=\begin{pmatrix}\bra{\psi}S_{\frac{1}{2},\frac{1}{2}}\ket{\psi} & \bra{\psi}S_{-\frac{1}{2},\frac{1}{2}}\ket{\psi}\\
\bra{\psi}S_{-\frac{1}{2},\frac{1}{2}}^{\dagger}\ket{\psi} & \bra{\psi}S_{-\frac{1}{2},-\frac{1}{2}}\ket{\psi}
\end{pmatrix}.
\]

The above BPT identifies a bipartition of the Hilbert space into two
virtual subsystems, and the state reduction map is the partial-trace-like
map over one of these subsystems. It may be tempting to think of this
state reduction map as the partial trace over the second spin, but
it is not quite the case. Although the product states such as $\ket{\pm\frac{1}{2},\uparrow}$
or $\ket{\pm\frac{1}{2},\downarrow}$ reduce with this map to $\ket{\pm\frac{1}{2}}$,
other product states such as $\ket{\psi}=\frac{1}{\sqrt{2}}\ket{\frac{1}{2},\downarrow}+\frac{1}{\sqrt{2}}\ket{-\frac{1}{2},\downarrow}$
do not reduce to $\frac{1}{\sqrt{2}}\ket{\frac{1}{2}}+\frac{1}{\sqrt{2}}\ket{-\frac{1}{2}}$
but to the completely mixed state
\[
\ket{\psi}\longmapsto\rho_{Bob}=\begin{pmatrix}\frac{1}{2} & 0\\
0 & \frac{1}{2}
\end{pmatrix}.
\]
We can summarize this distinction by observing that the BPT that corresponds
to the partial trace over the second spin is

\noindent\begin{minipage}[c]{1\columnwidth}%
\begin{center}
\vspace{0.5\baselineskip}
\begin{tabular}{|c|c|}
\hline 
$\frac{1}{2},\uparrow$ & $-\frac{1}{2},\uparrow$\tabularnewline
\hline 
$\frac{1}{2},\downarrow$ & $-\frac{1}{2},\downarrow$\tabularnewline
\hline 
\end{tabular} \vspace{0.5\baselineskip}
\par\end{center}%
\end{minipage} which is not quite the same as the one we have derived above.

Thus, Alice can encode one qubit of information into the subspaces
of $\ket{\frac{1}{2},\uparrow}$ and $\ket{-\frac{1}{2},\downarrow}$
(or $\ket{\frac{1}{2},\downarrow}$ and $\ket{-\frac{1}{2},\uparrow}$),
and Bob can decode it by applying the above state reduction map. Since
this qubit is encoded in a subsystem given by the commutant of $L_{z}\otimes S_{z}$
and $I\otimes S_{x}$, it is a decoherence free subsystem that is
not affected by such noise.

Now we will address the case of general $l$. The combination of the
irreps of $L_{z}\otimes S_{z}$ and of $I\otimes S_{x}$ is given
by the sum BPTs

\noindent\begin{minipage}[c]{1\columnwidth}%
\begin{center}
\vspace{0.5\baselineskip}
\begin{tabular}{ccccc}
\cline{1-1} 
\multicolumn{1}{|c|}{$l,\uparrow$} &  &  &  & \tabularnewline
\cline{1-1} 
\multicolumn{1}{|c|}{$-l,\downarrow$} &  &  &  & \tabularnewline
\cline{1-2} \cline{2-2} 
\multicolumn{1}{c|}{} & \multicolumn{1}{c|}{$l-1,\uparrow$} &  &  & \tabularnewline
\cline{2-2} 
\multicolumn{1}{c|}{} & \multicolumn{1}{c|}{$1-l,\downarrow$} &  &  & \tabularnewline
\cline{2-2} 
 &  & $\ddots$ &  & \tabularnewline
\cline{4-4} 
 &  & \multicolumn{1}{c|}{} & \multicolumn{1}{c|}{$1-l,\uparrow$} & \tabularnewline
\cline{4-4} 
 &  & \multicolumn{1}{c|}{} & \multicolumn{1}{c|}{$l-1,\downarrow$} & \tabularnewline
\cline{4-5} \cline{5-5} 
 &  &  & \multicolumn{1}{c|}{} & \multicolumn{1}{c|}{$-l,\uparrow$}\tabularnewline
\cline{5-5} 
 &  &  & \multicolumn{1}{c|}{} & \multicolumn{1}{c|}{$l,\downarrow$}\tabularnewline
\cline{5-5} 
\end{tabular}\hspace*{10bp}$\boldsymbol{+}$\hspace*{10bp} %
\begin{tabular}{|c|c|}
\cline{1-1} 
$l,+$ & \multicolumn{1}{c}{}\tabularnewline
\cline{1-1} 
$\vdots$ & \multicolumn{1}{c}{}\tabularnewline
\cline{1-1} 
$-l,+$ & \multicolumn{1}{c}{}\tabularnewline
\hline 
\multicolumn{1}{c|}{} & $l,-$\tabularnewline
\cline{2-2} 
\multicolumn{1}{c|}{} & $\vdots$\tabularnewline
\cline{2-2} 
\multicolumn{1}{c|}{} & $-l,-$\tabularnewline
\cline{2-2} 
\end{tabular} .\vspace{0.5\baselineskip}
\par\end{center}%
\end{minipage} The $2l+1$ column-blocks of the left BPT correspond to the eigenvalues
$\lambda=\frac{l}{2},...,-\frac{l}{2}$ of $L_{z}\otimes S_{z}$,
and the two column-blocks of the right BPT correspond to the two eigenvalues
$\pm\frac{1}{2}$ of $I\otimes S_{x}$. Using the same reasoning as
in the case of $l=\frac{1}{2}$, we rearrange this combination of
BPTs as follows

\noindent\begin{minipage}[c]{1\columnwidth}%
\begin{center}
\vspace{0.5\baselineskip}
 $\Biggl($ %
\begin{tabular}{|c|}
\hline 
$0,\uparrow$\tabularnewline
\hline 
$0,\downarrow$\tabularnewline
\hline 
\end{tabular} $\Biggr)$ $\boldsymbol{+}$ $\cdots$ $\boldsymbol{+}$ $\Biggl($
\begin{tabular}{|c|}
\hline 
$1-l,\uparrow$\tabularnewline
\hline 
$l-1,\downarrow$\tabularnewline
\hline 
\end{tabular} $\boldsymbol{+}$ %
\begin{tabular}{|c|}
\hline 
$l-1,\uparrow$\tabularnewline
\hline 
$1-l,\downarrow$\tabularnewline
\hline 
\end{tabular} $\Biggr)$ $\boldsymbol{+}$ $\Biggl($ %
\begin{tabular}{|c|}
\hline 
$l,\uparrow$\tabularnewline
\hline 
$-l,\downarrow$\tabularnewline
\hline 
\end{tabular} $\boldsymbol{+}$ %
\begin{tabular}{|c|}
\hline 
$-l,\uparrow$\tabularnewline
\hline 
$l,\downarrow$\tabularnewline
\hline 
\end{tabular} $\Biggr)$ $\boldsymbol{+}$ %
\begin{tabular}{|c|}
\hline 
$l,+$\tabularnewline
\hline 
$\vdots$\tabularnewline
\hline 
$-l,+$\tabularnewline
\hline 
\end{tabular} .\vspace{0.5\baselineskip}
\par\end{center}%
\end{minipage} Here we have dropped the redundant second column of the $-\frac{1}{2}$
eigenvalue of $I\otimes S_{x}$, and grouped together the columns
of $L_{z}\otimes S_{z}$ according to the magnitude of the eigenvalues
$\lambda$. Note that the $\lambda=0$ column on the left is special
since it is alone in its group and it only exists if $l$ is integer.
We will assume integer $l$ since this is the more general case while
the case of half integer $l$ can be considered as a simplification.
The above grouping and ordering of columns has been chosen in anticipation
of how the scattering calculations will unfold.

In the following, we will denote the spectral projections of $L_{z}\otimes S_{z}$
as
\[
\Pi_{zz;\lambda}:=\ketbra{\lambda,\uparrow}{\lambda,\uparrow}+\ketbra{-\lambda,\downarrow}{-\lambda,\downarrow}.
\]
We also conveniently define the projections
\[
\Pi_{x;m}:=\sum_{m'=-m}^{m}\ketbra{m',+}{m',+},
\]
such that for $m=l$ and $m=-l$ we get the $+\frac{1}{2}$ spectral
projection of $I\otimes S_{x}$. For any $m=l,...,-l$ we can calculate
\[
\Pi_{zz;m}\Pi_{x;m}=\frac{1}{\sqrt{2}}\ketbra{m,\uparrow}{m,+}+\frac{1}{\sqrt{2}}\ketbra{-m,\downarrow}{-m,+},
\]
which implies that for any $m\neq0$ we get the scatterings (the $m=0$
case is shown later)
\[
\begin{array}{c}
\Pi_{zz;m}\\
\\
\Pi_{x;m}
\end{array}\Diagram{fdA &  & fuA\\
 & f\\
fuA &  & fdA
}
\begin{array}{ccc}
\Pi_{zz;m}\\
\\
\Pi_{x;\left(m\right)} & , & \Pi_{x;\left|m\right|-1}
\end{array}
\]
Here we had to introduce another projection 
\[
\Pi_{x;\left(m\right)}:=\ketbra{m,+}{m,+}+\ketbra{-m,+}{-m,+}
\]
(note that $\Pi_{x;\left(m\right)}=\Pi_{x;\left(-m\right)}$).

The above general scattering calculation can then be used to combine
the BPT columns starting from the right side and proceeding toward
the left. In the first step we consider the reflection network of
$\Pi_{zz;l},\Pi_{zz;-l},\Pi_{x;l}$, which after scattering is shown
in Fig. \ref{fig:exmprefnet5}.

\begin{figure}[H]
\begin{centering}
\includegraphics[viewport=0bp 170bp 737bp 460bp,clip,width=0.6\columnwidth]{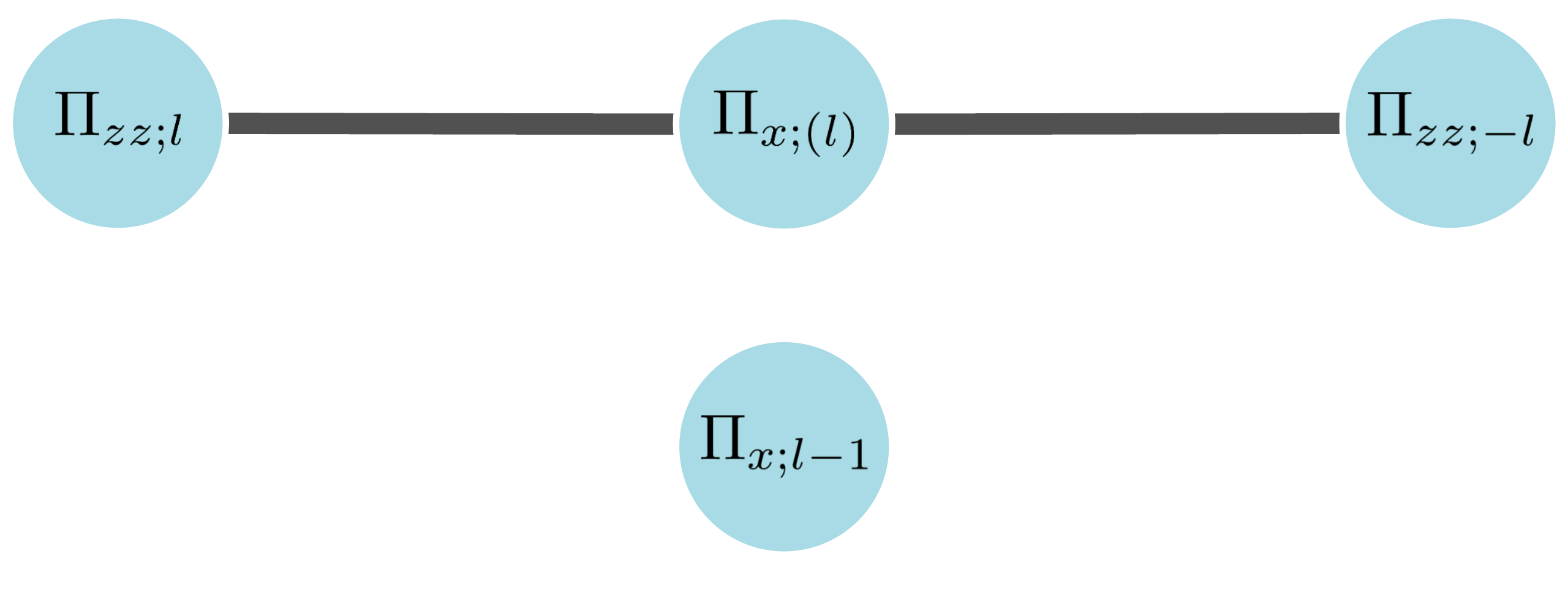}
\par\end{centering}
\centering{}\caption{\label{fig:exmprefnet5}The first step reflection network for general
$l$.}
\end{figure}
Then we include the next two columns $\Pi_{zz;l-1},\Pi_{zz;1-l}$
which are orthogonal to $\Pi_{zz;l},\Pi_{zz;-l},\Pi_{x;\left(l\right)}$
but not to $\Pi_{x;l-1}$. After scattering with $\Pi_{x;l-1}$ we
get the reflection network shown in Fig. \ref{fig:exmprefnet6}.

\begin{figure}[H]
\begin{centering}
\includegraphics[viewport=0bp 40bp 737bp 460bp,clip,width=0.6\columnwidth]{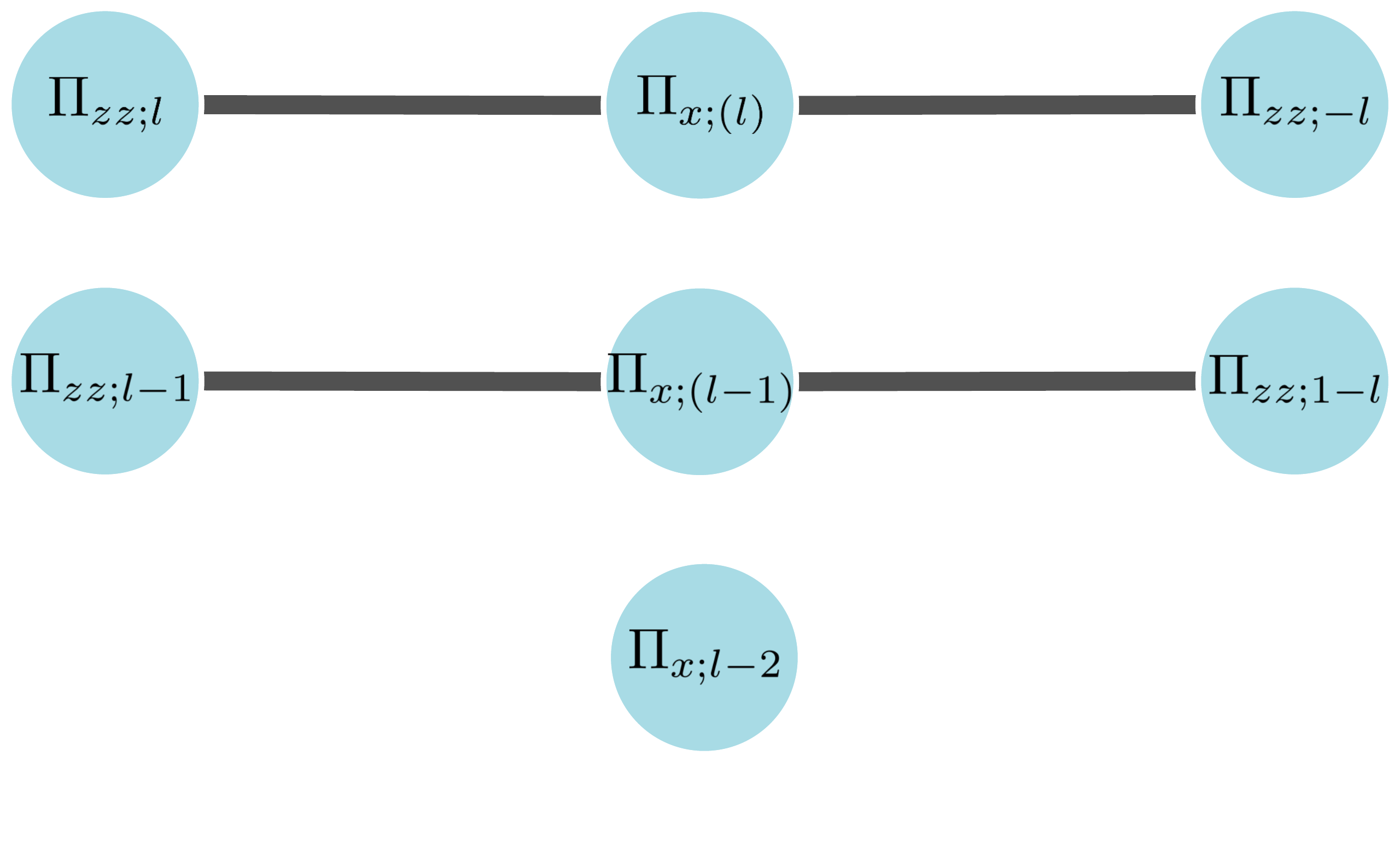}
\par\end{centering}
\centering{}\caption{\label{fig:exmprefnet6}The second step reflection network for general
$l$.}
\end{figure}
This pattern repeats as we fold in the columns until (assuming $l$
is integer) we are left with the last column $\Pi_{zz;0}=\ketbra{0,\uparrow}{0,\uparrow}+\ketbra{0,\downarrow}{0,\downarrow}$
and the leftover projection $\Pi_{x;0}=\ketbra{0,+}{0,+}$ from the
previous scatterings. They scatter differently than before:
\[
\begin{array}{c}
\Pi_{zz;0}\\
\\
\Pi_{x;0}
\end{array}\Diagram{fdA &  & fuA\\
 & f\\
fuA &  & fdA
}
\begin{array}{ccc}
\ketbra{0,+}{0,+}, &  & \ketbra{0,-}{0,-}\\
\\
\ketbra{0,+}{0,+}.
\end{array}
\]
The final reflection network is shown in Fig. \ref{fig:exmprefnet7}.

\begin{figure}[H]
\begin{centering}
\includegraphics[viewport=0bp 0bp 737bp 453bp,clip,width=0.6\columnwidth]{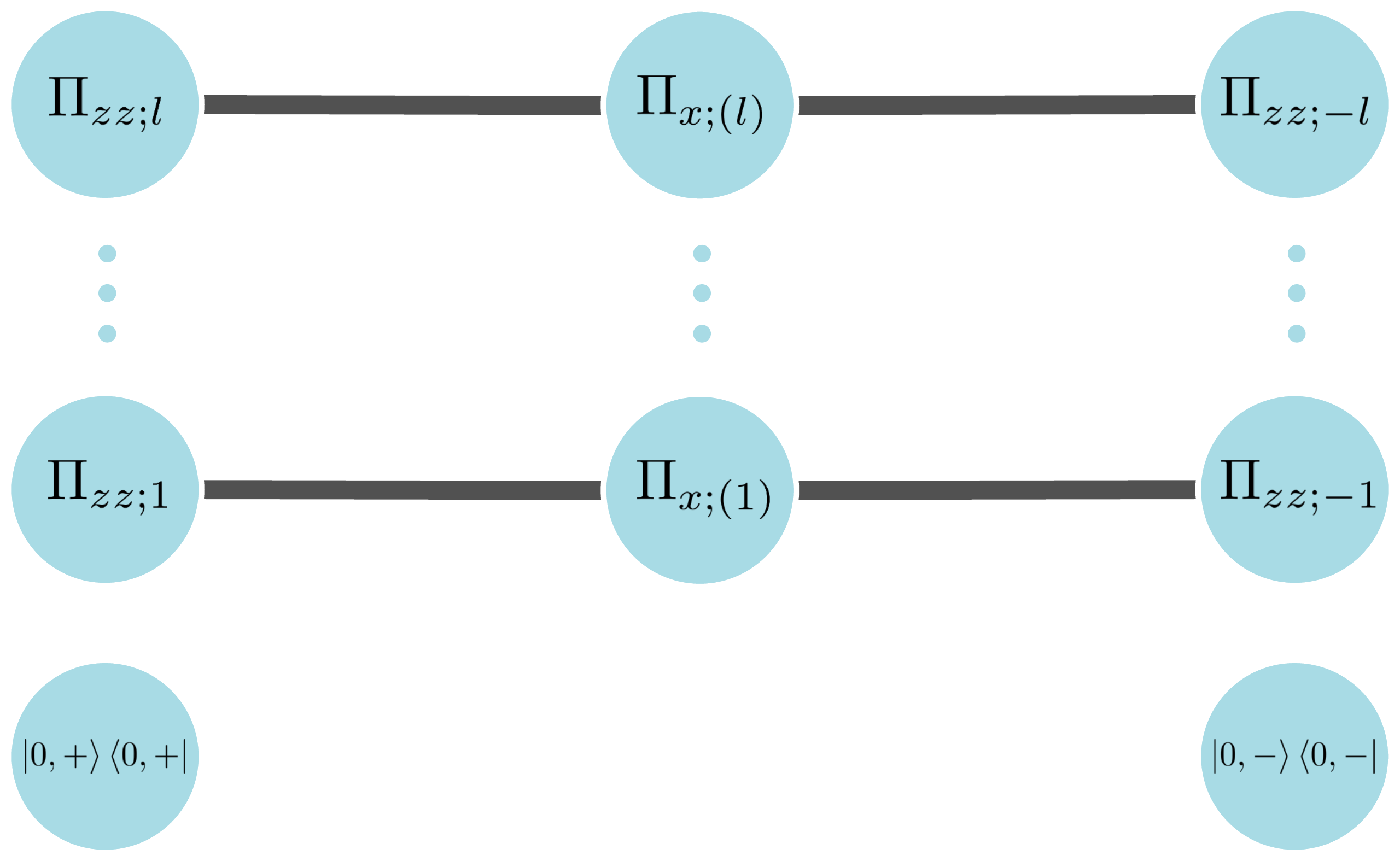}
\par\end{centering}
\centering{}\caption{\label{fig:exmprefnet7}The final reflection network for integer $l$.
When $l$ is half integer the last two projections $\protect\ketbra{0,+}{0,+}$
and $\protect\ketbra{0,-}{0,-}$ do not exist.}
\end{figure}

Each connected component with three vertices corresponds to a $2\times2$
BPT block, as in the $l=\frac{1}{2}$ case, and the last two isolated
vertices correspond to two single-celled blocks. As we did in the
$l=\frac{1}{2}$ case we construct the $2\times2$ blocks, and then
transpose them, which results in the BPT

\noindent\begin{minipage}[c]{1\columnwidth}%
\begin{center}
\vspace{0.5\baselineskip}
\begin{tabular}{ccccccc}
\cline{1-2} \cline{2-2} 
\multicolumn{1}{|c|}{$l,\uparrow$} & \multicolumn{1}{c|}{$-l,\downarrow$} &  &  &  &  & \tabularnewline
\cline{1-2} \cline{2-2} 
\multicolumn{1}{|c|}{$l,\downarrow$} & \multicolumn{1}{c|}{$-l,\uparrow$} &  &  &  &  & \tabularnewline
\cline{1-2} \cline{2-2} 
 &  & $\ddots$ &  &  &  & \tabularnewline
\cline{4-5} \cline{5-5} 
 &  & \multicolumn{1}{c|}{} & \multicolumn{1}{c|}{$1,\uparrow$} & \multicolumn{1}{c|}{$-1,\downarrow$} &  & \tabularnewline
\cline{4-5} \cline{5-5} 
 &  & \multicolumn{1}{c|}{} & \multicolumn{1}{c|}{$1,\downarrow$} & \multicolumn{1}{c|}{$-1,\uparrow$} &  & \tabularnewline
\cline{4-6} \cline{5-6} \cline{6-6} 
 &  &  &  & \multicolumn{1}{c|}{} & \multicolumn{1}{c|}{$0,+$} & \tabularnewline
\cline{6-7} \cline{7-7} 
 &  &  &  &  & \multicolumn{1}{c|}{} & \multicolumn{1}{c|}{$0,-$}\tabularnewline
\cline{7-7} 
\end{tabular} .\vspace{0.5\baselineskip}
\par\end{center}%
\end{minipage}

The form of the resulting BPT implies that there are $m=1,...,l$
alternative decoherence free subsystems that Alice can choose from;
she can still encode only one qubit though. Since there are $l$ alternative
orthogonal subspaces to choose from (not counting $l=0$), Alice can
also encode $\log l$ classical bits in addition to the qubit.

It is important to note that the derivation of the irreps structure
for general $l$ was only possible because we could carry out all
the calculations analytically, without specifying the value of $l$.
This demonstrates the key advantage of the Scattering Algorithm over
the numeric approaches in that it can be applied symbolically.

\chapter{Reduction of dynamics\label{chap:Operational-reductions-of-dyn}}

In this chapter we will consider two notions of reduction of dynamics.
The common idea here is that the Hamiltonian is block-diagonalizable
by the irrep basis of any non-trivial algebra that the Hamiltonian
belongs to. Such block-diagonalization leads to the reduction of dynamics
onto the irreps. With this basic idea in mind we will first consider
the reduction of dynamics with symmetries, and then proceed to the
symmetry-agnostic approach.

The idea of reduction of dynamics with symmetries traces back to the
seminal work by Emmy Noether \citep{Noether1918}. Today, symmetry
related methods are a well established staple in physics with many
dedicated textbooks such as \citep{cornwell1997group,tung1985group,Georgi99}.
With the advancement of finite-dimensional quantum mechanics driven
by the development of quantum information and quantum computing, the
central role of the irreps structures associated with symmetries was
gradually recognized in applications \citep{zanardi1997error,Zanardi97Noiseless,Knill2000,Bartlett07,marvian2013theory,marvian2014extending}.

In Section \ref{sec:Reduction-of-Hamiltonians-with-Sym} we will briefly
outline the role of the irreps structure in the reduction of dynamics
with symmetries. This will lead to the realization that the usual
notion of symmetry is too restrictive and even groups that do not
commute with the Hamiltonian may still be useful for the reduction
of dynamics. This idea is summarized in Theorem \ref{thm:reduction with generalized symm }
and we will illustrate it with an example of a quantum walk with a
broken symmetry. As a secondary goal we will use this example to demonstrate
how the the Scattering Algorithm constructs irreps of a non-trivial
finite group.

In Section \ref{sec:Symmetry-agnostic-reduction-of} we will demonstrate
how the same kind of reduction of dynamics can be performed without
the need to recognize symmetries. Such symmetry-agnostic approach
is possible with the Scattering Algorithm as it allows us to directly
focus on the irreps structure generated by the Hamiltonian terms.
We will illustrate this idea with two examples from the literature
on qubit implementations in quantum dots. Specifically, we will show
how the symmetry-agnostic approach can be used to reduce the control
Hamiltonian in order to find the possible qubit encodings.

\section{Reduction of Hamiltonians with symmetries\label{sec:Reduction-of-Hamiltonians-with-Sym}}

We will begin by describing the central role of the irreps structure
in the usual reduction of Hamiltonians with symmetries.

Let us consider the Hamiltonian $H\in\mathcal{L}\left(\mathcal{H}\right)$
and the group $\mathcal{G}$ represented by the unitaries $U\left(\mathcal{G}\right):=\left\{ U\left(g\right)\right\} _{g\in\mathcal{G}}$.
We say that $\mathcal{G}$ is a symmetry of $H$ (as represented by
$U\left(\mathcal{G}\right)$) if $\left[H,U\left(g\right)\right]=0$
for all $g\in\mathcal{G}$. This identifies $H$ as an element of
the commutant $U\left(\mathcal{G}\right)'$ of $U\left(\mathcal{G}\right)$.

In general, the group algebra $\mathcal{A}_{U\left(\mathcal{G}\right)}$
(recall Definition \ref{def:group algebra}) identifies the irreps
structure

\begin{equation}
\mathcal{H}\cong\bigoplus_{q}\mathcal{H}_{\nu_{q}}\otimes\mathcal{H}_{\mu_{q}}\label{eq: symmetry irrep strucutre}
\end{equation}
such that all the unitaries $U\left(\mathcal{G}\right)$ reduce to

\[
U\left(g\right)\cong\bigoplus_{q}I_{\nu_{q}}\otimes U_{\mu_{q}}\left(g\right),
\]
where $U_{\mu_{q}}\left(g\right)$ are the irreducible unitary representations
of $\mathcal{G}$ (see Theorem \ref{thm: group algebra irreps}).
Since $H\in U\left(\mathcal{G}\right)'$, it reduces in a complementary
manner (see Theorem \ref{thm:commutant is bpt transpose})
\[
H\cong\bigoplus_{q}H_{\nu_{q}}\otimes I_{\mu_{q}}.
\]

This block-diagonal form constitutes a reduction of dynamics where
we have reduced the action of $H$ from the whole $\mathcal{H}$ to
the smaller Hamiltonians $H_{\nu_{q}}$ acting on $\mathcal{H}_{\nu_{q}}$.
This form rules out any transitions between states supported on different
irreps, so the the irrep value $q$ is a conserved quantity. Thus,
the irreps structure \eqref{eq: symmetry irrep strucutre} of the
symmetry group $U\left(\mathcal{G}\right)$ identifies the constants
of motion and the subsystems $\mathcal{H}_{\nu_{q}}$ on which the
dynamics reduce.

As an example, consider the three spin Heisenberg interaction Hamiltonian
\[
H=\epsilon_{12}\vec{S}_{1}\cdot\vec{S}_{2}+\epsilon_{23}\vec{S}_{2}\cdot\vec{S}_{3},
\]
where $\epsilon_{ij}$ are arbitrary coupling strengths and $\vec{S}_{i}=\left(S_{i,x},S_{i,y},S_{i,z}\right)$
are the spin operators. This Hamiltonian commutes with the $SU\left(2\right)$
group of rotations $\left[H,U\left(R\right)\right]=0$ that have the
familiar irreps structure of total spin
\begin{equation}
\mathcal{H}\cong\mathcal{H}_{\mu_{3/2}}\oplus\mathcal{H}_{\nu_{1/2}}\otimes\mathcal{H}_{\mu_{1/2}}.\label{eq: three spin irreps}
\end{equation}
These irreps are identified by the total spin basis $\ket{j,\alpha,m}$
where $j=\frac{3}{2},\frac{1}{2}$, $m=-j,...,j$ and $\alpha=s,t$
distinguishes the two variants of the $j=\frac{1}{2}$ irrep. The
conserved quantity here is the total spin $j=\frac{3}{2},\frac{1}{2}$,
as identified by the irreps.

The irreps structure \eqref{eq: symmetry irrep strucutre} tells us
that the Hamiltonian reduces to 
\[
H\cong h_{\nu_{3/2}}\,I_{\mu_{3/2}}\oplus H_{\nu_{1/2}}\otimes I_{\mu_{1/2}}
\]
where $h_{\nu_{3/2}}$ is a scalar (since $\mathcal{H}_{\nu_{3/2}}$
is one-dimensional) and $H_{\nu_{1/2}}$ is a $2\times2$ matrix.
Therefore, in the total spin basis, $H$ is given by the five matrix
elements

\[
h_{\nu_{3/2}}=\bra{\frac{3}{2},m}H\ket{\frac{3}{2},m}=\frac{1}{4}\left(\epsilon_{12}+\epsilon_{23}\right),
\]
\begin{equation}
H_{\nu_{1/2}}=\begin{pmatrix}\bra{\frac{1}{2},s,m}H\ket{\frac{1}{2},s,m} & \bra{\frac{1}{2},s,m}H\ket{\frac{1}{2},t,m}\\
\bra{\frac{1}{2},t,m}H\ket{\frac{1}{2},s,m} & \bra{\frac{1}{2},t,m}H\ket{\frac{1}{2},t,m}
\end{pmatrix}=\frac{1}{4}\begin{pmatrix}-3\epsilon_{23} & \sqrt{3}\epsilon_{12}\\
\sqrt{3}\epsilon_{12} & \epsilon_{23}-2\epsilon_{12}
\end{pmatrix}\label{eq:Heisnberg nu_1/2 Hamil}
\end{equation}
where the choice of $m$ does not matter and all other matrix elements
are zero.

In the standard applications of symmetries, as in the above example,
only the groups such that $\left[H,U\left(g\right)\right]=0$ are
considered. The symmetry condition $\left[H,U\left(g\right)\right]=0$
, however, is too restrictive and the irreps structure \eqref{eq: symmetry irrep strucutre}
can still be useful with groups that fail to commute with the Hamiltonian.

For example, we can add the symmetry breaking term $S_{tot,z}=S_{1,z}+S_{2,z}+S_{3,z}$
to the Hamiltonian 
\[
H=\epsilon_{12}\vec{S}_{1}\cdot\vec{S}_{2}+\epsilon_{23}\vec{S}_{2}\cdot\vec{S}_{3}+S_{tot,z}\,.
\]
Now $\left[H,U\left(R\right)\right]\neq0$ so the $SU\left(2\right)$
group is not a symmetry and it would appear that the above reduction
of dynamics is no longer relevant. This, however, is not the case
and the irreps structure of the $SU\left(2\right)$ group is still
useful for the reduction of this Hamiltonian. The reason for that
is because the symmetry breaking term $S_{tot,z}$ is itself one of
the generators of the $SU\left(2\right)$ group (it breaks the symmetry
because the group is not Abelian). This means that $S_{tot,z}$ is
an element of the $SU\left(2\right)$ group algebra and so with respect
to its irreps \eqref{eq: three spin irreps} it is confined to the
form\footnote{In this example it is even more obvious because the symmetry breaking
term $S_{tot,z}$ is diagonal in the total spin basis $\ket{j,\alpha,m}$.} 
\[
S_{tot,z}\cong S_{\mu_{3/2},z}\oplus I_{\nu_{1/2}}\otimes S_{\mu_{1/2},z},
\]
so
\[
H\cong\left(h_{\nu_{3/2}}\,I_{\mu_{3/2}}+S_{\mu_{3/2},z}\right)\oplus\left(H_{\nu_{1/2}}\otimes I_{\mu_{1/2}}+I_{\nu_{1/2}}\otimes S_{\mu_{1/2},z}\right).
\]

Thus, we can still say that the original exchange interaction generates
dynamics via the $H_{\nu_{1/2}}$ term in the $\mathcal{H}_{\nu_{1/2}}$
subsystem, while the new $S_{\mu_{3/2},z}$ and $S_{\mu_{1/2},z}$
terms generates dynamics in the $\mathcal{H}_{\mu_{3/2}}$ and $\mathcal{H}_{\mu_{1/2}}$
subsystems. The critical detail here is that each subsystem evolves
independently as there are no interaction terms between them so the
dynamics can be reduced to the subsystems identified by the irreps
structure \eqref{eq: three spin irreps}. Also note that the total
spin is still a constant of motion even though the group that identifies
it is not a symmetry of the Hamiltonian.

The general result that extends the application of group representations
beyond symmetries is give by the following theorem.
\begin{thm}
\label{thm:reduction with generalized symm }Let $H\in\mathcal{L}\left(\mathcal{H}\right)$
and let $\mathcal{G}$ be a finite or a compact Lie group represented
by the unitaries $U\left(\mathcal{G}\right):=\left\{ U\left(g\right)\right\} _{g\in\mathcal{G}}\subset\mathcal{L}\left(\mathcal{H}\right)$,
such that
\[
\left[H,U\left(g\right)\right]\in\mathcal{A}_{U\left(\mathcal{G}\right)}\,\,\,\,\,\,\forall g\in\mathcal{G}.
\]
Then, with respect to the irreps structure of the group algebra $\mathcal{A}_{U\left(\mathcal{G}\right)}$
\[
\mathcal{H}\cong\bigoplus_{q}\mathcal{H}_{\nu_{q}}\otimes\mathcal{H}_{\mu_{q}},
\]
the operator $H$ reduces to 
\[
H=H_{\nu}+H_{\mu}\cong\bigoplus_{q}H_{\nu_{q}}\otimes I_{\mu_{q}}+\bigoplus_{q}I_{\nu_{q}}\otimes H_{\mu_{q}}
\]
for some $H_{\nu_{q}}\in\mathcal{L}\left(\mathcal{H}_{\nu_{q}}\right)$
and $H_{\mu_{q}}\in\mathcal{L}\left(\mathcal{H}_{\mu_{q}}\right)$.
\end{thm}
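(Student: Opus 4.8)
The plan is to split $H$ into a piece lying in the commutant of the group and a piece lying in the group algebra $\mathcal{A}_{U\left(\mathcal{G}\right)}$ itself, and then let Theorems \ref{thm: Wedderburn Decomp} and \ref{thm:commutant is bpt transpose} do the structural work. Concretely, I would introduce the group-averaging (twirl) map $P:\mathcal{L}\left(\mathcal{H}\right)\to\mathcal{L}\left(\mathcal{H}\right)$ defined by $P\left(X\right)=\int_{\mathcal{G}}U\left(g\right)XU\left(g\right)^{\dagger}\,dg$, with the Haar integral replaced by a normalized sum when $\mathcal{G}$ is finite (this is where the ``finite or compact Lie'' hypothesis is used). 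Invariance of the Haar measure gives $U\left(h\right)P\left(X\right)U\left(h\right)^{\dagger}=P\left(X\right)$ for every $h$, so $P\left(X\right)$ commutes with $\spn\left\{ U\left(g\right)\right\} _{g\in\mathcal{G}}=\mathcal{A}_{U\left(\mathcal{G}\right)}$, i.e. $P\left(X\right)\in\mathcal{A}_{U\left(\mathcal{G}\right)}'$. I then set $H_{\nu}:=P\left(H\right)$ and $H_{\mu}:=H-P\left(H\right)$, so that trivially $H=H_{\nu}+H_{\mu}$ with $H_{\nu}$ in the commutant.

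The key step is to show $H_{\mu}\in\mathcal{A}_{U\left(\mathcal{G}\right)}$. For each $g$ I would rewrite $U\left(g\right)HU\left(g\right)^{\dagger}-H=U\left(g\right)\bigl(HU\left(g\right)^{\dagger}-U\left(g\right)^{\dagger}H\bigr)=U\left(g\right)\,\left[H,U\left(g\right)^{\dagger}\right]$. Since $U$ is a unitary representation, $U\left(g\right)^{\dagger}=U\left(g^{-1}\right)$ with $g^{-1}\in\mathcal{G}$, so the hypothesis gives $\left[H,U\left(g\right)^{\dagger}\right]=\left[H,U\left(g^{-1}\right)\right]\in\mathcal{A}_{U\left(\mathcal{G}\right)}$; as $U\left(g\right)\in\mathcal{A}_{U\left(\mathcal{G}\right)}$ and the algebra is closed under products, $U\left(g\right)HU\left(g\right)^{\dagger}-H\in\mathcal{A}_{U\left(\mathcal{G}\right)}$ for every $g$. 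Averaging over $\mathcal{G}$ and using that $\mathcal{A}_{U\left(\mathcal{G}\right)}$ is a finite-dimensional, hence closed, vector subspace (so it absorbs the integral/sum) yields $P\left(H\right)-H\in\mathcal{A}_{U\left(\mathcal{G}\right)}$, i.e. $H_{\mu}\in\mathcal{A}_{U\left(\mathcal{G}\right)}$.

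With the two pieces in hand the conclusion follows from the earlier structure theorems. The group algebra contains $I=U\left(e\right)$, so it is supported on all of $\mathcal{H}$ and Theorem \ref{thm: Wedderburn Decomp} applies: $\mathcal{A}_{U\left(\mathcal{G}\right)}\cong\bigoplus_{q}I_{\nu_{q}}\otimes\mathcal{L}\left(\mathcal{H}_{\mu_{q}}\right)$, whence $H_{\mu}\cong\bigoplus_{q}I_{\nu_{q}}\otimes H_{\mu_{q}}$ for some $H_{\mu_{q}}\in\mathcal{L}\left(\mathcal{H}_{\mu_{q}}\right)$. By Theorem \ref{thm:commutant is bpt transpose} the commutant has the transposed Wedderburn form $\mathcal{A}_{U\left(\mathcal{G}\right)}'\cong\bigoplus_{q}\mathcal{L}\left(\mathcal{H}_{\nu_{q}}\right)\otimes I_{\mu_{q}}$, so $H_{\nu}\cong\bigoplus_{q}H_{\nu_{q}}\otimes I_{\mu_{q}}$ for some $H_{\nu_{q}}\in\mathcal{L}\left(\mathcal{H}_{\nu_{q}}\right)$; adding the two gives the stated decomposition. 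I would remark that $H=H_{\nu}+H_{\mu}$ is not unique (one may shift a scalar multiple of each central projection $I_{\nu_{q}}\otimes I_{\mu_{q}}$ between the summands), but the theorem only claims existence.

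I expect the substantive content to be carried entirely by the two cited theorems; the main obstacle is merely bookkeeping in the averaging identity and checking that all the standing hypotheses are genuinely in force ($U\left(g\right)^{\dagger}=U\left(g^{-1}\right)\in U\left(\mathcal{G}\right)$, existence of the invariant average, and closedness of $\mathcal{A}_{U\left(\mathcal{G}\right)}$). A more hands-on alternative, which I would use as a fallback, is to pass directly to the Wedderburn decomposition of $\mathcal{A}_{U\left(\mathcal{G}\right)}$, write $H$ in block form over the sectors $q$, argue that the off-sector blocks vanish since they would have to intertwine the inequivalent irreps $U_{\mu_{q}}\left(\mathcal{G}\right)$ and $U_{\mu_{q'}}\left(\mathcal{G}\right)$ (Theorem \ref{thm: group algebra irreps} plus Schur), and then show each diagonal block obeys $\left[H_{q},I_{\nu_{q}}\otimes U_{\mu_{q}}\left(g\right)\right]\in I_{\nu_{q}}\otimes\mathcal{L}\left(\mathcal{H}_{\mu_{q}}\right)$, which forces $H_{q}=H_{\nu_{q}}\otimes I_{\mu_{q}}+I_{\nu_{q}}\otimes H_{\mu_{q}}$ because the commutant of the irreducible $U_{\mu_{q}}\left(\mathcal{G}\right)$ inside $\mathcal{L}\left(\mathcal{H}_{\mu_{q}}\right)$ is just the scalars.
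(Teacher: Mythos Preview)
Your proposal is correct and essentially identical to the paper's proof: both define $H_{\nu}$ as the group average (twirl) of $H$, set $H_{\mu}=H-H_{\nu}$, verify $H_{\nu}\in\mathcal{A}_{U(\mathcal{G})}'$ via Haar invariance and $H_{\mu}\in\mathcal{A}_{U(\mathcal{G})}$ by rewriting $U(g)HU(g)^{\dagger}-H$ as a product of elements of the group algebra, and then invoke Theorems \ref{thm: Wedderburn Decomp} and \ref{thm:commutant is bpt transpose}. The only cosmetic difference is that the paper writes the difference as $A(g)U(g)^{\dagger}$ with $A(g)=[H,U(g)]$ rather than your $U(g)[H,U(g^{-1})]$, which is the same operator.
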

\begin{proof}
Let $A\left(g\right):=\left[H,U\left(g\right)\right]$ so 
\[
H=U\left(g\right)HU\left(g\right)^{\dagger}+A\left(g\right)U\left(g\right)^{\dagger}.
\]
If $\mathcal{G}$ is finite we can sum both sides over all $g\in\mathcal{G}$
and normalize it by the order of $\mathcal{G}$:
\[
H=\frac{1}{\mathcal{\left|G\right|}}\sum_{g\in\mathcal{G}}U\left(g\right)HU\left(g\right)^{\dagger}+\frac{1}{\mathcal{\left|G\right|}}\sum_{g\in\mathcal{G}}A\left(g\right)U\left(g\right)^{\dagger}.
\]
In the more general case, if $\mathcal{G}$ is a compact Lie group
there is a normalized invariant measure (Haar measure) $d\mu\left(g\right)$
over $\mathcal{G}$ such that
\[
H=\underset{H_{\nu}}{\underbrace{\int_{\mathcal{G}}d\mu\left(g\right)U\left(g\right)HU\left(g\right)^{\dagger}}}+\underset{H_{\mu}}{\underbrace{\int_{\mathcal{G}}d\mu\left(g\right)A\left(g\right)U\left(g\right)^{\dagger}}}.
\]
For all $g\in\mathcal{G}$ both $U\left(g\right)^{\dagger}$ and $A\left(g\right)$
are in the group algebra $\mathcal{A}_{U\left(\mathcal{G}\right)}$.
Therefore, $H_{\mu}\in\mathcal{A}_{U\left(\mathcal{G}\right)}$ so
according to Theorem \ref{thm: Wedderburn Decomp} it reduces to 
\[
H_{\mu}\cong\bigoplus_{q}I_{\nu_{q}}\otimes H_{\mu_{q}}.
\]
The term $H_{\nu}$, on the other hand, is in the commutant $\mathcal{A}_{U\left(\mathcal{G}\right)}'$
because it commutes with all $U\left(g\right)$: 
\[
H_{\nu}U\left(g\right)=\int_{\mathcal{G}}d\mu\left(g'\right)U\left(g'\right)HU\left(g^{-1}g'\right)^{\dagger}=\int_{\mathcal{G}}d\mu\left(gg'\right)U\left(gg'\right)HU\left(g'\right)^{\dagger}=U\left(g\right)H_{\nu}
\]
(here we have used the invariance of the measure $d\mu\left(gg'\right)=d\mu\left(g\right)$).
Therefore, $H_{\nu}\in\mathcal{A}_{U\left(\mathcal{G}\right)}'$ so
according to Theorem \ref{thm:commutant is bpt transpose} it reduces
to

\[
H_{\nu}\cong\bigoplus_{q}H_{\nu_{q}}\otimes I_{\mu_{q}}.
\]
Thus,
\[
H=H_{\nu}+H_{\mu}\cong\bigoplus_{q}H_{\nu_{q}}\otimes I_{\mu_{q}}+\bigoplus_{q}I_{\nu_{q}}\otimes H_{\mu_{q}}.
\]
\end{proof}
This theorem implies that symmetry groups are the special case when
$\left[H,U\left(g\right)\right]=0\in\mathcal{A}_{U\left(\mathcal{G}\right)}$.
The generalization is that now we can also consider groups such that
the Hamiltonian consists of both an invariant term---identified by
$H_{\nu}$---and a symmetry breaking term---identified by $H_{\mu}$.
The restriction is that the symmetry breaking term $H_{\mu}$ still
has to be an element of the group algebra $\mathcal{A}_{U\left(\mathcal{G}\right)}$.
Once we have identified such group, the dynamics reduce to the subsystems
of the group's irreps structure, that is
\[
e^{-itH}=e^{-itH_{\nu}}e^{-itH_{\mu}}=\bigoplus_{q}e^{-itH_{\nu_{q}}}\otimes e^{-itH_{\mu_{q}}}.
\]
In particular, the value $q$ that distinguishes the irreps is a conserved
quantity.

We will now consider a more elaborate example of such reduction of
dynamics.

\subsubsection*{Example}

In this example we will analyze the dynamics of a continuous-time
quantum walk (CTQW) on binary trees. CTQW is the quantum analog of
a continuous-time random walk on graphs. The idea that a CTQW model
can provide an exponentially faster way of searching for distinguished
vertices on certain problems, was first introduced by Farhi and Gutmann
in \citep{Farhi98}. Since some computational problems can be formulated
as searches on graphs, the CTQW model turned out to be an alternative
paradigm to quantum Fourier transform for designing quantum algorithms
with an exponential speed-up.

What is interesting about the CTQW paradigm is that it is relatively
easy to understand where the exponential speed-up is coming from.
It was observed in \citep{childs2002,childs2003exponential} that
the exponential speed-up can be explained by the exponential reduction
of dynamics. This observation was analyzed exactly for a search on
binary trees and the reduction was traced back to the symmetries of
the graph. In the following, we will reproduce this argument by finding
the irreps of symmetries of the binary tree and show that the exponential
reduction of dynamics holds even when the symmetry is broken.

The Hilbert space of a CTQW model is spanned by the vertices $V$
of a graph $G$: 
\[
\mathcal{H}:=\spn\,V.
\]
The Hamiltonian of a CTQW model can be defined by the edges $E$ of
a graph $G$ as follows
\begin{equation}
H:=-\sum_{\left\langle i,j\right\rangle \in E}\left(\ket i\bra j+\ket j\bra i-\ket i\bra i-\ket j\bra j\right).\label{eq:CTQW hamiltonian def}
\end{equation}
The graph $G$ that we will consider here is shown in Fig. \ref{fig:ctqw bin tree}.

\begin{figure}[H]
\centering{}\includegraphics[viewport=0bp 50bp 765bp 403bp,clip,width=0.8\columnwidth]{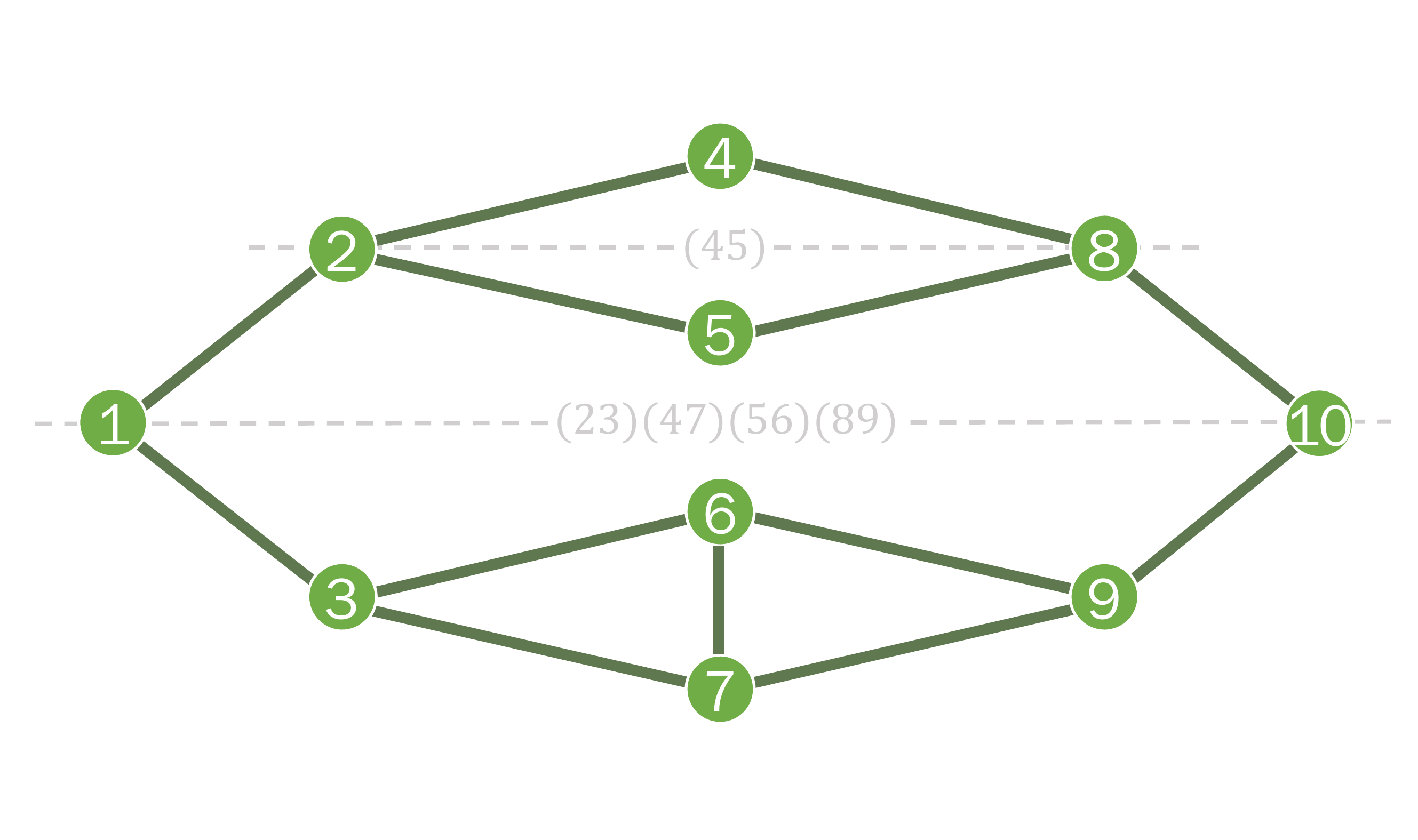}
\caption{\label{fig:ctqw bin tree}Two ``glued'' binary trees with a symmetry
breaking edge $\left\langle 6,7\right\rangle $. The two generators
of the permutation symmetry are shown in \foreignlanguage{canadian}{grey}
using the cyclic notation $\left(ij\right)$. The dashed lines are
the axis of reflection associated with the permutations.}
\end{figure}

Note that we can permute the vertices $4$ with $5$ or $6$ with
$7$ without changing the definition of the Hamiltonian \ref{eq:CTQW hamiltonian def}.
This is so because these permutations do not change how the vertices
are connected. This is not true, for example, for the permutation
of $2$ with $4$ because initially $4$ is not connected to $1$
or $5$, but after this permutation it is. Therefore, the permutations
$\pi_{2}:=\left(45\right)$ and $\pi_{3}:=\left(67\right)$ are symmetries
of this Hamiltonian.\footnote{The subscripts $2$ and $3$ refer to the root vertices of the sub-trees
where these permutations apply.}

Any permutation $\pi$ of the vertices $V$ is represented on the
Hilbert space $\mathcal{H}$ by the unitaries 
\[
U\left(\pi\right):=\sum_{i\in V}\ketbra{\pi\left(i\right)}i.
\]
In particular, the 2-cycle permutations $\left(ij\right)$ are represented
by
\[
U\left(\left(ij\right)\right):=\ketbra ij+\ketbra ji+\sum_{k\neq i,j}\ketbra kk=\ketbra ij+\ketbra ji+I-\ketbra ii-\ketbra jj.
\]

We can therefore express the Hamiltonian \eqref{eq:CTQW hamiltonian def}
as a sum of 2-cycle permutations
\begin{align*}
H & =-\sum_{\left\langle i,j\right\rangle \in E}\left(U\left(\left(ij\right)\right)-I\right)=-\sum_{\left\langle i,j\right\rangle \in E}U\left(\left(ij\right)\right)+\left|E\right|I
\end{align*}
and we can drop the constant identity $\left|E\right|I$. Observe
that the permutation $U\left(\left(67\right)\right)$ is both a symmetry
of $H$ and an additive term in $H$. Theorem \ref{thm:reduction with generalized symm }
then implies that we can consider symmetries that do not commute with
$U\left(\left(67\right)\right)$, as long as they commute with the
rest of $H$ and $U\left(\left(67\right)\right)$ is itself an element
of that symmetry. In the notation of Theorem \ref{thm:reduction with generalized symm }
we split $H=H_{\nu}+H_{\mu}$ where $H_{\mu}:=-U\left(\left(67\right)\right)$
and $H_{\nu}$ are all the other terms.

If we exclude the term $H_{\mu}$ (on the graph this means deleting
the edge $\left\langle 6,7\right\rangle $ ) the remaining term $H_{\nu}$
has more symmetry. That is, in addition to $\pi_{2}$ and $\pi_{3}$
another permutation is also a symmetry:
\[
\pi_{1}:=\left(23\right)\left(47\right)\left(56\right)\left(89\right).
\]
For the full $H$ it is not a symmetry because $U\left(\pi_{3}\right)$
and $U\left(\pi_{1}\right)$ do not commute. However, since the finite
group $\mathcal{G}$ generated by $\pi_{1}$, $\pi_{2}$ and $\pi_{3}$
is a symmetry of $H_{\nu}$ and $H=H_{\nu}+\left(-U\left(\pi_{3}\right)\right)$,
the condition of Theorem \ref{thm:reduction with generalized symm }
holds: 
\[
\left[H,U\left(\pi_{i}\right)\right]=\left[-U\left(\pi_{3}\right),U\left(\pi_{i}\right)\right]\in\mathcal{A}_{U\left(\mathcal{G}\right)}.
\]

In order to reduce the dynamics with the group $U\left(\mathcal{G}\right)$
we need to find its irreps structure. Note that the group element
$\pi_{3}$ is redundant as $\pi_{3}=\pi_{1}\pi_{2}\pi_{1}$ so we
only need to consider the generators $\pi_{1}$ and $\pi_{2}$. First,
we find the spectral projections of $U\left(\pi_{1}\right)$ and $U\left(\pi_{2}\right)$.
Using the shorthand notation for the states
\begin{equation}
\ket{_{-j_{1},j_{2},...}^{+i_{1},i_{2},...}}:=\frac{1}{\sqrt{N}}\left(\ket{i_{1}}+\ket{i_{2}}+...-\ket{j_{1}}-\ket{j_{2}}-...\right)\label{eq: shorthand state notation}
\end{equation}
we can diagonalize the generator 
\[
U\left(\pi_{2}\right)=\ketbra 45+\ketbra 54+\sum_{k\neq4,5}\ketbra kk=\ket{_{-}^{+4,5}}\bra{_{-}^{+4,5}}-\ket{_{-5}^{+4}}\bra{_{-5}^{+4}}+\sum_{k\neq4,5}\ketbra kk
\]
so its spectral projections are 
\begin{align*}
\Pi_{2;+} & :=\ket{_{-}^{+4,5}}\bra{_{-}^{+4,5}}+\ketbra 11+\ketbra 22+\ketbra 33+\ketbra 66+\ketbra 77+\ketbra 88+\ketbra 99+\ketbra{10}{10}\\
\Pi_{2;-} & :=\ket{_{-5}^{+4}}\bra{_{-5}^{+4}}.
\end{align*}
Similarly, we have
\[
U\left(\pi_{1}\right)=\Pi_{1;+}-\Pi_{1;-}
\]
where
\begin{align*}
\Pi_{1;+} & :=\ket 1\bra 1+\ket{_{-}^{+2,3}}\bra{_{-}^{+2,3}}+\ket{_{-}^{+4,7}}\bra{_{-}^{+4,7}}+\ket{_{-}^{+5,6}}\bra{_{-}^{+5,6}}+\ket{_{-}^{+8,9}}\bra{_{-}^{+8,9}}+\ket{10}\bra{10}\\
\Pi_{1;-} & :=\ket{_{-3}^{+2}}\bra{_{-3}^{+2}}+\ket{_{-7}^{+4}}\bra{_{-7}^{+4}}+\ket{_{-6}^{+5}}\bra{_{-6}^{+5}}+\ket{_{-9}^{+8}}\bra{_{-9}^{+8}}.
\end{align*}

Since both sets of spectral projections sum to $I$, one of the projections
is redundant so we will drop $\Pi_{2;+}$. It is now straight forward
to calculate the scatterings
\[
\begin{array}{c}
\Pi_{2;-}\\
\\
\Pi_{1;+}
\end{array}\Diagram{fdA &  & fuA\\
 & f\\
fuA &  & fdA
}
\begin{array}{cc}
\Pi_{2;-}\\
\\
\Pi_{1;+}^{\left(1/2\right)}, & \Pi_{1;+}^{\left(0\right)}
\end{array}\hspace{2cm}\begin{array}{c}
\Pi_{2;-}\\
\\
\Pi_{1;-}
\end{array}\Diagram{fdA &  & fuA\\
 & f\\
fuA &  & fdA
}
\begin{array}{cc}
\Pi_{2;-}\\
\\
\Pi_{1;-}^{\left(1/2\right)}, & \Pi_{1;-}^{\left(0\right)}
\end{array}
\]
where 
\[
\Pi_{1;+}^{\left(1/2\right)}:=\ket{_{-5,6}^{+4,7}}\bra{_{-5,6}^{+4,7}}\hspace{2cm}\Pi_{1;-}^{\left(1/2\right)}:=\ket{_{-5,7}^{+4,6}}\bra{_{-5,7}^{+4,6}}
\]
and
\begin{align*}
\Pi_{1;+}^{\left(0\right)} & :=\Pi_{1;+}-\Pi_{1;+}^{\left(1/2\right)}=\ket 1\bra 1+\ket{_{-}^{+2,3}}\bra{_{-}^{+2,3}}+\ket{_{-}^{+4,5,6,7}}\bra{_{-}^{+4,5,6,7}}+\ket{_{-}^{+8,9}}\bra{_{-}^{+8,9}}+\ket{10}\bra{10}\\
\\
\Pi_{1;-}^{\left(0\right)} & :=\Pi_{1;-}-\Pi_{1;-}^{\left(1/2\right)}=\ket{_{-3}^{+2}}\bra{_{-3}^{+2}}+\ket{_{-6,7}^{+4,5}}\bra{_{-6,7}^{+4,5}}+\ket{_{-9}^{+8}}\bra{_{-9}^{+8}}.
\end{align*}

The resulting reflection network consists of the three connected components
shown in Fig. \ref{fig: Bintree refnet}.

\begin{figure}[H]
\centering{}\includegraphics[viewport=0bp 100bp 737bp 453bp,clip,width=0.6\columnwidth]{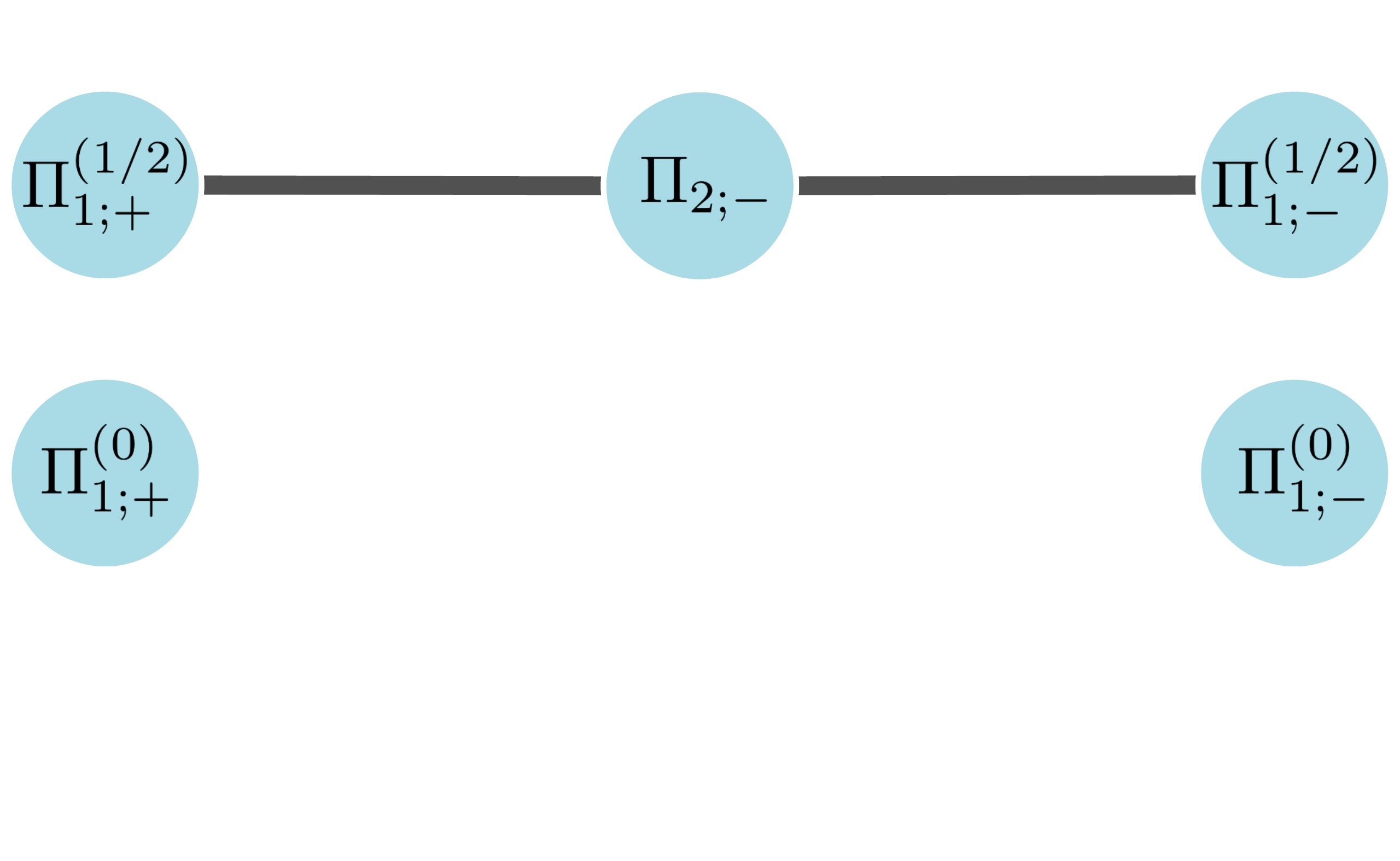}\caption{\label{fig: Bintree refnet}The reflection network from the generators
of the group $U\left(\mathcal{G}\right)$.}
\end{figure}
This, in turn, implies the following BPT

\noindent\begin{minipage}[c]{1\columnwidth}%
\begin{center}
\vspace{0.5\baselineskip}
\begin{tabular}{|c|ccc}
\cline{1-1} 
$1$ &  &  & \tabularnewline
\cline{1-1} 
$_{-}^{+2,3}$ &  &  & \tabularnewline
\cline{1-1} 
$_{-}^{+4,5,6,7}$ &  &  & \tabularnewline
\cline{1-1} 
$_{-}^{+8,9}$ &  &  & \tabularnewline
\cline{1-1} 
$10$ &  &  & \tabularnewline
\cline{1-2} \cline{2-2} 
\multicolumn{1}{c|}{} & \multicolumn{1}{c|}{$_{-3}^{+2}$} &  & \tabularnewline
\cline{2-2} 
\multicolumn{1}{c|}{} & \multicolumn{1}{c|}{$_{-6,7}^{+4,5}$} &  & \tabularnewline
\cline{2-2} 
\multicolumn{1}{c|}{} & \multicolumn{1}{c|}{$_{-9}^{+8}$} &  & \tabularnewline
\cline{2-4} \cline{3-4} \cline{4-4} 
\multicolumn{1}{c}{} & \multicolumn{1}{c|}{} & \multicolumn{1}{c|}{$_{-5,6}^{+4,7}$} & \multicolumn{1}{c|}{$_{-5,7}^{+4,6}$}\tabularnewline
\cline{3-4} \cline{4-4} 
\end{tabular} .\vspace{0.5\baselineskip}
\par\end{center}%
\end{minipage}

From the last block (bottom row) of the BPT we see that $U\left(\mathcal{G}\right)$
acts as a two-dimensional irrep on the subspace
\[
\mathcal{H}_{\mu_{3}}:=\spn\left\{ \text{\ensuremath{\ket{_{-5,6}^{+4,7}}}},\ket{_{-5,7}^{+4,6}}\right\} .
\]
Since the multiplicity of this irrep is one, the multiplicity subsystem
$\mathcal{H}_{\nu_{3}}$ is absorbed into $\mathcal{H}_{\mu_{3}}$.
The two other blocks in the BPT identify two distinct one-dimensional
irreps. Since these irreps are one-dimensional the irrep subsystems
$\mathcal{H}_{\mu_{1}}$ and $\mathcal{H}_{\mu_{2}}$ are absorbed
into the multiplicity subsystems
\begin{align*}
\mathcal{H}_{\nu_{1}} & :=\spn\left\{ \text{\ensuremath{\ket 1}},\ket{_{-}^{+2,3}},\ket{_{-}^{+4,5,6,7}},\ket{_{-}^{+8,9}},\ket{10}\right\} \\
\mathcal{H}_{\nu_{2}} & :=\spn\left\{ \ket{_{-3}^{+2}},\ket{_{-6,7}^{+4,5}},\ket{_{-9}^{+8}}\right\} .
\end{align*}
Overall, the $U\left(\mathcal{G}\right)$ irrep decomposition of the
Hilbert space is

\[
\mathcal{H}\cong\mathcal{H}_{\nu_{1}}\oplus\mathcal{H}_{\nu_{2}}\oplus\mathcal{H}_{\mu_{3}}.
\]
The two Hamiltonian terms $H=H_{\nu}+H_{\mu}$ are such that $H_{\mu}\in U\left(\mathcal{G}\right)$
and $H_{\nu}\in U\left(\mathcal{G}\right)'$. Therefore, with respect
to the above irreps structure they reduce to 
\begin{align*}
H_{\mu} & \cong I_{\nu_{1}}h_{\mu_{1}}\oplus I_{\nu_{2}}h_{\mu_{2}}\oplus H_{\mu_{3}}\\
H_{\nu} & \cong H_{\nu_{1}}\oplus H_{\nu_{2}}\oplus h_{\nu_{3}}I_{\mu_{3}}.
\end{align*}
Here $h_{\mu_{1}}$, $h_{\mu_{2}}$, $h_{\nu_{3}}$ are scalars and
$H_{\nu_{1}}$, $H_{\nu_{2}}$, $H_{\mu_{3}}$ are $5\times5$, $3\times3$
and $2\times2$ matrices respectively.

The important outcome from these analysis is that the dynamics are
restricted to the irrep sectors $\mathcal{H}_{\nu_{1}}$, $\mathcal{H}_{\nu_{2}}$,
$\mathcal{H}_{\mu_{3}}$. In particular, the sector $\mathcal{H}_{\nu_{1}}$
is spanned by the states $\text{\ensuremath{\ket 1}}$, $\ket{_{-}^{+2,3}}$,
$\ket{_{-}^{+4,5,6,7}}$, $\ket{_{-}^{+8,9}}$, $\ket{10}$ that dissect
the graph into the layers of the binary trees. Explicit construction
of the Hamiltonian term $H_{\nu_{1}}$ will show that it generates
a CTQW on a one-dimensional line constructed from these layer states.
In fact, we can present all the Hamiltonian terms as CTQW over the
irrep states that we have found.

In Fig. \ref{fig:CTQW over irrep states} we can see the term $H_{\nu_{1}}$
represented by the bottom line , the term $H_{\nu_{2}}$ represented
by the middle line, and the top vertical pair represents the term
$H_{\mu_{3}}$. The terms $H_{\nu_{1}}$ and $H_{\nu_{2}}$ traverse
the graph across layers while $H_{\mu_{3}}$ generates dynamics inside
the central layer. Since each term generates dynamics in a different
orthogonal subspace, each connected component in this graph evolves
independently from the others. In particular, this picture explains
the direct propagation from root $1$ to root $10$ over a subspace
that is exponentially smaller than the full tree. Therefore, what
we have shown here is that this speed-up holds even in the non-symmetric
version of the graph.

\begin{figure}[H]
\centering{}\includegraphics[clip,width=0.8\columnwidth]{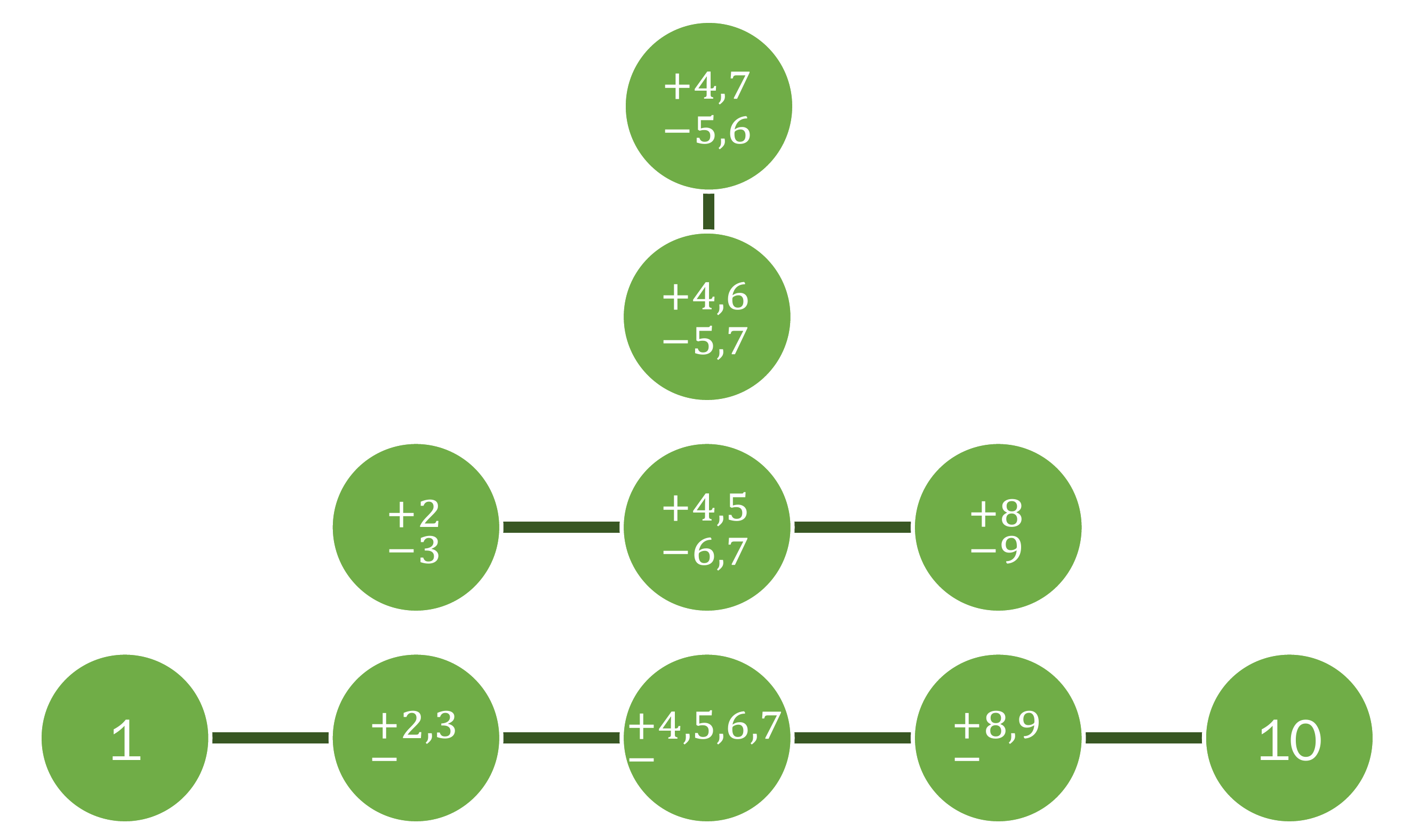}
\caption{\label{fig:CTQW over irrep states}The irrep basis representation
of the continuous-time quantum walk Hamiltonian over the binary trees
in Fig. \ref{fig:ctqw bin tree}. In the irrep basis, the Hamiltonian
decomposes into three terms that correspond to the three connected
components. The connected component in the button row indicates the
direct propagation from root $1$ to root $10$ that happens in a
subspace exponentially smaller than the full tree. The symmetry breaking
term of the Hamiltonian contributes to the dynamics in the top component,
but it does not affect the dynamics in the button row.}
\end{figure}

\section{Symmetry-agnostic reduction of Hamiltonians\label{sec:Symmetry-agnostic-reduction-of}}

In the previous section we have studied how the irreps structures
of symmetries lead to the reduction of Hamiltonians. In this section
we will demonstrate that it is not always necessary to identify the
symmetries in order to reduce Hamiltonians. Instead, we will focus
on directly finding the irreps structure that leads to the reduction.

The key takeaway from the discussion of symmetries is that the statement
``$U\left(\mathcal{G}\right)$ is a symmetry of $H$'' can be rephrased
as ``$H$ is an element of the commutant algebra $U\left(\mathcal{G}\right)'$''.
It is the observation that $H$ is an element of some non-trivial
algebra that leads to the reduction; the fact that this algebra happens
to be the commutant of a symmetry group is not important. Therefore,
if we can recognize that ``$H$ is an element of the algebra $\mathcal{A}$'',
for some non-trivial algebra $\mathcal{A}$, then we can reduce $H$
using the irreps structure of $\mathcal{A}$.

We can specialize the above idea as follows: The most obvious algebra
that we can use to restrict $H$ is the algebra generated by its additive
terms. That is, whenever we have $H=\sum_{k}\epsilon_{k}H_{k}$ we
can say that $H$ is an element of the algebra generated by the terms
$H_{k}$ and therefore it reduces to the irreps of $\left\langle \left\{ H_{k}\right\} \right\rangle $.
The illustrative example of the Scattering Algorithm that was given
in Section \ref{sec:How-the-Scattering} is exactly such a symmetry-agnostic
reduction of Hamiltonians.

As another simple example, consider again the spin-orbit coupled system
$\mathcal{H}=\underline{l}\otimes\underline{\frac{1}{2}}$ for some
integer $l$ and the Hamiltonian 
\[
H\left(\epsilon\right)=L_{z}\otimes S_{z}+\epsilon\,I\otimes S_{x}
\]
for arbitrary real constant $\epsilon$. Instead of identifying the
symmetries (which will still require finding the irreps structure)
we recognize that for all $\epsilon$, $H\left(\epsilon\right)$ is
an element of the algebra $\mathcal{A}$ generated by $L_{z}\otimes S_{z}$
and $I\otimes S_{x}$. In the last example of Section \ref{sec:State-reudcions-due}
we have derived the irreps structure of $\mathcal{A}$ to be given
by the BPT

\noindent\begin{minipage}[c]{1\columnwidth}%
\begin{center}
\vspace{0.5\baselineskip}
\begin{tabular}{ccccccc}
\cline{1-2} \cline{2-2} 
\multicolumn{1}{|c|}{$l,\uparrow$} & \multicolumn{1}{c|}{$l,\downarrow$} &  &  &  &  & \tabularnewline
\cline{1-2} \cline{2-2} 
\multicolumn{1}{|c|}{$-l,\downarrow$} & \multicolumn{1}{c|}{$-l,\uparrow$} &  &  &  &  & \tabularnewline
\cline{1-2} \cline{2-2} 
 &  & $\ddots$ &  &  &  & \tabularnewline
\cline{4-5} \cline{5-5} 
 &  & \multicolumn{1}{c|}{} & \multicolumn{1}{c|}{$1,\uparrow$} & \multicolumn{1}{c|}{$1,\downarrow$} &  & \tabularnewline
\cline{4-5} \cline{5-5} 
 &  & \multicolumn{1}{c|}{} & \multicolumn{1}{c|}{$-1,\downarrow$} & \multicolumn{1}{c|}{$-1,\uparrow$} &  & \tabularnewline
\cline{4-6} \cline{5-6} \cline{6-6} 
 &  &  &  & \multicolumn{1}{c|}{} & \multicolumn{1}{c|}{$0,+$} & \tabularnewline
\cline{6-7} \cline{7-7} 
 &  &  &  &  & \multicolumn{1}{c|}{} & \multicolumn{1}{c|}{$0,-$}\tabularnewline
\cline{7-7} 
\end{tabular} \vspace{0.5\baselineskip}
\par\end{center}%
\end{minipage} (note that in Section \ref{sec:State-reudcions-due} this BPT was
transposed since we were interested in the commutant of $\mathcal{A}$).
This BPT identifies the irrep decomposition
\[
\mathcal{H}\cong\ket{0,-}\oplus\ket{0,+}\oplus\left[\bigoplus_{q=1}^{l}\mathcal{H}_{\nu_{q}}\otimes\mathcal{H}_{\mu_{q}}\right],
\]
where both $\mathcal{H}_{\nu_{q}}$ and $\mathcal{H}_{\mu_{q}}$ are
two-dimensional virtual subsystems for all $q=1,...,l$.

Since $H\left(\epsilon\right)\in\mathcal{A}$, this irreps structure
implies that the Hamiltonian reduces to
\[
H\left(\epsilon\right)=h_{-}\left(\epsilon\right)\ketbra{0,-}{0,-}+h_{+}\left(\epsilon\right)\ketbra{0,+}{0,+}+\bigoplus_{q=1}^{l}I_{\nu_{q}}\otimes H_{\mu_{q}}\left(\epsilon\right)
\]
where $h_{\pm}\left(\epsilon\right)$ are scalars, $H_{\mu_{q}}\left(\epsilon\right)$
are $2\times2$ matrices and $I_{\nu_{q}}$ are $2\times2$ identities.
The explicit matrix elements are then given by 
\[
h_{\pm}\left(\epsilon\right)=\bra{0,\pm}H\left(\epsilon\right)\ket{0,\pm}=\pm\frac{\epsilon}{2}
\]
\[
H_{\mu_{q}}\left(\epsilon\right)=\begin{pmatrix}\bra{q,\uparrow}H\left(\epsilon\right)\ket{q,\uparrow} & \bra{q,\uparrow}H\left(\epsilon\right)\ket{q,\downarrow}\\
\bra{q,\downarrow}H\left(\epsilon\right)\ket{q,\uparrow} & \bra{q,\downarrow}H\left(\epsilon\right)\ket{q,\downarrow}
\end{pmatrix}=\frac{1}{2}\begin{pmatrix}q & \epsilon\\
\epsilon & -q
\end{pmatrix}.
\]

Thus, for each $q=1,...,l$, the Hamiltonian terms $L_{z}\otimes S_{z}$
and $\epsilon I\otimes S_{x}$ act as $q\sigma_{z}$ and $\epsilon\sigma_{x}$
(where $\sigma_{x}$, $\sigma_{z}$ are Pauli matrices) on the virtual
subsystems $\mathcal{H}_{\mu_{q}}$. If $\epsilon$ is a tunable parameter
then we can use $H_{\mu_{q}}\left(\epsilon\right)$ as the control
Hamiltonian for the logical qubit encoded in $\mathcal{H}_{\mu_{q}}$.
If, on the other hand, $\epsilon$ is the uncontrollable random noise
then $\mathcal{H}_{\nu_{q}}$ can be used as a decoherence free subsystem.

In order to further demonstrate the potential applications of the
symmetry-agnostic approach, we will analyze two examples dealing with
the qubit encodings in quantum dot arrays.

The idea of qubit implementations in quantum dots was first proposed
in \citep{Loss98Quantum}. In this setting, individual electrons are
trapped in manufactured potential wells (referred to as ``dots'')
where they can be controlled by the electric potentials that set the
barriers between adjacent dots, and by applying external magnetic
fields. The overall dynamics of such systems are described by the
Hubbard model \citep{Burkard99Coupled}, where the degrees of freedom
are the occupation numbers of electrons in the individual dots (also
referred to as ``orbital'' or ``charge'' degree of freedom), and
the spin degrees of freedom.

Because of the multiple degrees of freedom, there is a variety of
possible qubit encodings in quantum dots; see \citep{russ2017three}
for an overview. Different qubit encodings have different advantages
and disadvantages\footnote{The key characteristics are the levels and sources of noise from gate
operations and the complexity of two-qubit gates.} and it is not our goal to explore these issues here. What we will
focus on is how to identify the possible qubit encodings in the first
place, which at the very least should accommodate arbitrary Bloch
sphere rotations.

In the following two examples we will consider the effective control
Hamiltonian of a quantum dot system and find the possible qubit encodings
where arbitrary Bloch sphere rotations can be performed. This will
be achieved by adopting the symmetry-agnostic approach and finding
out how the independent terms of the Hamiltonian can be reduced. The
reduced subspaces (or subsystems) of the independent terms will then
identify the possible encodings.

\setcounter{example}{1}

\subsubsection*{Example \theexample \refstepcounter{example}}

In this example we will consider the charge quadrupole qubit that
was proposed in \citep{friesen2017decoherence}. The charge quadrupole
qubit is designed to be a more robust version of the charge dipole
qubit against the electric potential noise. By taking the symmetry-agnostic
approach we will show that there is a continuum of possible qubit
encodings between the charge quadrupole and the charge dipole cases
that has not been considered. Due to the systematic nature of this
approach we will also rule out the possibility of any other charge-qubit
encodings in this setting.

Our system consists of a single electron trapped in a triple quantum
dot where it can occupy the dots $\ket 1$, $\ket 2$, $\ket 3$.
We will disregard the spin degree of freedom so our Hilbert space
is just $\mathcal{H}=\spn\left\{ \ket 1,\ket 2,\ket 3\right\} $.
As discussed in \citep{friesen2017decoherence}, the effective control
Hamiltonian has five tunable parameters

\[
H=\begin{pmatrix}u_{1} & t_{12} & 0\\
t_{12} & u_{2} & t_{23}\\
0 & t_{23} & u_{3}
\end{pmatrix}=\begin{pmatrix}\epsilon_{d} & t_{12} & 0\\
t_{12} & \epsilon_{q} & t_{23}\\
0 & t_{23} & -\epsilon_{d}
\end{pmatrix}+\epsilon_{0}I
\]
where $t_{12}$, $t_{23}$ are the tunneling amplitudes, and $u_{1}$,
$u_{2}$, and $u_{3}$ are the dot potentials that can be re-stated
as the detuning parameters
\[
\epsilon_{d}=\frac{u_{1}-u_{3}}{2}\hspace{2cm}\epsilon_{q}=u_{2}-\frac{u_{1}+u_{3}}{2}\hspace{2cm}\epsilon_{0}=\frac{u_{1}+u_{3}}{2}.
\]

The independent Hamiltonian terms here are
\[
T_{12}:=\begin{pmatrix}0 & 1 & 0\\
1 & 0 & 0\\
0 & 0 & 0
\end{pmatrix}\hspace{1cm}T_{23}:=\begin{pmatrix}0 & 0 & 0\\
0 & 0 & 1\\
0 & 1 & 0
\end{pmatrix}\hspace{1cm}D:=\begin{pmatrix}1 & 0 & 0\\
0 & 0 & 0\\
0 & 0 & -1
\end{pmatrix}\hspace{1cm}Q:=\begin{pmatrix}0 & 0 & 0\\
0 & 1 & 0\\
0 & 0 & 0
\end{pmatrix},
\]
so
\[
H=t_{12}T_{12}+t_{23}T_{23}+\epsilon_{d}D+\epsilon_{q}Q
\]
and we have dropped the inconsequential identity term.

In principle, we can assert that $H\in\left\langle T_{12},T_{23},D,Q\right\rangle $
but it is not a very helpful assertion because, as we will see shortly,
the algebra $\left\langle T_{12},T_{23},D,Q\right\rangle $ is the
trivial irreducible algebra $\mathcal{L}\left(\mathcal{H}\right)$.
However, since there are excessive degrees of freedom here (we only
need two independent parameters for arbitrary Bloch sphere rotations)
we can constrain some of the parameters such that the constrained
terms will become reducible.

First, we show that $t_{12}$ and $t_{23}$ cannot be both independent
because the algebra $\left\langle T_{12},T_{23}\right\rangle $ is
irreducible. The spectral projections of $T_{ij}=\Pi_{ij;+}-\Pi_{ij;-}$
are given by

\[
\Pi_{ij;+}:=\ketbra{_{-}^{+i,j}}{_{-}^{+i,j}}\hspace{2cm}\Pi_{ij;-}:=\ketbra{_{-j}^{+i}}{_{-j}^{+i}}
\]
where we have used the shorthand state notation of Eq. \eqref{eq: shorthand state notation}.
Since these spectral projections are rank-$1$ they are all reflecting,
resulting in the reflection network shown in Fig. \ref{fig: T_ij refnet}.

\begin{figure}[H]
\centering{}\includegraphics[viewport=50bp 50bp 650bp 433bp,clip,width=0.4\columnwidth]{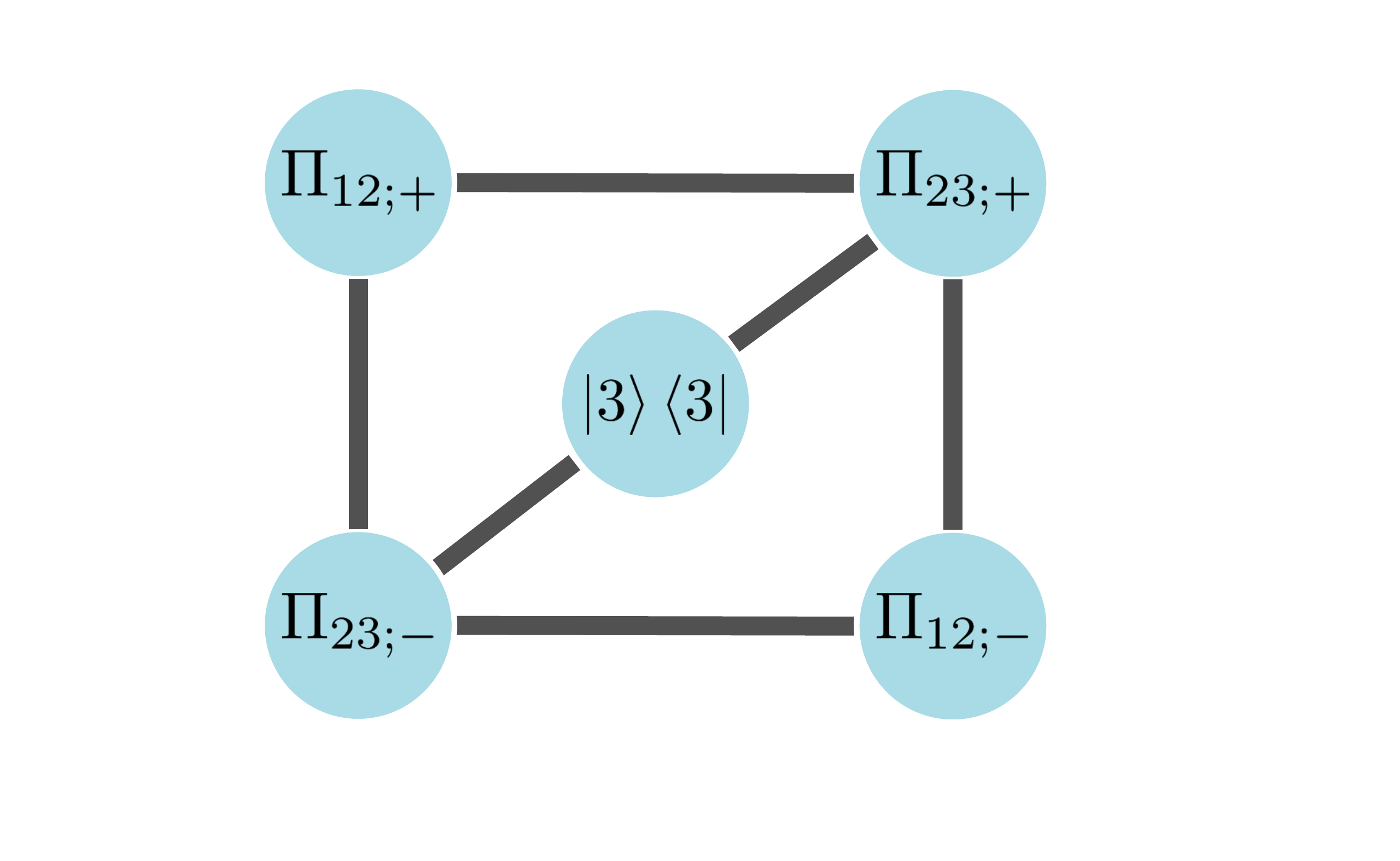}\caption{\label{fig: T_ij refnet}The reflection network from the generators
$T_{12}$, $T_{23}$.}
\end{figure}
Note that both $T_{12}$ and $T_{23}$ are not supported on the whole
Hilbert space ($\Pi_{ij;+}+\Pi_{ij;-}\neq I$) so the resulting reflection
network was incomplete. The projection $\ketbra 33$ was added by
the completion procedure described in Section \ref{subsec:Establishing-completeness-of}.
The resulting BPT is

\noindent\begin{minipage}[c]{1\columnwidth}%
\begin{center}
\vspace{0.5\baselineskip}
\begin{tabular}{|c|c|c|}
\hline 
$_{-}^{+1,2}$ & $_{-2}^{+1}$ & $3$\tabularnewline
\hline 
\end{tabular}\vspace{0.5\baselineskip}
\par\end{center}%
\end{minipage} which defines the full operator algebra $\mathcal{L}\left(\mathcal{H}\right)$.\footnote{This is always the case when the reflection network consists of a
single connected component of rank-$1$ projections and is supported
on the whole $\mathcal{H}$.} Therefore, the action of both $T_{12}$ and $T_{23}$ is irreducible
on $\mathcal{H}$.

Assuming that $t_{12}$ and $t_{23}$ have the same sign, we constrain
these two independent parameters to be $t_{12}=\alpha t$ and $t_{23}=(1-\alpha)t$
for a constant $0\leq\alpha\leq1$ and the new common parameter $t$.
The new constrained term is 
\[
T_{\alpha}:=\alpha T_{12}+\left(1-\alpha\right)T_{23}=\begin{pmatrix}0 & \alpha & 0\\
\alpha & 0 & 1-\alpha\\
0 & 1-\alpha & 0
\end{pmatrix}
\]
so the Hamiltonian is $H=tT_{\alpha}+\epsilon_{d}D+\epsilon_{q}Q$.

For $\alpha=0,1$ the problem reduces to a double quantum dot where
(assuming $\alpha=1$) the positions $\ket 1$, $\ket 2$ identify
the logical qubit basis $\ket{0_{L}}$, $\ket{1_{L}}$ and
\[
H=\begin{pmatrix}\epsilon_{d} & t & 0\\
t & \epsilon_{q} & 0\\
0 & 0 & -\epsilon_{d}
\end{pmatrix}.
\]
Thus, $T_{\alpha=1}=T_{12}$ serves as the Pauli $\sigma_{x}$ operator
and $D$ or $Q$ can serve as $\frac{1}{2}\left(I\pm\sigma_{z}\right)$
operators. This encoding is known as charge dipole.

For $0<\alpha<1$ we observe that the parameters $\epsilon_{d}$ and
$\epsilon_{q}$ also cannot be both independent since the three terms
$T_{\alpha}$, $D$, $Q$ cannot be reduced onto a two-dimensional
subspace. We can see this by considering the three spectral projections
$\Pi_{i=1,2,3}:=\ketbra ii$ provided by $D$ and $Q$. All three
$\Pi_{i=1,2,3}$ are rank-$1$ and orthogonal but we also have to
include the spectral projections of $T_{\alpha}$. Since $\Pi_{1}T_{\alpha}\Pi_{2}\neq0$
and $\Pi_{2}T_{\alpha}\Pi_{3}\neq0$ the resulting reflection network
will have a single connected component that contains all three projections
$\Pi_{i=1,2,3}$. The resulting BPT will therefore be that of the
trivial operator algebra $\mathcal{L}\left(\mathcal{H}\right)$ which
implies the irreducibility of $\left\langle T_{\alpha},D,Q\right\rangle $.

Assuming that $\epsilon_{d}$ and $\epsilon_{q}$ can have opposite
signs, we constrain them as $\epsilon_{d}=\beta\epsilon$ and $\epsilon_{q}=\left(1-\left|\beta\right|\right)\epsilon$
for a constant $-1\leq\beta\leq1$ and the new common parameter $\epsilon$.
The new constrained term is
\[
E_{\beta}:=\beta D+\left(1-\left|\beta\right|\right)Q=\begin{pmatrix}\beta & 0 & 0\\
0 & 1-\left|\beta\right| & 0\\
0 & 0 & -\beta
\end{pmatrix}
\]
so the Hamiltonian is $H=tT_{\alpha}+\epsilon E_{\beta}$.

For $\beta\neq0,\pm1,\pm\frac{1}{2}$ it is easy to see the spectral
projections of $E_{\beta}$ are still the three rank-$1$ projections
$\Pi_{i=1,2,3}$ so the algebra $\left\langle T_{\alpha},E_{\beta}\right\rangle $
is still irreducible. It is less obvious that $\left\langle T_{\alpha},E_{\beta}\right\rangle $
is irreducible even for $\beta=\pm1,\pm\frac{1}{2}$, however, that
is also the case. When $\beta=\pm1$ the projection $\Pi_{2}$ is
initially not a spectral projection of $E_{\beta}$ but the resulting
reflection network will be incomplete. The completion procedure will
add $\Pi_{2}$ and we will have $\Pi_{i=1,2,3}$ once again in the
same connected component. When $\beta=\pm\frac{1}{2}$ one of the
spectral projections of $E_{\beta}$ is rank-$2$ but it will scatter
with the spectral projection of $T_{\alpha}$ and form a single connected
component of rank-$1$ projections. Either way, for all $\beta\neq0$
we will have a single connected component of rank-$1$ projections
which implies the irreducibility of $\left\langle T_{\alpha},E_{\beta}\right\rangle $.

We are therefore left with $\beta=0$, that is, $E_{\beta=0}=Q$ and
the only spectral projection of $Q$ is $\Pi_{2}$. The eigenvalues
of $T_{\alpha}$ are $\pm\lambda_{\alpha}=\pm$$\sqrt{\alpha^{2}+\left(1-\alpha\right)^{2}}$
and the eigenvectors are
\[
\ket{\alpha;\pm}:=\frac{\alpha\ket 1\pm\lambda_{\alpha}\ket 2+\left(1-\alpha\right)\ket 3}{\lambda_{\alpha}\sqrt{2}}.
\]
The two spectral projections of $T_{\alpha}$ are therefore $\Pi_{\alpha;\pm}:=\ket{\alpha;\pm}\bra{\alpha;\pm}$
and the reflection network of the algebra $\left\langle T_{\alpha},Q\right\rangle $
is shown in Fig. \ref{fig: TQ refnet}.

\begin{figure}[H]
\centering{}\includegraphics[viewport=0bp 270bp 737bp 453bp,clip,width=0.5\columnwidth]{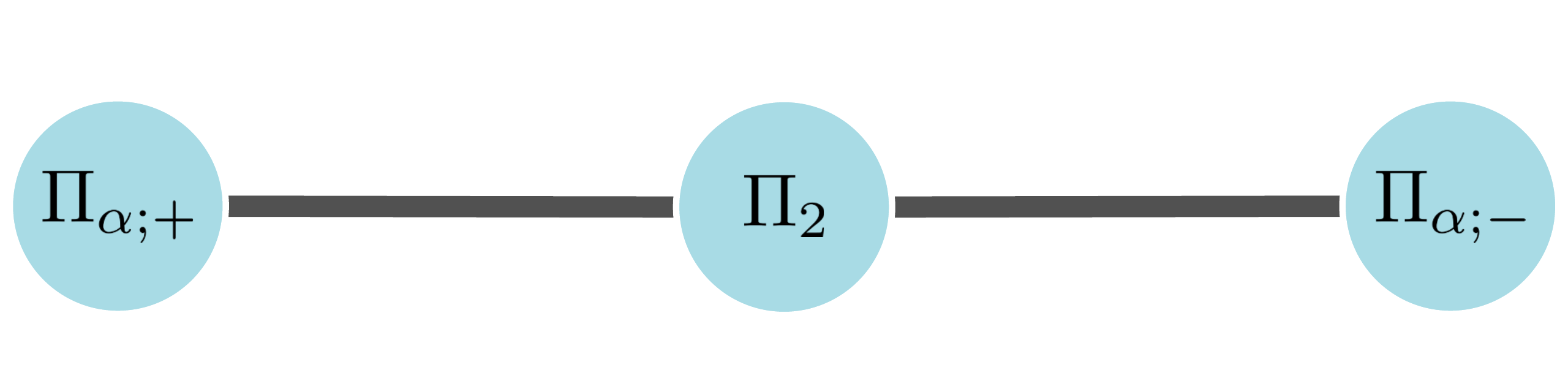}\caption{\label{fig: TQ refnet}The reflection network of $\left\langle T_{\alpha},Q\right\rangle $.}
\end{figure}

The resulting BPT is

\noindent\begin{minipage}[c]{1\columnwidth}%
\begin{center}
\vspace{0.5\baselineskip}
\begin{tabular}{|c|c|}
\hline 
$\alpha;+$ & $\alpha;-$\tabularnewline
\hline 
\end{tabular}\vspace{0.5\baselineskip}
\par\end{center}%
\end{minipage} so $T_{\alpha}$ and $Q$ reduce to the subspace spanned by $\ket{\alpha;+}$
and $\ket{\alpha;-}$. There is a freedom of choice for the logical
qubit basis, but if we want to be consistent with the choice we made
for the case $\alpha=0,1$ we will choose

\begin{align*}
\ket{0_{L}} & =\frac{\ket{\alpha;+}+\ket{\alpha;-}}{\sqrt{2}}=\frac{\alpha\ket 1+\left(1-\alpha\right)\ket 3}{\lambda_{\alpha}}\\
\ket{1_{L}} & =\frac{\ket{\alpha;+}-\ket{\alpha;-}}{\sqrt{2}}=\ket 2.
\end{align*}
The operator $T_{\alpha}$ then acts on $\ket{0_{L}}$, $\ket{1_{L}}$
as $\lambda_{\alpha}\sigma_{x}$, and the operator $Q$ acts as $\frac{1}{2}\left(I-\sigma_{z}\right)$.
By turning the terms $T_{\alpha}$ and $Q$ on and off we can perform
arbitrary Bloch sphere rotations.

In \citep{friesen2017decoherence} the authors have identified the
charge quadrupole qubit that corresponds to the case of $\alpha=\frac{1}{2}$.
Here, using the symmetry-agnostic approach, we have identified qubit
encodings for the continuum $0\leq\alpha\leq1$ where the edge cases
$\alpha=0,1$ correspond to the charge dipole encodings.

\subsubsection*{Example \theexample \refstepcounter{example}}

In quantum dot arrays where each dot is occupied by one electron,
a controlled variation of electric potentials allows the realization
of Heisenberg interactions between adjacent electrons \citep{Loss98Quantum}.
If the qubits are encoded in the spin degree of freedom of the trapped
electrons (as envisioned in \citep{Loss98Quantum}), then in addition
to the electric potentials the qubit gate operations require the use
of variable magnetic fields. The necessity of variable magnetic fields
complicates the design and reduces the performance of this qubits
due to the additional sources of noise and the relatively weak coupling
of the magnetic field to the electron spins.

The surprising resolution to this issue was brought forward by DiVincenzo
\emph{et al.} \citep{divincenzo2000universal} (see also \citep{Bacon2000Universal}),
where the encoding of a single qubit in a subspace of three (or four)
electron spins was proposed. The key advantage of this encoding is
that all qubit gates can be performed with the Heisenberg interaction
only, which in quantum dots is realized by electric potentials without
the need in variable magnetic fields.

In the following, we will first discuss how this qubit encoding is
identified by the irreps of the $SU\left(2\right)$ symmetry of the
Heisenberg interaction. Then, by adopting the symmetry-agnostic approach
we will show that additional encodings in four spins can be found.
These additional encodings are not identified by the $SU\left(2\right)$
symmetry alone and they were not recognized in the original proposal
\citep{Bacon2000Universal}.

The basic idea in \citep{divincenzo2000universal,Bacon2000Universal}
is to encode the qubit in a non-periodic spin-chain of three (as in
\citep{divincenzo2000universal}) or four (as in \citep{Bacon2000Universal})
spins with tunable nearest neighbor interactions 
\[
H=\epsilon_{12}\vec{S}_{1}\cdot\vec{S}_{2}+\epsilon_{23}\vec{S}_{2}\cdot\vec{S}_{3}+\epsilon_{34}\vec{S}_{3}\cdot\vec{S}_{4}
\]
(the term $\epsilon_{34}\vec{S}_{3}\cdot\vec{S}_{4}$ is only relevant
for the four-spin case). Since $H$ is an element of the commutant
of the $SU\left(2\right)$ group, it reduces to the irreps of this
commutant (these structures are sometimes referred to as decoherence
free subspaces and subsystems). The possible qubit encodings can therefore
be identified from the irreps of the commutant of $SU\left(2\right)$.

The commutant's irreps are given by the transposition of the BPT of
the $SU\left(2\right)$ irreps. In the total spin basis $\ket{j,\alpha,m}$,
the transposed BPT for the three-spin case is

\noindent\begin{minipage}[c]{1\columnwidth}%
\begin{center}
\vspace{0.5\baselineskip}
\begin{tabular}{c|cc}
\cline{1-1} 
\multicolumn{1}{|c|}{$\frac{3}{2},+\frac{3}{2}$} &  & \tabularnewline
\cline{1-1} 
\multicolumn{1}{|c|}{$\vdots$} &  & \tabularnewline
\cline{1-1} 
\multicolumn{1}{|c|}{$\frac{3}{2},-\frac{3}{2}$} &  & \tabularnewline
\hline 
 & \multicolumn{1}{c|}{$\frac{1}{2},s,+\frac{1}{2}$} & \multicolumn{1}{c|}{$\frac{1}{2},t,+\frac{1}{2}$}\tabularnewline
\cline{2-3} \cline{3-3} 
 & \multicolumn{1}{c|}{$\frac{1}{2},s,-\frac{1}{2}$} & \multicolumn{1}{c|}{$\frac{1}{2},t,-\frac{1}{2}$}\tabularnewline
\cline{2-3} \cline{3-3} 
\multicolumn{1}{c}{} & $\downarrow$ & $\downarrow$\tabularnewline
\cline{2-3} \cline{3-3} 
 & \multicolumn{1}{c|}{$0_{L}$} & \multicolumn{1}{c|}{$1_{L}$}\tabularnewline
\cline{2-3} \cline{3-3} 
\end{tabular} .\vspace{0.5\baselineskip}
\par\end{center}%
\end{minipage} From this BPT we can see that in the three-spin case, the candidates
for the qubit encodings are the two-dimensional invariant subspaces
of $j=\frac{1}{2}$ and $m=\frac{1}{2}$ or $j=\frac{1}{2}$ and $m=-\frac{1}{2}$
(top and bottom rows of the right block). As we have seen in Eq. \eqref{eq:Heisnberg nu_1/2 Hamil},
the Heisenberg interaction Hamiltonian acts on these subspaces as
\[
H_{\nu_{1/2}}=\frac{1}{4}\begin{pmatrix}-3\epsilon_{23} & \sqrt{3}\epsilon_{12}\\
\sqrt{3}\epsilon_{12} & \epsilon_{23}-2\epsilon_{12}
\end{pmatrix}=\frac{\epsilon_{12}}{4}\left(\sqrt{3}\sigma_{x}+\sigma_{z}-I\right)+\frac{\epsilon_{23}}{8}\left(\sigma_{z}+5I\right).
\]
One can change the basis inside these subspaces to get a more convenient
form of $H_{\nu_{1/2}}$ in terms of Pauli matrices. Regardless of
the choice of basis, we can implement arbitrary qubit rotations in
this subspace by controlling the couplings $\epsilon_{12}$, $\epsilon_{23}$.

Thus, the irreps of the commutant of $SU\left(2\right)$ identify
the two-dimensional subspaces on which the independent interaction
terms of the Hamiltonian can be reduced. Note that in the case of
three spins there are exactly two independent parameters $\epsilon_{12}$,
$\epsilon_{23}$ which is the minimal number of parameters needed
to control a qubit. We will now see that with four spins, where there
are three independent parameters, we can utilize the extra degree
of freedom to find qubit subspaces that are not captured by the $SU\left(2\right)$
symmetry alone.\footnote{In the case of three spins in a periodic configuration we could also
have three independent parameters but the Hilbert space is too small
to take advantage of that.}

Let us first identify the four-spin qubit encoding using $SU\left(2\right)$
symmetry. In this case, the representation of $SU\left(2\right)$
decomposes as
\begin{align*}
\underline{\frac{1}{2}}^{\otimes4} & =\left[\underline{\frac{1}{2}}\otimes\underline{\frac{1}{2}}\right]\otimes\left[\underline{\frac{1}{2}}\otimes\underline{\frac{1}{2}}\right]=\left[\underline{1}\oplus\underline{0}\right]\otimes\left[\underline{1}\oplus\underline{0}\right]\\
 & =(\underline{1}\otimes\underline{1})\oplus(\underline{1}\otimes\underline{0})\oplus(\underline{0}\otimes\underline{1})\oplus(\underline{0}\otimes\underline{0})\\
 & =\left(\underline{2}\oplus\underline{1}\oplus\underline{0}\right)\oplus\left(\underline{1}\right)\oplus\left(\underline{1}\right)\oplus\left(\underline{0}\right).
\end{align*}
In order to distinguish the different variant of equivalent irreps
we will use the labels $\alpha=tt,ts,st,ss$ that specify whether
the left and right pairs of spins are in the singlet or triplet states.
For $j=2$ there are no variants to distinguish and $\alpha=tt$;
for $j=1$ we have the three variants $\alpha=tt,ts,st$, and for
$j=0$ we have $\alpha=tt,ss$.

In the total spin basis $\ket{j,\alpha,m}$, the transposed BPT of
the irreps of the four-spin-representation of $SU\left(2\right)$
is

\noindent\begin{minipage}[c]{1\columnwidth}%
\begin{center}
\vspace{0.5\baselineskip}
\begin{tabular}{c|lll|cc}
\cline{1-1} 
\multicolumn{1}{|c|}{$2,tt,+2$} &  &  & \multicolumn{1}{l}{} &  & \tabularnewline
\cline{1-1} 
\multicolumn{1}{|c|}{$\vdots$} &  &  & \multicolumn{1}{l}{} &  & \tabularnewline
\cline{1-1} 
\multicolumn{1}{|c|}{$2,tt,-2$} &  &  & \multicolumn{1}{l}{} &  & \tabularnewline
\cline{1-4} \cline{2-4} \cline{3-4} \cline{4-4} 
 & \multicolumn{1}{l|}{$1,tt,+1$} & \multicolumn{1}{l|}{$1,ts,+1$} & $1,st,+1$ &  & \tabularnewline
\cline{2-4} \cline{3-4} \cline{4-4} 
 & \multicolumn{1}{l|}{$1,tt,0$} & \multicolumn{1}{l|}{$1,ts,0$} & $1,st,0$ &  & \tabularnewline
\cline{2-4} \cline{3-4} \cline{4-4} 
 & \multicolumn{1}{l|}{$1,tt,-1$} & \multicolumn{1}{l|}{$1,ts,-1$} & $1,st,-1$ &  & \tabularnewline
\cline{2-6} \cline{3-6} \cline{4-6} \cline{5-6} \cline{6-6} 
\multicolumn{1}{c}{} &  &  &  & \multicolumn{1}{c|}{$0,tt$} & \multicolumn{1}{c|}{$0,ss$}\tabularnewline
\cline{5-6} \cline{6-6} 
\multicolumn{1}{c}{} &  &  & \multicolumn{1}{l}{} & $\downarrow$ & $\downarrow$\tabularnewline
\cline{5-6} \cline{6-6} 
\multicolumn{1}{c}{} &  &  &  & \multicolumn{1}{c|}{$0_{L}$} & \multicolumn{1}{c|}{$1_{L}$}\tabularnewline
\cline{5-6} \cline{6-6} 
\end{tabular} .\vspace{0.5\baselineskip}
\par\end{center}%
\end{minipage} This irreps structure suggests that the only two-dimensional subspace
on which all the independent terms of $H$ can be reduced is the subspace
of $j=0$. Thus, as proposed in \citep{Bacon2000Universal}, the subspace
of $j=0$ can be used to encode the qubit 
\begin{align*}
\ket{0_{L}} & =\ket{0,tt}\\
\ket{1_{L}} & =\ket{0,ss}
\end{align*}
that is controlled by the independent terms of $H$.

From the $SU\left(2\right)$ symmetry perspective, the subspace of
$j=1$ is ruled out for qubit encodings because the independent terms
of $H$ reduce there only onto three-dimensional subspaces.\footnote{Note that these subspaces are three-dimensional not because $j=1$
are three-dimensional irreps but because there are three variants
$\alpha=tt,ts,st$ of the $j=1$ irreps.} This, however, ignores the fact that we do not actually need to reduce
all three independent terms to control a qubit; two will suffice.

For concreteness let us focus on the subspace of $j=1$ and $m=1$
(the top row in the middle block of the BPT) and simplify the basis
notation $\ket{\alpha}\equiv\ket{1,\alpha,1}$ where $\alpha=tt,ts,st$.
In terms of the product basis of the four spins we have
\begin{align*}
\ket{ts} & =\ket{\uparrow\uparrow}\frac{\ket{\uparrow\downarrow}-\ket{\downarrow\uparrow}}{\sqrt{2}}=\frac{\ket{\uparrow\uparrow\uparrow\downarrow}-\ket{\uparrow\uparrow\downarrow\uparrow}}{\sqrt{2}}\\
\ket{st} & =\frac{\ket{\uparrow\downarrow}-\ket{\downarrow\uparrow}}{\sqrt{2}}\ket{\uparrow\uparrow}=\frac{\ket{\uparrow\downarrow\uparrow\uparrow}-\ket{\downarrow\uparrow\uparrow\uparrow}}{\sqrt{2}}\\
\ket{tt} & =\frac{1}{\sqrt{2}}\left(\frac{\ket{\uparrow\downarrow}+\ket{\downarrow\uparrow}}{\sqrt{2}}\ket{\uparrow\uparrow}-\ket{\uparrow\uparrow}\frac{\ket{\uparrow\downarrow}+\ket{\downarrow\uparrow}}{\sqrt{2}}\right)=\frac{\ket{\uparrow\downarrow\uparrow\uparrow}+\ket{\downarrow\uparrow\uparrow\uparrow}-\ket{\uparrow\uparrow\uparrow\downarrow}-\ket{\uparrow\uparrow\downarrow\uparrow}}{2}.
\end{align*}

When restricted to this subspace (the restriction is denoted with
the brackets $\left[\,\,\right]$), the interaction terms of the left
and right pairs of spins are diagonal in the $\ket{\alpha}$ basis
\[
\left[\vec{S}_{1}\cdot\vec{S}_{2}\right]=\frac{1}{4}\ketbra{tt}{tt}+\frac{1}{4}\ketbra{ts}{ts}-\frac{3}{4}\ketbra{st}{st}
\]
\[
\left[\vec{S}_{3}\cdot\vec{S}_{4}\right]=\frac{1}{4}\ketbra{tt}{tt}+\frac{1}{4}\ketbra{st}{st}-\frac{3}{4}\ketbra{ts}{ts}.
\]
The interaction term of the central pair of spins $\vec{S}_{2}\cdot\vec{S}_{3}$
is similarly diagonal but in a different basis
\[
\left[\vec{S}_{2}\cdot\vec{S}_{3}\right]=\frac{1}{4}\ketbra{_{t}^{t}}{_{t}^{t}}+\frac{1}{4}\ketbra{_{s}^{t}}{_{s}^{t}}-\frac{3}{4}\ketbra{_{t}^{s}}{_{t}^{s}}.
\]
In this notation the top letter $t$ or $s$ refers to the triplet
or singlet state of the central pair of spins ($2,3$), and the button
letter refers to the state of the pair of boundary spins ($1,4$).
Explicitly these are
\begin{align*}
\ket{_{s}^{t}} & =\frac{\ket{\uparrow\uparrow\uparrow\downarrow}-\ket{\downarrow\uparrow\uparrow\uparrow}}{\sqrt{2}}=\frac{\ket{ts}+\ket{st}-\sqrt{2}\ket{tt}}{2}\\
\ket{_{t}^{s}} & =\frac{\ket{\uparrow\downarrow\uparrow\uparrow}-\ket{\uparrow\uparrow\downarrow\uparrow}}{\sqrt{2}}=\frac{\ket{ts}+\ket{st}+\sqrt{2}\ket{tt}}{2}\\
\ket{_{t}^{t}} & =\frac{\ket{\uparrow\uparrow\downarrow\uparrow}+\ket{\uparrow\downarrow\uparrow\uparrow}-\ket{\uparrow\uparrow\uparrow\downarrow}-\ket{\downarrow\uparrow\uparrow\uparrow}}{2}=\frac{\ket{st}-\ket{ts}}{\sqrt{2}}.
\end{align*}

Let us now constrain the two independent terms $\vec{S}_{1}\cdot\vec{S}_{2}$
and $\vec{S}_{3}\cdot\vec{S}_{4}$ into one independent term. That
is, for some $0\leq\beta\leq1$ we have the new term
\[
\left[S_{\beta}\right]:=\beta\left[\vec{S}_{1}\cdot\vec{S}_{2}\right]+\left(1-\beta\right)\left[\vec{S}_{3}\cdot\vec{S}_{4}\right]=\frac{1}{4}\ketbra{tt}{tt}+\left(\beta-\frac{3}{4}\right)\ketbra{ts}{ts}+\left(\frac{1}{4}-\beta\right)\ketbra{st}{st}.
\]
Note that the degeneracy of the eigenspaces of $\left[S_{\beta}\right]$
changes when $\beta=0,\frac{1}{2},1$. For $\beta\neq0,\frac{1}{2},1$,
its spectral projections are $\ketbra{tt}{tt}$, $\ketbra{ts}{ts}$,
$\ketbra{st}{st}$ which end up in a fully connected reflection network
if we scatter them with the spectral projections of $\left[\vec{S}_{2}\cdot\vec{S}_{3}\right]$.
This means that $\left[S_{\beta}\right]$ and $\left[\vec{S}_{2}\cdot\vec{S}_{3}\right]$
are irreducible for $\beta\neq0,\frac{1}{2},1$.

For $\beta=1$ or $\beta=0$ we disregard one of the interactions
$\left[\vec{S}_{3}\cdot\vec{S}_{4}\right]$ or $\left[\vec{S}_{1}\cdot\vec{S}_{2}\right]$
and focus on the interactions between three spins. This will lead
to a reduction of the remaining two terms onto a two dimensional subspace
inside the $j=1$, $m=1$ subspace. We will not elaborate on this
reduction since it is not much different than the three-spin encoding.

The genuinely different encoding that is possible here is for $\beta=\frac{1}{2}$.
The spectral projections of $\left[S_{\beta=\frac{1}{2}}\right]$
are 
\[
\Pi_{ts,st}=\ketbra{ts}{ts}+\ketbra{st}{st}\hspace{1cm}\Pi_{tt}=\ketbra{tt}{tt}
\]
and the spectral projections of $\left[\vec{S}_{2}\cdot\vec{S}_{3}\right]$
are
\[
\Pi_{_{t}^{t},{}_{s}^{t}}=\ketbra{_{t}^{t}}{_{t}^{t}}+\ketbra{_{s}^{t}}{_{s}^{t}}\hspace{1cm}\Pi_{_{t}^{s}}=\ketbra{_{t}^{s}}{_{t}^{s}}.
\]
Since both pairs of spectral projections sum to the identity on this
subspace, one of these projections is redundant so we drop $\Pi_{_{t}^{t},{}_{s}^{t}}$.

The scattering calculation is then 
\[
\begin{array}{c}
\Pi_{_{t}^{s}}\\
\\
\Pi_{ts,st}
\end{array}\Diagram{fdA &  & fuA\\
 & f\\
fuA &  & fdA
}
\begin{array}{cc}
\Pi_{_{t}^{s}}\\
\\
\Pi_{ts+st}, & \Pi_{ts-st}
\end{array}\hspace{2cm}\begin{array}{c}
\Pi_{_{t}^{s}}\\
\\
\Pi_{tt}
\end{array}\Diagram{fdA &  & fuA\\
 & f\\
fuA &  & fdA
}
\begin{array}{cc}
\Pi_{_{t}^{s}}\\
\\
\Pi_{tt}
\end{array}
\]
where the new projections are $\Pi_{ts\pm st}:=\ket{ts\pm st}\bra{ts\pm st}$
and
\[
\ket{ts\pm st}:=\frac{\ket{ts}\pm\ket{st}}{\sqrt{2}}.
\]
The resulting reflection network is shown in Fig. \ref{fig:The-reflection-network ss and s_b}.
\begin{figure}[H]
\centering{}\includegraphics[viewport=0bp 100bp 737bp 453bp,clip,width=0.5\columnwidth]{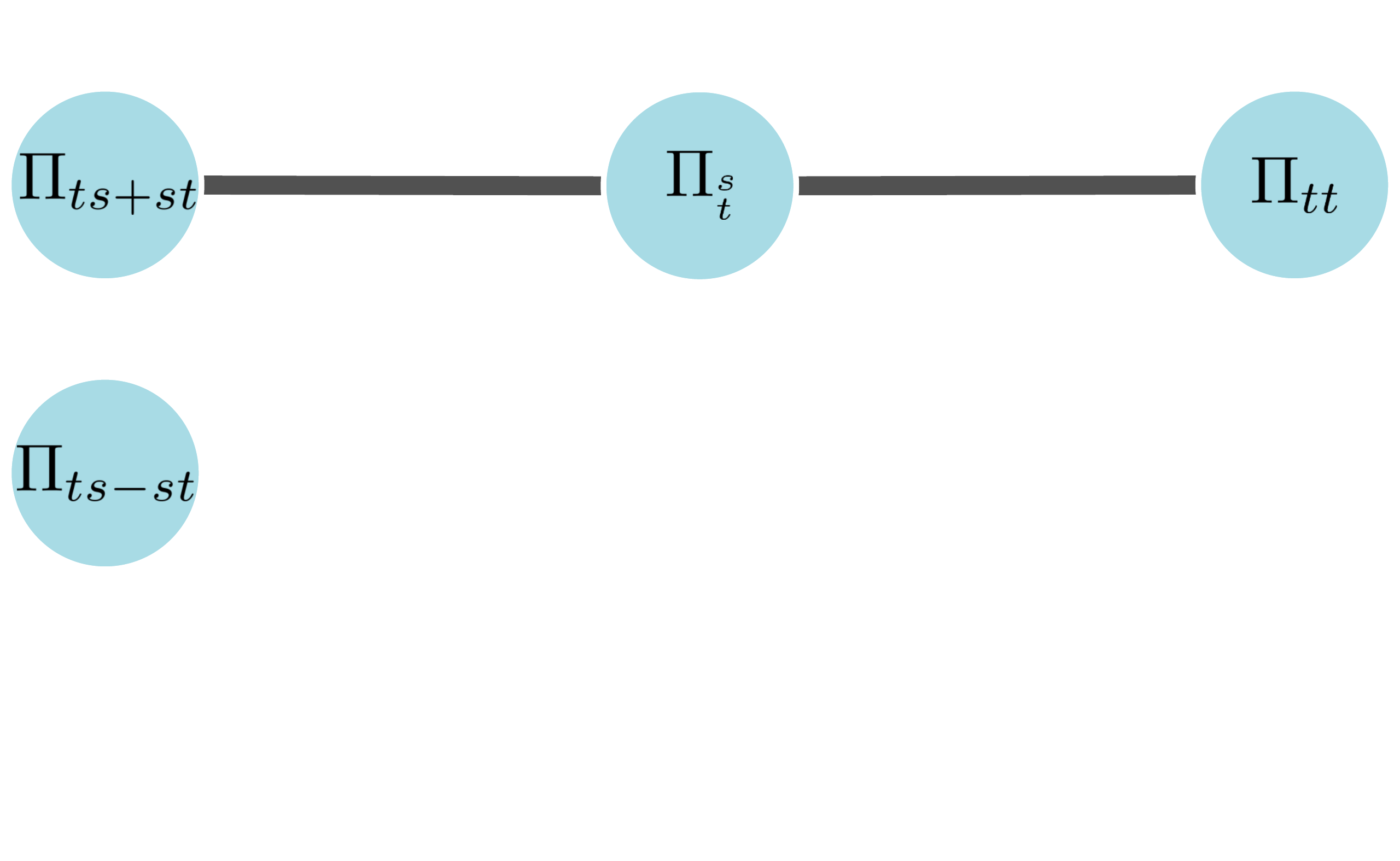}\caption{\label{fig:The-reflection-network ss and s_b}The reflection network
from the spectral projections of $\left[\vec{S}_{2}\cdot\vec{S}_{3}\right]$
and $\left[S_{\beta=\frac{1}{2}}\right]$.}
\end{figure}
Therefore, in the three-dimensional subspace of $j=1$, $m=1$ there
is a two-dimensional subspace on which both $\left[S_{\beta=\frac{1}{2}}\right]$
and $\left[\vec{S}_{2}\cdot\vec{S}_{3}\right]$ reduce. This subspace
is identified by the BPT constructed from the reflection network in
Fig. \ref{fig:The-reflection-network ss and s_b}:

\noindent\begin{minipage}[c]{1\columnwidth}%
\begin{center}
\vspace{0.5\baselineskip}
\begin{tabular}{c|c|c|}
\cline{1-1} 
\multicolumn{1}{|c|}{$ts-st$} & \multicolumn{1}{c}{} & \multicolumn{1}{c}{}\tabularnewline
\hline 
 & $ts+st$ & $tt$\tabularnewline
\cline{2-3} \cline{3-3} 
\multicolumn{1}{c}{} & \multicolumn{1}{c}{$\downarrow$} & \multicolumn{1}{c}{$\downarrow$}\tabularnewline
\cline{2-3} \cline{3-3} 
 & $0_{L}$ & $1_{L}$\tabularnewline
\cline{2-3} \cline{3-3} 
\end{tabular} .\vspace{0.5\baselineskip}
\par\end{center}%
\end{minipage}

The possible qubit basis for $j=1$, $m=1$ are then
\begin{eqnarray*}
\ket{0_{L}}= & \ket{ts+st} & =\frac{\ket{\uparrow\downarrow\uparrow\uparrow}-\ket{\downarrow\uparrow\uparrow\uparrow}+\ket{\uparrow\uparrow\uparrow\downarrow}-\ket{\uparrow\uparrow\downarrow\uparrow}}{2}\\
\ket{1_{L}}= & \ket{tt} & =\frac{\ket{\uparrow\downarrow\uparrow\uparrow}+\ket{\downarrow\uparrow\uparrow\uparrow}-\ket{\uparrow\uparrow\uparrow\downarrow}-\ket{\uparrow\uparrow\downarrow\uparrow}}{2}.
\end{eqnarray*}
It is easy to verify that the Hamiltonian terms restricted to this
subspace act as 
\begin{align*}
\left[S_{\beta=\frac{1}{2}}\right] & =-\frac{1}{4}\ketbra{0_{L}}{0_{L}}+\frac{1}{4}\ketbra{1_{L}}{1_{L}}=-\frac{1}{4}\sigma_{z}\\
\left[\vec{S}_{2}\cdot\vec{S}_{3}\right] & =\frac{1}{4}\ketbra{+_{L}}{+_{L}}-\frac{3}{4}\ketbra{-_{L}}{-_{L}}=\frac{1}{2}\sigma_{x}-\frac{1}{4}I
\end{align*}
where $\ket{\pm_{L}}\propto\ket{0_{L}}\pm\ket{1_{L}}$. Thus, we have
identified a controllable qubit encoded in the subspace of $j=1$,
$m=1$. Similar encodings can be found in the subspaces of $j=1$,
$m=0$ and $j=1$, $m=-1$.

In retrospect, we recognize that the new qubit encodings can be found
by imposing the permutation symmetry that swaps both $1\leftrightarrow4$
and $2\leftrightarrow3$ spins.\footnote{The original $H$ does not have this symmetry but it has redundant
degrees of freedom that can be fixed by imposing additional symmetries.} In order to comply with this symmetry we would have to constrain
the independent parameters $\epsilon_{12}=\epsilon_{34}$, which is
the same as fixing $\beta=\frac{1}{2}$. Then, the irreps of the commutant
of the combined symmetry of $SU\left(2\right)$ and these permutations
would point to this qubit encoding.

In the symmetry-agnostic approach we do not have to come up with the
additional symmetries to impose and construct the combined irreps.
Instead, we follow the systematic procedure where we introduce the
constraint $\beta$, note the special values $\beta=0,\frac{1}{2},1$
where the degeneracies change, and then identify the qubit by reducing
the constrained terms with the Scattering Algorithm.

\chapter{Beyond operator algebras\label{chap:Quantum-coarse-graining}}

In the previous chapters we have formulated the idea of reduction
based on the solid mathematical framework of operator algebras. In
this chapter we will push the formalism of bipartition tables to its
limits and advance the idea of reduction beyond operator algebras.

The fact that certain subspaces of observables, not algebras, can
identify new notions of bipartition and entanglement, has been brought
forward by Barnum, Viola, \emph{et al.} in \citep{Barnum04,viola2004entanglement,viola2010entanglement}
(see also \citep{Alicki09}). The need for generalized bipartition
beyond tensor products and operator algebras has also been expressed
in \citep{Ghosh18Quantum}, motivated by the definition of local entanglement
in the theories of quantum gravity.

In the following, we will derive such generalized notion of bipartition
by relaxing the rigid structure of bipartition tables and operator
algebras. Despite the relaxation, the essence of bipartition tables
will remain the same and the resulting structure naturally generalizes
tensor products to become \emph{partial }bipartitions, and virtual
subsystems to become\emph{ partial }subsystems. The associated state
reductions produced by tracing out a partial subsystem are analogous
to coarse-graining of classical probability distributions. Such reductions
will therefore be called \emph{quantum coarse-graining. }The ideas
of quantum coarse-graining were originally published in \citep{Kabernik2018Quantum}. 

Similar adaptations of the classical notion of coarse-graining in
quantum theory have been explored in \citep{faist2016quantum,duarte2017emerging,correia2020macro}
(for other approaches to the quantum notion of coarse-graining see
\citep{singh2018quantum,DiMatteo2017,duarte2020compatibility}). Our
main contribution is in deriving the general operational meaning of
such notion of coarse-graining in quantum theory, and rigorously demonstrating
its relation to the analogous classical notion.

Before formally introducing the new ideas, we will begin in Section
\ref{sec:Motivating-example} with a motivating example that demonstrates
why the operator-algebraic structure is too rigid and how it can be
relaxed. In Section \ref{sec:Partial-bipartitions} we will proceed
with an illustrative example of the classical notion of coarse-graining.
We will then generalize the formalism of bipartition tables to partial
bipartitions and make the connection with classical coarse-graining.
Finally, we will derive the operational meaning of quantum coarse-graining
in Theorem \ref{thm:oper meaning of part BP} which generalizes the
operational meaning of the partial trace map beyond tensor products.

\section{Motivating example\label{sec:Motivating-example}}

In this section we will investigate a simple example where the operator-algebraic
approach falls short of providing a satisfying solution. We will see
how the irreps structure of operator algebras can be too rigid and
that a relaxation of this structure has benefits. This motivates a
more serious investigation of such relaxed structures that we will
end up calling partial bipartitions in the next section.

Consider a communication scenario where Alice has a spin-$1$ system
in the state 
\[
\ket{\psi}=\alpha_{1}\ket 1+\alpha_{0}\ket 0+\alpha_{-1}\ket{-1}
\]
(the basis are the eigenvectors of the $z$-component of the spin
operator) and she wants to send it to Bob. Unfortunately, Alice cannot
send the spin-$1$ system directly but she can prepare and send a
single qubit in any (pure or mixed) state $\rho_{qubit}$. It is,
of course, not possible to genuinely encode a three-dimensional system
into a two-dimensional one without loosing some information. So, Alice
has to prioritize the observables of the spin-$1$ system that she
wants to preserve.

Assume that Alice's top priority is the observable
\[
Z=\ketbra 00-\ketbra 11-\ketbra{-1}{-1}
\]
that distinguishes the state $\ket 0$ from the rest. Alice can then
encode
\[
\ket{\psi}\longmapsto\rho_{qubit}=\begin{pmatrix}\left|\alpha_{0}\right|^{2} & 0\\
0 & \left|\alpha_{1}\right|^{2}+\left|\alpha_{-1}\right|^{2}
\end{pmatrix}
\]
and Bob can recover all the statistical information about $Z$ from
the Pauli observable $\sigma_{z}$. This encoding treats the qubit
as a classical bit so the interesting question here is how can Alice
take advantage of the full qubit to communicate additional information
about the spin-$1$ system.

Let us first try the operator-algebraic approach to address this question.

We may consider the encoding into a qubit as a quantum state reduction.
If only the observable $Z$ is considered, the state reduction is
given by the irreps structure of the algebra $\left\langle Z\right\rangle $:

\noindent\begin{minipage}[c]{1\columnwidth}%
\begin{center}
\vspace{0.5\baselineskip}
\begin{tabular}{c|c|}
\cline{1-1} 
\multicolumn{1}{|c|}{$0$} & \multicolumn{1}{c}{}\tabularnewline
\hline 
 & $1$\tabularnewline
\cline{2-2} 
 & $-1$\tabularnewline
\cline{2-2} 
\multicolumn{1}{c}{$\downarrow$} & \multicolumn{1}{c}{$\downarrow$}\tabularnewline
\hline 
\multicolumn{1}{|c|}{$0_{L}$} & $1_{L}$\tabularnewline
\hline 
\end{tabular} .\vspace{0.5\baselineskip}
\par\end{center}%
\end{minipage} The minimal isometries here are just the spectral projections of
$Z$ 
\[
\Pi_{0}:=\ketbra 00\hspace{1cm}\Pi_{\pm1}:=\ketbra 11+\ketbra{-1}{-1}.
\]
Using the general construction of state reduction maps in Eq. \ref{eq:general state reduc map},
we will get the encoding we know 
\[
\ket{\psi}\longmapsto\rho_{qubit}=\begin{pmatrix}\bra{\psi}\Pi_{0}\ket{\psi}\\
 & \bra{\psi}\Pi_{\pm1}\ket{\psi}
\end{pmatrix}=\begin{pmatrix}\left|\alpha_{0}\right|^{2} & 0\\
0 & \left|\alpha_{1}\right|^{2}+\left|\alpha_{-1}\right|^{2}
\end{pmatrix}.
\]

If in addition to $Z$ Alice wants encode information about any other
observable $X$, she has to consider the irreps of $\left\langle Z,X\right\rangle $.
There is, however, only five ``shapes'' that the BPT can have in
a three-dimensional Hilbert space. The possibilities are

\noindent\begin{minipage}[c]{1\columnwidth}%
\begin{center}
\vspace{0.5\baselineskip}
\begin{tabular}{|>{\centering}p{0.25cm}|}
\hline 
\tabularnewline
\hline 
\tabularnewline
\hline 
\tabularnewline
\hline 
\end{tabular}\hspace*{20bp}%
\begin{tabular}{>{\centering}p{0.25cm}|>{\centering}p{0.25cm}|}
\cline{1-1} 
\multicolumn{1}{|>{\centering}p{0.25cm}|}{} & \multicolumn{1}{>{\centering}p{0.25cm}}{}\tabularnewline
\hline 
 & \tabularnewline
\cline{2-2} 
 & \tabularnewline
\cline{2-2} 
\end{tabular}\hspace*{20bp}%
\begin{tabular}{>{\centering}p{0.25cm}|>{\centering}p{0.25cm}|>{\centering}p{0.25cm}}
\cline{1-1} 
\multicolumn{1}{|>{\centering}p{0.25cm}|}{} & \multicolumn{1}{>{\centering}p{0.25cm}}{} & \tabularnewline
\cline{1-2} \cline{2-2} 
 &  & \tabularnewline
\cline{2-3} \cline{3-3} 
\multicolumn{1}{>{\centering}p{0.25cm}}{} &  & \multicolumn{1}{>{\centering}p{0.25cm}|}{}\tabularnewline
\cline{3-3} 
\end{tabular}\hspace*{20bp}%
\begin{tabular}{|>{\centering}p{0.25cm}|>{\centering}p{0.25cm}|>{\centering}p{0.25cm}|}
\cline{1-1} 
 & \multicolumn{1}{>{\centering}p{0.25cm}}{} & \multicolumn{1}{>{\centering}p{0.25cm}}{}\tabularnewline
\hline 
\multicolumn{1}{>{\centering}p{0.25cm}|}{} &  & \tabularnewline
\cline{2-3} \cline{3-3} 
\end{tabular}\hspace*{20bp}%
\begin{tabular}{|>{\centering}p{0.25cm}|>{\centering}p{0.25cm}|>{\centering}p{0.25cm}|}
\hline 
 &  & \tabularnewline
\hline 
\end{tabular} .\vspace{0.5\baselineskip}
\par\end{center}%
\end{minipage}

The first shape (from the left) corresponds to a reduction onto a
one-dimensional system that preserves no information. The second shape
is what we got for $\left\langle Z\right\rangle $. Therefore, unless
$X$ has the same spectral projections as $Z$ (which makes it the
same observable up to eigenvalues), the irreps structure of $\left\langle Z,X\right\rangle $
will have one of the last three shapes. The last three shapes, however,
correspond to reductions onto a three-dimensional system so it will
not work as a qubit encoding.

Therefore, the operator-algebraic perspective suggests that it is
not possible to encode in a qubit another (distinct) observable together
with $Z$.

We should now point out that in the operator-algebraic approach the
state reduction map is not just concerned with the generators of the
algebra but with the whole algebra. For $\left\langle Z\right\rangle $
the associated state reduction map preserves the expectation values
not just of $Z$, but also of $Z^{n}$ for all $n$. In particular,
the spectral projections $\Pi_{0}$ and $\Pi_{\pm1}$ are also in
$\left\langle Z\right\rangle $ so their expectation values, which
are the probabilities of the two outcomes, are also preserved. This
suggests that if we want to include another observable $X$ we may
be able to compromise on preserving only its expectation values without
preserving the probabilities of the individual outcomes.

This brings us to the main point: The irreps structure of operator
algebras is too rigid for some tasks and a more flexible structure
is desired. Since BPTs are the visual representations of irreps structures,
we can relax the rigidity of irreps structures by relaxing the rigidity
of bipartition tables.

By their original Definition \ref{def:A-bipartition-table}, BPTs
can only have a block diagonal form where each block is rectangular,
that is, each row (or column) in the block has the same number of
cells. In Lemma \ref{lem: for all S_kl there is a BPT} we have showed
that all operator algebras correspond to BPTs defined this way. We
will now relax the requirement for BPT blocks to be rectangular and
such BPTs will not correspond to any operator algebra.

Consider, for example, the arrangement (and the implied state reduction)
given by the BPT

\noindent\begin{minipage}[c]{1\columnwidth}%
\begin{center}
\vspace{0.5\baselineskip}
\begin{tabular}{|c|c|}
\hline 
$0$ & $1$\tabularnewline
\hline 
\multicolumn{1}{c|}{} & $-1$\tabularnewline
\cline{2-2} 
\multicolumn{1}{c}{$\downarrow$} & \multicolumn{1}{c}{$\downarrow$}\tabularnewline
\hline 
$0_{L}$ & $1_{L}$\tabularnewline
\hline 
\end{tabular} .\vspace{0.5\baselineskip}
\par\end{center}%
\end{minipage} There is only one block here and the number of cells in the first
and second rows is not the same. Such BPT cannot arise from any operator
algebra, however, we can still use this arrangement to construct partial
isometries (we will not call them ``minimal'' anymore because we
no longer have an algebra). Following the original construction of
isometries from BPTs given in Eq. \eqref{eq: def of S_kl from BPT},
we get the projections $\Pi_{0}$ and $\Pi_{\pm1}$ as before, but
now we also get the proper isometries
\[
S_{0,1}:=\ketbra 01\hspace{1cm}S_{1,0}:=\ketbra 10.
\]

The state reduction map is defined as before only now we have the
additional proper isometries that preserve some of the coherences
\[
\ket{\psi}\longmapsto\rho_{qubit}=\begin{pmatrix}\bra{\psi}\Pi_{0}\ket{\psi} & \bra{\psi}S_{1,0}\ket{\psi}\\
\bra{\psi}S_{0,1}\ket{\psi} & \bra{\psi}\Pi_{\pm1}\ket{\psi}
\end{pmatrix}=\begin{pmatrix}\left|\alpha_{0}\right|^{2} & \alpha_{1}^{*}\alpha_{0}\\
\alpha_{0}^{*}\alpha_{1} & \left|\alpha_{1}\right|^{2}+\left|\alpha_{-1}\right|^{2}
\end{pmatrix}.
\]
It is easy to verify that $\rho_{qubit}$ is a positive operator of
trace $1$, so it is a proper quantum state.

As before, Bob can recover all the statistical information about $Z$
(including the probabilities of the individual outcomes) from the
Pauli observable $\sigma_{z}$. Since the coherence terms between
$\ket 0$ and $\ket 1$ are also preserved, Bob can now recover the
expectation values of observables such as
\begin{align*}
X & :=S_{1,0}+S_{0,1}=\ketbra 10+\ketbra 01\\
Y & :=iS_{1,0}-iS_{0,1}=i\ketbra 10-i\ketbra 01
\end{align*}
from the expectation values of the Pauli operators $\sigma_{x}$ and
$\sigma_{y}$. It is easy to verify that 
\begin{align*}
\tr\left[X\ketbra{\psi}{\psi}\right] & =\alpha_{1}^{*}\alpha_{0}+\alpha_{0}^{*}\alpha_{1}=\tr\left[\sigma_{x}\rho_{qubit}\right]\\
\tr\left[Y\ketbra{\psi}{\psi}\right] & =i\alpha_{1}^{*}\alpha_{0}-i\alpha_{0}^{*}\alpha_{1}=\tr\left[\sigma_{y}\rho_{qubit}\right].
\end{align*}

The surprising feature of this encoding is that even though the expectation
values of $X$ and $Y$ are preserved, the probabilities of the individual
outcomes are not. For example, since the $+1$ eigenvectors of $X$
and $\sigma_{x}$ are
\[
\ket{_{-}^{+0,1}}:=\frac{\ket 0+\ket 1}{\sqrt{2}}\hspace{1cm}\ket{+_{L}}:=\frac{\ket{0_{L}}+\ket{1_{L}}}{\sqrt{2}}
\]
we calculate the corresponding probabilities to be 
\begin{align*}
\tr\left[\ketbra{_{-}^{+0,1}}{_{-}^{+0,1}}\ketbra{\psi}{\psi}\right] & =\frac{\alpha_{0}^{*}\alpha_{1}+\alpha_{1}^{*}\alpha_{0}+\left|\alpha_{0}\right|^{2}+\left|\alpha_{1}\right|^{2}}{2}\\
\tr\left[\ketbra{+_{L}}{+_{L}}\rho_{qubit}\right] & =\frac{\alpha_{0}^{*}\alpha_{1}+\alpha_{1}^{*}\alpha_{0}+\left|\alpha_{0}\right|^{2}+\left|\alpha_{1}\right|^{2}+\left|\alpha_{-1}\right|^{2}}{2}.
\end{align*}
We see that in the qubit the probability of $+1$ has increased by
$\left|\alpha_{-1}\right|^{2}/2$. \footnote{It should not be too surprising since in the original system the probabilities
for the outcomes of $+1$ and $-1$ did not have to sum to $1$ as
there was another outcome, $0$. In the qubit the probabilities for
the two outcomes must sum to $1$ so we cannot expect the original
probabilities to stay unchanged.}

So far, the additional preserved observables $X$, $Y$ were the result
of arbitrary rearrangement of the basis elements in the BPT. If we
want a specific observable to be preserved in addition to $Z$, we
will have to be more deliberate about how we choose the basis elements
in the BPT.

We can change the basis in the subspace $\left\{ \ket 1,\ket{-1}\right\} $
of the second column

\noindent\begin{minipage}[c]{1\columnwidth}%
\begin{center}
\vspace{0.5\baselineskip}
\begin{tabular}{|c|c|}
\hline 
$0$ & $e_{1}$\tabularnewline
\hline 
\multicolumn{1}{c|}{} & $e_{2}$\tabularnewline
\cline{2-2} 
\multicolumn{1}{c}{$\downarrow$} & \multicolumn{1}{c}{$\downarrow$}\tabularnewline
\hline 
$0_{L}$ & $1_{L}$\tabularnewline
\hline 
\end{tabular} .\vspace{0.5\baselineskip}
\par\end{center}%
\end{minipage} This does not change the fact that the spectral projections 
\begin{align*}
\Pi_{0} & :=\ketbra 00\\
\Pi_{\pm1} & :=\ketbra{e_{1}}{e_{1}}+\ketbra{e_{2}}{e_{2}}=\ketbra 11+\ketbra{-1}{-1}
\end{align*}
of $Z$ are still the isometries constructed from this BPT. The proper
isometries, however, are now different

\[
S_{0,1}:=\ketbra 0{e_{1}}\hspace{1cm}S_{1,0}:=\ketbra{e_{1}}0.
\]
Thus, by changing the basis $\ket{e_{1}}$, $\ket{e_{2}}$ we can
choose the preserved observables.

Alice may choose, for example, to preserve the expectation value of
the $x$ component of spin. Using the subscripts $x,y,z$ to distinguish
the eigenvectors of different components of the spin operator (so
$\ket 1,$$\ket 0$,$\ket{-1}$ are now $\ket{1_{z}},$$\ket{0_{z}}$,$\ket{-1_{z}}$)
we note that 
\[
\ket{\pm1_{x}}=\frac{\ket{0_{y}}\pm\ket{0_{z}}}{\sqrt{2}}.
\]
The $x$ component of spin can then be expressed as 
\[
S_{x}=\ketbra{1_{x}}{1_{x}}-\ketbra{-1_{x}}{-1_{x}}=\ketbra{0_{y}}{0_{z}}+\ketbra{0_{z}}{0_{y}}.
\]

Before, when $\ket{e_{1}}=\ket{1_{z}}$ we preserved the expectations
of 
\[
X=\ketbra{1_{z}}{0_{z}}+\ketbra{0_{z}}{1_{z}}.
\]
This suggests that in order to preserve the expectations of $S_{x}$
we need to choose

\[
\ket{e_{1}}=\ket{0_{y}}=\frac{\ket{1_{z}}+\ket{-1_{z}}}{\sqrt{2}}.
\]
The resulting qubit encoding is 
\[
\ket{\psi}\longmapsto\rho_{qubit}=\begin{pmatrix}\bra{\psi}\Pi_{0}\ket{\psi} & \braket{\psi}{0_{y}}\braket{0_{z}}{\psi}\\
\braket{\psi}{0_{z}}\braket{0_{y}}{\psi} & \bra{\psi}\Pi_{\pm1}\ket{\psi}
\end{pmatrix}=\begin{pmatrix}\left|\alpha_{0}\right|^{2} & \frac{\alpha_{1}^{*}+\alpha_{-1}^{*}}{\sqrt{2}}\alpha_{0}\\
\alpha_{0}^{*}\frac{\alpha_{1}+\alpha_{-1}}{\sqrt{2}} & \left|\alpha_{1}\right|^{2}+\left|\alpha_{-1}\right|^{2}
\end{pmatrix}
\]
and the expectation value of $S_{x}$ is recovered from the expectation
value of $\sigma_{x}$ 
\[
\tr\left[S_{x}\ketbra{\psi}{\psi}\right]=\frac{\alpha_{1}^{*}+\alpha_{-1}^{*}}{\sqrt{2}}\alpha_{0}+\alpha_{0}^{*}\frac{\alpha_{1}+\alpha_{-1}}{\sqrt{2}}=\tr\left[\sigma_{x}\rho_{qubit}\right].
\]

In the following we will see that only the expectation values of observables
spanned by the isometries constructed from the BPT are preserved.
Then, the explanation to why some probabilities of outcomes may not
be preserved is as follows: Since the isometries constructed from
the relaxed BPTs do not span an algebra, it is possible for an observable
to be in the span but not for its spectral projections.

\section{Partial bipartitions and quantum coarse-graining\label{sec:Partial-bipartitions}}

The example in the previous section implies that it may be beneficial
to consider the structure of non-rectangular bipartition tables more
seriously. In this section we will identify this structure as a \emph{partial
}bipartition and establish its operational meaning. We will see that
such structure naturally generalizes virtual subsystems and lays the
foundation to more general state reduction maps called quantum coarse-graining\emph{.}

\subsection{Classical analogy\label{subsec:Classical-analogy}}

An illuminating perspective on partial bipartitions and coarse-graining
can be gained by considering its analogues in probability theory.
In the classical formalism it is quite natural to reduce one probabilistic
state into another state that provides a coarser probabilistic account
of the same system. Thus, it is instructive to first establish the
notions of partial bipartitions and coarse-graining in the classical
formalism where its reasoning is more natural.

As an illustrative example, consider the weather in Vancouver that
can be sunny or rainy, and warm or cold (say above or below $15$
C$\lyxmathsym{\textdegree}$). Precipitation and temperature are correlated
and historical data may tell us that on October 1st it is $50\%$
likely to be rainy and cold (rc), $30\%$ sunny and cold (sc), $15\%$
rainy and warm (rw), and $5\%$ sunny and warm (sw).\footnote{These probabilities come from a subjective approximation based on
the experiences of the author and not an actual meteorological data.} We can then produce a coarser account of the weather in Vancouver
by only distinguishing between sunny and rainy, or warm and cold.
These coarser accounts are associated with a (non-partial) bipartition
of the four-outcome state space into two state spaces of two outcomes
each. The reduced probabilities of it being sunny or rainy, and warm
or cold are given by summing over the columns and rows of the following
bipartition table

\noindent\begin{minipage}[c]{1\columnwidth}%
\begin{center}
\vspace{0.5\baselineskip}
\begin{tabular}{|c|c|c|l|}
\cline{1-2} \cline{2-2} \cline{4-4} 
$5\%$ sw & $15\%$ rw & $\rightarrow$ & $20\%$ warm\tabularnewline
\cline{1-2} \cline{2-2} \cline{4-4} 
$30\%$ sc & $50\%$ rc & $\rightarrow$ & $80\%$ cold\tabularnewline
\cline{1-2} \cline{2-2} \cline{4-4} 
\multicolumn{1}{c}{$\downarrow$} & \multicolumn{1}{c}{$\downarrow$} & \multicolumn{1}{c}{} & \multicolumn{1}{l}{}\tabularnewline
\cline{1-2} \cline{2-2} 
$35\%$ sunny & $65\%$ rainy & \multicolumn{1}{c}{} & \multicolumn{1}{l}{}\tabularnewline
\cline{1-2} \cline{2-2} 
\end{tabular} .\vspace{0.5\baselineskip}
\par\end{center}%
\end{minipage}

The above reduction of probabilities is called \emph{marginalization
}and it is the classical analogue of the partial trace map. Just like
the partial trace map, marginalization takes the joint probability
distribution of two random variables and produces the probability
distribution of one random variable. Since marginalized probability
distributions distinguish between fewer outcomes, we can say that
marginalization is a kind of coarse-graining of probability distributions.

In the context of classical probability theory it is also reasonable
to consider coarse-grainings that go beyond the usual notion of marginalization.
For example, whenever it is sunny and warm I wear a shirt, when it
is sunny and cold or rainy and warm I wear a jacket, and when it is
rainy and cold I wear a coat. Also, I wear some kind of hat when it
is cold or sunny. The probability of me wearing a shirt, a jacket,
or a coat, with or without a hat, on October 1st in Vancouver is given
by summations over the rows and column of the following partial BPT

\noindent\begin{minipage}[c]{1\columnwidth}%
\begin{center}
\vspace{0.5\baselineskip}
\begin{tabular}{|c|c|c|c|l|}
\cline{2-2} \cline{5-5} 
\multicolumn{1}{c|}{} & $15\%$ rw & \multicolumn{1}{c}{} & $\rightarrow$ & $15\%$ no hat\tabularnewline
\cline{1-3} \cline{2-3} \cline{3-3} \cline{5-5} 
$5\%$ sw & $30\%$ sc & $50\%$ rc & $\rightarrow$ & $85\%$ hat\tabularnewline
\cline{1-3} \cline{2-3} \cline{3-3} \cline{5-5} 
\multicolumn{1}{c}{$\downarrow$} & \multicolumn{1}{c}{$\downarrow$} & \multicolumn{1}{c}{$\downarrow$} & \multicolumn{1}{c}{} & \multicolumn{1}{l}{}\tabularnewline
\cline{1-3} \cline{2-3} \cline{3-3} 
$5\%$ shirt & $45\%$ jacket & $50\%$ coat & \multicolumn{1}{c}{} & \multicolumn{1}{l}{}\tabularnewline
\cline{1-3} \cline{2-3} \cline{3-3} 
\end{tabular} .\vspace{0.5\baselineskip}
\par\end{center}%
\end{minipage}

Since my clothing and hat combinations are perfectly correlated with
the weather, instead of ``sunny and cold'' we can label the same
outcome as ``jacket and hat'', and similarly for other outcomes.
Even though not all combinations of clothing and hat are possible
(``shirt and no hat'' or ``coat and no hat'' never happen), it
is still perfectly reasonable to coarse-grain the probability distribution
this way in order to get the reduced probabilities for my clothing
or hat choices.

Our goal is to import the same kind of reasoning into quantum theory
where instead of probability distributions we will coarse-grain quantum
states. In order to do that we will have to further develop our formalism
to incorporate partial BPTs.

\subsection{The formalism of partial bipartitions}

Let us forget for now about irreps and operator algebras and consider
what, in essence, rectangular bipartition tables tell us. By arranging
the basis elements into a two-dimensional grid we identify two distinct
degrees of freedom; one degree of freedom varies horizontally and
the other vertically. These two degrees of freedom can then be associated
with two subsystems (virtual or otherwise) that constitute a bipartition
of the Hilbert space.

In order to make this statement more precise, consider the generic
rectangular BPT with a single block\footnote{In this section we will only consider BPTs with a single block to
avoid unnecessary clutter but it naturally generalizes to multi-block
BPTs where each block is considered separately.} and the implied reductions of rows and columns

\noindent\begin{minipage}[c]{1\columnwidth}%
\begin{center}
\vspace{0.5\baselineskip}
\begin{tabular}{|c|c|c|c|c|c|}
\cline{1-4} \cline{2-4} \cline{3-4} \cline{4-4} \cline{6-6} 
$e_{1,1}$ & $e_{1,2}$ & $\cdots$ & $e_{1,d_{B}}$ & $\rightarrow$ & $a_{1}$\tabularnewline
\cline{1-4} \cline{2-4} \cline{3-4} \cline{4-4} \cline{6-6} 
$e_{2,1}$ & $e_{2,2}$ & $\cdots$ & $e_{2,d_{B}}$ & $\rightarrow$ & $a_{2}$\tabularnewline
\cline{1-4} \cline{2-4} \cline{3-4} \cline{4-4} \cline{6-6} 
$\vdots$ & $\vdots$ & $\ddots$ & $\vdots$ & $\rightarrow$ & $\vdots$\tabularnewline
\cline{1-4} \cline{2-4} \cline{3-4} \cline{4-4} \cline{6-6} 
$e_{d_{A},1}$ & $e_{d_{A},2}$ & $\cdots$ & $e_{d_{A},d_{B}}$ & $\rightarrow$ & $a_{d_{A}}$\tabularnewline
\cline{1-4} \cline{2-4} \cline{3-4} \cline{4-4} \cline{6-6} 
\multicolumn{1}{c}{$\downarrow$} & \multicolumn{1}{c}{$\downarrow$} & \multicolumn{1}{c}{$\downarrow$} & \multicolumn{1}{c}{$\downarrow$} & \multicolumn{1}{c}{} & \multicolumn{1}{c}{}\tabularnewline
\cline{1-4} \cline{2-4} \cline{3-4} \cline{4-4} 
$b_{1}$ & $b_{2}$ & $\cdots$ & $b_{d_{B}}$ & \multicolumn{1}{c}{} & \multicolumn{1}{c}{}\tabularnewline
\cline{1-4} \cline{2-4} \cline{3-4} \cline{4-4} 
\end{tabular} \vspace{0.5\baselineskip}
\par\end{center}%
\end{minipage} ($d_{A}$ and $d_{B}$ are the number of rows and columns in the
BPT). The choice and arrangement of basis in the BPT tells us how
to map the original Hilbert space $\mathcal{H}$ spanned by $\left\{ \ket{e_{ik}}\right\} $,
onto a bipartite Hilbert space $\mathcal{H}_{A}\otimes\mathcal{H}_{B}$
spanned by $\left\{ \ket{a_{i}}\otimes\ket{b_{k}}\right\} $. In other
words, the BPT defines a Hilbert space isometry
\begin{eqnarray}
V: & \mathcal{H} & \longrightarrow\mathcal{H}_{A}\otimes\mathcal{H}_{B}\label{eq: BPT def. isometry}\\
 & \ket{e_{ik}} & \longmapsto\ket{a_{i}}\otimes\ket{b_{k}}\nonumber 
\end{eqnarray}
from the original Hilbert space to the bipartite Hilbert space. Thus,
as the name suggests, the essence of a bipartition table is to identify
a tensor product bipartition of the Hilbert space
\[
\mathcal{H}\cong\mathcal{H}_{A}\otimes\mathcal{H}_{B}.
\]

With this perspective we realize that the same construction can also
be applied to non-rectangular BPTs such as\footnote{Earlier we assumed that $\dim\mathcal{H}=d_{A}d_{B}$, now we assume
that $\dim\mathcal{H}<d_{A}d_{B}$.}

\noindent\begin{minipage}[c]{1\columnwidth}%
\begin{center}
\vspace{0.5\baselineskip}
\begin{tabular}{|c|c|c|cc|c|}
\cline{1-4} \cline{2-4} \cline{3-4} \cline{4-4} \cline{6-6} 
$e_{1,1}$ & $e_{1,2}$ & $\cdots$ & \multicolumn{1}{c|}{$e_{1,d_{B}}$} & $\rightarrow$ & $a_{1}$\tabularnewline
\cline{1-4} \cline{2-4} \cline{3-4} \cline{4-4} \cline{6-6} 
$e_{2,1}$ & $e_{2,2}$ & $\cdots$ &  & $\rightarrow$ & $a_{2}$\tabularnewline
\cline{1-3} \cline{2-3} \cline{3-3} \cline{6-6} 
$\vdots$ & $\ddots$ & \multicolumn{1}{c}{} &  & $\rightarrow$ & $\vdots$\tabularnewline
\cline{1-2} \cline{2-2} \cline{6-6} 
$e_{d_{A},1}$ & $\cdots$ & \multicolumn{1}{c}{} &  & $\rightarrow$ & $a_{d_{A}}$\tabularnewline
\cline{1-2} \cline{2-2} \cline{6-6} 
\multicolumn{1}{c}{$\downarrow$} & \multicolumn{1}{c}{$\downarrow$} & \multicolumn{1}{c}{$\downarrow$} & $\downarrow$ & \multicolumn{1}{c}{} & \multicolumn{1}{c}{}\tabularnewline
\cline{1-4} \cline{2-4} \cline{3-4} \cline{4-4} 
$b_{1}$ & $b_{2}$ & $\cdots$ & \multicolumn{1}{c|}{$b_{d_{B}}$} & \multicolumn{1}{c}{} & \multicolumn{1}{c}{}\tabularnewline
\cline{1-4} \cline{2-4} \cline{3-4} \cline{4-4} 
\end{tabular} .
\par\end{center}
\begin{center}
\vspace{0.5\baselineskip}
\par\end{center}%
\end{minipage} Even though the dimensions of rows and columns can now vary, it still
defines a Hilbert space isometry as in Eq. \eqref{eq: BPT def. isometry},
only now the indices $i$,$k$ are constrained by the non-rectangular
shape of the BPT. As a result, $\mathcal{H}$ is not mapped \emph{onto}
but \emph{into} $\mathcal{H}_{A}\otimes\mathcal{H}_{B}$, that is,
$\mathcal{H}$ is mapped onto a subspace of $\mathcal{H}_{A}\otimes\mathcal{H}_{B}$.

We may still think of such mapping as a bipartition of $\mathcal{H}$
but it is no longer a tensor product bipartition. We will call such
generalized bipartitions \emph{partial} since not all product basis
$\ket{a_{i}}\otimes\ket{b_{k}}$ can be found in the original Hilbert
space. We introduce the notation
\[
\mathcal{H}\cong\mathcal{H}_{A}\oslash\mathcal{H}_{B}
\]
for partial bipartitions that emphasizes the fact that it is a generalization
of the tensor product. We will also say that $\mathcal{H}_{A}$ and
$\mathcal{H}_{B}$ are (the Hilbert spaces of) \emph{partial subsystems},
generalizing the notion of a virtual subsystem.

Now that we have defined partial bipartitions we can define state
reduction maps given by tracing out one of the partial subsystems.
By elevating the isometry \eqref{eq: BPT def. isometry} to act on
operators
\[
\mathcal{V}\left(\rho\right):=V\rho V^{\dagger},
\]
we define the state reduction map as the composition
\[
\tr_{\left(A\right)}:=\tr_{A}\circ\mathcal{V}.
\]
The map $\tr_{\left(A\right)}$ is CPTP (so it reduces proper quantum
states to proper quantum states) because it can be expressed in the
operator sum representation \citep{nielsen2002quantum}
\[
\tr_{\left(A\right)}\left(\rho\right)=\sum_{i}\left(K_{i}V\right)\rho\left(K_{i}V\right)^{\dagger}
\]
 where $K_{i}$ are the Kraus operators of $\tr_{A}$.

With this formalism we can reproduce the classical reasoning of Section
\ref{subsec:Classical-analogy} in a quantum setting. Let us map the
classical states of $\left\{ \textrm{sw, sc, rw, rc}\right\} $ (sunny
and warm, ... , rainy and cold) to the product basis of two spins
$\left\{ \ket{\uparrow\uparrow},\ket{\uparrow\downarrow},\ket{\downarrow\uparrow},\ket{\downarrow\downarrow}\right\} $
spanning $\mathcal{H}$. The coarse-graining that we have considered
before is now given by the partial BPT \footnote{Note that in the quantum case we do not specify any probabilities
in the BPT because they are complex amplitudes and their reduction
is more than a summation over the rows and columns.}

\noindent\begin{minipage}[c]{1\columnwidth}%
\begin{center}
\vspace{0.5\baselineskip}
\begin{tabular}{|c|c|c|c|l|}
\cline{2-2} \cline{5-5} 
\multicolumn{1}{c|}{} & $\downarrow\uparrow$ & \multicolumn{1}{c}{} & $\rightarrow$ & $\downarrow_{L}\land\uparrow_{R}$\tabularnewline
\cline{1-3} \cline{2-3} \cline{3-3} \cline{5-5} 
$\uparrow\uparrow$ & $\uparrow\downarrow$ & $\downarrow\downarrow$ & $\rightarrow$ & $\uparrow_{L}\lor\downarrow_{R}$\tabularnewline
\cline{1-3} \cline{2-3} \cline{3-3} \cline{5-5} 
\multicolumn{1}{c}{$\downarrow$} & \multicolumn{1}{c}{$\downarrow$} & \multicolumn{1}{c}{$\downarrow$} & \multicolumn{1}{c}{} & \multicolumn{1}{l}{}\tabularnewline
\cline{1-3} \cline{2-3} \cline{3-3} 
$1_{z}$ & $0_{z}$ & $-1_{z}$ & \multicolumn{1}{c}{} & \multicolumn{1}{l}{}\tabularnewline
\cline{1-3} \cline{2-3} \cline{3-3} 
\end{tabular} .\vspace{0.5\baselineskip}
\par\end{center}%
\end{minipage} Observe that the three columns distinguish between the states of
total spin $1,0,-1$ along $\hat{z}$. Similarly, the two rows distinguish
between states that can be described as ``left $\downarrow$ \emph{and}
right $\uparrow$'' and ``left $\uparrow$ \emph{or} right $\downarrow$''.
We then label the basis for the partial subsystems according to what
they distinguish 
\begin{align}
\mathcal{H}_{B} & :=\spn\left\{ \ket{1_{z}},\ket{0_{z}},\ket{-1_{z}}\right\} \label{eq:H_B from part. BPT}\\
\mathcal{H}_{A} & :=\spn\left\{ \ket{\downarrow_{L}\land\uparrow_{R}},\ket{\uparrow_{L}\lor\downarrow_{R}}\right\} .\label{eq:H_A from part. BPT}
\end{align}

The partial bipartition $\mathcal{H}\cong\mathcal{H}_{A}\oslash\mathcal{H}_{B}$
is defined by an isometry $V:\mathcal{H}\longrightarrow\mathcal{H}_{A}\otimes\mathcal{H}_{B}$
where
\begin{align*}
V\ket{\uparrow\uparrow} & =\ket{\uparrow_{L}\lor\downarrow_{R}}\otimes\ket{1_{z}}\\
V\ket{\uparrow\downarrow} & =\ket{\uparrow_{L}\lor\downarrow_{R}}\otimes\ket{0_{z}}\\
V\ket{\downarrow\uparrow} & =\ket{\downarrow_{L}\land\uparrow_{R}}\otimes\ket{0_{z}}\\
V\ket{\downarrow\downarrow} & =\ket{\uparrow_{L}\lor\downarrow_{R}}\otimes\ket{-1_{z}}.
\end{align*}
The remaining two states $\ket{\downarrow_{L}\land\uparrow_{R}}\otimes\ket{\pm1_{z}}$
are impossible spin states and they are not in the image of $V$ (so
they are annihilated by $V^{\dagger}$). The reduced states are given
by tracing out one of the partial subsystems $A$ or $B$. That is,
for any $\rho\in\mathcal{L}\left(\mathcal{H}\right)$ the reduced
states are 
\begin{align*}
\rho_{B} & =\tr_{\left(A\right)}\left[\rho\right]=\tr_{A}\left[V\rho V^{\dagger}\right]\in\mathcal{L}\left(\mathcal{H}_{B}\right)\\
\rho_{A} & =\tr_{\left(B\right)}\left[\rho\right]=\tr_{B}\left[V\rho V^{\dagger}\right]\in\mathcal{L}\left(\mathcal{H}_{A}\right).
\end{align*}

If $\rho$ is a classical probabilistic state such as
\[
\rho:=p_{\uparrow\uparrow}\ketbra{\uparrow\uparrow}{\uparrow\uparrow}+p_{\uparrow\downarrow}\ketbra{\uparrow\downarrow}{\uparrow\downarrow}+p_{\downarrow\uparrow}\ketbra{\downarrow\uparrow}{\downarrow\uparrow}+p_{\downarrow\downarrow}\ketbra{\downarrow\downarrow}{\downarrow\downarrow}
\]
then its reduced states are
\begin{align*}
\rho_{B} & =p_{\uparrow\uparrow}\ketbra{1_{z}}{1_{z}}+\left(p_{\uparrow\downarrow}+p_{\downarrow\uparrow}\right)\ketbra{0_{z}}{0_{z}}+p_{\downarrow\downarrow}\ketbra{-1_{z}}{-1_{z}}\\
\rho_{A} & =p_{\downarrow\uparrow}\ketbra{\downarrow_{L}\land\uparrow_{R}}{\downarrow_{L}\land\uparrow_{R}}+\left(p_{\uparrow\uparrow}+p_{\uparrow\downarrow}+p_{\downarrow\downarrow}\right)\ketbra{\uparrow_{L}\lor\downarrow_{R}}{\uparrow_{L}\lor\downarrow_{R}}.
\end{align*}
Apparently, when applied to classical states the reduction map is
just the classical coarse-graining of probabilities that sums them
up over the rows and columns of the BPT.

If $\rho$ is a pure quantum state such as
\[
\ket{\psi}:=\alpha_{\uparrow\uparrow}\ket{\uparrow\uparrow}+\alpha_{\uparrow\downarrow}\ket{\uparrow\downarrow}+\alpha_{\downarrow\uparrow}\ket{\downarrow\uparrow}+\alpha_{\downarrow\downarrow}\ket{\downarrow\downarrow},
\]
we calculate the reduced states to be (the ordering of the reduced
basis is as in Eqs. \eqref{eq:H_B from part. BPT} and \ref{eq:H_A from part. BPT})
\begin{align*}
\rho_{B} & =\begin{pmatrix}\left|\alpha_{\uparrow\uparrow}\right|^{2} & \alpha_{\uparrow\uparrow}\alpha_{\uparrow\downarrow}^{*} & \alpha_{\uparrow\uparrow}\alpha_{\downarrow\downarrow}^{*}\\
\alpha_{\uparrow\downarrow}\alpha_{\uparrow\uparrow}^{*} & \left|\alpha_{\uparrow\downarrow}\right|^{2}+\left|\alpha_{\downarrow\uparrow}\right|^{2} & \alpha_{\uparrow\downarrow}\alpha_{\downarrow\downarrow}^{*}\\
\alpha_{\downarrow\downarrow}\alpha_{\uparrow\uparrow}^{*} & \alpha_{\downarrow\downarrow}\alpha_{\uparrow\downarrow}^{*} & \left|\alpha_{\downarrow\downarrow}\right|^{2}
\end{pmatrix}\\
\\
\rho_{A} & =\begin{pmatrix}\left|\alpha_{\downarrow\uparrow}\right|^{2} & \alpha_{\downarrow\uparrow}\alpha_{\uparrow\downarrow}^{*}\\
\alpha_{\uparrow\downarrow}\alpha_{\downarrow\uparrow}^{*} & \left|\alpha_{\uparrow\uparrow}\right|^{2}+\left|\alpha_{\uparrow\downarrow}\right|^{2}+\left|\alpha_{\downarrow\downarrow}\right|^{2}
\end{pmatrix}.
\end{align*}
From the diagonal matrix elements we see that even in the pure quantum
case the classical coarse-graining of probabilities persists, however,
now it also preserves some of the coherence terms.

We conclude that the probability distribution over the outcomes of
observables that distinguish between the basis are coarse-grained
by the state reduction map as in the classical case. Thus, by promoting
distinguishable states to orthogonal basis and probability distributions
to quantum states we reproduce the classical notion of coarse-graining
by tracing out a partial subsystem. However, in the quantum setting
there are more observables than just a distinction of certain basis,
and there is more to tracing out a partial subsystem than just a summation
of probabilities. In order to understand how all observables are affected
by tracing out a partial subsystem we need to derive the operational
meaning of such state reductions.

\subsection{The operational meaning of quantum coarse-graining}

From here on, we will use the notions of state reduction, tracing
out a partial subsystem, and quantum coarse-graining interchangeably.
In order to derive the operational meaning of quantum coarse-graining
we will have to establish a few more facts.

Using the tomographic representation \eqref{eq:part trace S_kl rep}
of the partial trace map we derive
\[
\tr_{\left(A\right)}\left(\rho\right)=\tr_{A}\left[V\rho V^{\dagger}\right]=\sum_{k,l=1}^{d_{B}}\tr\left[\tilde{S}_{kl}V\rho V^{\dagger}\right]\text{\ensuremath{\ket{b_{l}}\bra{b_{k}}}}=\sum_{k,l=1}^{d_{B}}\tr\left[\left(V^{\dagger}\tilde{S}_{kl}V\right)\rho\right]\text{\ensuremath{\ket{b_{l}}\bra{b_{k}}}}
\]
where $\tilde{S}_{kl}:=I_{A}\otimes\ket{b_{k}}\bra{b_{l}}$ are partial
isometries in $\mathcal{H}_{A}\otimes\mathcal{H}_{B}$. We define
partial isometries in $\mathcal{H}$ as $S_{kl}:=V^{\dagger}\tilde{S}_{kl}V$
and then the map that traces out a partial subsystem can also be given
in the tomographic representation 
\begin{equation}
\tr_{\left(A\right)}\left(\rho\right)=\sum_{k,l=1}^{d_{B}}\tr\left[S_{kl}\rho\right]\text{\ensuremath{\ket{b_{l}}\bra{b_{k}}}}.\label{eq:partial state reudct map S_kl}
\end{equation}

This representation simplifies things since the partial isometries
$\left\{ S_{kl}\right\} $ can be constructed directly from the non-rectangular
BPT. In order to see that, let us explicitly define 
\[
V=\sum_{ik}\ket{a_{i}}\otimes\ket{b_{k}}\bra{e_{ik}},
\]
where the row and column indices $i$,$k$ are constrained by the
shape of the BPT, that is
\[
V^{\dagger}\ket{a_{i}}\otimes\ket{b_{k}}=\begin{cases}
\ket{e_{ik}} & \textrm{The cell \ensuremath{i,k} is present in the BPT}\\
0 & \textrm{The cell \ensuremath{i,k} is absent in the BPT}.
\end{cases}
\]
The aforementioned partial isometries then reduce to

\begin{align}
S_{kl} & =V^{\dagger}\tilde{S}_{kl}V=V^{\dagger}\left(\sum_{i=1}^{d_{A}}\ket{a_{i}}\bra{a_{i}}\right)\otimes\ket{b_{k}}\bra{b_{l}}V\nonumber \\
 & =\sum_{i\in\mathrm{CR}\left(k,l\right)}\ket{e_{ik}}\bra{e_{il}}\label{eq:def of S_kl non rec}
\end{align}
where the set $\mathrm{CR}\left(k,l\right)$ contains the common row
indices of cells that are present in both columns $k$ and $l$.

The result in Eq. \eqref{eq:def of S_kl non rec} tells us how $S_{kl}$
is constructed directly from the BPT according to the alignment of
elements in the columns $k$ and $l$. When the BPT is rectangular
the set $\mathrm{CR}\left(k,l\right)$ always contains all the rows.
When the BPT is non-rectangular some rows are shorter than others
so not every row is present in every column. Compare the original
construction of partial isometries from BPTs in Eq. \eqref{eq: def of S_kl from BPT}
(assuming a single block) to the new construction in Eq. \ref{eq:def of S_kl non rec},
and verify that the later generalizes the former.

So far, it seems like we do not really have to make a distinction
between rectangular and non-rectangular BPTs. The isometries are constructed
according to the same general prescription in Eq. \ref{eq:def of S_kl non rec},
and the state reduction map is given by the same Eq. \ref{eq:partial state reudct map S_kl}
in terms of the isometries. The obvious question then is how does
the non-rectangular shape generalize operator algebras associated
with the rectangular shape. The answer to that begins with the following
definition.
\begin{defn}
\label{def:Operator systems}An \emph{operator system }is a subset
of operators $\mathcal{O}\subseteq\mathcal{L}\left(\mathcal{H}\right)$
such that:

(1) For all $O_{1},O_{2}\in\mathcal{O}$ and $c_{1},c_{2}\in\mathbb{C}$
we have $c_{1}O_{1}+c_{2}O_{2}\in\mathcal{O}$.

(2) For all $O\in\mathcal{O}$ we have $O^{\dagger}\in\mathcal{O}$.

(3) There is a projection $I_{\mathcal{O}}\in\mathcal{O}$ such that
$I_{\mathcal{O}}O=O$ for all $O\in\mathcal{O}$.
\end{defn}
Operator systems generalize operator algebras in that the products
of operators do not have to remain in the set. In other words, operator
systems are just subspaces of operators that are closed under conjugation
and contain a projection that serves as the identity.\footnote{In finite-dimensional operator algebras the existence of the identity
was not part of the definition because it could be derived.}

It turns out that for rectangular BPTs the constructed isometries
span operator algebras, but when non-rectangular BPTs are considered
they span operator systems. Furthermore, the constructed isometries
$\left\{ S_{kl}\right\} $ form a basis for the operator system that
they span. These facts are shown in the following proposition.
\begin{prop}
\label{prop: S_kl are oper sys}Let $\mathcal{H}\cong\mathcal{H}_{A}\oslash\mathcal{H}_{B}$
be a partial bipartition and let $\left\{ S_{kl}\right\} $ be the
isometries constructed according to Eq. \eqref{eq:def of S_kl non rec}.
Then, $\spn\left\{ S_{kl}\right\} $ is an operator system and the
set $\left\{ S_{kl}\right\} $ forms an operator basis that are orthogonal
with respect to the Hilbert-Schmidt (HS) inner product 
\[
\left\langle S_{k'l'},S_{kl}\right\rangle _{HS}=\delta_{kk'}\delta_{ll'}\left|\mathrm{CR}\left(k,l\right)\right|.
\]
\end{prop}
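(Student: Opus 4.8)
The plan is to verify the three defining properties of an operator system for $\spn\left\{S_{kl}\right\}$ directly from the explicit formula \eqref{eq:def of S_kl non rec}, and then compute the Hilbert--Schmidt inner products by a straightforward index bookkeeping. First I would establish the multiplicative behavior of the $S_{kl}$: from $S_{kl}=V^{\dagger}\tilde{S}_{kl}V$ with $\tilde{S}_{kl}=I_A\otimes\ketbra{b_k}{b_l}$ and $VV^{\dagger}=\Pi_{\mathrm{im}}$ (the projection onto the image of $V$ inside $\mathcal{H}_A\otimes\mathcal{H}_B$), one gets $S_{kl}S_{k'l'}=V^{\dagger}\tilde{S}_{kl}\Pi_{\mathrm{im}}\tilde{S}_{k'l'}V$, which is \emph{not} in general $\delta_{lk'}$ times another $S$-operator precisely because $\Pi_{\mathrm{im}}$ does not commute with the $\tilde{S}$'s when the BPT is non-rectangular. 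This is exactly why we land in operator systems rather than operator algebras, and it is worth spelling out so the reader sees the contrast with the rectangular case treated in Lemma \ref{lem: for all S_kl there is a BPT}. Alternatively, and perhaps more cleanly, I would argue directly in $\mathcal{H}$: from \eqref{eq:def of S_kl non rec}, $S_{kl}=\sum_{i\in\mathrm{CR}(k,l)}\ketbra{e_{ik}}{e_{il}}$, so the span of the $S_{kl}$ is clearly closed under linear combinations, giving property (1) of Definition \ref{def:Operator systems}.

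For property (2), I would observe that $S_{kl}^{\dagger}=\sum_{i\in\mathrm{CR}(k,l)}\ketbra{e_{il}}{e_{ik}}=S_{lk}$, using the symmetry $\mathrm{CR}(k,l)=\mathrm{CR}(l,k)$ which is immediate from the definition of $\mathrm{CR}$ as the set of rows present in both columns $k$ and $l$. Hence the span is closed under $\dagger$. For property (3), the candidate identity is $I_{\mathcal{O}}:=\sum_{k=1}^{d_B}S_{kk}=\sum_{k}\sum_{i:\,(i,k)\in\mathrm{BPT}}\ketbra{e_{ik}}{e_{ik}}$, which is the projection onto $\spn\left\{\ket{e_{ik}}\right\}$ over all cells present in the BPT — that is, onto all of $\mathcal{H}$ (assuming, as in the single-block setting of this section, that the BPT basis is a genuine basis of $\mathcal{H}$). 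Then $I_{\mathcal{O}}S_{kl}=S_{kl}$ because every row index $i\in\mathrm{CR}(k,l)$ has the cell $(i,k)$ present, so $I_{\mathcal{O}}$ acts as the identity on the left output labels of $S_{kl}$. One also checks $I_{\mathcal{O}}\in\spn\left\{S_{kl}\right\}$ by construction, so all three conditions of Definition \ref{def:Operator systems} hold.

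It remains to show the $\left\{S_{kl}\right\}$ are linearly independent (hence a basis for the operator system they span) and to compute the HS inner product; both follow from one calculation. Using $\left\langle S_{k'l'},S_{kl}\right\rangle_{HS}=\tr\left[S_{k'l'}^{\dagger}S_{kl}\right]=\tr\left[S_{l'k'}S_{kl}\right]$ and the orthonormality of the BPT basis $\left\{\ket{e_{ik}}\right\}$, the product $\sum_{i\in\mathrm{CR}(l',k')}\sum_{j\in\mathrm{CR}(k,l)}\ketbra{e_{il'}}{e_{ik'}}\cdot\ketbra{e_{jk}}{e_{jl}}$ contributes only when $k'=k$ and, after taking the trace, when $l'=l$ and $i=j$; the surviving sum is $\sum_{i\in\mathrm{CR}(k,l)}1=\left|\mathrm{CR}(k,l)\right|$, giving exactly $\delta_{kk'}\delta_{ll'}\left|\mathrm{CR}(k,l)\right|$. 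Since $\left|\mathrm{CR}(k,l)\right|\geq 1$ for every pair $(k,l)$ for which $S_{kl}\neq 0$ (there is always at least one common row — e.g. any row reaching column $\max(k,l)$ also reaches $\min(k,l)$ in a left-justified BPT, and in any case $S_{kl}=0$ precisely when $\mathrm{CR}(k,l)=\emptyset$), the Gram matrix is diagonal with positive entries on the relevant index set, so the nonzero $S_{kl}$ are linearly independent and form an orthogonal operator basis of $\spn\left\{S_{kl}\right\}$. The only genuinely delicate point, and the one I would be most careful about, is the bookkeeping of which cells $(i,k)$ are ``present'' in the BPT and how $\mathrm{CR}(k,l)$ interacts with $\dagger$ and with products; everything else is routine once that combinatorial definition is pinned down, so I would state it precisely at the outset (ideally tied back to the left-justified shape implicit in Definition \ref{def:A-bipartition-table}).
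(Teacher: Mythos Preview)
Your proposal is correct and follows essentially the same approach as the paper: verify closure under adjoints via $S_{kl}^{\dagger}=S_{lk}$, exhibit $I_{\mathcal{O}}=\sum_k S_{kk}$ as the identity element, and compute the HS inner product directly from the explicit formula \eqref{eq:def of S_kl non rec} using orthonormality of $\{\ket{e_{ik}}\}$. Your additional remarks (why products fail to stay in the span, and linear independence from the diagonal Gram matrix) are helpful context not present in the paper's proof, though your aside about ``left-justified'' shapes is unnecessary since the argument works for any partial BPT once you note $S_{kl}=0$ exactly when $\mathrm{CR}(k,l)=\emptyset$.
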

\begin{proof}
Condition (1) of Definition \ref{def:Operator systems} trivially
holds as $\spn\left\{ S_{kl}\right\} $ is a vector space. Condition
(2) holds because $S_{kl}^{\dagger}=S_{lk}$. For condition (3) we
construct $I_{\mathcal{O}}:=\sum_{k'}S_{k'k'}$ so 
\begin{align*}
I_{\mathcal{O}}S_{kl} & =\sum_{k'}\left(\sum_{i'\in\mathrm{CR}\left(k',k'\right)}\ket{e_{i'k'}}\bra{e_{i'k'}}\right)\left(\sum_{i\in\mathrm{CR}\left(k,l\right)}\ket{e_{ik}}\bra{e_{il}}\right)\\
 & =\sum_{i\in\mathrm{CR}\left(k,l\right)}\sum_{k'}\sum_{i'\in\mathrm{CR}\left(k',k'\right)}\delta_{ii'}\delta_{kk'}\ket{e_{i'k'}}\bra{e_{il}}=\sum_{i\in\mathrm{CR}\left(k,l\right)}\ket{e_{ik}}\bra{e_{il}}=S_{kl}.
\end{align*}
Finally, using the definition of the HS inner product $\left\langle X,Y\right\rangle _{HS}:=\tr\left[X^{\dagger}Y\right]$
we derive
\begin{align*}
\left\langle S_{k'l'},S_{kl}\right\rangle _{HS} & =\tr\left[\left(\sum_{i'\in\mathrm{CR}\left(k',l'\right)}\ket{e_{i'l'}}\bra{e_{i'k'}}\right)\left(\sum_{i\in\mathrm{CR}\left(k,l\right)}\ket{e_{ik}}\bra{e_{il}}\right)\right]\\
 & =\sum_{i'\in\mathrm{CR}\left(k',l'\right)}\sum_{i\in\mathrm{CR}\left(k,l\right)}\delta_{kk'}\delta_{ll'}\delta_{ii'}=\delta_{kk'}\delta_{ll'}\left|\mathrm{CR}\left(k,l\right)\right|.
\end{align*}
\end{proof}
Note that the basis $\left\{ S_{kl}\right\} $ are not normalized
since
\[
\left\Vert S_{kl}\right\Vert _{HS}^{2}=\left\langle S_{kl},S_{kl}\right\rangle _{HS}=\left|\mathrm{CR}\left(k,l\right)\right|.
\]
The normalized basis will be denoted as
\[
\hat{S}_{kl}:=\frac{S_{kl}}{\sqrt{\left|\mathrm{CR}\left(k,l\right)\right|}}.
\]

Finally, we are ready to discuss the operational meaning of state
reductions associated with partial bipartitions. When the bipartition
$\mathcal{H}\cong\mathcal{H}_{A}\otimes\mathcal{H}_{B}$ is a proper
tensor product, the operational meaning of the reduced state $\rho_{B}=\tr_{A}\left(\rho\right)$
is that it preserves the expectation values of all the observables
of the form $O=I_{A}\otimes O_{B}$. That is, for every observable
$O\in I_{A}\otimes\mathcal{L}\left(\mathcal{H}_{B}\right)$ there
is an observable $O_{B}\in\mathcal{L}\left(\mathcal{H}_{B}\right)$
such that $\tr\left[O\rho\right]=\tr\left[O_{B}\rho_{B}\right]$,
and vice versa. The correspondence between these observables is trivially
given by
\begin{align*}
O_{B} & \longleftrightarrow O=I_{A}\otimes O_{B}.
\end{align*}
The following theorem generalizes this statement to partial bipartitions.
\begin{thm}
\label{thm:oper meaning of part BP}Let $\mathcal{H}\cong\mathcal{H}_{A}\oslash\mathcal{H}_{B}$
be a partial bipartition with the isometries $\left\{ S_{kl}\right\} $
as constructed in Eq. \eqref{eq:def of S_kl non rec}, and let $\rho\in\mathcal{L}\left(\mathcal{H}\right)$
and $\rho_{B}=\tr_{\left(A\right)}\left(\rho\right)$. Then, for every
observable $O\in\spn\left\{ S_{kl}\right\} $ there is an observable
$O_{B}\in\mathcal{L}\left(\mathcal{H}_{B}\right)$ such that $\tr\left[O\rho\right]=\tr\left[O_{B}\rho_{B}\right]$,
and vice versa. This correspondence of observables is explicitly given
by
\begin{align}
O_{B} & \longmapsto O:=\sum_{k,l=1}^{d_{B}}\tr\left[O_{B}\ket{b_{l}}\bra{b_{k}}\right]S_{kl}\label{eq:O_B to O}\\
O & \longmapsto O_{B}:=\sum_{k,l=1}^{d_{B}}\frac{\tr\left(S_{lk}O\right)}{\left|\mathrm{CR}\left(k,l\right)\right|}\ketbra{b_{k}}{b_{l}}.\label{eq:O to O_B}
\end{align}
\end{thm}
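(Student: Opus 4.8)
The plan is to reduce everything to the tomographic representation of $\tr_{\left(A\right)}$ established in Eq.~\eqref{eq:partial state reudct map S_kl} and the Hilbert--Schmidt orthogonality of the $\left\{ S_{kl}\right\} $ proved in Proposition~\ref{prop: S_kl are oper sys}. The two maps in \eqref{eq:O_B to O} and \eqref{eq:O to O_B} are essentially a change of operator basis: on the $\mathcal{H}_{B}$ side one uses the matrix-unit basis $\left\{ \ketbra{b_{l}}{b_{k}}\right\} $, and on the $\mathcal{H}$ side one uses the (unnormalized, but HS-orthogonal) basis $\left\{ S_{kl}\right\} $ of the operator system $\mathcal{O}:=\spn\left\{ S_{kl}\right\} $. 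First I would verify that each map lands in the right place: the right-hand side of \eqref{eq:O_B to O} is manifestly in $\spn\left\{ S_{kl}\right\} $, and that it is self-adjoint when $O_{B}$ is self-adjoint follows from $S_{kl}^{\dagger}=S_{lk}$ together with $\overline{\tr\left[O_{B}\ketbra{b_{l}}{b_{k}}\right]}=\tr\left[O_{B}\ketbra{b_{k}}{b_{l}}\right]$; symmetrically for \eqref{eq:O to O_B}, using $S_{lk}^{\dagger}=S_{kl}$ and $\left|\mathrm{CR}\left(k,l\right)\right|=\left|\mathrm{CR}\left(l,k\right)\right|$.

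Next I would check the defining intertwining identity $\tr\left[O\rho\right]=\tr\left[O_{B}\rho_{B}\right]$. Starting from $O_{B}$ and the induced $O$ of \eqref{eq:O_B to O}, expand $\tr\left[O\rho\right]=\sum_{k,l}\tr\left[O_{B}\ketbra{b_{l}}{b_{k}}\right]\tr\left[S_{kl}\rho\right]$; then substitute $\rho_{B}=\tr_{\left(A\right)}\left(\rho\right)=\sum_{k',l'}\tr\left[S_{k'l'}\rho\right]\ketbra{b_{l'}}{b_{k'}}$ from \eqref{eq:partial state reudct map S_kl} into $\tr\left[O_{B}\rho_{B}\right]$, and observe that the two expressions coincide term by term (matching $k\leftrightarrow k'$, $l\leftrightarrow l'$, using $\tr\left[\ketbra{b_{l}}{b_{k}}\cdot\ketbra{b_{l'}}{b_{k'}}\right]=\delta_{kl'}\delta_{lk'}$ is not even needed here since both sides are already written in the same double sum). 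The reverse direction starts from an arbitrary $O\in\mathcal{O}$, expands $O=\sum_{k,l}c_{kl}S_{kl}$ by HS-orthogonality with $c_{kl}=\tr\left(S_{lk}O\right)/\left|\mathrm{CR}\left(k,l\right)\right|$ (this is exactly where Proposition~\ref{prop: S_kl are oper sys} enters: $\left\langle S_{lk},S_{kl}\right\rangle_{HS}=\tr\left(S_{kl}^{\dagger}S_{kl}\right)$ needs a moment of care with indices, since $S_{lk}^{\dagger}=S_{kl}$, giving the coefficient formula), recognizes that these $c_{kl}$ are precisely the matrix entries appearing in \eqref{eq:O to O_B}, and then applies the already-proved forward direction to conclude.

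Finally I would close the loop by noting the two maps are mutually inverse on their respective domains: composing $O_{B}\mapsto O\mapsto O_{B}'$ and using $\tr\left(S_{lk}S_{k'l'}\right)=\delta_{kk'}\delta_{ll'}\left|\mathrm{CR}\left(k,l\right)\right|$ recovers $O_{B}'=\sum_{k,l}\tr\left[O_{B}\ketbra{b_{l}}{b_{k}}\right]\ketbra{b_{k}}{b_{l}}=O_{B}$ since $\left\{ \ketbra{b_{k}}{b_{l}}\right\} $ is the matrix-unit basis of $\mathcal{L}\left(\mathcal{H}_{B}\right)$; the other composition is symmetric. Thus every $O\in\mathcal{O}$ has a partner $O_{B}$ and vice versa, which is the ``and vice versa'' clause.

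\textbf{Expected obstacle.}
The calculations are all bookkeeping, so the only real subtlety is index discipline around $\mathrm{CR}\left(k,l\right)$: one must be careful that $S_{kl}$ and $S_{lk}$ have the \emph{same} common-row set, that the HS norm $\left|\mathrm{CR}\left(k,l\right)\right|$ in Proposition~\ref{prop: S_kl are oper sys} matches the normalization factor in \eqref{eq:O to O_B}, and that products like $S_{kl}S_{k'l'}$ vanish unless the column indices chain up \emph{and} the relevant rows are common to all the columns involved --- this last point is where the non-rectangular shape genuinely differs from the operator-algebra case, since $S_{kl}S_{l'k'}$ need not equal $\delta_{ll'}S_{kk'}$ when $\mathrm{CR}\left(k,l\right)\neq\mathrm{CR}\left(l,k'\right)$. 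I would therefore state and use a short computational lemma, $S_{kl}S_{l'k'}=\delta_{ll'}\sum_{i\in\mathrm{CR}\left(k,l\right)\cap\mathrm{CR}\left(l,k'\right)}\ketbra{e_{ik}}{e_{ik'}}$, at the outset, and invoke it wherever products of the $S$'s appear; everything else then follows by linearity and the orthogonality relation.
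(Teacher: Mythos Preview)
Your proposal is correct and follows essentially the same route as the paper: both directions are verified by combining the tomographic representation \eqref{eq:partial state reudct map S_kl} with the HS-orthogonality from Proposition~\ref{prop: S_kl are oper sys}, and your expansion of $O$ in the $\{S_{kl}\}$ basis via $c_{kl}=\tr(S_{lk}O)/|\mathrm{CR}(k,l)|$ matches the paper's exactly. Your anticipated obstacle is a non-issue, however: no operator products $S_{kl}S_{l'k'}$ ever appear in the argument---only the Hilbert--Schmidt inner products $\tr(S_{lk}S_{k'l'})$, already supplied by Proposition~\ref{prop: S_kl are oper sys}---so the computational lemma you propose is unnecessary (and the extra checks of self-adjointness and mutual invertibility, while correct, go beyond what the theorem asks).
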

\begin{proof}
To see that the correspondence \eqref{eq:O_B to O} holds we recall
the definition \eqref{eq:partial state reudct map S_kl} of $\tr_{\left(A\right)}\left(\rho\right)$
and use the linearity of trace to derive

\[
\tr\left[O_{B}\rho_{B}\right]=\tr\left[O_{B}\left(\sum_{k,l=1}^{d_{B}}\ket{b_{l}}\bra{b_{k}}\tr\left[S_{kl}\rho\right]\right)\right]=\tr\left[\left(\sum_{k,l=1}^{d_{B}}\tr\left[O_{B}\ket{b_{l}}\bra{b_{k}}\right]S_{kl}\right)\rho\right]=\tr\left[O\rho\right].
\]
For the converse, using the fact that $O\in\spn\left\{ S_{kl}\right\} $
we can expand it in the $\left\{ \hat{S}_{kl}\right\} $ basis as
\[
O=\sum_{k,l=1}^{d_{B}}\left\langle \hat{S}_{kl},O\right\rangle _{HS}\hat{S}_{kl}=\sum_{k,l=1}^{d_{B}}\frac{\tr\left(S_{lk}O\right)}{\left|\mathrm{CR}\left(k,l\right)\right|}S_{kl}.
\]
so
\[
\tr\left[O\rho\right]=\sum_{k,l=1}^{d_{B}}\frac{\tr\left(S_{lk}O\right)}{\left|\mathrm{CR}\left(k,l\right)\right|}\tr\left[S_{kl}\rho\right].
\]
Then, using the correspondence \eqref{eq:O to O_B} we can check that
$\tr\left[O\rho\right]=\tr\left[O_{B}\rho_{B}\right]$ holds 
\begin{align*}
\tr\left[O_{B}\rho_{B}\right] & =\tr\left[\left(\sum_{k,l=1}^{d_{B}}\frac{\tr\left(S_{lk}O\right)}{\left|\mathrm{CR}\left(k,l\right)\right|}\ketbra{b_{k}}{b_{l}}\right)\left(\sum_{k',l'=1}^{d_{B}}\tr\left[S_{k'l'}\rho\right]\text{\ensuremath{\ket{b_{l'}}\bra{b_{k'}}}}\right)\right]\\
 & =\sum_{k,l=1}^{d_{B}}\frac{\tr\left(S_{lk}O\right)}{\left|\mathrm{CR}\left(k,l\right)\right|}\tr\left[S_{kl}\rho\right]=\tr\left[O\rho\right].
\end{align*}
\end{proof}
Thus, the above theorem tells us that the operational meaning of the
reduced state $\rho_{B}$ is that it preserves the expectation values
of all the observables $O\in\spn\left\{ S_{kl}\right\} $. It is also
worth noting that the correspondences \eqref{eq:O_B to O} and \eqref{eq:O to O_B}
can be concisely expressed element-wise as
\begin{align}
\ket{b_{l}}\bra{b_{k}} & \longleftrightarrow S_{kl}.\label{eq:O_B and O corres elem wise}
\end{align}

The surprising feature of tracing out a partial subsystem is that
even when the expectation values of some observable are preserved,
it does not mean that the probabilities of its individual outcomes
are preserved. The reason for that is because the statement $O\in\spn\left\{ S_{kl}\right\} $
does not imply that the spectral projections $\left\{ \Pi_{i}\right\} $
of $O$ are also in $\spn\left\{ S_{kl}\right\} $ (unless $\spn\left\{ S_{kl}\right\} $
is an algebra so the bipartition is not partial). Since the probabilities
of individual outcomes are given by the expectation values of the
spectral projections, they are not guaranteed to be preserved.

As an example, let us go back to the two spin Hilbert space $\mathcal{H}=\underline{\frac{1}{2}}\otimes\underline{\frac{1}{2}}$
and change to the total spin basis $\ket{j,m_{z}}$. The partial bipartition
that we now want to consider is given by the following BPT

\noindent\begin{minipage}[c]{1\columnwidth}%
\begin{center}
\vspace{0.5\baselineskip}
{\small{} }%
\begin{tabular}{|c|c|c|c|c|}
\cline{2-2} \cline{5-5} 
\multicolumn{1}{c|}{} & {\small{}$0,0_{z}$} & \multicolumn{1}{c}{} & $\rightarrow$ & $s$\tabularnewline
\cline{1-3} \cline{2-3} \cline{3-3} \cline{5-5} 
{\small{}$1,+1_{z}$} & {\small{}$1,0_{z}$} & {\small{}$1,-1_{z}$} & $\rightarrow$ & $t$\tabularnewline
\cline{1-3} \cline{2-3} \cline{3-3} \cline{5-5} 
\multicolumn{1}{c}{$\downarrow$} & \multicolumn{1}{c}{$\downarrow$} & \multicolumn{1}{c}{$\downarrow$} & \multicolumn{1}{c}{} & \multicolumn{1}{c}{}\tabularnewline
\cline{1-3} \cline{2-3} \cline{3-3} 
$1_{z}$ & $0_{z}$ & $-1_{z}$ & \multicolumn{1}{c}{} & \multicolumn{1}{c}{}\tabularnewline
\cline{1-3} \cline{2-3} \cline{3-3} 
\end{tabular}{\small{} .}\vspace{0.5\baselineskip}
\par\end{center}%
\end{minipage} As before, the columns distinguish between the states with different
$m_{z}$. The rows now distinguish between the singlet ($s$) and
triplet ($t$) states. The partial subsystems are therefore
\begin{align*}
\mathcal{H}_{B} & :=\spn\left\{ \ket{1_{z}},\ket{0_{z}},\ket{-1_{z}}\right\} \\
\mathcal{H}_{A} & :=\spn\left\{ \ket s,\ket t\right\} .
\end{align*}
The question then is what observable information is preserved if we
trace out the singlet-triplet subsystem?

The short answer is that the preserved information is given by the
expectation values of all the observables $O\in\spn\left\{ S_{kl}\right\} $,
where $S_{kl}$ are constructed from the BPT according to Eq. \eqref{eq:def of S_kl non rec}
as 
\[
S_{kl}=\sum_{j\in\mathrm{CR}\left(k,l\right)}\ketbra{j,k_{z}}{j,l_{z}}=\begin{cases}
\ketbra{0,0_{z}}{0,0_{z}}+\ketbra{1,0_{z}}{1,0_{z}} & k=l=0\\
\ketbra{1,k_{z}}{1,l_{z}} & \textrm{otherwise .}
\end{cases}
\]
In particular, for $k=l$ the projections $\left\{ S_{kk}\right\} $
are the spectral projections of the total spin component 
\[
J_{z}=\sum_{j,k}k\ketbra{j,k}{j,k}=\sum_{k}kS_{kk}\,.
\]
Thus, all the statistical information about the observable $J_{z}$
is preserved and, according to the correspondence \eqref{eq:O_B and O corres elem wise},
the reduced observable is 
\[
J_{z}\longmapsto J_{z;B}=\sum_{k}k\ketbra{k_{z}}{k_{z}}.
\]
As expected, $J_{z}$ corresponds to the $\hat{z}$ component of the
reduced spin-$1$ system and $\ket{k_{z}}$ are its eigenstates.

The total spin ladder operators $J_{\pm}$ are also in $\spn\left\{ S_{kl}\right\} $
since they can be expanded as 
\begin{align*}
J_{+} & =\sqrt{2}\ketbra{1,1_{z}}{1,0_{z}}+\sqrt{2}\ketbra{1,0_{z}}{1,-1_{z}}=\sqrt{2}S_{1,0}+\sqrt{2}S_{0,-1}\\
J_{-} & =\sqrt{2}\ketbra{1,-1_{z}}{1,0_{z}}+\sqrt{2}\ketbra{1,0_{z}}{1,1_{z}}=\sqrt{2}S_{-1,0}+\sqrt{2}S_{0,1}\,.
\end{align*}
This means that the other two total spin components $J_{x}$ and $J_{y}$
are in $\spn\left\{ S_{kl}\right\} $ as well. From the correspondence
\eqref{eq:O_B and O corres elem wise} we have 
\begin{align*}
J_{+} & \longmapsto J_{+;B}=\sqrt{2}\ketbra{1_{z}}{0_{z}}+\sqrt{2}\ketbra{0_{z}}{-1_{z}}\\
J_{-} & \longmapsto J_{-;B}=\sqrt{2}\ketbra{-1_{z}}{0_{z}}+\sqrt{2}\ketbra{0_{z}}{1_{z}}
\end{align*}
and so
\begin{align*}
J_{x} & \longmapsto J_{x;B}=\frac{J_{+;B}+J_{-;B}}{2}=\frac{\ketbra{1_{z}}{0_{z}}+\ketbra{0_{z}}{-1_{z}}+\ketbra{-1_{z}}{0_{z}}+\ketbra{0_{z}}{1_{z}}}{\sqrt{2}}\\
J_{y} & \longmapsto J_{-;B}=\frac{J_{+;B}-J_{-;B}}{2i}=\frac{\ketbra{1_{z}}{0_{z}}+\ketbra{0_{z}}{-1_{z}}-\ketbra{-1_{z}}{0_{z}}-\ketbra{0_{z}}{1_{z}}}{\sqrt{2}i}.
\end{align*}
That is, $J_{x}$ and $J_{y}$ correspond to the $\hat{x}$ and $\hat{y}$
components of the reduced spin-$1$ system.

Unlike $J_{z}$, however, the spectral projections of $J_{x}$ and
$J_{y}$ are not present in $\spn\left\{ S_{kl}\right\} $. If they
were, then $J_{x}^{2}$ (similarly $J_{y}^{2}$) would also be in
$\spn\left\{ S_{kl}\right\} $ but that is not the case as
\begin{align*}
J_{x}^{2} & =\left(\frac{J_{+}+J_{-}}{2}\right)^{2}=\frac{1}{2}\left(S_{1,0}+S_{0,-1}+S_{-1,0}+S_{0,1}\right)^{2}\\
 & =\frac{1}{2}\left(S_{1,-1}+S_{1,1}+S_{-1,-1}+S_{-1,1}\right)+\ketbra{1,0_{z}}{1,0_{z}}.
\end{align*}
Since $\ketbra{1,0_{z}}{1,0_{z}}$ is not in $\spn\left\{ S_{kl}\right\} $
(because $S_{00}=\ketbra{0,0_{z}}{0,0_{z}}+\ketbra{1,0_{z}}{1,0_{z}}$)
then neither is $J_{x}^{2}$.

Therefore, by tracing out the singlet-triplet partial subsystem we
can preserve all the statistical information about the component $J_{z}$
of total spin, but for $J_{x}$ and $J_{y}$ only the expectation
values are preserved and not their higher moments. These, of course,
are not all the observables in $\spn\left\{ S_{kl}\right\} $ and
there is more observable information that is preserved in the reduced
states.

\chapter{The uncertainty principle on a lattice \label{chap:The-uncertainty-principle}}

In this chapter we will carry out a case study of the uncertainty
principle on a lattice. Unlike previous chapters where the emphasis
was on the methods, here we will focus on specific physical questions.
Because a much simpler notion of coarse-graining is needed here, the
analysis in this chapter will not rely on the contents of previous
chapters. These results were originally published in \citep{kabernik2020quantifying}.

Heisenberg\textquoteright s uncertainty principle is colloquially
understood as the fact that arbitrarily precise values of position
and momentum cannot simultaneously be determined (see \citep{busch2006complementarity,busch2007heisenberg}
for a review). A rigorous formulation of the uncertainty principle
is often conflated with the uncertainty relations for states $\sigma_{x}\sigma_{p}\geq\hbar/2$,
where $\sigma_{x}$ and $\sigma_{p}$ refer to the standard deviations
of independently measured position and momentum of a particle in the
same state. This inequality rules out the possibility of quantum states
with arbitrarily sharp values of both position and moment. It does
not, however, rule out the possibility of measurements that simultaneously
determine both of these values with arbitrary precision. The essential
effect behind the uncertainty principle that rules out the latter
possibility is the mutual disturbance between measurements of incompatible
observables.

According to the original formulation by Heisenberg \citep{heisenberg1927},
due to the unavoidable disturbance by measurements, it is not possible
to localize a particle in a phase space cell of the size of the Planck
constant or smaller. However, when phase space cells much coarser
than the Planck constant are considered, Heisenberg argued that the
values of both observables can be estimated at the expense of lower
resolution. The picture that emerges from Heisenberg's original arguments
is that the Planck constant sets a resolution scale in phase space
that separates the quantum regime from the classical (see Fig. \ref{fig:phase space diag}(a)).
There is, of course, a continuum of scales so it is natural to ask
for a characteristic function that outlines how the uncertainty principle
transitions to the classical regime as the resolution of measurements
decreases.

A rigorous formulation of the measurement uncertainty principle has
been extensively debated in recent years \citep{Ozawa03Universally,Busch13proof,branciard2013error,korzekwa2014operational,buscemi2014noise,rozema2015note},
producing multiple perspectives on the fundamental limits of simultaneous
measurability of incompatible observables. These formulations are
similar to the uncertainty relations for states as they capture the
trade-off between the resolution and disturbance of measurements (which
may also depend on the states). However, the picture of how the uncertainty
principle transitions to the regime where joint measurability is possible
is not so clear from these perspectives.

Furthermore, the picture of continuous phase space as a fundamental
concept has been challenged by the various approaches to quantum gravity
\citep{ali2009discreteness}. The existence of minimal length in space
is indicated by many thought experiments that point to the impossibility
of probing length scales close to the Planck length $\delta x\sim10^{-35}\,\textrm{m}$
(see \citep{hossenfelder2013minimal} for a review). It then follows
that due to the existence of minimal length in space, the canonical
commutation relations and the associated mutual disturbance effects
have to be modified; this is known as the \emph{generalized uncertainty
principle }\citep{ali2009discreteness}. There is great interest in
identifying any observable effects associated with the modifications
of the uncertainty principle due to minimal length, and in recent
years there have been at least two experimental proposals \citep{ali2011proposal,pikovski2012probing}
based on this idea.

\begin{figure}
\begin{centering}
(a)\includegraphics[width=0.35\columnwidth]{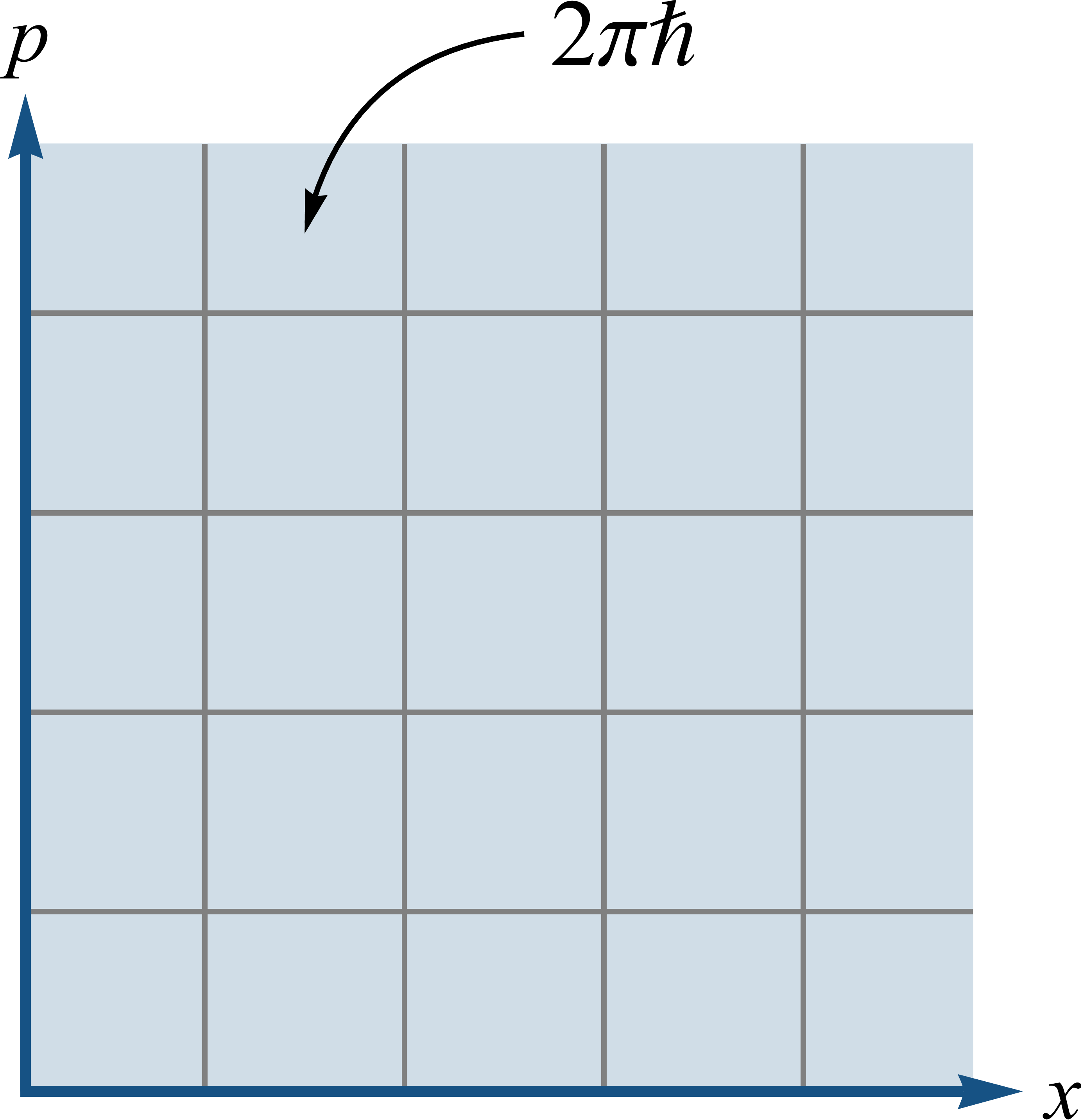}
\hspace*{0.1\columnwidth}(b)\includegraphics[width=0.35\columnwidth]{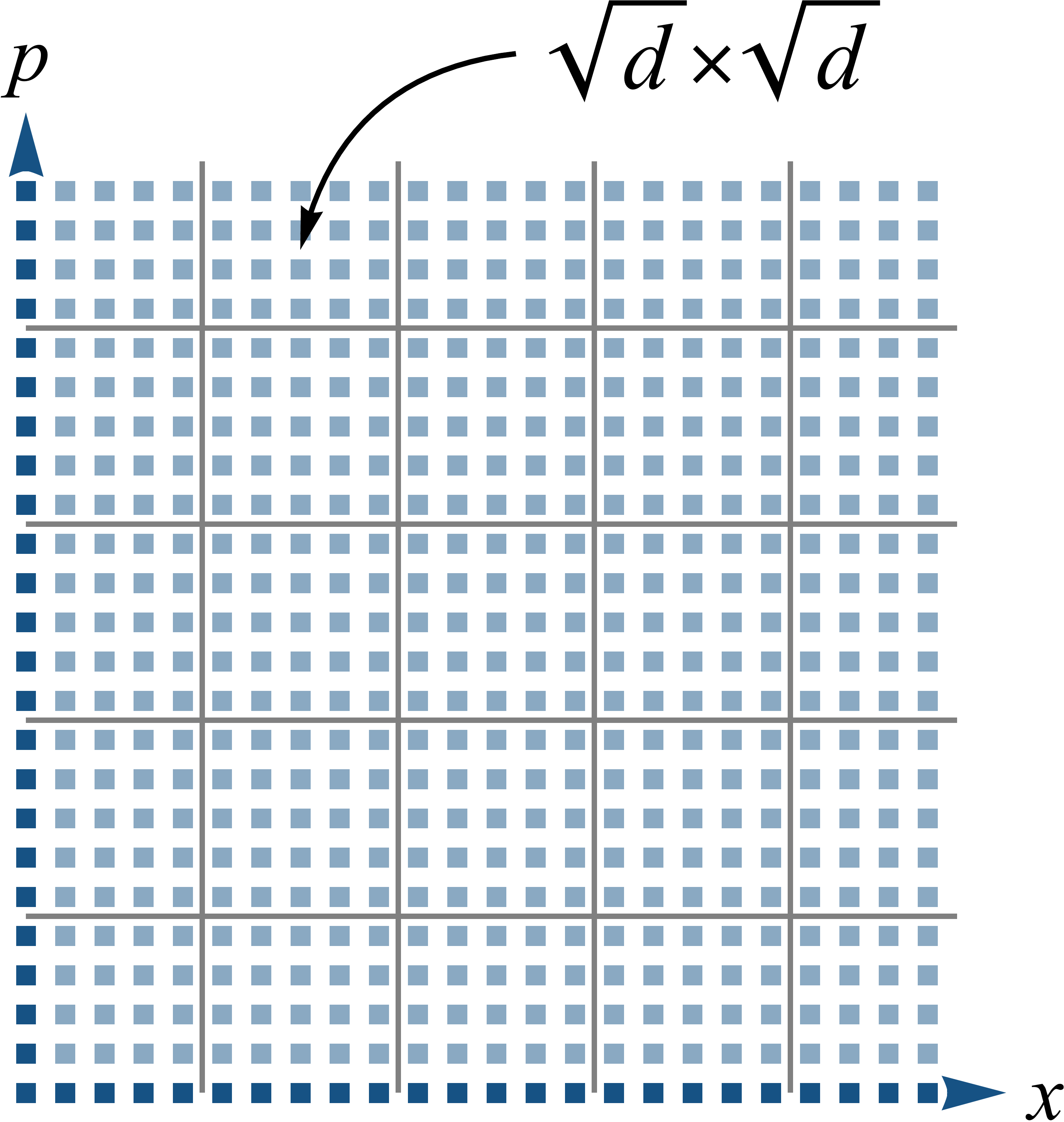}
\par\end{centering}
\caption{\label{fig:phase space diag}(a) The continuous phase space where
the cells with the area $2\pi\hbar$ represent the resolution scale
associated with the uncertainty principle. (b) The discretized phase
space of a lattice of integer length $d$. The cells with the area
$\sqrt{d}\times\sqrt{d}$ arise from the scale $\sqrt{d}$ associated
with the uncertainty principle on a lattice. The Planck constant $2\pi\hbar$
can be recovered from $\sqrt{d}$ by converting the phase space area
$\sqrt{d}\times\sqrt{d}$ to proper units.}
\end{figure}

In this chapter we will study the effects of the uncertainty principle
as a function of measurement resolution. In Section \ref{sec:From-quantum-to}
we will introduce and analyze an operationally defined measure of
mutual disturbance that is responsible for the uncertainty principle.
This measure is based on the probability that an instantaneous succession
of coarse-grained measurements of position-momentum-position will
agree on both outcomes of position. The analysis will be carried out
on a finite-dimensional periodic lattice of integer length $d$, where
the continuous space can be recovered by introducing the minimal length
$\delta x$ and taking the limits $d\rightarrow\infty$, $\delta x\rightarrow0$.
As a result, we will derive a rigorous characteristic function that
quantifies the transition of the uncertainty principle from quantum
to classical regimes, in both continuous and discrete settings.

In Section \ref{sec:The-implications-of} we will study the implications
of the uncertainty principle on a lattice. One implication is that
the transition of the uncertainty principle to the classical regimes
is perturbed by the discontinuity of the lattice. We will see how
this perturbation can be quantified by our operationally defined characteristic
function.

Another implication is related to the question of how classicality
emerges in isolated finite-dimensional systems. Such questions have
been considered in \citep{poulin2005macroscopic} and \citep{peres2006quantum},
and in particular Kofler and Brukner \citep{KoflerBrukner2007Classical}
have demonstrated that for a spin-$j$ system, incompatible spin components
can simultaneously be determined if the resolution of measurements
is coarse compared to $\sqrt{j}$.

Our analysis show that the same conclusion applies to position and
momentum on a lattice, where both variables can simultaneously be
determined if the resolution of measurements is coarse compared to
$\sqrt{d}$. We will then discuss how the unitless scale $\sqrt{d}$
factorizes the Planck constant (see Fig. \ref{fig:phase space diag}
(b)) and defines a new length scale given by the geometric mean $\sqrt{\delta xL}$
of the \emph{minimal} length $\delta x$ and the \emph{maximal} length
$L$.

\section{From quantum to classical regimes on a lattice\label{sec:From-quantum-to}}

Let us consider the simple, operationally meaningful quantity $\boldsymbol{p}_{\textrm{agree}}$,
which is the probability that an instantaneous succession of position-momentum-position
measurements will agree on both outcomes of position, regardless of
the outcomes. When all measurements have arbitrarily fine resolution,
the second measurement in this succession prepares a sharp momentum
state that is nearly uniformly distributed in position space. Then,
the probability that the first and the last measurements of position
will agree is vanishingly small $\boldsymbol{p}_{\textrm{agree}}\approx0$.
As we decrease the resolution of measurements, we expect the probability
$\boldsymbol{p}_{\textrm{agree}}$ to grow from $0$ to $1$ because
coarser momentum measurement will cause less spread in the position
space, and coarser position measurements will be more likely to agree
on the estimate of position.

Now, consider the average $\left\langle \boldsymbol{p}_{\textrm{agree}}\right\rangle $
over all states. In general, the average value $\left\langle \boldsymbol{p}_{\textrm{agree}}\right\rangle $
does not inform us about how strongly the measurements disturb each
other for any particular state $\rho$. However, when the average
$\left\langle \boldsymbol{p}_{\textrm{agree}}\right\rangle $ is close
to $0$ or $1$, the value of $\boldsymbol{p}_{\textrm{agree}}\left(\rho\right)$
has to converge to the average for almost all states $\rho$. That
is because $\boldsymbol{p}_{\textrm{agree}}\in\left[0,1\right]$ so
its variance has to vanish as the average gets close to the edges.
Therefore, the value of $\left\langle \boldsymbol{p}_{\textrm{agree}}\right\rangle $
indicates how close we are to the regime $\left\langle \boldsymbol{p}_{\textrm{agree}}\right\rangle \approx0$
where the measurements strongly disturb each other for almost all
states, or the regime $\left\langle \boldsymbol{p}_{\textrm{agree}}\right\rangle \approx1$
where the mutual disturbance is inconsequential for almost all states.
We can therefore utilize $\left\langle \boldsymbol{p}_{\textrm{agree}}\right\rangle $
as a characteristic function that quantifies the relevance of the
uncertainty principle and outlines the transition between quantum
and classical regimes.

For the rest of this section we will focus on deriving and studying
the explicit expression for $\left\langle \boldsymbol{p}_{\textrm{agree}}\right\rangle $
as a function of measurement resolution. The most technical calculations
concerned with the upper and lower bounds on $\left\langle \boldsymbol{p}_{\textrm{agree}}\right\rangle $
are deferred to the Appendix. The final result is the explicit expression
for $\left\langle \boldsymbol{p}_{\textrm{agree}}\right\rangle $
in Eq. \eqref{eq: lam_agree expr plane} along with the bounds \eqref{eq: lam agree upper bound},
\eqref{eq:  lam agree lower bound}, and the plot presented in Fig.
\ref{fig:ClassicalityVw}.

In order to calculate the value of $\left\langle \boldsymbol{p}_{\textrm{agree}}\right\rangle $
as a function of measurement resolution, we turn to the canonical
setting of finite-dimensional quantum mechanics. In this setting we
consider a particle on a periodic one-dimensional lattice with $d$
lattice sites. Initially, both lattice units of position and momentum
will be set to unity $\delta x\equiv1$, $\delta p\equiv1$. Later,
we will introduce proper units and consider the continuum limit.

Following the construction in \citep{vourdas2004quantum,jagannathan1981finite},
the Hilbert space of our system is given by the span of position basis
$\ket{X;n}$ for $n=0,...,d-1$. The momentum basis are related to
the position basis via the discrete Fourier transform $F$ 
\begin{align*}
\ket{X;n} & =F^{\dagger}\ket{P;n}=\frac{1}{\sqrt{d}}\sum_{m=0}^{d-1}e^{-i2\pi mn/d}\ket{P;m}\\
\ket{P;m} & =F\ket{X;m}=\frac{1}{\sqrt{d}}\sum_{n=0}^{d-1}e^{i2\pi mn/d}\ket{X;n}.
\end{align*}

In principle, realistic finite resolution measurements should be modeled
as unsharp POVMs \citep{busch1996quantum,peres2006quantum}. For our
purposes, however, it will be sufficient to consider the idealized
version in the form of coarse-grained projective measurements.

We introduce the integer parameters $w_{x}$, $w_{p}$ to specify
the widths of the coarse-graining intervals for the corresponding
observables (larger $w$ means lower resolution). The variable $k=d/w$
specifies the number of coarse-graining intervals which we will also
assume to be an integer. See Fig. \ref{fig:lattice diagram} for a
diagrammatic summary of the relevant lengths. 
\begin{figure}[t]
\begin{centering}
\includegraphics[width=0.8\columnwidth]{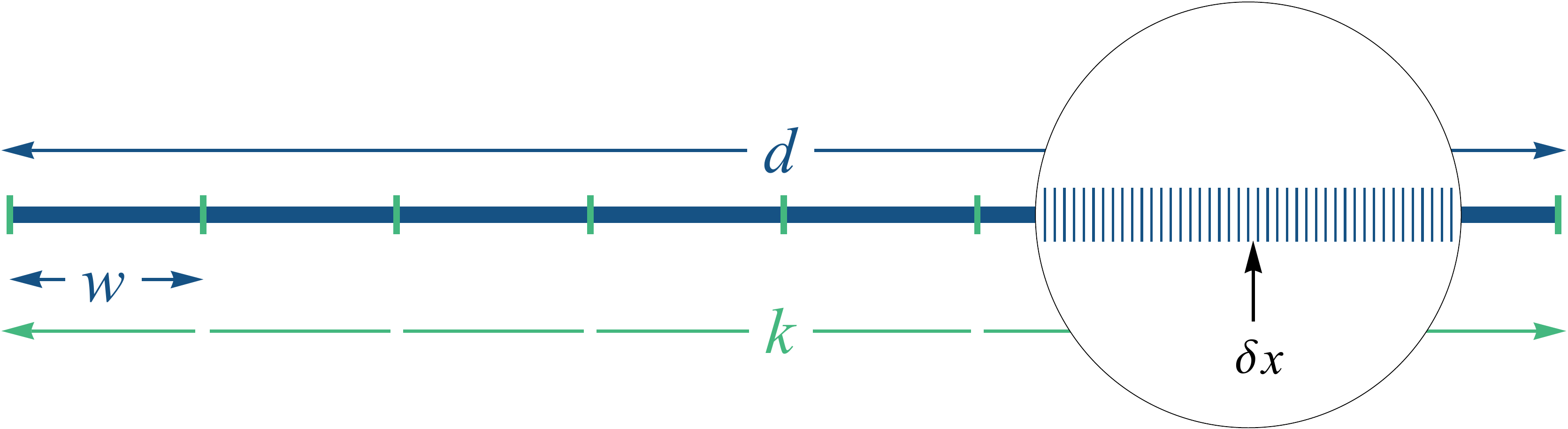}
\par\end{centering}
\caption{\label{fig:lattice diagram}Periodic one dimensional lattice with
$d$ lattice sites in total, $w$ lattice sites in each coarse-graining
interval, and $k=d/w$ intervals. The lattice unit of length is $\delta x$.}
\end{figure}

The coarse-grained position and momentum observables are constructed
from the spectral projections
\begin{align}
\Pi_{X;\nu} & =\sum_{n=\nu w_{x}}^{\nu w_{x}+w_{x}-1}\ket{X;n}\bra{X;n}\label{eq:Pi_X;nu}\\
\Pi_{P;\mu} & =\sum_{m=\mu w_{p}}^{\mu w_{p}+w_{p}-1}\ket{P;m}\bra{P;m}\label{eq:Pi_P;mu}
\end{align}
associated with the eigenvalues of coarse-grained position $\nu=0,...,k_{x}-1$
and momentum $\mu=0,...,k_{p}-1$. In the following we will only need
these spectral projections, so we do not have to explicitly define
the operators of coarse-grained position and momentum.

Let us now calculate the probability of getting the outcomes $\nu,\mu,\nu$
in an instantaneous sequence of position-momentum-position measurements
on the initial state $\rho$. Using the intermediate post-measurement
states in this sequence 
\[
\rho^{\left(\nu\right)}=\frac{\Pi_{X;\nu}\rho\Pi_{X;\nu}}{\tr\left[\Pi_{X;\nu}\rho\right]}\hspace{2cm}\rho^{\left(\nu\mu\right)}=\frac{\Pi_{X;\mu}\rho^{\left(\nu\right)}\Pi_{X;\mu}}{\tr\left[\Pi_{X;\mu}\rho^{\left(\nu\right)}\right]}
\]
we can express this probability as 
\begin{align}
\boldsymbol{p}_{xpx}\left(\nu,\mu,\nu|\rho\right) & =\tr\left[\Pi_{X;\nu}\rho\right]\tr\left[\Pi_{P;\mu}\rho^{\left(\nu\right)}\right]\tr\left[\Pi_{X;\nu}\rho^{\left(\nu\mu\right)}\right]\nonumber \\
 & =\tr\left[\left(\Pi_{X;\nu}\Pi_{P;\mu}\Pi_{X;\nu}\right)^{2}\rho\right].\label{eq:def p_xpx}
\end{align}
Then, the probability that both position outcomes agree, regardless
of the outcomes, is 
\begin{align}
\boldsymbol{p}_{\textrm{agree}}\left(\rho\right) & =\sum_{v=0}^{k_{x}-1}\sum_{\mu=0}^{k_{p}-1}\boldsymbol{p}_{xpx}\left(\nu,\mu,\nu|\rho\right)\nonumber \\
 & =\tr\left[\sum_{v=0}^{k_{x}-1}\sum_{\mu=0}^{k_{p}-1}\left(\Pi_{X;\nu}\Pi_{P;\mu}\Pi_{X;\nu}\right)^{2}\rho\right].\label{eq:p_agree}
\end{align}

From Eq. \eqref{eq:p_agree} we identify the observable 
\[
\Lambda_{\textrm{agree}}=\sum_{\nu=0}^{k_{x}-1}\sum_{\mu=0}^{k_{p}-1}\left(\Pi_{X;\nu}\Pi_{P;\mu}\Pi_{X;\nu}\right)^{2}
\]
whose expectation values are the probabilities $\boldsymbol{p}_{\textrm{agree}}\left(\rho\right)=\tr\left(\Lambda_{\textrm{agree}}\rho\right)$.
Since $\boldsymbol{p}_{\textrm{agree}}\left(\rho\right)$ is linear
in $\rho$, the average $\left\langle \boldsymbol{p}_{\textrm{agree}}\right\rangle $
is given by $\boldsymbol{p}_{\textrm{agree}}\left(\left\langle \rho\right\rangle \right)$,
where $\left\langle \rho\right\rangle =\frac{1}{d}I$ is the average
state, thus
\[
\left\langle \boldsymbol{p}_{\textrm{agree}}\right\rangle =\boldsymbol{p}_{\textrm{agree}}\left(\frac{1}{d}I\right)=\frac{1}{d}\tr\left[\Lambda_{\textrm{agree}}\right].
\]
In order to calculate $\left\langle \boldsymbol{p}_{\textrm{agree}}\right\rangle $
as an explicit function of $w_{x}$ and $w_{p}$, we will have to
establish a few identities.

Let us introduce the lattice translation operators $T_{X}$, $\text{\ensuremath{T_{P}} }$
in position and momentum defined by their action on the basis (addition
on the lattice is $\mathsf{mod}\,d$)
\begin{align*}
T_{X}\ket{X;n} & =\ket{X;n+1} &  & T_{X}^{\dagger}\ket{X;n}=\ket{X;n-1}\\
T_{P}\ket{P;m} & =\ket{P;m+1} &  & T_{P}^{\dagger}\ket{P;m}=\ket{P;m-1}.
\end{align*}
By expanding the position basis in momentum basis and vice versa,
it is straight forward to verify that
\begin{align*}
T_{P}\ket{X;n} & =e^{i2\pi n/d}\ket{X;n} &  & T_{P}^{\dagger}\ket{X;n}=e^{-i2\pi n/d}\ket{X;n}\\
T_{X}\ket{P;m} & =e^{-i2\pi m/d}\ket{P;m} &  & T_{X}^{\dagger}\ket{P;m}=e^{i2\pi m/d}\ket{P;m}.
\end{align*}
Therefore, $T_{P}$ commutes with $\ket{X;n}\bra{X;n}$ and $T_{X}$
commutes with $\ket{P;m}\bra{P;m}$. By extension, $T_{P}$ commutes
with $\Pi_{X;\nu}$ and $T_{X}$ commutes with $\Pi_{P;\mu}$.

Using the translation operators we can express the coarse-grained
position and momentum projections \eqref{eq:Pi_X;nu}, \eqref{eq:Pi_P;mu}
as 
\begin{align*}
\Pi_{X;\nu} & =\sum_{n=0}^{w_{x}-1}T_{X}^{\nu w_{x}}\ket{X;n}\bra{X;n}T_{X}^{\nu w_{x}\dagger}=T_{X}^{\nu w_{x}}\Pi_{X;0}T_{X}^{\nu w_{x}\dagger}\\
\Pi_{P;\mu} & =\sum_{m=0}^{w_{p}-1}T_{P}^{\mu w_{p}}\ket{P;m}\bra{P;m}T_{P}^{\mu w_{p}\dagger}=T_{P}^{\mu w_{p}}\Pi_{P;0}T_{P}^{\mu w_{p}\dagger}.
\end{align*}
 Then, using the commutativity of projections with translations we
get the identity

\[
\Pi_{X;\nu}\Pi_{P;\mu}\Pi_{X;\nu}=T_{P}^{\mu w_{p}}\left(\Pi_{X;\nu}\Pi_{P;0}\Pi_{X;\nu}\right)T_{P}^{\mu w_{p}\dagger}=T_{P}^{\mu w_{p}}T_{X}^{\nu w_{x}}\left(\Pi_{X;0}\Pi_{P;0}\Pi_{X;0}\right)T_{X}^{\nu w_{x}\dagger}T_{P}^{\mu w_{p}\dagger}.
\]
With this identity we can simplify
\begin{align}
\left\langle \boldsymbol{p}_{\textrm{agree}}\right\rangle  & =\frac{1}{d}\tr\left[\Lambda_{\textrm{agree}}\right]=\frac{1}{d}\sum_{\nu=0}^{k_{x}-1}\sum_{\mu=0}^{k_{p}-1}\tr\left[\left(\Pi_{X;\nu}\Pi_{P;\mu}\Pi_{X;\nu}\right)^{2}\right]\nonumber \\
 & =\frac{k_{x}k_{p}}{d}\tr\left[\left(\Pi_{X;0}\Pi_{P;0}\Pi_{X;0}\right)^{2}\right].\label{eq: p_agree =00003D tr=00005B(PPP)^2=00005D}
\end{align}

Let us then express

\begin{equation}
\Pi_{X;0}\Pi_{P;0}\Pi_{X;0}=\sum_{m=0}^{w_{p}-1}\Pi_{X;0}\ket{P;m}\bra{P;m}\Pi_{X;0}=\frac{1}{k_{x}}\sum_{m=0}^{w_{p}-1}\ket{P_{0};m}\bra{P_{0};m}.\label{eq:Pi_X0Pi_P0Pi_X0 in trunc basis}
\end{equation}
Here we have defined the \emph{truncated} momentum states 
\begin{align*}
\ket{P_{\nu};m} & :=\sqrt{k_{x}}\,\Pi_{X;\nu}\ket{P;m}=\frac{1}{\sqrt{w_{x}}}\sum_{n=\nu w_{x}}^{\nu w_{x}+w_{x}-1}e^{i2\pi mn/d}\ket{X;n}
\end{align*}
given by normalizing the support of the $m$'th momentum state on
the $\nu$'th position interval. In general, these states are not
orthogonal to each other and their overlap is given by
\begin{align*}
\braket{P_{\nu'};m'}{P_{\nu};m} & =\delta_{\nu',\nu}k_{x}\bra{P;m'}\Pi_{X;\nu}\ket{P;m}=\delta_{\nu',\nu}\frac{k_{x}}{d}\sum_{n=\nu w_{x}}^{\nu w_{x}+w_{x}-1}e^{i2\pi\left(m-m'\right)n/d}.
\end{align*}
It will be convenient to express such sums by defining the function
\begin{equation}
\varDelta_{q}\left(x\right):=\frac{1}{q}\sum_{n=0}^{q-1}e^{i2\pi xn/q}=\frac{e^{i\pi\left(x-x/q\right)}}{q}\frac{\sin\left(\pi x\right)}{\sin\left(\pi x/q\right)}\label{eq:def of delta}
\end{equation}
over real $x$ and integer $q\geq1$ (note that $\varDelta_{q}\left(0\right)=1$).
Then, for $\nu'=\nu=0$ the overlaps of truncated momentum states
are give by 
\begin{align}
\braket{P_{0};m'}{P_{0};m} & =\varDelta_{w_{x}}\left(\frac{m-m'}{k_{x}}\right).\label{eq:P_0 inner prod}
\end{align}

Then, with \eqref{eq:Pi_X0Pi_P0Pi_X0 in trunc basis} and \eqref{eq:def of delta}
we can express
\begin{align*}
\left\langle \boldsymbol{p}_{\textrm{agree}}\right\rangle  & =\frac{k_{x}k_{p}}{d}\tr\left[\left(\Pi_{X;0}\Pi_{P;0}\Pi_{X;0}\right)^{2}\right]=\frac{1}{d}\frac{k_{p}}{k_{x}}\sum_{m,m'=0}^{w_{p}-1}\left|\braket{P_{0};m'}{P_{0};m}\right|^{2}\\
 & =\frac{1}{d}\frac{k_{p}}{k_{x}}\sum_{m,m'=0}^{w_{p}-1}\left|\varDelta_{w_{x}}\left(\frac{m-m'}{k_{x}}\right)\right|^{2}.
\end{align*}
Noting that the summand depends only on the difference $n=m-m'$,
we simplify
\[
\left\langle \boldsymbol{p}_{\textrm{agree}}\right\rangle =\frac{1}{d}\frac{k_{p}}{k_{x}}\sum_{n=1-w_{p}}^{w_{p}-1}\left(w_{p}-\left|n\right|\right)\left|\varDelta_{w_{x}}\left(\frac{n}{k_{x}}\right)\right|^{2}.
\]
Since the summed function is symmetric $\left|\varDelta_{w_{x}}\left(x\right)\right|^{2}=\left|\varDelta_{w_{x}}\left(-x\right)\right|^{2}$,
we further simplify 
\[
\left\langle \boldsymbol{p}_{\textrm{agree}}\right\rangle =\frac{1}{d}\frac{k_{p}}{k_{x}}\left[w_{p}\left|\varDelta_{w_{x}}\left(0\right)\right|^{2}+2\sum_{n=1}^{w_{p}-1}\left(w_{p}-n\right)\left|\varDelta_{w_{x}}\left(\frac{n}{k_{x}}\right)\right|^{2}\right].
\]
Finally, by substituting the explicit form \eqref{eq:def of delta}
of $\varDelta_{w_{x}}$ and recalling that $k_{x}=d/w_{x}$ , $k_{p}=d/w_{p}$
and $\varDelta_{w_{x}}\left(0\right)=1$, we find out how $\left\langle \boldsymbol{p}_{\textrm{agree}}\right\rangle $
varies as a function of $w_{x}$ and $w_{p}$ :
\begin{align}
\left\langle \boldsymbol{p}_{\textrm{agree}}\right\rangle  & =\frac{1}{d}\frac{w_{x}}{w_{p}}\left[w_{p}+2\sum_{n=1}^{w_{p}-1}\left(w_{p}-n\right)\frac{1}{w_{x}^{2}}\frac{\sin^{2}\left(\frac{\pi nw_{x}}{d}\right)}{\sin^{2}\left(\frac{\pi n}{d}\right)}\right]\nonumber \\
 & =\frac{w_{x}}{d}+\frac{2}{w_{x}w_{p}d}\sum_{n=1}^{w_{p}-1}\left(w_{p}-n\right)\frac{\sin^{2}\left(\frac{\pi nw_{x}}{d}\right)}{\sin^{2}\left(\frac{\pi n}{d}\right)}.\label{eq: lam_agree expr plane}
\end{align}

The apparent asymmetry under the exchange of $w_{x}$ with $w_{p}$
traces back to the apparent asymmetry under the exchange between $\Pi_{X;0}$
and $\Pi_{P;0}$ in Eq. \eqref{eq: p_agree =00003D tr=00005B(PPP)^2=00005D}.
These asymmetries are only apparent because 
\[
tr\left[\left(\Pi_{X;0}\Pi_{P;0}\Pi_{X;0}\right)^{2}\right]=tr\left[\Pi_{X;0}\Pi_{P;0}\Pi_{X;0}\Pi_{P;0}\right]=tr\left[\left(\Pi_{P;0}\Pi_{X;0}\Pi_{P;0}\right)^{2}\right].
\]
If we were to exchange $\Pi_{X;0}$ with $\Pi_{P;0}$ we would have
to exchange $w_{x}$ with $w_{p}$, and end up with 
\begin{equation}
\left\langle \boldsymbol{p}_{\textrm{agree}}\right\rangle =\frac{w_{p}}{d}+\frac{2}{w_{x}w_{p}d}\sum_{n=1}^{w_{x}-1}\left(w_{x}-n\right)\frac{\sin^{2}\left(\frac{\pi nw_{p}}{d}\right)}{\sin^{2}\left(\frac{\pi n}{d}\right)}.\label{eq: lam_agree expr plane exch}
\end{equation}

The symmetry under the exchange of $w_{x}$ with $w_{p}$ can also
be seen in Fig. \ref{fig:ClassicalityVw}(a) where we have used Eq.
\ref{eq: lam_agree expr plane} to plot $\left\langle \boldsymbol{p}_{\textrm{agree}}\right\rangle $
as a function of $w_{x}$ and $w_{p}$ . In Fig. \ref{fig:ClassicalityVw}(b)
we plot $\left\langle \boldsymbol{p}_{\textrm{agree}}\right\rangle $
for the diagonal $w=w_{x}=w_{p}$, together with the upper and lower
bounds 
\begin{align}
 & \left\langle \boldsymbol{p}_{\textrm{agree}}\right\rangle \leq w^{2}/d &  & w<\sqrt{d}\label{eq: lam agree upper bound}\\
 & \left\langle \boldsymbol{p}_{\textrm{agree}}\right\rangle \geq1-\frac{2}{\pi^{2}}\frac{\ln\left(w^{2}/d\right)+3\pi^{2}/2}{w^{2}/d} &  & w>\sqrt{d}\label{eq:  lam agree lower bound}
\end{align}
See the Appendix for the derivation of these bounds. 
\begin{figure}
\begin{centering}
(a)\includegraphics[width=0.65\columnwidth]{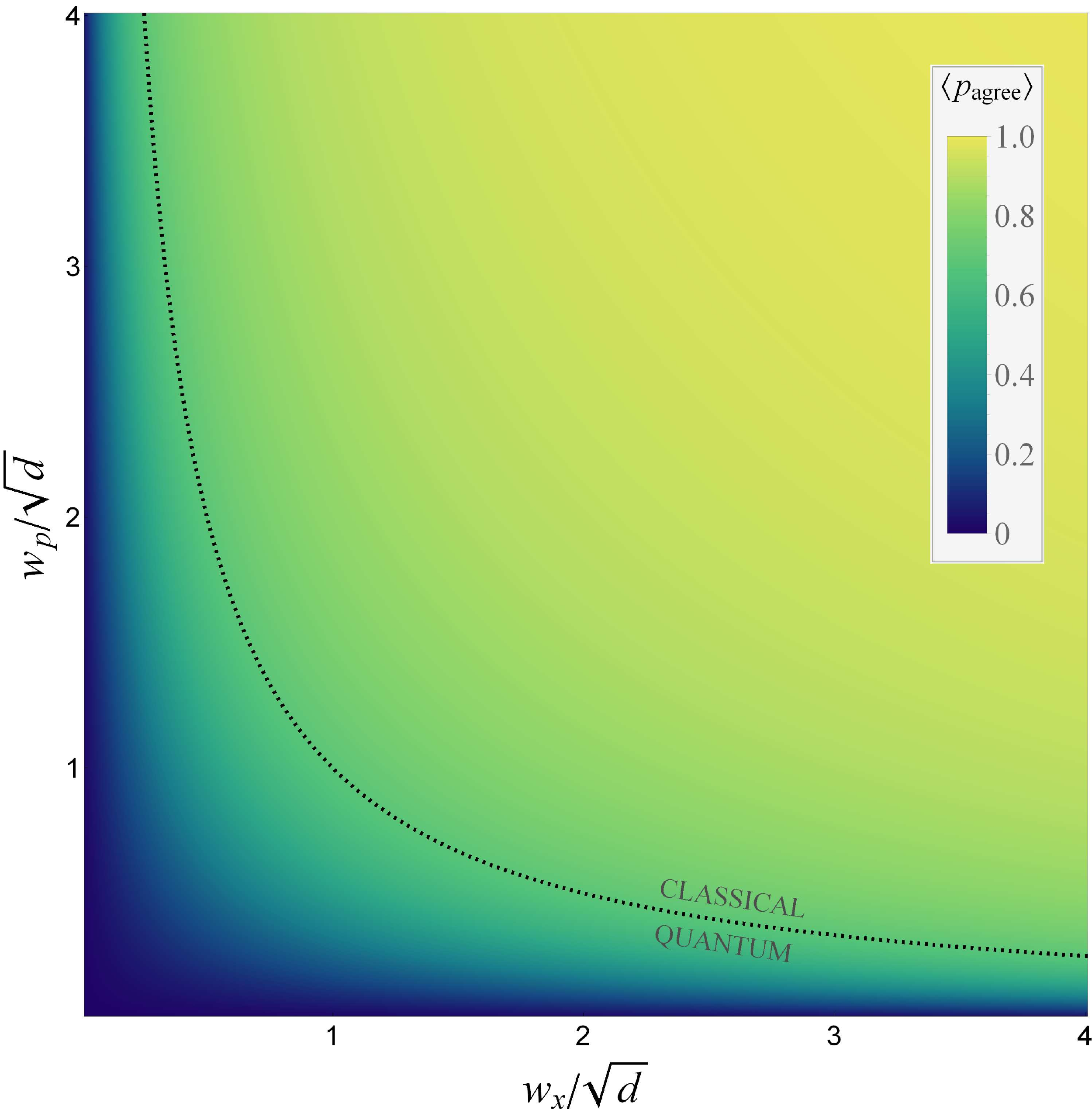}\vspace{1cm}
\par\end{centering}
\centering{}(b)\includegraphics[width=0.65\columnwidth]{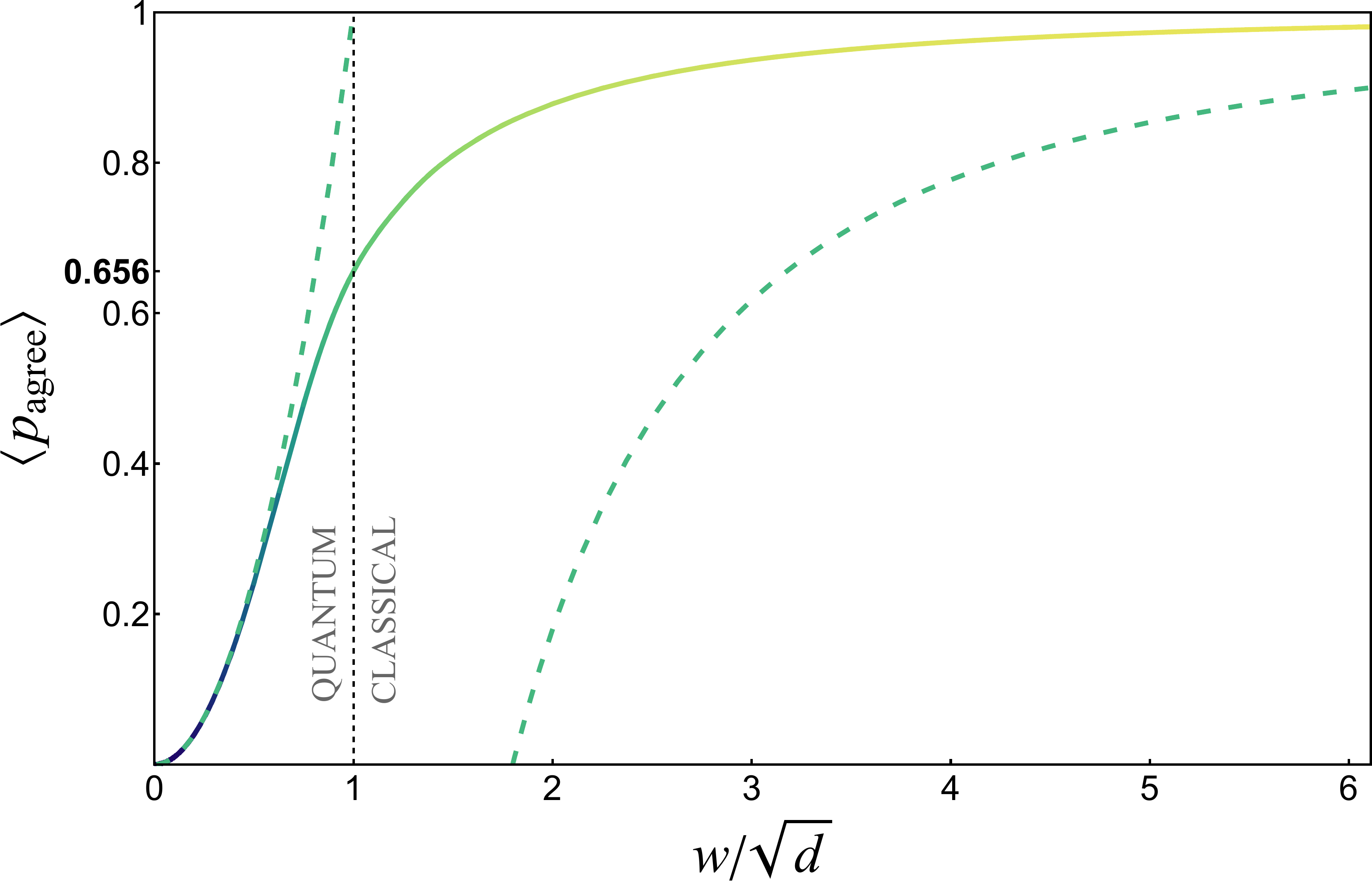}\caption{\label{fig:ClassicalityVw}(a) The plot of the average probability
$\left\langle \boldsymbol{p}_{\textrm{agree}}\right\rangle $ that
an instantaneous succession of position-momentum-position measurements
will agree on both outcomes of position as a function of the resolution
parameters $w_{x}$, $w_{p}$ on a lattice of length $d$. The dotted
curve $w_{x}w_{p}=d$ is the boundary that outlines the transitional
scale with respect to which we distinguish the quantum and classical
regimes. (b) The plot of $\left\langle \boldsymbol{p}_{\textrm{agree}}\right\rangle $
(solid) along the diagonal $w=w_{x}=w_{p}$ with the upper and lower
bounds (dashed) from Eqs. \eqref{eq: lam agree upper bound} and \eqref{eq:  lam agree lower bound}.}
\end{figure}

The upper bound \eqref{eq: lam agree upper bound} tells us that when
$w<\sqrt{d}$, the value of $\left\langle \boldsymbol{p}_{\textrm{agree}}\right\rangle $
falls to $0$ at least as fast as $\sim w^{2}$. The lower bound \eqref{eq:  lam agree lower bound}
tells us that when $w>\sqrt{d}$, the value of $\left\langle \boldsymbol{p}_{\textrm{agree}}\right\rangle $
climbs to $1$ at least as fast as $\sim1-\frac{\ln w^{2}}{w^{2}}$.
This implies that along the diagonal $w=w_{x}=w_{p}$, there is an
inflection in $\left\langle \boldsymbol{p}_{\textrm{agree}}\right\rangle $
around $w=\sqrt{d}$. Therefore, $\sqrt{d}$ is a natural boundary
that separates the scales of the quantum and classical regimes.

The above observation can be extended to the entire plane of $w_{x}$,
$w_{p}$, where the curve $w_{x}w_{p}=d$ generalizes the boundary
$w=\sqrt{d}$. According to the plot in Fig. \ref{fig:ClassicalityVw}(a),
as we get farther from the curve $w_{x}w_{p}=d$, we get deeper into
one of the regimes, and an inflection in $\left\langle \boldsymbol{p}_{\textrm{agree}}\right\rangle $
occurs somewhere near the curve. The fact that the curve $w_{x}w_{p}=d$
separates the scales of the quantum and classical regimes follows
from the observation that $\left\langle \boldsymbol{p}_{\textrm{agree}}\right\rangle \approx0.656$
holds almost everywhere on this curve (except for the far ends).

In order to see that, we assume that $d$ is large (otherwise the
lattice does not approximate a continuum) but finite. On the curve
$w_{x}w_{p}=d$ so Eq. \eqref{eq: lam_agree expr plane} simplifies
to
\begin{equation}
\left\langle \boldsymbol{p}_{\textrm{agree}}\right\rangle =\frac{1}{w_{p}}+\frac{2}{d^{2}}\sum_{n=1}^{w_{p}-1}\left(w_{p}-n\right)\frac{\sin^{2}\left(\frac{\pi n}{w_{p}}\right)}{\sin^{2}\left(\frac{\pi n}{d}\right)}.\label{eq: lam_agree on wxwp=00003Dd}
\end{equation}
In the intermediate range of values $1\ll w_{p}\ll d$ we have $\frac{n}{d}\ll1$
since $n<w_{p}$. We can then approximate $\sin^{-2}\left(\frac{\pi n}{d}\right)\approx\left(\frac{\pi n}{d}\right)^{-2}$
, so 
\begin{equation}
\left\langle \boldsymbol{p}_{\textrm{agree}}\right\rangle \approx\frac{1}{w_{p}}+\frac{2}{\pi^{2}}\sum_{n=1}^{w_{p}-1}\left(w_{p}-n\right)\frac{\sin^{2}\left(\frac{\pi n}{w_{p}}\right)}{n^{2}}.\label{eq: lam_agree on wxwp=00003Dd wp<<d}
\end{equation}
Since the intermediate range also implies that $1\ll w_{p}$, we can
approximate the sum with an integral by introducing the variable $\alpha=\frac{n}{w_{p}}\in\left[0,1\right]$
and $d\alpha=\frac{1}{w_{p}}$. Then,
\begin{align*}
\left\langle \boldsymbol{p}_{\textrm{agree}}\right\rangle  & \approx\frac{1}{w_{p}}+\frac{2}{\pi^{2}}\sum_{n=1}^{w_{p}-1}\frac{1}{w_{p}}\left(1-\frac{n}{w_{p}}\right)\frac{\sin^{2}\left(\pi\frac{n}{w_{p}}\right)}{n^{2}/w_{p}^{2}}\\
 & \approx d\alpha+\frac{2}{\pi^{2}}\int_{0}^{1}d\alpha\left(1-\alpha\right)\frac{\sin^{2}\left(\pi\alpha\right)}{\alpha^{2}}\approx0.656\,.
\end{align*}
Thus, for the intermediate range $1\ll w_{p}\ll d$ on the curve of
$w_{x}w_{p}=d$, we have $\left\langle \boldsymbol{p}_{\textrm{agree}}\right\rangle \approx0.656$.

Then we ask, for what values of $w_{p}$ does the approximation $\left\langle \boldsymbol{p}_{\textrm{agree}}\right\rangle \approx0.656$
breaks? For $w_{p}\sim1$ Eq. \eqref{eq: lam_agree on wxwp=00003Dd wp<<d}
still holds (since $w_{p}\ll d$) and its numeric values are
\begin{center}
\begin{tabular}{|c|cccccccc|}
\hline 
$w_{p}$ & 1 & 2 & 3 & 4 & ... & 15 & 16 & ...\tabularnewline
$\left\langle \boldsymbol{p}_{\textrm{agree}}\right\rangle $ & $1.00$ & $0.703$ & $0.675$ & $0.667$ & ... & $0.657$ & $0.656$ & $0.656$\tabularnewline
\hline 
\end{tabular} .
\par\end{center}

\noindent Thus, on one end of the curve $w_{x}w_{p}=d$, the approximation
$\left\langle \boldsymbol{p}_{\textrm{agree}}\right\rangle \approx0.656$
breaks for $w_{p}<16$ (considering $3$ significant figures). Since
$w_{x}$ and $w_{p}$ are interchangeable, on the other end of this
curve (where $w_{x}\sim1$) the approximation breaks for $w_{x}<16$.
Therefore, if $d$ is large then the approximation $\left\langle \boldsymbol{p}_{\textrm{agree}}\right\rangle \approx0.656$
holds almost everywhere on the curve $w_{x}w_{p}=d$, with the exception
of the far ends $w_{p}<16$ or $w_{x}<16$.

The curve $w_{x}w_{p}=d$ is significant not because there is something
special about the value $0.656$, but because it allows us to say
that
\[
\begin{cases}
\left\langle \boldsymbol{p}_{\textrm{agree}}\right\rangle \approx1 & w_{x}w_{p}\gg d\\
\left\langle \boldsymbol{p}_{\textrm{agree}}\right\rangle \approx0 & w_{x}w_{p}\ll d.
\end{cases}
\]
In other words, the significance of the curve $w_{x}w_{p}=d$ is that
it outlines the transitional scale in phase space with respect to
which we distinguish the quantum scale from the classical scale.

\section{The implications of the uncertainty principle on a lattice \label{sec:The-implications-of}}

\subsection{Inferring the size of the lattice}

Let us briefly point out one simple implication: it is possible to
infer the size of the lattice from the scale at which the transition
to the classical regime takes place. The general idea is that if we
take a generic state $\rho$ and probe the probability $\boldsymbol{p}_{\textrm{agree}}\left(\rho\right)$
at various scales of coarse-graining, the scale where $\boldsymbol{p}_{\textrm{agree}}\left(\rho\right)\sim0.656$
is the scale where $w_{x}w_{p}\sim d$ so the product $w_{x}w_{p}$
is an estimation of the value of $d$. Given concrete assumptions
about the limitations of state preparation and measurements, a more
specific protocol for determining $d$ can be designed around this
general idea.

\subsection{The continuum limit and lattice perturbations}

We will now introduce proper units to the lattice.

The total length of the lattice in proper units is $L=\delta xd$,
where $\delta x$ is the smallest unit of length associated with one
lattice spacing. The smallest unit of inverse length, or a wavenumber,
is then $1/L$. With the de Broglie relation $p=2\pi\hbar/\lambda$,
we can convert wavenumbers $1/\lambda$ to momenta, so the smallest
unit of momentum is $\delta p=2\pi\hbar/L$.\footnote{Note that the de Broglie relation is the source of the Planck constant
in all of the following equations} The coarse-graining intervals $w_{x}$ and $w_{p}$ become $\Delta x=\delta xw_{x}$
and $\Delta p=\delta pw_{p}$ when expressed in proper units.

The continuum limit is achieved by taking $\delta x\rightarrow0$
and $d\rightarrow\infty$ while keeping $L$ constant. The coarse-graining
interval of position $\Delta x=\delta xw_{x}$ is kept constant by
fixing the total number of intervals $k_{x}=d/w_{x}$ while $w_{x}\rightarrow\infty$.
Unlike $\delta x$, $\delta p=2\pi\hbar/L$ does not vanish in the
continuum limit (the momentum of a particle in a box remains quantized)
so the coarse-graining intervals of momentum $\Delta p=\delta pw_{p}$
are unaffected and $w_{p}$ remains a finite integer.

We may now ask what happens to $\left\langle \boldsymbol{p}_{\textrm{agree}}\right\rangle $
as we take the continuum limit. Since $w_{x}/d=\Delta x/L$, the expression
\eqref{eq: lam_agree expr plane} can be re-stated using the proper
units of length as 
\begin{equation}
\left\langle \boldsymbol{p}_{\textrm{agree}}\right\rangle =\frac{\Delta x}{L}+\frac{L}{\Delta x}\frac{2}{w_{p}}\sum_{n=1}^{w_{p}-1}\left(w_{p}-n\right)\frac{\sin^{2}\left(\frac{\pi n\Delta x}{L}\right)}{\left[d\sin\left(\frac{\pi n}{d}\right)\right]^{2}}.\label{eq: lam_agree expr plane cont}
\end{equation}
We do not have to change to the proper units of momentum because 
\[
w_{p}=\frac{\Delta p}{\delta p}=\frac{\Delta p}{2\pi\hbar}L,
\]
which is a legitimate quantity even in the continuum limit (provided
that $L$ is finite).

The only evidence for the lattice structure that remains in Eq. \eqref{eq: lam_agree expr plane cont}
is the $d$-dependence of the factors 
\begin{equation}
\left[d\sin\left(\frac{\pi n}{d}\right)\right]^{-2}=\frac{1}{\pi^{2}n^{2}}+\frac{1}{3d^{2}}+O\left(\frac{1}{d^{3}}\right).
\end{equation}
In the continuum limit these factors reduce to $1/\pi^{2}n^{2}$,
but when the minimal length $\delta x=L/d$ is above $0$, these factors
are perturbed with the leading order contribution of $1/3d^{2}=\delta x^{2}/3$$L^{2}$.

\subsection{Factorizing the Planck constant}

Observe that the smallest unit of phase space area on a lattice is
$\delta x\delta p=2\pi\hbar/d$.\footnote{This is a well known constraint that comes up in the construction
of Generalized Clifford Algebras in finite-dimensional quantum mechanics.
See \citep{singh2018modeling} for an overview and the references
therein.} Therefore, the curve $w_{x}w_{p}=d$ that outlines the transitional
scale in phase space becomes 
\begin{equation}
\Delta x\Delta p=\delta x\delta p\,w_{x}w_{p}=\delta x\delta p\,d=2\pi\hbar.\label{eq:class. boundary with units}
\end{equation}
Thus, we have recovered Heisenberg's original argument where the Planck
constant identifies the transitional scale in phase space. We now
see that in the unitless lattice setting (where $\delta x\equiv1$
and $\delta p\equiv1$) the constant $d$ is the unitless ``Planck
constant''.\footnote{Note that unlike $2\pi\hbar$, the constant $d$ depends on the size
of the system. This inconstancy traces back to the fact that in the
unitless case we define $\delta p\equiv1$, while in proper units
we have $\delta p=2\pi\hbar/L$, which depends on the total length
$L$.}

In the continuous phase space the uncertainty principle is only associated
with the constant $2\pi\hbar$, which does not admit a preferred factorization
into position and momentum. On the lattice, however, the same constant
is given by $\delta x\delta pd$, which can be factorized as $\delta x\sqrt{d}$
and $\delta p\sqrt{d}$. We will now argue that the constants $\delta x\sqrt{d}$
and $\delta p\sqrt{d}$ are more than arbitrary factors of the Planck
constant. In fact, these are the primary scales associated with the
uncertainty principle on a lattice and the Planck constant is a secondary
quantity derived from their product.

Returning to the unitless picture of Fig. \ref{fig:ClassicalityVw}(a),
observe that if the localization in position $w_{x}$ approaches $\sqrt{d}$
from above, in order to stay in the classical regime the localization
in momentum $w_{p}$ has to diverge faster than the convergence in
$w_{x}$. In contrast, as long as both $w_{x},w_{p}\gg\sqrt{d}$,
the classical regime is insensitive to the variations in these variables
and there is no need to compensate the increase in localization for
one variable with the decrease in localization for the other.

We can then define the transitional scale for a single variable as
the scale around which increases in localization for one variable
(say position) result in\emph{ higher} decreases in localization for
the other variable (say momentum). This definition is only meaningful
on a lattice because it requires the fundamental units $\delta x$
and $\delta p$ in terms of which we can compare the changes in localization
for both variables.\footnote{In the continuum we cannot tell how the localization for one variable
compares to the other because the answer depends on the arbitrary
choice of units.} From the plot in Fig. \ref{fig:ClassicalityVw}(a) we see that $\sqrt{d}$
is the transitional scale for a single unitless variable. It then
follows that the uncertainty principle on a lattice is primarily associated
with the unitless constant $\sqrt{d}$, that in turn defines the transitional
scales $\delta x\sqrt{d}$ and $\delta p\sqrt{d}$ for position and
momentum, and then the transitional scale in phase space is given
by

\[
\left(\delta x\sqrt{d}\right)\left(\delta p\sqrt{d}\right)=\delta x\delta p\,d=2\pi\hbar.
\]

With proper units we conclude that on a lattice, in addition to the
minimal length $\delta x$ and the total length $L$, quantum mechanics
imposes another fundamental length 
\[
l_{u}=\delta x\sqrt{d}.
\]
The length $l_{u}$ is directly related to the minimal length $\delta x$
via the total length $L=\delta xd$ as $l_{u}=\sqrt{\delta x\,L}$
or $\delta x=l_{u}^{2}/L$. The length $l_{u}$ is therefore the geometric
mean of the minimal length $\delta x$ and the maximal length $L$.
It can also be framed as the length for which there are as many intervals
$l_{u}$ in $L$ as there are $\delta x$ in $l_{u}$. In the continuum
limit, where the minimal length $\delta x$ vanishes, the length $l_{u}=\sqrt{\delta x\,L}$
must also be $0$. Therefore, if we can establish that $l_{u}>0$
then it follows that $\delta x>0$.

The advantage of $l_{u}$ as an indicator of the discontinuity of
space is that it is greater than $\delta x$ by orders of magnitude.
For instance, for $L\sim1\,\textrm{m}$ of the order of a macroscopic
box and $\delta x\sim10^{-35}\,\textrm{m}$ of the order of Planck
length, we have $l_{u}\sim10^{-17.5}\,\textrm{m}$ which is much closer
to the scale of experiments than $10^{-35}\,\textrm{m}$.

It is not clear at this point what are the observable effects associated
with the fundamental length $l_{u}$. However, if such effects can
be identified then the discontinuity of space can be probed at scales
that are many orders of magnitude greater than the Planck length.

\chapter{Conclusion}

Inspired by the methods of symmetries, we have studied an operator
algebraic approach to reductions in finite-dimensional quantum mechanics,
and its extension to operator systems. For this purpose we have identified
a convenient representation of the irreps structure in the form of
bipartition tables, and introduced the Scattering Algorithm to find
the irreps structures of operator algebras. The applications of this
approach have been subdivided into reductions of states and reductions
of dynamics, and studied separately. The extension of operator algebras
to operator systems has led to the formulation of the quantum notion
of coarse-graining that is analogous to its classical counterpart.

We started with an observation that in finite-dimensional settings
the structure of irreducible representations of groups is in fact
associated with operator algebras. Thus, the simplifications that
are usually associated with symmetries can be attributed to operator
algebras. We studied the representation theory of finite-dimensional
operator algebras in Chapter \ref{chap:Operator-algebras-and} and
concluded that all the important aspects of irreps of operator algebras,
such as minimal projections and invariant subspaces, are captured
by bipartition tables.

Throughout this thesis we saw many examples of bipartition tables.
We summarize below the possible shapes of bipartition tables and the
corresponding reductions.\footnote{To this list we have added the change of basis transformation that
can be specified by linearly arranging the new basis in place of the
old basis.}

\noindent\begin{minipage}[c]{1\columnwidth}%
\begin{center}
\vspace{0.5\baselineskip}
\begin{tabular}{>{\centering}p{0.25cm}|>{\centering}p{0.25cm}|>{\centering}p{0.25cm}|>{\centering}p{0.25cm}|>{\centering}p{0.25cm}|>{\centering}p{0.25cm}|>{\centering}p{0.25cm}|>{\centering}p{0.25cm}}
\cline{2-7} \cline{3-7} \cline{4-7} \cline{5-7} \cline{6-7} \cline{7-7} 
 &  &  &  &  &  &  & \tabularnewline
\cline{2-7} \cline{3-7} \cline{4-7} \cline{5-7} \cline{6-7} \cline{7-7} 
\multicolumn{8}{c}{Change of basis}\tabularnewline
\end{tabular}\hspace*{1cm}%
\begin{tabular}{>{\centering}p{0.25cm}>{\centering}p{0.25cm}|>{\centering}p{0.25cm}|>{\centering}p{0.25cm}|>{\centering}p{0.25cm}}
\cline{3-3} 
 &  &  & \multicolumn{1}{>{\centering}p{0.25cm}}{} & \tabularnewline
\cline{3-3} 
 &  &  & \multicolumn{1}{>{\centering}p{0.25cm}}{} & \tabularnewline
\cline{3-3} 
 &  &  & \multicolumn{1}{>{\centering}p{0.25cm}}{} & \tabularnewline
\cline{3-4} \cline{4-4} 
 & \multicolumn{1}{>{\centering}p{0.25cm}}{} &  &  & \tabularnewline
\cline{4-4} 
 & \multicolumn{1}{>{\centering}p{0.25cm}}{} &  &  & \tabularnewline
\cline{4-4} 
 & \multicolumn{1}{>{\centering}p{0.25cm}}{} &  &  & \tabularnewline
\cline{4-4} 
\multicolumn{5}{c}{Measurement}\tabularnewline
\end{tabular}\hspace*{1cm}%
\begin{tabular}{>{\centering}p{0.25cm}|>{\centering}p{0.25cm}|>{\centering}p{0.25cm}|>{\centering}p{0.25cm}|>{\centering}p{0.25cm}|>{\centering}p{0.25cm}}
\cline{1-3} \cline{2-3} \cline{3-3} 
\multicolumn{1}{|>{\centering}p{0.25cm}|}{} &  &  & \multicolumn{1}{>{\centering}p{0.25cm}}{} & \multicolumn{1}{>{\centering}p{0.25cm}}{} & \tabularnewline
\hline 
\multicolumn{1}{>{\centering}p{0.25cm}}{} & \multicolumn{1}{>{\centering}p{0.25cm}}{} &  &  &  & \multicolumn{1}{>{\centering}p{0.25cm}|}{}\tabularnewline
\cline{4-6} \cline{5-6} \cline{6-6} 
\multicolumn{6}{c}{Superselection}\tabularnewline
\end{tabular}\vspace{0.5\baselineskip}
\begin{tabular}{>{\centering}p{0.25cm}|>{\centering}p{0.25cm}|>{\centering}p{0.25cm}|>{\centering}p{0.25cm}|>{\centering}p{0.25cm}}
\cline{2-4} \cline{3-4} \cline{4-4} 
 &  &  &  & \tabularnewline
\cline{2-4} \cline{3-4} \cline{4-4} 
 &  &  &  & \tabularnewline
\cline{2-4} \cline{3-4} \cline{4-4} 
\multicolumn{5}{c}{Subsystem}\tabularnewline
\end{tabular}\hspace*{5cm}%
\begin{tabular}{>{\centering}p{0.25cm}>{\centering}p{0.25cm}|>{\centering}p{0.25cm}|>{\centering}p{0.25cm}|>{\centering}p{0.25cm}>{\centering}p{0.25cm}>{\centering}p{0.25cm}}
\cline{3-5} \cline{4-5} \cline{5-5} 
 &  &  &  & \multicolumn{1}{>{\centering}p{0.25cm}|}{} &  & \tabularnewline
\cline{3-5} \cline{4-5} \cline{5-5} 
 &  &  &  &  &  & \tabularnewline
\cline{3-4} \cline{4-4} 
 &  &  & \multicolumn{1}{>{\centering}p{0.25cm}}{} &  &  & \tabularnewline
\cline{3-3} 
\multicolumn{7}{c}{Partial subsystem}\tabularnewline
\end{tabular}\vspace{0.5\baselineskip}
\par\end{center}%
\end{minipage} Thus, bipartition tables unite a broad class of important concepts
in finite-dimensional quantum mechanics in a single picture.

The principal problem that arises in applications of operator algebras
is the derivation of the irreps structure from the generators of the
algebra. In Chapter \ref{chap:Identifying-the-irreps} we have addressed
this problem by introducing the Scattering Algorithm. The idea of
the algorithm is to apply the scattering operation to break the initial
spectral projections of the generators into minimal projections, and
use them to construct the bipartition tables specifying the irreps
structure.

As we have emphasized, the Scatting Algorithm is designed to allow
analytical derivations of the irreps structure without having to specify
the operators numerically. The execution of the algorithm mostly involves
multiplications and diagonalizations of operators, and we saw multiple
non-trivial examples that are simple enough to derive the irreps structure
with  pen and paper.

Applications associated with the reduction of states were studied
in Chapter \ref{chap:Operational-reductions-of-st}. We first observed
that the prototypical state reduction in the form of the partial trace
map can be understood as a map that accounts for operational constraints.
By adopting this perspective we defined state reductions as maps that
account for operational constraints given by a restriction of observables
to a subalgebra. Such state reduction maps were illustrated with examples
that involve lacking a common reference frame and encoding of quantum
information in a noiseless subsystem.

An important consequence of constraining the observables to a subsystem
is the decoherence of the reduced state. There is nothing special,
however, about constraining the observables to a subsystem, and in
principle decoherence can be the consequence of any operational constraint.
We saw how simple rotations (without interactions) can cause decoherence
under the operational constraint of not having a common reference
frame of direction in space. Not only such reduced states decohere,
but we can also single out the effective interaction term of the Hamiltonian
by considering the irreps structure of the operational constraint.
Thus, from this broader perspective all the implications of the decoherence
program follow primarily from operational constraints, of which the
restriction to subsystems is a special case.

Applications associated with the reduction of dynamics were studied
in Chapter \ref{chap:Operational-reductions-of-dyn}. The reduction
of Hamiltonians with symmetries was reexamined and the condition for
a group to be a symmetry was relaxed. Specifically, we showed that
the Hamiltonian may have a symmetry breaking term and still be reducible,
as long as this term is itself an element of the group algebra. We
then introduced the symmetry-agnostic approach to the reduction of
Hamiltonians where we shifted the focus from symmetries to operator
algebras. This approach was demonstrated in two problems concerned
with finding the possible qubit encodings for a control Hamiltonian
in quantum dot arrays.

The aim of the symmetry-agnostic approach is not to replace the concept
of symmetries but rather provide an alternative for problems where
identifying the symmetries is not easy. In particular, when dealing
with Hamiltonians that have multiple different terms it may not be
obvious what their common symmetry group is. In addition, when the
symmetry group is identified, it is still necessary to find the irreps
structure of the group in order to reduce the dynamics. In the symmetry-agnostic
approach we also have to find the irreps structure of the algebra
generated by the Hamiltonian terms. However, it is no longer necessary
to identify any symmetries and we can start with the problem of finding
the irreps directly.

When considering a simple reduction problem of compressing a qutrit
into a qubit we observed that it does not seem to have a satisfying
solution in the framework of operator algebras. This has lead in Chapter
\ref{chap:Quantum-coarse-graining} to the extension of the mathematical
framework of state reductions from operator algebras to operator systems.
The resulting state reduction maps turned out to be the quantum analogue
of the classical notion of coarse-graining that so far did not have
an equivalent in quantum theory.

In applications such as quantum state compression or tomography, we
have a set of physically available observables and we want to find
a reduction map that represents the state compression or the tomographic
reconstruction of the state. If we assume that these observables form
an operator algebra, we can use the Scattering Algorithm to produce
the bipartition tables from which the reduction map is constructed.
Physically available observables, however, do not usually form an
operator algebra so it is more realistic to assume an operator system
instead.

Finally, in Chapter \ref{chap:The-uncertainty-principle} we studied
the effects of the uncertainty principle as a function of measurement
resolution on a lattice. By introducing a measure of mutual disturbance
between incompatible observables we characterized the transition of
the uncertainty principle to the classical regime with decreasing
resolution of measurements. From this characteristic function we were
able to conclude that the resolution scale that separates the quantum
and classical regimes is given by the square root of the unitless
length of the lattice.

The analysis of the uncertainty principle on a lattice implies certain
effects that can be associated with the discontinuity of space. Specifically,
we saw that the probability that a successive measurement of position-momentum-position
will agree on both outcomes of position is perturbed by the existence
of minimal length on the lattice. We also noted that if the minimal
length exists, then the geometric mean of the minimal length and the
maximal length is a special length scale that is singled out by the
uncertainty principle. In principle, this special length scale is
directly related to the discontinuity of space, but it is much longer
than the minimal length itself. However, it is not yet clear what
measurable effects can be associated with it.

Regarding the directions of future research, there are a few questions
that are worth exploring further.

We saw that bipartition tables can represent various maps such as
the partial trace, unitary transformation, and a measurement; in the
most general case partial bipartition tables represent quantum coarse-graining.
It would be interesting to find out what class of CPTP maps can be
represented with bipartition tables, and whether we can use bipartition
tables to represent CPTP maps in general.

Even though we have designed the Scattering Algorithm for purely analytical
uses, it would be good to have a rigorous complexity analysis of its
runtime and compare it to numeric implementations. For calculations
it is also desirable to have a computer implementation of the Scattering
Algorithm in a symbolic calculation software such as the Wolfram Mathematica.

More importantly, just as we have the Scattering Algorithm for constructing
bipartition tables from a generating set of an operator algebra, we
want to be able to construct partial bipartition tables from a spanning
set of an operator system. Without something like the Scattering Algorithm
for operator systems, the idea of reduction by quantum coarse-graining
is difficult to implement in applications.

We have pointed out that all the implications of the decoherence program
follow primarily from operational constraints. That is, decoherence
is not just the result of how the observed system interact with other
systems, it is also the result of how the observer interacts with
the observed system. It would be interesting to find out whether implications
such as the emergence of classicality can be attributed to operational
limitations that go beyond the paradigm of the system-environment
split.

\bibliographystyle{plainnat}
\bibliography{ThesBib}

\chapter*{Appendix}
\addcontentsline{toc}{part}{Appendix}

\section*{{\normalsize{}A calculation of the bounds \eqref{eq: lam agree upper bound}
and \eqref{eq:  lam agree lower bound}\label{app: calc of bounds}}}

Here we will assume $w=w_{x}=w_{p}$ and $k=k_{x}=k_{p}$.

In order to calculate the bounds on $\left\langle \boldsymbol{p}_{\textrm{agree}}\right\rangle $
we will have to find a different way to express $\Pi_{X;0}\Pi_{P;0}\Pi_{X;0}$.
Recalling Eq. \eqref{eq:P_0 inner prod} and the function \eqref{eq:def of delta}
we now have
\[
\left|\braket{P_{0};m'}{P_{0};m}\right|=\left|\varDelta_{w}\left(\frac{m-m'}{k}\right)\right|=\frac{\sin\left(\pi\frac{m-m'}{k}\right)}{w\sin\left(\pi\frac{m-m'}{d}\right)}.
\]
Observe that the truncated momentum states are orthogonal when the
difference $m-m'$ is an integer number of $k$'s. That is, for any
integers $c$, $c'$ and $n$ the states $\ket{P_{0};ck+n}$ and $\ket{P_{0};c'k+n}$
are orthogonal.

In Eq. \ref{eq:Pi_X0Pi_P0Pi_X0 in trunc basis} we have derived the
form

\begin{equation}
\Pi_{X;0}\Pi_{P;0}\Pi_{X;0}=\frac{1}{k}\sum_{m=0}^{w-1}\ket{P_{0};m}\bra{P_{0};m}\label{eq:P0P0P0 form recall}
\end{equation}
where $\ket{P_{0};m}\bra{P_{0};m}$ are rank 1 projections. Since
some of these projections are pairwise orthogonal, we can group them
together and express $\Pi_{X;0}\Pi_{P;0}\Pi_{X;0}$ as a smaller sum
of higher rank projections.

In order to do that, let us first assume that $\gamma=w/k$ is a non-zero
integer (we will not need this assumption in general). Then the set
of integers $\left\{ m=0,...,w-1\right\} $ can be partitioned into
$k$ subsets $\Omega_{n}=\left\{ ck+n\,|\,c=0,...,\gamma-1\right\} $
with $n=0,...,k-1$. Thus, we can group up the orthogonal elements
in the sum \eqref{eq:P0P0P0 form recall} as 
\[
\Pi_{X;0}\Pi_{P;0}\Pi_{X;0}=\frac{1}{k}\sum_{n=0}^{k-1}\sum_{m\in\Omega_{n}}\ket{P_{0};m}\bra{P_{0};m}=\frac{1}{k}\sum_{n=0}^{k-1}\Pi^{\left(n\right)}
\]
where we have introduced the rank $\gamma$ projections 
\[
\Pi^{\left(n\right)}=\sum_{m\in\Omega_{n}}\ket{P_{0};m}\bra{P_{0};m}=\sum_{c=0}^{\gamma-1}\ket{P_{0};ck+n}\bra{P_{0};ck+n}.
\]

When $\gamma=w/k$ is not an integer, the accounting of indices is
more involved. We have to introduce the integer part $g=\left\lfloor \gamma\right\rfloor $
and the remainder part $r=w-k\left\lfloor \gamma\right\rfloor $ of
$\gamma$. As before, we partition the set $\left\{ m=0,...,w-1\right\} $
into subsets 
\[
\Omega_{n}:=\begin{cases}
\left\{ ck+n\,|\,c=0,...,g\right\}  & n<r\\
\left\{ ck+n\,|\,c=0,...,g-1\right\}  & n\geq r
\end{cases}
\]
but now they are not of equal size and the range of $n$ depends on
whether $\gamma\ge1$. When $\gamma\ge1$ then $\left|\Omega_{n}\right|$
is $g+1$ for $n<r$ and $g$ for $n\geq r$. When $\gamma<1$ so
$g=0$ and $r=w$, then $\left|\Omega_{n}\right|=1$ for $n<w$ but
$\left|\Omega_{n}\right|=0$ for $n\geq w$ so we do not need to count
$\Omega_{n}$ for $n\geq w$. Noting that the condition $\gamma\geq1$
is equivalent to $\min\left(k,w\right)=k$ and the condition $\gamma<1$
is equivalent to $\min\left(k,w\right)=w$, we conclude that we only
have to count $\Omega_{n}$ for $n<\min\left(k,w\right)$. Therefore,
for the general $\gamma$ we have
\begin{equation}
\Pi_{X;0}\Pi_{P;0}\Pi_{X;0}=\frac{1}{k}\sum_{n=0}^{\min\left(k,w\right)-1}\sum_{m\in\Omega_{n}}\ket{P_{0};m}\bra{P_{0};m}=\frac{1}{k}\sum_{n=0}^{\min\left(k,w\right)-1}\Pi^{\left(n\right)}\label{eq: P0P0P0 in terms of P^(n)}
\end{equation}
and the projections 
\[
\Pi^{\left(n\right)}=\sum_{m\in\Omega_{n}}\ket{P_{0};m}\bra{P_{0};m}=\sum_{c=0}^{g_{n}-1}\ket{P_{0};ck+n}\bra{P_{0};ck+n}
\]
are now of the rank 
\[
g_{n}=\begin{cases}
g+1 & n<r\\
g & n\geq r.
\end{cases}
\]

Using the new form \eqref{eq: P0P0P0 in terms of P^(n)}, we can re-express
Eq. \ref{eq: p_agree =00003D tr=00005B(PPP)^2=00005D} as
\begin{equation}
\left\langle \boldsymbol{p}_{\textrm{agree}}\right\rangle =\frac{k^{2}}{d}\tr\left[\left(\Pi_{X;0}\Pi_{P;0}\Pi_{X;0}\right)^{2}\right]=\frac{1}{d}\sum_{n,n'=0}^{\min\left(k,w\right)-1}\tr\left[\Pi^{\left(n\right)}\Pi^{\left(n'\right)}\right].\label{eq: lam_agree in terms of P^n P^n'}
\end{equation}

\subsubsection*{The upper bound}

The quantity $\tr\left[\Pi^{\left(n\right)}\Pi^{\left(n'\right)}\right]$
is the Hilbert-Schmidt inner product $\left\langle \Pi^{\left(n\right)},\Pi^{\left(n'\right)}\right\rangle $
(also known as Frobenius inner product) of the operators $\Pi^{\left(n\right)}$
and $\Pi^{\left(n'\right)}$. Therefore, it obeys the Cauchy--Schwarz
inequality 
\[
\left|\tr\left[\Pi^{\left(n\right)}\Pi^{\left(n'\right)}\right]\right|^{2}=\left|\left\langle \Pi^{\left(n\right)},\Pi^{\left(n'\right)}\right\rangle \right|^{2}\leq\left\langle \Pi^{\left(n\right)},\Pi^{\left(n\right)}\right\rangle \left\langle \Pi^{\left(n'\right)},\Pi^{\left(n'\right)}\right\rangle =\tr\left[\Pi^{\left(n\right)}\right]\tr\left[\Pi^{\left(n'\right)}\right].
\]
Since the value 
\[
\tr\left[\Pi^{\left(n\right)}\Pi^{\left(n'\right)}\right]=\sum_{m\in\Omega_{n}}\sum_{m'\in\Omega_{n'}}\left|\braket{P_{0};m}{P_{0};m'}\right|^{2}
\]
is clearly real and positive, we get 
\[
\tr\left[\Pi^{\left(n\right)}\Pi^{\left(n'\right)}\right]\leq\sqrt{\tr\left[\Pi^{\left(n\right)}\right]\tr\left[\Pi^{\left(n'\right)}\right]}.
\]

The value of $\tr\left[\Pi^{\left(n\right)}\right]$ is the rank of
the projection which is either $g$ or $g+1$ so
\[
\tr\left[\Pi^{\left(n\right)}\Pi^{\left(n'\right)}\right]\leq g+1.
\]
Therefore, the form of $\left\langle \boldsymbol{p}_{\textrm{agree}}\right\rangle $
in Eq. \eqref{eq: lam_agree in terms of P^n P^n'} implies that 
\[
\left\langle \boldsymbol{p}_{\textrm{agree}}\right\rangle \leq\frac{1}{d}\sum_{n,n'=0}^{\min\left(k,w\right)-1}\left(g+1\right)=\left(g+1\right)\frac{\min\left(k,w\right)^{2}}{d}.
\]

When $\gamma\geq1$, this upper bound is greater or equal to $1$
because
\[
\left(g+1\right)\frac{\min\left(k,w\right)^{2}}{d}=\left(g+1\right)\frac{k^{2}}{d}\geq\gamma\frac{k^{2}}{d}=w\frac{k}{d}=1
\]
which is not helpful since we already know that $\left\langle \boldsymbol{p}_{\textrm{agree}}\right\rangle \leq1$
for it is a probability. When $\gamma<1$, on the other hand, we have
$g=0$ and so 
\[
\left(g+1\right)\frac{\min\left(k,w\right)^{2}}{d}=\frac{w^{2}}{d}.
\]
Thus, when $\gamma<1$, which translates to $w<k=d/w$ so $w<\sqrt{d}$,
we have the upper bound
\[
\left\langle \boldsymbol{p}_{\textrm{agree}}\right\rangle \leq\frac{w^{2}}{d}.
\]

\subsubsection*{The lower bound}

We will now focus on the lower bound of the inner product $\tr\left[\Pi^{\left(n\right)}\Pi^{\left(n'\right)}\right]$
for the case $\gamma\geq1$ (so $w\geq\sqrt{d}$ and $\min\left(k,w\right)=k$)
and then substitute the result in Eq. \eqref{eq: lam_agree in terms of P^n P^n'}.

Since we are interested in the lower bound, we can simplify the expression
by discarding the terms $c,c'=g$ in the sum
\[
\tr\left[\Pi^{\left(n\right)}\Pi^{\left(n'\right)}\right]=\sum_{c=0}^{g_{n}-1}\sum_{c'=0}^{g_{n'}-1}\left|\braket{P_{0};c'k+n'}{P_{0};ck+n}\right|^{2}\geq\sum_{c,c'=0}^{g-1}\left|\braket{P_{0};c'k+n'}{P_{0};ck+n}\right|^{2}.
\]
According to Eq. \eqref{eq:P_0 inner prod} we have
\[
\left|\braket{P_{0};c'k+n'}{P_{0};ck+n}\right|^{2}=\left|\varDelta_{w}\left(c-c'+\alpha\right)\right|^{2}
\]
where we have introduced the variable $\alpha=\frac{n-n'}{k}$. We
can now identify the sum 
\[
S\left(\alpha\right)=\sum_{c,c'=0}^{g-1}\left|\varDelta_{w}\left(c-c'+\alpha\right)\right|^{2}\,\,\,\,\,\leq\tr\left[\Pi^{\left(n\right)}\Pi^{\left(n'\right)}\right]
\]
and focus on lower bounding $S\left(\alpha\right)$ for all possible
$\alpha$.

Since $\left|\varDelta_{w}\left(x\right)\right|^{2}$ is a symmetric
function of $x$ we have
\[
\left|\varDelta_{w}\left(c-c'+\alpha\right)\right|^{2}=\left|\varDelta_{w}\left(-c+c'-\alpha\right)\right|^{2}
\]
and since the values of $c$ and $c'$ are interchangeable in the
sum, we conclude that $S\left(\alpha\right)$ is a symmetric function
of $\alpha$. Therefore, we only need to consider positive $\alpha=\frac{n-n'}{k}$,
and since $n,n'=0,...,k-1$, it takes the values $\alpha=0,\frac{1}{k},...,\frac{k-1}{k}\in\left[0,1\right]$.

Since the summand in $S\left(\alpha\right)$ only depends on the differences
$l=c-c'$, we can simplify the sum
\[
S\left(\alpha\right)=\sum_{l=-g+1}^{g-1}\left(g-\left|l\right|\right)\left|\varDelta_{w}\left(l+\alpha\right)\right|^{2}=\sum_{l=-g+1}^{g-1}\frac{\left(g-\left|l\right|\right)}{w^{2}}\frac{\sin^{2}\left(\pi\left(l+\alpha\right)\right)}{\sin^{2}\left(\pi\left(l+\alpha\right)/w\right)}
\]
where in the last step we substituted the explicit form of $\varDelta_{w}$.
Note that $\sin^{2}\left(\pi\left(l+\alpha\right)\right)=\sin^{2}\left(\pi\alpha\right)$
for integer $l$ and also $\sin^{-2}\left(\frac{\pi\left(l+\alpha\right)}{w}\right)\geq\left(\frac{\pi\left(l+\alpha\right)}{w}\right)^{-2}$
so we get
\begin{align}
S\left(\alpha\right) & \geq\frac{\sin^{2}\left(\pi\alpha\right)}{\pi^{2}}\sum_{l=-g+1}^{g-1}\frac{g-\left|l\right|}{\left(l+\alpha\right)^{2}}.\label{eq: S as sin times s}
\end{align}
We will now focus on evaluating the lower bound of the sum
\begin{equation}
s\left(\alpha\right)=\sum_{l=-g+1}^{g-1}\frac{g-\left|l\right|}{\left(l+\alpha\right)^{2}}.\label{eq: s alpha sum}
\end{equation}

We can rearrange the elements of this sum as follows:

\[
s\left(\alpha\right)=\frac{g}{\alpha^{2}}+\sum_{l=1}^{g-1}\left[\frac{g-l}{\left(l+\alpha\right)^{2}}+\frac{g-l}{\left(l-\alpha\right)^{2}}\right]=\frac{g}{\alpha^{2}}+\sum_{l=1}^{g-1}\left[\frac{l}{\left(g-l+\alpha\right)^{2}}+\frac{l}{\left(g-l-\alpha\right)^{2}}\right]
\]
where in the last step we simply reversed the order of the elements
in the sum. Now we can introduce the auxiliary variables $\beta_{\pm}=g\pm\alpha$,
so
\begin{align}
s\left(\alpha\right) & =\frac{g}{\alpha^{2}}+\sum_{l=1}^{g-1}\left[\frac{l}{\left(l-\beta_{+}\right)^{2}}+\frac{l}{\left(l-\beta_{-}\right)^{2}}\right]=\frac{g}{\alpha^{2}}+\sum_{l=1}^{g-1}\left[\frac{\beta_{+}}{\left(l-\beta_{+}\right)^{2}}+\frac{1}{\left(l-\beta_{+}\right)}+\frac{\beta_{-}}{\left(l-\beta_{-}\right)^{2}}+\frac{1}{\left(l-\beta_{-}\right)}\right]\nonumber \\
 & =\frac{g}{\alpha^{2}}+s_{1}\left(\alpha\right)+s_{2}\left(\alpha\right)\label{eq: s alpha rearanged sum}
\end{align}
where we have identified the sums of harmonic-like series
\[
s_{1}\left(\alpha\right)=\sum_{l=1}^{g-1}\left[\frac{1}{\left(l-\beta_{-}\right)}+\frac{1}{\left(l-\beta_{+}\right)}\right]\,\,\,\,\,\,\,\,\,\,\,\,\,\,\,\,\,\,\,\,\,\,\,\,\,\,s_{2}\left(\alpha\right)=\sum_{l=1}^{g-1}\left[\frac{\beta_{-}}{\left(l-\beta_{-}\right)^{2}}+\frac{\beta_{+}}{\left(l-\beta_{+}\right)^{2}}\right].
\]

Such sums can be evaluated using the polygamma functions \citep{abramowitz1948handbook}
\[
\psi^{\left(j\right)}\left(x\right):=\frac{d^{j}}{dx^{j}}\ln\Gamma\left(x\right)
\]
where $\Gamma$ is the gamma function that interpolates the factorial
for all real (and complex) values. The two key properties of the polygamma
functions that we will need are the recursion and reflection relations
\begin{align}
\psi^{\left(j\right)}\left(1+x\right) & =\psi^{\left(j\right)}\left(x\right)+\left(-1\right)^{j}\frac{j!}{x^{j+1}}\label{eq:polygamma recursion}\\
\psi^{\left(j\right)}\left(1-x\right) & =\left(-1\right)^{j}\psi^{\left(j\right)}\left(x\right)+\left(-1\right)^{j}\pi\frac{d^{j}}{dx^{j}}\cot\left(\pi x\right).\label{eq:polygamma reflection}
\end{align}

For integer $g$ we can expand $\psi^{\left(j\right)}\left(g-x\right)$
for $j=0,1$ using the recursion relation \eqref{eq:polygamma recursion}
to get 
\begin{align*}
\psi^{\left(0\right)}\left(g-x\right) & =\psi^{\left(0\right)}\left(1-x\right)+\sum_{l=1}^{g-1}\frac{1}{l-x}\\
\psi^{\left(1\right)}\left(g-x\right) & =\psi^{\left(1\right)}\left(1-x\right)-\sum_{l=1}^{g-1}\frac{1}{\left(l-x\right)^{2}}.
\end{align*}
Applying the reflection relation \eqref{eq:polygamma reflection}
and rearranging yields
\begin{align}
 & \sum_{l=1}^{g-1}\frac{1}{l-x}=\psi^{\left(0\right)}\left(g-x\right)-\psi^{\left(0\right)}\left(x\right)-\pi\cot\left(\pi x\right)\label{eq: sum as polygamma 1}\\
 & \sum_{l=1}^{g-1}\frac{1}{\left(l-x\right)^{2}}=-\psi^{\left(1\right)}\left(g-x\right)-\psi^{\left(1\right)}\left(x\right)+\frac{\pi^{2}}{\sin^{2}\left(\pi x\right)}.\label{eq: sum as polygamma 2}
\end{align}

Now, using \eqref{eq: sum as polygamma 1} and recalling that $g-\beta_{\pm}=\mp\alpha$
we can express $s_{1}\left(\alpha\right)$ as 
\[
s_{1}\left(\alpha\right)=\psi^{\left(0\right)}\left(\alpha\right)-\psi^{\left(0\right)}\left(\beta_{-}\right)+\psi^{\left(0\right)}\left(-\alpha\right)-\psi^{\left(0\right)}\left(\beta_{+}\right)
\]
where the trigonometric terms cancel each other out as they are anti-symmetric
and periodic with integer $g$. We can re-express $\psi^{\left(0\right)}\left(\alpha\right)$
and $\psi^{\left(0\right)}\left(-\alpha\right)$ as $\psi^{\left(0\right)}\left(\alpha+1\right)$
using the recursion \eqref{eq:polygamma recursion} and reflection
relations \eqref{eq:polygamma reflection} respectively:
\[
\psi^{\left(0\right)}\left(\alpha\right)+\psi^{\left(0\right)}\left(-\alpha\right)=2\psi^{\left(0\right)}\left(\alpha+1\right)+\pi\cot\left(\pi\alpha\right)-\frac{1}{\alpha}.
\]
We can replace $2\psi^{\left(0\right)}\left(\alpha+1\right)$ with
its lower bound $2\psi^{\left(0\right)}\left(1\right)$ on the interval
$0\leq\alpha<1$ as the function $\psi^{\left(0\right)}\left(x\right)$
is monotonically increasing for $0\leq x$. For the same reason we
can also use the bound $\psi^{\left(0\right)}\left(\beta_{\pm}\right)\leq\psi^{\left(0\right)}\left(g+1\right)$
so we end up with the overall lower bound on the sum
\begin{align}
s_{1}\left(\alpha\right) & \geq2\psi^{\left(0\right)}\left(1\right)-2\psi^{\left(0\right)}\left(g+1\right)+\pi\cot\left(\pi\alpha\right)-\frac{1}{\alpha}.\label{eq:s1 LB}
\end{align}

Similarly, using \eqref{eq: sum as polygamma 2} we can express $s_{2}\left(\alpha\right)$
as
\[
s_{2}\left(\alpha\right)=-\left[\beta_{-}\psi^{\left(1\right)}\left(\alpha\right)+\beta_{+}\psi^{\left(1\right)}\left(-\alpha\right)\right]-\left[\beta_{-}\psi^{\left(1\right)}\left(\beta_{-}\right)+\beta_{+}\psi^{\left(1\right)}\left(\beta_{+}\right)\right]+\frac{\beta_{-}\pi^{2}}{\sin^{2}\left(\pi\beta_{-}\right)}+\frac{\beta_{+}\pi^{2}}{\sin^{2}\left(\pi\beta_{+}\right)}.
\]
Using the recursion \eqref{eq:polygamma recursion} and reflection
\eqref{eq:polygamma reflection} relations, we express
\[
-\left[\beta_{-}\psi^{\left(1\right)}\left(\alpha\right)+\beta_{+}\psi^{\left(1\right)}\left(-\alpha\right)\right]=2\alpha\psi^{\left(1\right)}\left(\alpha+1\right)-\frac{\beta_{-}}{\alpha^{2}}-\frac{\beta_{+}\pi^{2}}{\sin^{2}\left(\pi\alpha\right)}\geq-\frac{\beta_{-}}{\alpha^{2}}-\frac{\beta_{+}\pi^{2}}{\sin^{2}\left(\pi\alpha\right)}
\]
where in the last step we have replaced $2\alpha\psi^{\left(1\right)}\left(\alpha+1\right)$
with its lower bound $0$ at $\alpha=0$. Since $\psi^{\left(1\right)}\left(x\right)$
is monotonically decreasing for $0\leq x$ we also use the lower bound
\[
-\left[\beta_{-}\psi^{\left(1\right)}\left(\beta_{-}\right)+\beta_{+}\psi^{\left(1\right)}\left(\beta_{+}\right)\right]\geq-2g\psi^{\left(1\right)}\left(g-1\right).
\]
Thus, the overall lower bound for $s_{2}\left(\alpha\right)$ is
\begin{align}
s_{2}\left(\alpha\right)\geq & -\frac{\beta_{-}}{\alpha^{2}}-\frac{\beta_{+}\pi^{2}}{\sin^{2}\left(\pi\alpha\right)}-2g\psi^{\left(1\right)}\left(g-1\right)+\frac{\beta_{-}\pi^{2}}{\sin^{2}\left(\pi\beta_{-}\right)}+\frac{\beta_{+}\pi^{2}}{\sin^{2}\left(\pi\beta_{+}\right)}\nonumber \\
= & -\frac{\beta_{-}}{\alpha^{2}}-2g\psi^{\left(1\right)}\left(g-1\right)+\frac{\beta_{-}\pi^{2}}{\sin^{2}\left(\pi\alpha\right)}.\label{eq: s2 LB}
\end{align}
where in the last step we have used the fact that $\sin^{2}\left(\pi\beta_{\pm}\right)=\sin^{2}\left(\pi\alpha\right)$.

Combining the lower bounds \eqref{eq:s1 LB} and \eqref{eq: s2 LB}
into Eqs. \eqref{eq: S as sin times s}, \eqref{eq: s alpha sum},
\eqref{eq: s alpha rearanged sum}, we get 
\[
S\left(\alpha\right)\geq\,\,g-\frac{2\sin^{2}\left(\pi\alpha\right)}{\pi^{2}}\left[\psi^{\left(0\right)}\left(g+1\right)+g\psi^{\left(1\right)}\left(g-1\right)\right]-\alpha+\frac{2\sin^{2}\left(\pi\alpha\right)}{\pi^{2}}\psi^{\left(0\right)}\left(1\right)+\frac{\sin\left(2\pi\alpha\right)}{2\pi}.
\]

On the interval $0\leq\alpha<1$, the minimum value of 
\[
-\alpha+\frac{2\sin^{2}\left(\pi\alpha\right)}{\pi^{2}}\psi^{\left(0\right)}\left(1\right)+\frac{\sin\left(2\pi\alpha\right)}{2\pi}
\]
 is given by $-\epsilon_{1}\approx-1.005$ and the minimum value of
the coefficient $-\frac{2\sin^{2}\left(\pi\alpha\right)}{\pi^{2}}$
is $-\frac{2}{\pi^{2}}$. With that, we can get rid of the dependence
on $\alpha$: 
\begin{align*}
S\left(\alpha\right) & \geq S_{\min}=g-\frac{2}{\pi^{2}}\left(\psi^{\left(0\right)}\left(g+1\right)+g\psi^{\left(1\right)}\left(g-1\right)\right)-\epsilon_{1}.
\end{align*}

We know that $\psi^{\left(0\right)}\left(x\right)$ is a smooth function
for $x>0$ and it is bounded by \citep{alzer1997some}
\[
\ln x-\frac{1}{x}<\psi^{\left(0\right)}\left(x\right)<\ln x-\frac{1}{2x}
\]
so asymptotically the function $\psi^{\left(0\right)}\left(x+1\right)\sim\ln\left(x+1\right)$
and it converges to $\ln x$ from above. Since $\psi^{\left(1\right)}\left(x\right)=d\psi^{\left(0\right)}\left(x\right)/dx$
then asymptotically $\psi^{\left(1\right)}\left(x\right)\sim\frac{1}{x}$
so the function $x\psi^{\left(1\right)}\left(x-1\right)\sim x/\left(x-1\right)$
and it converges to $1$ from above. Therefore, for any $\epsilon_{2}>0$
there is a $x'>0$ such that for all $x>x'$
\[
\psi^{\left(0\right)}\left(x+1\right)+x\psi^{\left(1\right)}\left(x-1\right)\leq\ln x+1+\epsilon_{2}.
\]
Conveniently choosing $\epsilon_{2}=\frac{\pi^{2}}{2}\left(2-\epsilon_{1}\right)-1$
and solving for $x'$ results in $x'\approx1.722$. Thus, for all
$g\geq2>x'$ we have 
\begin{align*}
S_{\min} & \geq g-\frac{2}{\pi^{2}}\left(\ln g+1+\epsilon_{2}\right)-\epsilon_{1}=g-\frac{2}{\pi^{2}}\ln g-2\\
 & \geq\gamma-\frac{2}{\pi^{2}}\ln\gamma-3
\end{align*}
where the last inequality follows from $g=\left\lfloor \gamma\right\rfloor \geq\gamma-1$
and $\ln g\leq\ln\gamma$.

Recalling that $\tr\left[\Pi^{\left(n\right)}\Pi^{\left(n'\right)}\right]\geq S\left(\alpha\right)\geq S_{\min}$
and $\gamma=w/k=w^{2}/d$, we return to the Eq. \eqref{eq: lam_agree in terms of P^n P^n'}
and get the result

\begin{align*}
\left\langle \boldsymbol{p}_{\textrm{agree}}\right\rangle  & =\frac{1}{d}\sum_{n,n'=0}^{k-1}\tr\left[\Pi^{\left(n\right)}\Pi^{\left(n'\right)}\right]\geq\frac{k^{2}}{d}S_{\min}\geq\frac{1}{w^{2}/d}\left[w^{2}/d-\frac{2}{\pi^{2}}\ln\left(w^{2}/d\right)-3\right]\\
 & =1-\frac{2}{\pi^{2}}\frac{\ln\left(w^{2}/d\right)+3\pi^{2}/2}{w^{2}/d}.
\end{align*}

\end{document}